\def\strid1#1{\vcenter{\hbox{\includegraphics{#1.pdf}}}}
\def\strie1#1#2{\vcenter{\hbox{\includegraphics[scale=#1]{#2.pdf}}}}
\tikzset{gedge/.style={->,shorten <=2,shorten >=2}}
\newcommand{\svxym}[1]{\vcenter{\xymatrix@R=3ex@C=3ex{#1}}}
\renewcommand{\paragraph}[1]{\bigskip\noindent\emph{#1}}
\renewcommand{\leq}{\leqslant}
\renewcommand{\geq}{\geqslant}
\theoremstyle{definition}
\theoremstyle{plain}
\newtheorem{conjecture}[thm]{Conjecture}
\begin{document}
\title{Directed Homotopy in Non-Positively Curved Spaces}
\author{Éric Goubault}
\author{Samuel Mimram}


\address{LIX, CNRS, École Polytechnique, Institut Polytechnique de Paris, 91128 Palaiseau Cedex, France}

\email{\href{mailto:eric.goubault@polytechnique.edu}{\texttt{eric.goubault@polytechnique.edu}}}
\email{\href{mailto:samuel.mimram@lix.polytechnique.fr}{\texttt{samuel.mimram@lix.polytechnique.fr}}}

\keywords{precubical set, models for concurrency, cube axiom, non-positively curved space, CAT(0) space, generalized metric space, fundamental category}
\subjclass{
Mathematics of computing~Algebraic topology ; 
Theory of computation~Parallel computing models
}



\begin{abstract}
  A semantics of concurrent programs can be given using precubical sets, in
  order to study (higher) commutations between the actions, thus encoding the
  ``geometry'' of the space of possible executions of the program. Here, we
  study the particular case of programs using only mutexes, which are the most
  widely used synchronization primitive. We show that in this case, the
  resulting programs have \emph{non-positive curvature}, a notion that we
  introduce and study here for precubical sets, and can be thought of as an
  algebraic analogue of the well-known one for metric spaces. Using this it, as
  well as categorical rewriting techniques, we are then able to show that
  \emph{directed and non-directed homotopy coincide} for directed paths in these
  precubical sets. Finally, we study the geometric realization of precubical
  sets in metric spaces, to show that our conditions on precubical sets actually
  coincide with those for metric spaces. Since the category of metric spaces is
  not cocomplete, we are lead to work with generalized metric spaces and study
  some of their properties.
  %
\end{abstract}

\maketitle

\setcounter{tocdepth}{3}
\tableofcontents


Deep and fruitful links have been unraveled over the recent years between
concurrent programs and topological spaces. The starting point of those was the definition of the
so-called ``geometric semantics'', which to a concurrent program associates a
space whose points correspond to states of the program, paths to executions, and
homotopies between paths to equivalence of execution traces up to permutation of
independent actions. To be precise, in order to take in account the orientation of time,
which makes the execution of actions irreversible, one has to actually consider
spaces equipped with a notion of direction, which specifies directed paths and directed
homotopies. This point of view has lead to both theoretical and practical
applications, which are based on the fact that the study of the geometrical
features of those spaces brings us information about the possible executions of
the program, without having to consider all the possible interleavings of
actions of the various threads constituting the program. We refer to reader
to~\cite{goubault2000geometry}, as well as the recent book about the
subject~\cite{datc}, for more details.

All these techniques have thus been developed with the aim of using
mathematical tools for computing geometric invariants in order to ease the
verification of concurrent programs. However, because of the aforementioned
presence of time direction, most of those tools (such as homology groups,
homotopy groups, etc.)\ cannot directly be used in order to compute
the information which is relevant from a computer
scientific point of view. In this article, we focus on a particular
restricted class of programs: concurrent programs using only binary mutexes as
a synchronization primitive. While being constrained enough to enjoy interesting
properties, this class is still realistic from a practical point of view and can
be used to express many of the classical synchronization algorithms. One of the
main results of this article is that, in the geometrical models associated to
those programs, \emph{homotopy coincides with directed homotopy}, thus allowing one to
use all tools from ordinary homotopy theory 
to characterize these execution models. 

\paragraph{The cube property.}
Apart from the general motivations stated above, the origins of this article lie
in the observation that a similar property, that we call here the \emph{cube
  property}, occurs in different forms in different contexts, and it turns out
that there are good reasons for this to be the case. This property is namely
\begin{enumerate}
\item one of the main properties satisfied by semantics in asynchronous
  transition systems (or precubical sets) of our class of programs, which
  roughly says that when the transition system contains transitions forming half
  of a cube then it also contains the other half:
  \[
    \svxym{
      &\sst{+++}&\\
      \sst{++-}\ar[ur]^{++0}\ar@{}[r]|{+00}&\sst{+-+}\ar[u]|{+0+}\ar@{}[r]|{00+}&\sst{-++}\ar[ul]_{0++}\\
      \sst{+--}\ar[u]^{+0-}\ar[ur]|{+-0}\ar@{}[rr]|{0-0}&&\sst{--+}\ar[ul]|{0-+}\ar[u]_{-0+}\\
      &\sst{---}\ar[ul]^{0--}\ar[ur]_{--0}&
    }
    \qquad\quad\Leftrightarrow\qquad\quad
    \svxym{
      &\sst{+++}&\\
      \sst{++-}\ar[ur]^{++0}\ar@{}[rr]|{0+0}&&\sst{-++}\ar[ul]_{0++}\\
      \sst{+--}\ar[u]^{+0-}\ar@{}[r]|{00-}&\sst{-+-}\ar[ul]|{0+-}\ar[ur]|{-+0}\ar@{}[r]|{-00}&\sst{--+}\ar[u]_{-0+}\\
      &\sst{---}\ar[ul]^{0--}\ar[u]|{-0-}\ar[ur]_{--0}&
    }
  \]
\item the so-called Yang-Baxter axiom satisfied by symmetries in monoidal
  categories, which can be depicted using string diagrams as
  \[
    \strid1{yb_nl_l}
    \qquad\quad=\qquad\quad
    \strid1{yb_nl_r}
  \]
\item the Gromov condition~\cite{gromov1987hyperbolic} which characterizes
  cubical complexes of non-positive curvature (also called \NPC or CAT(0)
  spaces).
\end{enumerate}
The correspondence between the first two will be one of the main technical tools
used in order to show the coincidence between homotopy and its directed variant
mentioned above. The correspondence with the third point, reveals the main
characteristics of the geometry of the spaces obtained as semantics of programs:
if we realize them as metric spaces, \emph{they are non-positively curved}. This
means that if we draw a triangle in such a space, whose sides are geodesics, it
will appear to be thinner than usual. This is typically the case for hyperbolic
spaces (on the left) where a triangle (also pictured in the middle) is typically
thinner than an usual triangle in a flat space (on the right):
\[
  \includegraphics[scale=.5]{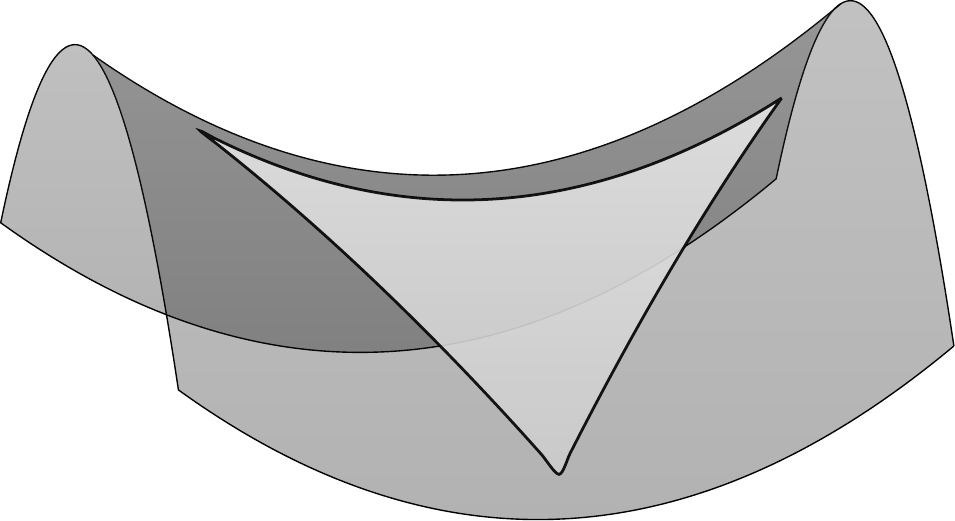}
  \qquad\qquad
  \begin{tikzpicture}[scale=.7]
    \path (0,0) coordinate (A) (4,1) coordinate (B) (2,-2) coordinate (C);
    \draw (A) to[bend right=20] (B) to[bend right=10] (C) to[bend right=10] (A);
  \end{tikzpicture}
  \qquad\qquad
  \begin{tikzpicture}[scale=.7]
    \path (0,0) coordinate (A) (4,1) coordinate (B) (2,-2) coordinate (C);
    \draw (A) to (B) to (C) to (A);
  \end{tikzpicture}
\]

\paragraph{Related work.}
Since the topics covered in the various sections of the article are quite
distinct (but of course closely related), we have mentioned in each section the
related work specific to it, and only briefly expose here transversal
references.
In particular, the cube property that we use for defining \NPC precubical sets
and characterizing precubical semantics of concurrent programs and with binary
mutexes has already been used in various works, as we now briefly recall.
%
On the one hand this condition characterizes domains arising from event
structures, which were introduced by Winskel~\cite{winskel1980events,
  winskel1993models}. In particular, an equivalence of categories between
deterministic labeled event structures and generalized trace languages (which
satisfy three cube properties!) is given by Sassone, Nielsen and Winskel
in~\cite{sassone1993deterministic}.
This line of work is quite relevant here because event structures can be thought
of as ``unfolded'' variants of the programs we consider: in particular, our use
of binary mutexes corresponds to the binary conflict relation in event
structures.
However, the uses of cube conditions in order to model
concurrent processes dates back to Stark's concurrent transition
systems~\cite{stark1989concurrent}, to the work of Panangaden and
Shanbhogue~\cite{panangaden1990stability} (where it is linked with a stability
property in domains), and to Droste~\cite{droste1994kleene} and
Kuske~\cite{kuske1994nondeterministic} who showed that concurrency automata with
two cube axioms correspond to dI-domains.
On the other hand, this condition was used by Gromov to characterize \NPC cubical
complexes~\cite{gromov1987hyperbolic} as mentioned above. The link between the
two was first explicitly made by Chepoi~\cite{chepoi2012nice} (who noticed that
event structures are in bijection with median
graphs~\cite{barthelemy1993median}, which in turn are in bijection with CAT(0)
cube complexes~\cite{roller1998poc, chepoi2000graphs}) and was recently
rediscovered~\cite{ardila2012geodesics} (reinventing the notion of event
structure under the prosaic name ``poset with inconsistent pairs''). These
relationships can be summarized in the following diagram, where the dashed arrows
schematize some of our contributions.
\[
\vxym{
  &\txt{\NPC\\cube\\complexes}\ar@{<-->}@/_3ex/[dr]\ar@{<->}[dd]|-{\txt{\cite{chepoi2012nice,ardila2012geodesics}}}\ar@{<->}[rr]|-{\txt{\cite{roller1998poc,chepoi2000graphs}}}\ar@{<->}[dr]|-{\txt{\cite{gromov1987hyperbolic}}}&&\ar@{<->}@/^8ex/[ddll]|-<<<<<<<<<{\txt{\cite{barthelemy1993median}}}\txt{median\\graphs}\\
  \txt{concurrent\\programs\\with\\mutexes}\ar@{<-->}[ur]\ar@{<-->}@/^3ex/[rr]&&\txt{cube\\property}&\\
  &\txt{event\\structures}\ar@{<->}[rr]|-{\txt{\cite{nielsen1981petri}}}\ar@{<->}[ur]|-{\txt{\cite{nielsen1994relationships, sassone1996models}}}&&\ar@{<->}[ul]|-{\txt{\cite{panangaden1990stability,kuske1994nondeterministic}}}\txt{dI domains}\\
}
\]
Other uses of the cube property can be found in Lévy~\cite{levy1978reductions} (this
seems to be the first occurrence of the condition) and
Melliès'~\cite{gonthier1992abstract,mellies2002axiomatic} study of
standardization in rewriting systems, see also~\cite{bezem2003term}, and
Dehornoy's work on Garside monoids~\cite{dehornoy2000completeness,
  dehornoy2013foundations} where it is used in order to ensure that a presented
category embeds in its enveloping groupoid (this is generalized to localizations
in~\cite{clerc2015presenting}), which is closely related to the situation
studied in Section~\ref{sec:homotopy-npc-cc}.



\paragraph{Contents of the article.}
We begin by recalling the notion of precubical set (\sect{pcs}), associate to
each program a precubical set (\sect{pcs-sem}) describing its execution traces
as well as (higher) commutations between their actions, and characterize their
main properties (\sect{pcs-npc}): they are geometric (\sect{geom-pcs}),
non-positively curved and satisfy in particular the cube condition
(\sect{pcs-npc-def}), which can be reformulated by a condition on links
(\sect{pcs-link}). We then introduce the notions of homotopy and dihomotopy
between paths in precubical sets (\sect{pcs-hom-dihom}) and study those in a
2-categorical context (\sect{2-cat-gpd}), which leads us to
Theorem~\ref{thm:fundt-embedding} which shows that homotopy and dihomotopy
relations coincide (\sect{hom-to-dihom}). We then recall and study in detail
generalized metric spaces (\sect{gms}), allowing us to realize a
precubical set as a metric space (\sect{pcs-greal}) which is shown in
Theorem~\ref{thm:pcs-cat0} to be non-positively curved when obtained as
the semantics of a concurrent program (\sect{npc}). 


\bigskip

We would like to particularly thank Emmanuel Haucourt and Jean Goubault-Larrecq
for many comments and discussions about the ideas developed in this article, as
well as the anonymous referees for helpful suggestions.


\section{Precubical semantics of concurrent programs}
\subsection{Precubical sets}
\label{sec:pcs}
We first recall the definition of precubical sets which will serve as the
primary algebraic structure for defining semantics for concurrent
programs. Those can be thought of as a generalization of the notion of
graph, which incorporates the information of commutation between two actions, and
more generally between~$n$ actions, making them quite natural to model
concurrent programs in the spirit of ``true concurrency''.


\begin{defi}
  The \emph{precubical category} $\pcc$ is the free monoidal category containing
  one object $1$, and two morphisms $\varepsilon^-,\varepsilon^+:0\to 1$,
  where~$0$ denotes the unit of the monoidal category. It can also be presented
  as the category whose objects are integers and morphisms are generated by
  morphisms $\varepsilon^\epsilon_{i,n}:n\to n+1$ with $\epsilon\in\set{-,+}$,
  $n\in\N$ and $0\leq i\leq n$, subject to the relations
  \begin{equation}
    \label{eq:pcc-rel}
    \varepsilon^{\epsilon'}_{i,n+1}\varepsilon^\epsilon_{j,n}
    \qeq
    \varepsilon^\epsilon_{j+1,n+1}\varepsilon^{\epsilon'}_{i,n}
  \end{equation}
  whenever $i\leq j$ and $\epsilon,\epsilon'\in\set{-,+}$. The category of
  \textbf{precubical sets} is the category~$\hat\pcc$ of presheaves over this
  category.
\end{defi}

\noindent
A precubical set $C\in\hat\pcc$ thus consists of a family of sets
$(C(n))_{n\in\N}$ together with
maps $\partial^\epsilon_{i,n}:C(n+1)\to C(n)$, with $0\leq i\leq n$ and
$\epsilon\in\set{-,+}$, where the map~$\partial^\epsilon_{i,n}$ is a notation
for $C(\varepsilon^\epsilon_{i,n})$, satisfying relations which are dual to
\eqref{eq:pcc-rel}, and we sometimes simply write $\partial^\epsilon_i$ when $n$
is clear from the context. An element $c\in C(n)$ is called an \emph{$n$-cube of
  $C$}, $\partial^\epsilon_{i,n}(c)$ is called a \emph{face} of $c$, and given a
morphism $\phi:n\to p$ in $\pcc$, $C(\phi)(c)$ is called an \emph{iterated face}
of $c$. The \emph{dimension} of a precubical set $C$ is the smallest integer
$d\in\N\uplus\set{\infty}$ such that $C(n)=\emptyset$ for~$n>d$. A precubical
set~$C$ is \emph{finite} when it is finite-dimensional and each $C(n)$ is
finite.

\begin{exa}
  \label{ex:dui-pcs}
  We write $\dui$ for the precubical set of dimension~1, called the
  \emph{standard interval},
  \[
  \vxym{
    x\ar[r]^a&y
  }
  \]
  with $\dui(0)=\set{x,y}$, $\dui(1)=\set{a}$, $\dui(n)=\emptyset$ for
  $n\geq 2$, $\partial_0^-(a)=x$, $\partial_0^+(a)=y$.
\end{exa}

\begin{exa}
  \label{exprecub2}
  A simple example of a precubical set~$C$ of dimension~2 is
  \[
  \vxym{
    x\ar[r]^a\ar[d]_b\ar@{}[dr]|-{\alpha}&y\ar[d]^{b'}\\
    x\ar[r]_{a}&y
  }
  \]
  with~$C(0)=\set{x,y}$, $C(1)=\set{a,b,b'}$ and $C(2)=\set{\alpha}$. The cells
  of dimension 0 are pictured as points, cells of dimension 1 as segments, and
  the cell of dimension 2 as a square.
  This precubical set can thus abstractly be thought of as a cylinder. Face maps
  are given by $\partial^-_0(a)=x$, $\partial^+_0(a)=y$,
  $\partial^-_0(\alpha)=b$, $\partial^+_0(\alpha)=b'$,
  $\partial^-_1(\alpha)=\partial^+_1(\alpha)=a$, etc.
\end{exa}


We recall some operations available on precubical sets, which will be used in
the following in order to define the semantics for programs.
A tensor product can be defined as follows.

\begin{defi}
  Given two precubical sets~$C$ and~$D$, their \emph{tensor product}
  $C\otimes D$ is the precubical set defined by
  \[
  (C\otimes D)(n)\qeq\coprod_{i+j=n}C(i)\times D(j)
  \]
  for $n\in\N$, and
  \[
  \partial^\epsilon_{k,n}(C\otimes D)(c,d)
  \qeq
  \begin{cases}
    (\partial^\epsilon_{k,n}(c),d)&\text{if $0\leq k<i$}\\
    (c,\partial^\epsilon_{k-i,n}(d))&\text{if $i\leq k<n$}
  \end{cases}
  \]
  for $n\in\N$, $0\leq k<n$, $\epsilon\in\set{0,1}$ and $(c,d)\in C(i)\times
  D(j)\subseteq(C\otimes D)(n)$ with $i+j=n$.
  This tensor equips the category $\hat\pcc$ with a monoidal structure, whose
  unit is the precubical set~$C$ consisting of one single $0$-cube.
\end{defi}

\noindent
As a presheaf category, the category of precubical sets is cocomplete, with
colimits being computed pointwise~\cite{maclane:cwm}. In particular, we denote
by $C\sqcup D$ the coproduct of two cubical sets, and given a precubical set $C$
and two $0$-cubes $c,d\in C(0)$, we write $\vquot Ccd$ for the precubical set
obtained from $C$ by identifying $c$ and $d$ (this precubical set can be
obtained as the expected coequalizer). Finally, given a set $X\subseteq C(0)$ of
$0$-cubes of a precubical set~$C$, we write $C\setminus X$ for the
\emph{precubical subset} of~$C$ consisting of $n$-cubes of~$C$ which do not contain an element of $X$ as
iterated face.

Given $n\in\N$, we write $\pcc_n$ for the full subcategory of $\pcc$ whose
objects~$k$ satisfy~$k\leq n$, and $\hat\pcc_n$ is thus the full subcategory
of~$\hat\pcc$ consisting of precubical sets of dimension at most~$n$.

\begin{defi}
  \label{def:truncation}
  The presheaves in $\pcc_n$ are called \emph{$n$-precubical sets}. The functor
  $\hat\pcc\to\hat\pcc_n$ induced by precomposition with the inclusion functor
  $\pcc_n\into\pcc$ is called the \emph{$n$-truncation functor}.
\end{defi}

\noindent
In particular, the category $\hat\pcc_1$ is the category of graphs, and every
precubical set~$C$ has an underlying graph: we thus sometimes refer to the
elements of $C(0)$ (\resp $C(1)$) as \emph{vertices} (\resp \emph{edges} or
\emph{transitions}), a \emph{path} in a precubical set is a path in the
underlying directed graph, etc. Also, a 2-cube is sometimes called a
\emph{square}.

\begin{defi}
  \label{def:independence}
  We define a symmetric relation $\tile$ on paths of length 2 as the smallest
  symmetric relation such that, given two paths $a\cc b$ and $b'\cc a'$ of
  length 2 constituted of edges $a$, $b$, $a'$ and $b'$, we have
  $a\cc b\tile b'\cc a'$ whenever there exists a square $\alpha$ such that
  \[
  a=\partial_{0,1}^-(\alpha)
  \qquad\qquad
  b=\partial_{1,1}^+(\alpha)
  \qquad\qquad
  b'=\partial_{1,1}^-(\alpha)
  \qquad\qquad
  a'=\partial_{0,1}^+(\alpha)
  \]
  or symmetrically. Graphically, we have a square~$\alpha$ as on the left, which 
  we often schematically picture as on the right
  \begin{equation}
    \label{eq:tile}
    \vxym{
      &x_{11}&\\
      x_{10}\ar[ur]^{b=\partial_{1,1}^+(\alpha)}&\alpha&\ar[ul]_{\partial_{0,1}^+(\alpha)=a'}x_{01}\\
      &\ar[ul]^{a=\partial_{0,1}^-(\alpha)}x_{00}\ar[ur]_{\partial_{1,1}^-(\alpha)=b'}&\\
    }
    \qquad\qquad\qquad\qquad
    \vxym{
      &x_{11}&\\
      x_{10}\ar[ur]^{b}&\tile&\ar[ul]_{a'}x_{01}\\
      &\ar[ul]^{a}x_{00}\ar[ur]_{b'}&\\
    }
  \end{equation}
  In this situation, we say that the coinitial transitions $a$ and $b'$ (\resp
  $b'$ and $a$) are \emph{independent}.
\end{defi}

As for any presheaf category, there is a full and faithful embedding
$Y:\pcc\to\hat\pcc$ given by the Yoneda functor, which is defined on objects as
the set of morphisms $Ynm=\pcc(m,n)$. Given $n\in\N$, the representable
precubical set $Yn$ is called the \emph{standard $n$-cube}. It can be shown that
$Yn$ is isomorphic to $\dui^{\otimes n}$, the tensor product of $n$ copies of
$\dui$ (see Example~\ref{ex:dui-pcs}). For instance, we have
\[
\begin{array}{c@{\qquad\qquad\qquad}c@{\qquad\qquad\qquad}c@{\qquad\qquad\qquad}c}
  \xymatrix{x}
  &
  \xymatrix{x\ar[r]^f&y}
  &
  \xymatrix{
    x\ar[r]^f\ar[d]_g\ar@{}[dr]|\tile&y_1\ar[d]^{g'}\\
    y_2\ar[r]_{f'}&z
    }
  \\
  Y0&Y1&Y2
\end{array}
\]
An explicit description of standard $n$-cubes can be given as follows. We will
see that it is often quite useful in order to perform computations.

\begin{lem}
  Given $n\in\N$, the cubes in $Yn$ are in bijection with strings in
  $\set{-,0,+}^n$, the $k$-cubes in $Ynk$ being the strings containing the
  letter~$0$ exactly $k$ times, and given $u\in Ynk$, the face $\partial_i^-(u)$
  (\resp $\partial_i^+(u)$) is obtained from $u$ by replacing the $i$-th
  letter~$0$ by $-$ (\resp $+$).
\end{lem}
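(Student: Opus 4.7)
The plan is to exploit the isomorphism $Yn\cong\dui^{\otimes n}$ asserted just above the lemma, which reduces the statement to the combinatorics of the iterated tensor product of the standard interval.

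First, unfolding the tensor product iteratively gives
\[
  (\dui^{\otimes n})(k) \qeq \coprod_{k_1+\cdots+k_n=k} \dui(k_1)\times\cdots\times\dui(k_n).
\]
Since $\dui$ is concentrated in dimensions $0$ and $1$, with $\dui(0)=\set{x,y}$ and $\dui(1)=\set{a}$, only summands with each $k_j\in\set{0,1}$ and exactly $k$ of them equal to $1$ contribute; an element of such a summand is a sequence $(c_1,\ldots,c_n)\in\set{x,y,a}^n$ with exactly $k$ occurrences of $a$. Identifying $x\leftrightarrow{-}$, $y\leftrightarrow{+}$, $a\leftrightarrow 0$ then produces the asserted bijection with strings in $\set{-,0,+}^n$ having exactly $k$ zeros.

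Next, for the face map description, I would proceed by induction on $n$, writing $\dui^{\otimes n}=\dui\otimes\dui^{\otimes(n-1)}$ and applying the tensor product formula recalled in the text. A $k$-cube of $\dui^{\otimes n}$ has the form $(c,d)\in\dui(i_1)\times\dui^{\otimes(n-1)}(k-i_1)$ for some $i_1\in\set{0,1}$, and
\[
  \partial^\epsilon_p(c,d) \qeq \begin{cases}(\partial^\epsilon_0(c),d) & \text{if } p<i_1,\\ (c,\partial^\epsilon_{p-i_1}(d)) & \text{if } p\geq i_1.\end{cases}
\]
If $i_1=0$ (so $c\in\set{x,y}$), one recurses into $d$ with the index $p$ unchanged, leaving $c$ untouched in the output. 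If $i_1=1$ and $p=0$, the entry $c=a$ is replaced directly by $\partial^\epsilon_0(a)\in\set{x,y}$. If $i_1=1$ and $p\geq 1$, one recurses into $d$ with index $p-1$. The inductive hypothesis then shows that this process walks through the tuple $(c_1,\ldots,c_n)$, skipping the $x,y$-entries and decrementing $p$ each time an $a$ is passed, until the $p$-th occurrence of $a$ (counted from $0$) is located and replaced by its $\epsilon$-endpoint. In string notation, this is exactly the replacement of the $p$-th~$0$ by $-$ (resp.~$+$) when $\epsilon=-$ (resp.~$\epsilon=+$).

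The main care needed is in the inductive step, distinguishing the three subcases ($i_1=0$; $i_1=1$ with $p=0$; $i_1=1$ with $p\geq 1$) and verifying that the index shift $p\mapsto p-i_1$ matches the $0$-indexed count of $a$'s being skipped by descent into $d$. An alternative, more computational approach would be to argue directly from the generator-relation presentation of $\pcc$: any composite of generators can be sorted into a normal form $\varepsilon^{\epsilon_{n-k}}_{i_{n-k}}\circ\cdots\circ\varepsilon^{\epsilon_1}_{i_1}$ with $i_1<\cdots<i_{n-k}$ by repeatedly applying the relation~\eqref{eq:pcc-rel}, which terminates because the sum of indices strictly increases at each admissible swap; the tensor product route, however, avoids any explicit confluence argument and is more conceptual.
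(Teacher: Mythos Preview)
The paper states this lemma without proof, so there is nothing to compare against directly. Your argument via the isomorphism $Yn\cong\dui^{\otimes n}$ is correct: the iterated tensor formula and the dimension constraints on~$\dui$ immediately yield the bijection with $\set{-,0,+}^n$, and your inductive analysis of the face maps via the tensor recursion correctly identifies the replacement of the $p$-th zero. The alternative route you sketch at the end---normal forms for morphisms $k\to n$ in~$\pcc$ via the commutation relation~\eqref{eq:pcc-rel}---is in fact closer to what the paper implicitly relies on elsewhere (the description $Ynk=\pcc(k,n)$ is the Yoneda definition, and the string encoding records the positions and signs of the inserted coordinates), but your tensor-product argument is cleaner and avoids the termination/confluence bookkeeping, exactly as you note.
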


\noindent
For instance, the preceding standard cubes are
\[
\begin{array}{c@{\qquad\qquad\qquad}c@{\qquad\qquad\qquad}c@{\qquad\qquad\qquad}c}
  \xymatrix{\varepsilon}
  &
  \xymatrix{\sst{-}\ar[r]^0&\sst{+}}
  &
  \xymatrix{
    \sst{--}\ar[r]^{-0}\ar[d]_{0-}\ar@{}[dr]|{00}&\sst{-+}\ar[d]^{0+}\\
    \sst{+-}\ar[r]_{+0}&\sst{++}
    }
  \\
  Y0&Y1&Y2
\end{array}
\]
(here $\varepsilon$ denotes the empty word). One can notice that $Ynk=\emptyset$
for $k>n$ and there is only one element in~$Ynn$ (the string $0^n$). The
\emph{standard hollow $n$-cube}, denoted $\partial Yn$, is the precubical set
obtained from $Yn$ by removing the cell in $Ynn$. For instance,
\[
\begin{array}{c@{\qquad\qquad\qquad}c@{\qquad\qquad\qquad}c@{\qquad\qquad\qquad}c}
  \xymatrix{}
  &
  \xymatrix{\sst-&\sst+}
  &
  \xymatrix{
    \sst{--}\ar[r]^{-0}\ar[d]_{0-}&\sst{-+}\ar[d]^{0+}\\
    \sst{+-}\ar[r]_{+0}&\sst{++}
    }
  \\
  \partial Y0&\partial Y1&\partial Y2
\end{array}
\]
There is an obvious inclusion morphism of precubical sets
$\partial Yn\to Yn$ embedding the standard hollow $n$-cube as the ``border'' of
the standard $n$-cube. A fact that will sometimes be useful is
that the faces of the standard $n$-cubes naturally index morphisms computing
iterated faces of an $n$-cube: given a precubical set $C$, an $n$\nbd{}cube
$x\in C(n)$ and~$u\in\set{-,0,+}^n$, we write
\[
\partial^u(x)
\qeq
\partial_{0}^{u_0}\circ\partial_{1}^{u_1}\circ\ldots\circ\partial_{n-1}^{u_{n-1}}(x)
\]
where $u_i\in\set{-,0,+}$ is the $i$-th letter of~$u$ and by convention
$\partial_{i}^0(x)=x$.

Given a set~$\labels$ of \emph{labels}, we will sometimes consider labeled
variants of precubical sets.

\begin{defi}
  A \emph{labeled} precubical set $(C,\ell)$ consists of a precubical set $C$
  together with a labeling function $\ell:C(1)\to\labels$ on edges, such that 
we have
  $\ell\circ\partial^-_{i,1}=\ell\circ\partial^+_{i,1}$ for $i\in\set{0,1}$.
\end{defi}

\noindent
Graphically, the condition on the labeling function amounts to supposing that
parallel edges of a square have the same label.

\begin{rem}
  If we suppose that the set $\labels$ is totally ordered, we can define a
  precubical set $\lpcs\labels$ such that the elements of $\lpcs\labels(n)$ are
  strictly increasing sequences of $n$ elements of $\labels$, and given
  $c\in\lpcs\labels(n+1)$, $\partial^-_{i,n}(c)=\partial^+_{i,n}(c)$ is obtained
  by removing the $i$-th element of the sequence~$c$. The category of labeled
  precubical sets can then alternatively be defined as the comma category
  $\hat\pcc/\lpcs$, see~\cite{goubault2012formal} for details.
\end{rem}

We will also make some use of presimplicial sets, which are a well-known variant
of precubical sets, which is based on triangles instead of squares. Formally,
the category of \emph{(augmented) presimplicial sets} is the presheaf
category $\hat\psc$, where $\psc$ is the category whose objects are finite
ordinals~$\intset{n}$ (given $n\in\N$, we write $\intset{n}$ for the set
$\set{0,\ldots,n-1}$) and morphisms are strictly increasing functions. Given a
presimplicial set $S$, the elements of $S(n)$ are called
\emph{$n$\nbd{}simplices}, and, given~$i$ with $0\leq i\leq n$, we write
$\partial_{i,n}:S(n+1)\to S(n)$ for the image by~$S$ of the function
$\intset{n}\to\intset{n+1}$, whose image does not contain~$i$, associating to an
$(n+1)$-simplex its $i$-th face.

\subsection{Precubical semantics of concurrent programs}
\label{sec:pcs-sem}
We now introduce an idealized concurrent programming language and provide its
semantics in precubical sets. A detailed presentation of this language can be
found in~\cite{datc}, in a variant which includes more realistic features, such
as memory and manipulation of values.

We suppose fixed a set $\actions$ of \emph{actions}, containing a particular
action $\nop\in\actions$ (standing for the action which has no effect), and a
set $\mutexes$ of \emph{mutexes}. A \emph{program}~$p$ is an expression
generated by the grammar
\[
p
\gramdef
\pone
\gramor
A
\gramor
\P a
\gramor
\V a
\gramor
p\pseq p
\gramor
p\por p
\gramor
p\ppar p
\gramor
\pwhile p
\]
where $A\in\actions$ is an arbitrary action and $a\in\mutexes$ is an arbitrary
mutex. The intended meaning of these constructions is the following one: $\pone$
is the empty program, $A\in\actions$ is an arbitrary instruction (for instance,
assigning a value to a memory cell or printing a message on the screen),
$p\pseq q$ is $p$ followed by $q$, $p\por q$ is a (non-deterministic) choice
between $p$ and $q$, $p\ppar q$ is the program obtained by running $p$ and $q$
in parallel, $\pwhile p$ is the program $p$ executed an arbitrary number of
times. In practice, it is desirable to forbid some actions (such as accessing
memory) from running in parallel. This is generally done using mutexes which are
particular kind of resources on which a program can perform two operations: a
mutex~$a$ can either be \emph{locked} using the instruction $\P a$ or
\emph{released} using the instruction $\V a$. The operational semantics of
a mutex is such that at most one subprogram can have locked a mutex at a time:
if a subprogram tries to lock an already-locked mutex, it is frozen until the
mutex is released.

The class of programs defined above is very general and thus difficult to reason
about. In practice, the programs which are written are such that the state of
mutexes only depends on the position in the program and not on the execution
that lead to it. We will thus restrict to such programs, called
\emph{conservative}, which can be formally defined as follows.

\begin{defi}
  The \emph{resource consumption} function $\Delta(p):\mutexes\to\Z$ of a
  program~$p$ is defined inductively by
  \begin{gather*}
    \begin{align*}
      \Delta(\pone)&\qeq0&&&\Delta(A)&\qeq0\\
      \Delta(\P a)&\qeq-\delta_a&&&\Delta(\V a)&\qeq\delta_a\\
      \Delta(p\pseq q)&\qeq\Delta(p)+\Delta(q)&&&\Delta(p\ppar q)&\qeq\Delta(p)+\Delta(q)\\
      \Delta(p\por q)&\qeq\Delta(p)&&\text{if $\Delta(p)=\Delta(q)$}\\
      \Delta(\pwhile p)&\qeq 0&&\text{if $\Delta(p)=0$}
    \end{align*}
  \end{gather*}
  Above, $0$ denotes the constant function, the sum of functions is computed
  pointwise, and $\delta_a:\mutexes\to\Z$ is the function such that
  $\delta_a(a)=1$ and $\delta_a(b)=0$ for $b\neq a$. Note that, because of the
  side conditions in the cases for choice and loop, this function might not be
  defined. A program~$p$ such that $\Delta(p)$ is defined is called
  \emph{conservative}.
\end{defi}

\begin{exa}
  The program $\P a\pseq\pwhile{\pa{\V a\pseq\P a}}\pseq\V a$ is conservative,
  as well as $\P a\ppar\P b$, but not the program $\pwhile{\P a}$ (the number of
  times $a$ is taken depends on the number of loops executed) nor $\P a\por\P b$ (which
  mutex is taken in the end depends on the chosen branch).
\end{exa}

In the following, we will consider the following set of labels:
\[
\labels
\qeq
\actions\uplus\setof{\P a}{a\in\mutexes}\uplus\setof{\V a}{a\in\mutexes}
\]
To any program~$p$ can be inductively associated a precubical set~$\cs{p}$
labeled in~$\labels$, together with two vertices $\vbeg p$ and $\vend p$, as
follows:
\begin{itemize}
\item \emph{empty}: $\cs\pone$ is the precubical set reduced to one vertex (and
  $\vbeg\pone$ and $\vend\pone$ are both equal to this vertex):
  \[
  \cs\pone\qeq\vbeg\pone=\vend\pone
  \]
\item \emph{action}: $\cs A$ is the graph
  \[
  \cs A\qeq \vxym{\vbeg A\ar[r]^A&\vend A}
  \]
\item \emph{lock}: $\cs{\P a}$ is the graph
  \[
  \cs{\P a}\qeq\vxym{\vbeg{\P a}\ar[r]^{\P a}&\vend{\P a}}
  \]
\item \emph{release}: $\cs{\V a}$ is the graph
  \[
  \cs{\V a}\qeq \vxym{\vbeg{\V a}\ar[r]^{\V a}&\vend{\V a}}
  \]
\item \emph{sequence}:
  \[
  \cs{p\pseq q}\qeq(\cs p\sqcup\cs q)[\vend p=\vbeg q]
  \]
  with $\vbeg{p\pseq q}=\vbeg p$ and $\vend{p\pseq q}=\vend q$,
\item \emph{choice}:
  \[
  \cs{p\por q}
  \qeq
  (\cs{p'}\sqcup\cs{q'})[\vbeg{p'}=\vbeg{q'}][\vend{p'}=\vend{q'}]
  \]
  with $\vbeg{p\por q}=\vbeg{p'}=\vbeg{q'}$ and $\vend{p\por
    q}=\vend{p'}=\vend{q'}$, where $p'=\nop\pseq p$ and $q'=\nop\pseq q$,
\item \emph{parallel}:
  \[
  \cs{p\ppar q}
  \qeq
  \cs{p}\otimes\cs{q}
  \]
  with $\vbeg{p\ppar q}=(\vbeg p,\vbeg q)$ and $\vend{p\ppar q}=(\vend p,\vend
  q)$,
\item \emph{loop}:
  \[
  \cs{\pwhile p}
  \qeq
  (\cs{p'}\sqcup\cs\nop)[\vbeg{p'}=\vend{p'}][\vbeg{p'}=\vbeg\nop]
  \]
  with $\vbeg{\pwhile p}=\vbeg{p'}$ and $\vend{\pwhile p}=\vend\nop$, where
  $p'=\nop\pseq p$.
\end{itemize}
The last cases can be illustrated as follows:
\begin{align*}
  \cs{p\pseq q}&\qeq
  \begin{tikzpicture}[baseline]
    \gvert{(0,0)} node[left] {$\vbeg{p\pseq q}=\vbeg p$};
    \draw[dashed] (1,0) ellipse (1 and 0.5) node {$\cs p$};
    \gvert{(2,0)} node[left] {$\vend p$} node[right] {$\vbeg q$};
    \draw[dashed] (3,0) ellipse (1 and 0.5) node {$\cs q$};
    \gvert{(4,0)} node[right] {$\vend q=\vend{p\pseq q}$};
  \end{tikzpicture}
  \\
  \cs{p\por q}&\qeq
  \begin{tikzpicture}[baseline]
    \gvert{(0,0)} node[left] {$\vbeg{p\por q}$};
    \gvert{(1,1)} node[left] {$\vbeg{p}$};
    \gvert{(1,-1)} node[left] {$\vbeg{q}$};
    \draw (0,0) edge[gedge] node[above left] {$\nop$} (1,1);
    \draw (0,0) edge[gedge] node[below left] {$\nop$} (1,-1);
    \draw[fill=black] (3,0) circle (0.04) node[right] {$\vend{p}=\vend{q}=\vend{p\por q}$};
    \draw[dashed,rotate around={-22.5:(2,.5)}] (2,.5) ellipse (1.1 and 0.5) node {$\cs{p}$};
    \draw[dashed,rotate around={22.5:(2,-.5)}] (2,-.5) ellipse (1.1 and 0.5) node {$\cs{q}$};
  \end{tikzpicture}
  \\
  \pwhile p&\qeq
  \begin{tikzpicture}[baseline]
    \gvert{(0,0)} circle (0.04) node[left] {$\vbeg{\pwhile p}=\vend p$};
    \gvert{(2,0)} node[right] {$\vbeg p$};
    \draw[dashed] (1,0) ellipse (1 and 0.3) node {$\cs{p}$};
    \draw (0,0) edge[out=85,in=95,gedge] node[above] {$\nop$} (2,0);
    \draw[fill=black] (2,-1) circle (0.04) node[right] {$\vend{\pwhile{p}}$};
    \draw (0,0) edge[out=-75,in=180,gedge] node[below] {$\nop$} (2,-1);
  \end{tikzpicture}
\end{align*}

\begin{exa}
  \label{ex:prog-pcs}
  The precubical sets associated to the program $(\P a\ppar\P b)\por\P c$ and to
  the program $\pa{\P a\pseq \V a}\ppar\pa{\P a\pseq\V a}$ are respectively
  \[
  \vxym{
    &x\ar[dl]_\nop\ar[dr]^\nop\\
    y\ar[r]^{\P a}\ar[d]_{\P b}\ar@{}[dr]|-\tile&y'\ar[d]_{\P b}&z\ar[dl]^{\P c}\\
    y''\ar[r]_{\P b}&z'
  }
  \qquad\qquad\qquad
  \vxym{
    \ar[r]^{\P a}\ar[d]_{\P a}\ar@{}[dr]|\tile&\ar[d]|{\P a}\ar[r]^{\V a}\ar@{}[dr]|\tile&\ar[d]^{\P a}\\
    \ar[d]_{\V a}\ar[r]|{\P a}\ar@{}[dr]|\tile&\ar[d]|{\V a}\ar[r]|{\V a}\ar@{}[dr]|\tile&\ar[d]^{\V a}\\
    \ar[r]_{\P a}&\ar[r]_{\V a}&\\
  }
  \]
\end{exa}

\begin{rem}
  The transitions labeled by~$\nop$ abstractly represent the evaluation of
  conditions (depending on which the branch of a conditional branching is chosen
  or a loop is stopped). If our language contained boolean expressions, they
  should actually be such, see~\cite{datc}. Notice that these transitions cannot
  be contracted to a point in the semantics because it not be correct anymore:
  for instance the semantics of $\P a\por\pone$ would contain a loop.
\end{rem}

\noindent
A path~$t$ of length~$n$ in~$\cs{p}$, where the labels of its vertices are
$l_1,l_2,\ldots,l_n\in\labels$, can be seen as a program
$l_1\pseq l_2\pseqs l_n$ (\ie a sequence of actions and operations on mutexes),
and we write~$\Delta(t):\mutexes\to\Z$ for its resource consumption. When the
program is conservative, resource consumption only depends on the endpoints of
paths, which explains the terminology, by analogy with the situation in physics:

\begin{lem}
  Given a conservative program~$p$ and two paths $t$ and $u$ in $\cs{p}$ with
  the same source and the same target, we have $\Delta(t)=\Delta(u)$.
\end{lem}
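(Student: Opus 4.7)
The plan is to prove, by structural induction on~$p$, a slightly stronger statement: for every conservative program~$p$ there exists a \emph{state function} $\sigma_p:\cs{p}(0)\to\Z^{\mutexes}$ satisfying $\sigma_p(\vbeg{p})=0$, $\sigma_p(\vend{p})=\Delta(p)$, and $\sigma_p(v')-\sigma_p(v)=\Delta(\ell(e))$ for every edge $e:v\to v'$ in $\cs{p}$, where $\ell(e)\in\labels$ denotes the label of~$e$. Once such a~$\sigma_p$ is available, the lemma follows by a telescoping argument: for a path $t=e_1\cc\cdots\cc e_n$ from~$v_0$ to~$v_n$, one has
\[
\Delta(t)\qeq\sum_{i=1}^n\Delta(\ell(e_i))\qeq\sigma_p(v_n)-\sigma_p(v_0),
\]
which depends only on the endpoints.

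The base cases ($\pone$, $A$, $\P a$, $\V a$) are immediate since $\cs{p}$ contains at most one edge. For the inductive cases, I build $\sigma_p$ by following the inductive construction of $\cs{p}$. For $p\pseq q$, set $\sigma(v)=\sigma_p(v)$ on the copy of $\cs{p}$ and $\sigma(v)=\Delta(p)+\sigma_q(v)$ on the copy of $\cs{q}$; the two definitions agree at the glued vertex $\vend{p}=\vbeg{q}$ because $\sigma_p(\vend{p})=\Delta(p)$ and $\sigma_q(\vbeg{q})=0$. For $p\por q$, follow $\sigma_p$ on the $p$-branch and $\sigma_q$ on the $q$-branch (the initial $\nop$ edges contribute~$0$); consistency at the identified final vertex is precisely the side condition $\Delta(p)=\Delta(q)$ required for $p\por q$ to be conservative. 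For $p\ppar q$, use the tensor structure and set $\sigma(v,w)=\sigma_p(v)+\sigma_q(w)$; every edge of $\cs{p}\otimes\cs{q}$ has the form $(e,w)$ or $(v,e')$, carrying the label of $e$ or $e'$ respectively, so the required increment condition reduces to the corresponding one for $\sigma_p$ or $\sigma_q$. For $\pwhile p$, the side condition $\Delta(p)=0$ ensures that $\sigma_{p'}$ returns to $0$ at $\vend{p'}$, so the identification of $\vbeg{p'}$ with $\vend{p'}$ is consistent.

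The main obstacle is not any single computation but the careful bookkeeping of coequalizer identifications: at every inductive step one must verify that the piecewise definition of~$\sigma_p$ is well-defined on the quotient precubical set, and in each case this is precisely where the side conditions of the definition of~$\Delta$ (\ie of conservatism) intervene. Once $\sigma_p$ is constructed, the conclusion is immediate from the telescoping formula above.
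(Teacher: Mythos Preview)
Your argument is correct. The paper states this lemma without proof, so there is no detailed argument to compare against; your structural induction on~$p$, constructing a potential~$\sigma_p$ whose increments along edges equal the resource consumption of the labels, is exactly the kind of argument one expects here and goes through in each case as you describe. In effect you are constructing directly, and by induction, the resource potential~$\rpot{p}$ that the paper defines only \emph{after} this lemma (using the lemma to justify well-definedness); your route reverses the order of dependence, building the potential first and deducing path-independence from it, which is arguably cleaner since it makes explicit where each side condition in the definition of conservatism is consumed.
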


\noindent
It can be observed that, for any program~$p$, there is a path in~$\cs{p}$ from
$\vbeg{p}$ to any vertex~$x$. This fact and preceding lemma ensure that the
following definition makes sense:

\begin{defi}
  The \emph{resource potential} $\rpot{p}:\cs{p}(0)\to(\mutexes\to\Z)$ of a
  conservative program~$p$ is the function which to a vertex~$x\in\cs{p}(0)$
  associates $\Delta(t)$ where $t$ is any path from $\vbeg{p}$ to~$x$.
\end{defi}

\noindent
A vertex~$x$ such that there exists $a \in \mutexes$ where 
$\rpot{p}(x)<-1$ or $\rpot{p}(x)>0$ is said to be
\emph{forbidden}: the operational semantics of mutexes enforces that such a
state of the program can never be reached, because a mutex can be taken at most
once and not released more than taken. The precubical semantics of a
conservative program~$p$ is thus obtained from~$\cs{p}$ by removing those
forbidden vertices. Given a precubical set~$C$ and a set~$X\subseteq C(0)$ of
vertices of~$C$, recall that we write $C\setminus X$ for the precubical set
obtained from~$C$ by removing all the cubes having an element of~$X$ as iterated
face.

\begin{defi}
\label{precubsem:defi}
  Given a conservative program~$p$, its \textbf{precubical semantics} is the
  precubical set~$\pcs{p}$ defined by
  \[
  \pcs{p}
  \qeq
  \cs{p}\setminus X
  \qquad\text{where}\qquad
  X\qeq\setof{x\in C(0)}{\rpot{p}(x)<-1 \lor\rpot{p}(x)>0}
  \]
\end{defi}

\begin{exa}
  The precubical semantics of the program
  $\pa{\P a\pseq \V a}\ppar\pa{\P a\pseq\V a}$
  \[
  \vxym{
    \ar[r]^{\P a}\ar[d]_{\P a}&\ar[r]^{\V a}&\ar[d]^{\P a}\\
    \ar[d]_{\V a}&&\ar[d]^{\V a}\\
    \ar[r]_{\P a}&\ar[r]_{\V a}&\\
  }
  \]
  is obtained by removing then central vertex (and associated transitions) from
  the graph of Example~\ref{ex:prog-pcs}.
\end{exa}

\noindent
An \emph{execution trace}~$t$ of a program~$p$ is a path in~$\pcs{p}$ starting
from the beginning vertex $\vbeg{p}$. It can be checked that those are in
bijection with execution traces as they would be defined in a reasonable
operational semantics~\cite{datc}. Moreover, $n$-cubes of dimension $2$ and
higher can be used to construct a useful notion of equivalence of execution
traces, called homotopy, as explained in Section~\ref{sec:homotopy-npc-cc}.

\subsection{Non-positively curved precubical sets}
\label{sec:pcs-npc}
In this section, we study the general properties of precubical sets arising as
semantics of programs. This will lead us to formulate a set of conditions
characterizing what we call \emph{non-positively curved precubical sets}: it
will be shown in Section~\ref{sec:Gromov-condition} that their geometric
realizations are precisely non-positively curved precubical complexes.

\subsubsection{Geometric precubical sets}
\label{sec:geom-pcs}
We begin by recalling the definition of ``geometric'' precubical
sets~\cite{fajstrup2005dipaths}. The reason for this terminology should become
apparent in Proposition~\ref{prop:geometric-pcs-real}: their geometric
realizations are cubical complexes. 

\begin{defi}
  \label{def:geometric-pcs}
  A precubical set~$C$ is \textbf{geometric} when it satisfies the following
  conditions:
  \begin{enumerate}
  \item \emph{no self-intersection}: two distinct iterated faces of a cube in
    $C$ are distinct, \ie given two morphisms $\phi,\psi:m\to n$ in $\pcc$, if there
    exists $c\in C(n)$ such that $C(\phi)(c)=C(\psi)(c)$ then $\phi=\psi$,
  \item \emph{maximal common faces}: two cubes admitting a common face admit a
    maximal common face.
  \end{enumerate}
\end{defi}

\begin{exa}
  \label{ex:pcs-geom}
  Consider the following precubical sets of dimension~1
  \[
  \begin{array}{c@{\qquad\qquad}c@{\qquad\qquad}c@{\qquad\qquad}c}
    \vxym{x\ar@/^/[r]^a\ar@/_/[r]_b&y}
    &
    \vxym{x\ar@(ul,ur)^a}
    &
    \vxym{x\ar@/^/[r]^a\ar@{<-}@/_/[r]_b&y}
    &
    \vxym{&y\ar[dr]^b&\\x\ar[ur]^a&&\ar[ll]^cz}
    \\
    (1)&(2)&(3)&(4)
  \end{array}
  \]
  (1) and (3) are not self-intersecting but fail to have maximal common faces,
  (2) is self-intersecting and (4) is geometric. Notice that (4) and (3) are
  obtained by subdividing (2). However, it is not always possible to subdivide
  enough a precubical set in order to obtain a geometric
  one~\cite{fajstrup2005dipaths}.
\end{exa}

\begin{exa}
  \label{ex:2-squares-non-geom}
  \label{ex:pinned-squares}
  Consider the precubical set obtained from two copies of the standard square
  $Y2$ by identifying both vertices $\sst{-+}$ and both vertices $\sst{+-}$ in
  the two squares:
  \[
  \vxym{
    &&y_1\ar[dr]_{g'}\ar@/^/[drr]^{b'}\\
    x\ar@/^/[urr]^a\ar@/_/[drr]_b&x'\ar[ur]_f\ar[dr]^g&&y'&y\\
    &&y_2\ar[ur]^{f'}\ar@/_/[urr]_{a'}
  }
  \]
  with $a\cc b'\tile a'\cc b$ and $f\cc g'\tile g\cc f'$. This precubical set is
  not self-intersecting but does not have maximal common faces.
\end{exa}

\noindent
Interestingly, the geometricity condition can be characterized as follows on the
category of elements of a precubical set (this classical notion is recalled in
Section~\ref{sec:El}):

\begin{lem}
  \label{lem:geom-El}
  A precubical set~$C$ is geometric if and only if its category of
  elements~$\El(C)$ is a poset such that any two elements with a common lower
  bound have a greatest lower bound, \ie for every $x\in\El(C)$ the slice
  category $x/\El(C)$ is a meet semilattice.
\end{lem}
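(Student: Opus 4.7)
The plan is to prove each direction by directly translating the two geometricity conditions into categorical properties of $\El(C)$: condition~(1) corresponds to the poset property, and condition~(2) to the meet semilattice structure in over-slices.

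For the forward direction, I would first verify that $\El(C)$ is a poset. Since $\pcc$ is generated by the face maps $\varepsilon^\epsilon_{i,n}$, which strictly increase dimension, every non-identity morphism of $\pcc$ strictly raises dimension; the only endomorphisms in $\pcc$ are therefore identities, from which antisymmetry of $\El(C)$ follows immediately. Thinness is then a direct reformulation of condition~(1): two parallel morphisms $(m,d)\to(n,c)$ in $\El(C)$ are morphisms $\phi,\psi:m\to n$ in $\pcc$ satisfying $C(\phi)(c)=d=C(\psi)(c)$, and no self-intersection forces $\phi=\psi$. Next, for fixed $x\in\El(C)$ and any two objects $(y,f),(y',f')\in x/\El(C)$, the cubes $y$ and $y'$ admit $x$ as a common face, so condition~(2) furnishes a greatest common face $z$; since $x$ is itself a common face we have $x\leq z$, and $z$ together with the unique morphism $x\to z$ serves as the required meet in the slice.

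Conversely, suppose $\El(C)$ is a poset with all over-slices meet semilattices. Thinness immediately yields condition~(1): if $\phi,\psi:m\to n$ in $\pcc$ both satisfy $C(\phi)(c)=C(\psi)(c)=d$, they are parallel in $\El(C)$ and hence equal. For condition~(2), given cubes $c,c'$ sharing a common face $d$, I would take $z$ to be the meet of $c,c'$ inside the slice $d/\El(C)$; then $d\leq z\leq c,c'$, so $z$ is a common face above $d$. To promote $z$ to the greatest common face, I would observe that condition~(1), now at our disposal, forces the iterated faces of any $n$-cube to form a lattice isomorphic to $\{-,0,+\}^n$; so given any other common face $d'$ of $c,c'$, the join $d\vee d'$ computed inside the face lattice of $c$ coincides with the corresponding join computed inside the face lattice of $c'$ (by uniqueness of iterated faces), producing a common face of $c,c'$ above both $d$ and $d'$, which by the meet property in $d/\El(C)$ must lie below $z$, giving $d'\leq z$.

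The main obstacle is precisely this last step in the reverse direction. The meet semilattice hypothesis at $d$ naturally bounds only those common faces lying above $d$, whereas the geometricity condition~(2) demands a uniformly greatest common face regardless of a chosen base. Bridging this gap requires exploiting condition~(1)---which rigidifies the face structure of each cube into a standard lattice---in tandem with the meet hypothesis, in order to coherently manufacture joins of arbitrary common faces that remain common faces above~$d$ and can then be compared with $z$.
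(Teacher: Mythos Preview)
Your forward direction is correct and is essentially the paper's argument.

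Your reverse direction contains a genuine error. The claim that the join $d\vee d'$ computed in the face lattice of~$c$ coincides with that computed in~$c'$ is false, and in fact the slice hypothesis you start from is strictly weaker than geometricity. Take the parallel-edges precubical set of Example~\ref{ex:pcs-geom}(1): vertices $x,y$ and edges $a,b:x\to y$. Here $\El(C)$ is a poset and every coslice $u/\El(C)$ is a meet semilattice (in $x/\El(C)$ the meet of $a,b$ is~$x$; in $y/\El(C)$ it is~$y$), yet $x\vee_a y=a\neq b=x\vee_b y$, and $C$ is not geometric since $a$ and~$b$ have no greatest common face. So your join argument breaks on precisely the configuration the lemma is meant to exclude, and no face-lattice manipulation can close the gap.

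The fix is to take the primary clause as the operative hypothesis: any two elements of $\El(C)$ with a common lower bound have a greatest lower bound \emph{in~$\El(C)$}. The phrase after ``\ie'' is a consequence of this (the greatest lower bound necessarily dominates any witnessing lower bound~$x$, hence is the meet in $x/\El(C)$) but, as the example shows, not an equivalent standalone hypothesis. With the primary clause in hand the reverse direction is immediate, which is exactly what the paper means by ``direct reformulation'': a common face is a common lower bound, and the greatest lower bound is the required maximal common face. Your diagnosis that the slice meet is basepoint-dependent was correct; the mistake was trying to repair the slice formulation rather than replace it by the stronger one.
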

\begin{proof}
  The assumption that~$C$ has no self-intersection implies that any two
  morphisms with the same source and the same target in~$\El(C)$ are equal, and
  since the non-trivial morphisms are from a cube to a cube of strictly higher
  dimension, there are no non-trivial endomorphisms. The category~$\El(C)$ is
  thus a poset. The condition on lower bounds is a direct reformulation of the
  ``maximal common face'' condition.
\end{proof}

\noindent
The precubical semantics of most programs is a geometric precubical set,
excepting for some degenerated ones. For instance, the semantics of
$\pone\por\pone$ is
\[
\vxym{
  \ar@/^/[r]^\nop\ar@/_/[r]_\nop&
}
\]
which is not self-intersecting. Not allowing the construction $\pone$ to occur
in programs is enough to ensure that their semantics is geometric and does not
remove any expressive power to the language in practice. We will implicitly
assume that this is the case in the rest of the paper.
%
By an easy induction on the structure of programs, the following can be shown:

\begin{lem}
  \label{lem:cs-geom}
  The precubical semantics $\pcs{p}$ of a program~$p$ without occurrences
  of~$\pone$ is a geometric precubical set.
\end{lem}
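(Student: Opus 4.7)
My plan is to prove the two geometricity conditions simultaneously by induction on the structure of~$p$, exploiting the characterization of Lemma~\ref{lem:geom-El}: $C$ is geometric iff $\El(C)$ is a poset in which every slice $x/\El(C)$ is a meet-semilattice. For each program constructor I would reduce to a statement about $\El$, which is more convenient because the colimits used in the semantics (disjoint union, quotient by identification of $0$-cubes) and the tensor product are easy to describe there.

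For the base cases $A$, $\P a$, $\V a$, the precubical set $\cs{p}$ is a single edge with two distinct endpoints, so $\El(\cs p)$ has three elements and the conditions hold trivially. (The exclusion of~$\pone$ is precisely what rules out pathologies like $\cs{\pone\por\pone}$ of Example~\ref{ex:pcs-geom}.) For parallel composition $\cs{p\ppar q}=\cs p\otimes\cs q$, I would check that cubes of the tensor product are naturally in bijection with pairs of cubes and that face maps act coordinatewise, so there is a canonical isomorphism $\El(\cs p\otimes\cs q)\cong\El(\cs p)\times\El(\cs q)$; the product of two posets is a poset, and the slice $(x,y)/\El(\cs p\otimes \cs q)$ is the product of slices, hence a meet-semilattice whenever both factors are. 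So parallel composition preserves geometricity.

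The constructions $\pseq$, $\por$, and $\pwhile$ all have the form ``disjoint union, then identify a finite set of $0$-cubes''. In each case I would argue directly that no pair of distinct iterated faces of a cube becomes equal after identification, using the fact that the identified $0$-cubes come from disjoint components (for $\pseq$ and $\por$) or are separated by a freshly-introduced $\nop$ edge (for $\por$ and $\pwhile$), so no single cube of the components has both of them as iterated faces. This preserves the ``no self-intersection'' condition. For maximal common faces, two cubes $c,c'$ sharing a face must, by the same separation argument, both live in a single summand once we strip away the $\nop$ buffers; the inductive hypothesis then yields a maximal common face inside that summand. The role of the $\nop$ in $p\por q$ and in $\pwhile p$ is exactly to prevent the identification from merging two parallel higher cubes into a non-geometric configuration in the spirit of Example~\ref{ex:pinned-squares}.

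Finally, passing from $\cs p$ to $\pcs p = \cs p\setminus X$ preserves geometricity: since removing cubes containing forbidden vertices as iterated faces yields a precubical subset, $\El(\pcs p)$ is a downward-closed sub-poset of $\El(\cs p)$, which is therefore still a poset, and each slice is an up-set in a meet-semilattice, hence itself a meet-semilattice. The main obstacle I expect is the $\pwhile$ case: one must verify carefully that identifying $\vbeg{p'}$ with $\vend{p'}$ in $\cs{\nop\pseq p}$ does not collapse any two iterated faces of a higher cube of $\cs p$, and that the resulting slice at the identified vertex is still a meet-semilattice even though it now glues together two slices coming from the same~$\El(\cs p)$ along the $\nop$-edge. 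Here the inductive hypothesis that $\vbeg p$ and $\vend p$ are genuinely distinct $0$-cubes of the geometric precubical set $\cs p$ (which has to be added to the induction statement) is what makes the argument go through.
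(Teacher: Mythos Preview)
Your proposal is correct and follows exactly the approach the paper indicates: the paper's proof consists solely of the remark that the result follows ``by an easy induction on the structure of programs'', and you have fleshed out precisely that induction, using Lemma~\ref{lem:geom-El} to make the tensor-product and gluing cases tractable.

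One small slip to fix: in your last paragraph you write that each slice $x/\El(\pcs p)$ is an \emph{up-set} in the corresponding slice $x/\El(\cs p)$. It is actually a \emph{downward-closed} subset of that slice (if $x\leq d\leq c$ and $c\in\El(\pcs p)$ then $d$, being a face of $c$, also lies in $\El(\pcs p)$). This matters because an up-set of a meet-semilattice need not be a meet-semilattice, whereas a downward-closed subset always is; so with ``up-set'' replaced by ``downward-closed subset'' your argument goes through unchanged.
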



\subsubsection{Non-positively curved precubical sets}
\label{sec:pcs-npc-def}
In this section, we introduce the notion of non-positively curved precubical
set, which will be shown to be an algebraic counterpart of the usual notion of
non-positively curved space in \sect{Gromov-condition}. The conditions for being
such a precubical set are easily formulated as lifting properties, which are of
the following general form.

\begin{defi}
  Given a morphism $f:D\to E$ between precubical sets, we say that a precubical
  set~$C$ \emph{lifts at least once} (\resp \emph{lifts at most once}, \resp \emph{lifts
    uniquely}) the morphism $f$, when for every morphism $h:D\to C$, there
  exists a morphism (\resp at most one morphism, \resp a unique morphism)
  $g:E\to C$ such that $h=g\circ f$.
  \[
  \vxym{
    D\ar[d]_-f\ar[r]^-h&C\\
    E\ar@{.>}[ur]_-g
  }
  \]
\end{defi}

\noindent
We will be mainly interested in the following lifting properties, which all more
or less express the fact that, as soon as a precubical set contains a subset
which could be the border of a cube, it should actually contain a cube with this
border.

\begin{defi}
  Suppose given a precubical set~$C$. We say that $C$ has
  \begin{enumerate}
  \item \emph{no parallel edges} when it lifts at most once the canonical
    inclusion $\partial Y1\into Y1$,
  \item the \emph{at most one square closing property} when each of the four
    canonical inclusions of
    \begin{equation}
      \label{eq:square1}
      \svxym{
        &\sst{++}&\\
        \sst{-+}\ar[ur]^{0+}&&\\
        &\ar[ul]^{-0}\sst{--}&
      }
      \svxym{
        &\sst{++}&\\
        &&\ar[ul]_{+0}\sst{+-}\\
        &\sst{--}\ar[ur]_{0-}&
      }
      \qquad
      \svxym{
        &&\\
        \sst{-+}&&\sst{+-}\\
        &\ar[ul]^{-0}\sst{--}\ar[ur]_{0-}&
      }
      \qquad
      \svxym{
        &\sst{++}&\\
        \sst{-+}\ar[ur]^{0+}&&\ar[ul]_{+0}\sst{+-}\\
        &&
      }
    \end{equation}
    into
    \[
      Y2\qeq
      \svxym{
        &\sst{++}&\\
        \sst{-+}\ar[ur]^{0+}\ar@{}[rr]|{00}&&\ar[ul]_{+0}\sst{+-}\\
        &\ar[ul]^{-0}\sst{--}\ar[ur]_{0-}&
      }
    \]
    \ie
    \begin{align*}
      Y2\setminus\set{\sst{+-}}&\into Y2
      &
      Y2\setminus\set{\sst{-+}}&\into Y2
      \\
      Y2\setminus\set{\sst{++}}&\into Y2
      &
      Y2\setminus\set{\sst{--}}&\into Y2
    \end{align*}
    is lifted at most once,
  \item the \emph{cube property} when each of four canonical inclusions of
    \begin{gather}
      \label{eq:cube1}
      \svxym{
        &\sst{+++}&\\
        \sst{++-}\ar[ur]^{++0}\ar@{}[r]|{+00}&\sst{+-+}\ar[u]|{+0+}\ar@{}[r]|{00+}&\sst{-++}\ar[ul]_{0++}\\
        \sst{+--}\ar[u]^{+0-}\ar[ur]|{+-0}\ar@{}[rr]|{0-0}&&\sst{--+}\ar[ul]|{0-+}\ar[u]_{-0+}\\
        &\sst{---}\ar[ul]^{0--}\ar[ur]_{--0}&
      }
      \qquad\qquad
      \svxym{
        &\sst{+++}&\\
        \sst{++-}\ar[ur]^{++0}\ar@{}[rr]|{0+0}&&\sst{-++}\ar[ul]_{0++}\\
        \sst{+--}\ar[u]^{+0-}\ar@{}[r]|{00-}&\sst{-+-}\ar[ul]|{0+-}\ar[ur]|{-+0}\ar@{}[r]|{-00}&\sst{--+}\ar[u]_{-0+}\\
        &\sst{---}\ar[ul]^{0--}\ar[u]|{-0-}\ar[ur]_{--0}&
      }
      \\
      \label{eq:cube2}
      \svxym{
        &\sst{-+-}\ar[dl]_{-+0}\ar[dr]^{0+-}&\\
        \sst{-++}\ar@{}[r]|{-00}&\sst{---}\ar[dl]|{--0}\ar[u]|{-0-}\ar[dr]|{0--}&\sst{++-}\ar@{}[l]|{00-}\\
        \sst{--+}\ar[u]^{-0+}\ar[dr]_{0-+}\ar@{}[rr]|{0-0}&&\sst{+--}\ar[dl]^{+-0}\ar[u]_{+0-}\\
        &\sst{+-+}&\\
      }
      \qquad\qquad
      \svxym{
        &\sst{-+-}\ar[dl]_{-+0}\ar[dr]^{0+-}&\\
        \sst{-++}\ar[dr]|{0++}\ar@{}[rr]|{0+0}&&\sst{++-}\ar[dl]|{++0}\\
        \sst{--+}\ar[u]^{-0+}\ar[dr]_{0-+}&\sst{+++}\ar@{}[l]|{00+}\ar@{}[r]|{+00}&\sst{+--}\ar[dl]^{+-0}\ar[u]_{+0-}\\
        &\sst{+-+}\ar[u]|{+0+}&
      }
    \end{gather}
    into
    \[
      \partial Y3\qeq
      \svxym{
        &&\sst{+++}&&\\
        \sst{++-}\ar[urr]^{++0}&\sst{+-+}\ar[ur]|{+0+}&&&\sst{-++}\ar[ull]_{0++}\\
        \sst{+--}\ar[u]^{+0-}\ar[ur]|{+-0}&&&\sst{-+-}\ar[ulll]|{0+-}\ar[ur]|{-+0}&\sst{--+}\ar[ulll]|{0-+}\ar[u]_{-0+}\\
        &&\sst{---}\ar[ull]^{0--}\ar[ur]|{-0-}\ar[urr]_{--0}&
      }
    \]
    (we did not figure 2-cubes of $\partial Y3$ in order to keep the figure more
    or less readable), \ie
    \begin{align*}
      Y3\setminus\set{\sst{-+-}}&\into\partial Y3
      &
      Y3\setminus\set{\sst{+-+}}&\into\partial Y3
      \\
      Y3\setminus\set{\sst{+++}}&\into\partial Y3
      &
      Y3\setminus\set{\sst{---}}&\into\partial Y3
    \end{align*}
    is lifted at least once,
  \item the \emph{unique $n$-cube property} (\resp \emph{at most one $n$-cube
      property}) when it lifts uniquely (\resp at most once) the inclusion
    $\partial Yn\to Yn$.
  \end{enumerate}
\end{defi}

\begin{rem}
  \label{rem:unique-cube-property}
  When a precubical set satisfies the at most one square closing property, the
  liftings of \eqref{eq:cube1} and \eqref{eq:cube2} are necessarily unique.
\end{rem}

\begin{rem}
  \label{rem:missing-liftings}
  Notice that, in the cube property, we require lifting four inclusions of a
  ``half cube'' into the hollow cube, but there are four more natural inclusions
  that we could write (there are height in total, one corresponding to each
  vertex of the cube). However, those are superfluous, because the cube has
  non-trivial automorphisms. For instance, the lifting of the inclusion of
  \[
  \svxym{
    &\sst{+--}&\\
    \sst{+-+}\ar@{<-}[ur]^{+-0}\ar@{}[r]|{+00}&\sst{++-}\ar@{<-}[u]|{+0-}\ar@{}[r]|{00-}&\sst{---}\ar[ul]_{0--}\\
    \sst{+++}\ar@{<-}[u]^{+0+}\ar@{<-}[ur]|{++0}\ar@{}[rr]|{0+0}&&\sst{-+-}\ar[ul]|{0+-}\ar@{<-}[u]_{-0-}\\
    &\sst{-++}\ar[ul]^{0++}\ar@{<-}[ur]_{-+0}&
  }
  \]
  into $\partial Y3$ can be deduced from the lifting of the inclusion
  corresponding to the left of~\eqref{eq:cube1}.
\end{rem}

\noindent
By definition of the condition of being geometric
(Definition~\ref{def:geometric-pcs}), we have the following.

\begin{lem}
  \label{lem:geom-lifting}
  A geometric precubical set~$C$ lifts at most once the embedding
  $\partial Yn\into Yn$ for any $n\geq 1$. In particular, it has no parallel
  edges. Moreover, $C$ satisfies the at most one square filling property.
\end{lem}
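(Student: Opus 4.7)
The plan is to extract a single technical claim underlying both statements: if two $n$-cubes $c_1, c_2 \in C(n)$ share two distinct codimension-one iterated faces $d, d'$, then $c_1 = c_2$. Given such $d$, the morphisms $d \to c_i$ in $\El(C)$ exhibit $d$ as a common lower bound of $c_1$ and $c_2$, so by Lemma~\ref{lem:geom-El} they admit a greatest lower bound $E$. Since $d \to E$ and $d' \to E$ factor in $\El(C)$, we get $\dim E \geq n-1$, while $E$ being a face of an $n$-cube forces $\dim E \leq n$. Because $\pcc$ has no non-identity endomorphisms, the case $\dim E = n-1$ would imply $E = d = d'$, contradicting distinctness; hence $\dim E = n$ and the same reasoning applied to $E \to c_i$ yields $E = c_1 = c_2$.

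Next, a morphism $g : Yn \to C$ is determined by the image of the top cube $0^n$, since every other cube of $Yn$ is an iterated face of $0^n$. Two liftings $g_1, g_2$ of $\partial Yn \into Yn$ agreeing on $\partial Yn$ therefore correspond to $n$-cubes $c_i = g_i(0^n)$ with identical proper iterated faces. For $n \geq 2$, the claim above applies to any two of the $2n$ shared codimension-one faces (which are distinct by no self-intersection, since the face maps $\varepsilon_i^\epsilon$ are pairwise distinct in $\pcc$). For $n = 1$, the argument runs identically with $0$-faces in place of $(n-1)$-faces: the two endpoints of the edges $c_i$ are distinct by no self-intersection (otherwise the distinct morphisms $\varepsilon_0^-, \varepsilon_0^+$ in $\pcc$ would produce equal iterated faces of $c_i$). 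The ``no parallel edges'' statement is precisely the $n = 1$ case.

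For the at most one square closing property, two squares $\alpha_1, \alpha_2$ completing any of the four L- or V-shapes in \eqref{eq:square1} share two iterated $1$-faces, which are distinct by no self-intersection, so the key claim with $n = 2$ gives $\alpha_1 = \alpha_2$.

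The heart of the argument is Lemma~\ref{lem:geom-El}; the only real obstacle is book-keeping in $\El(C)$, namely keeping track of the direction of morphisms (faces go \emph{into} cubes) and exploiting that $\pcc(k,k) = \{\id_k\}$ at every $k$ in order to convert dimension coincidences into equalities of cubes.
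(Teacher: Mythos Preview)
Your argument is correct. The paper gives no explicit proof, treating the lemma as immediate from Definition~\ref{def:geometric-pcs}; your route through Lemma~\ref{lem:geom-El} simply makes the ``maximal common face'' condition concrete---the maximal common face of $c_1,c_2$ is their greatest lower bound $E$ in $\El(C)$, and the dimension count $n-1\leq\dim E\leq n$ together with $\pcc(k,k)=\{\id_k\}$ forces $E=c_1=c_2$. This is the intended argument spelled out in full.
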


\noindent
We introduce the following terminology, whose meaning should be explained in
Section~\ref{sec:Gromov-condition}.

\begin{defi}
  \label{def:pcs-npc}
  A precubical set which
  \begin{enumerate}
  \item is geometric,
  \item satisfies the cube property,
  \item and satisfies the unique $n$-cube property for $n\geq 3$,
  \end{enumerate}
  is said to be \textbf{non-positively curved} (or \emph{\NPC} for short).
\end{defi}

\begin{prop}
  \label{prop:sem-cube}
  The precubical semantics of a program is non-positively curved.
\end{prop}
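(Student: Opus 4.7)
The plan is to proceed by structural induction on the program $p$, first showing that the unrestricted $\cs{p}$ is \NPC, and then deducing the same for $\pcs{p} = \cs{p} \setminus X$. For the second step, I would argue that removing any set of $0$-cells preserves the \NPC property: given a candidate half-cube embedding $Y^n \setminus \{\ast\} \into \cs{p} \setminus X$, every $0$-face of the configuration already lies in $\cs{p} \setminus X$, the cube property for $\cs{p}$ produces a completing $n$-cube, and since the vertices of that $n$-cube coincide with the $0$-faces of the half-cube, none belongs to $X$, so the completion lies in $\cs{p} \setminus X$. Geometricity is inherited directly, and it was already established in Lemma~\ref{lem:cs-geom}.

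For the base cases $\pone$, $A$, $\P a$, $\V a$ the semantics has dimension at most $1$, so every condition on $n$-cubes with $n \geq 2$ holds vacuously. For the constructors $p \pseq q$, $p \por q$ and $\pwhile p$, the construction only takes disjoint unions and quotients identifying certain $0$-cells. In particular it does not introduce any new $k$-cube with $k \geq 2$: every square and every higher cube of the resulting precubical set belongs entirely to one of the summands $\cs{p}$ (or $\cs{p'}$) or $\cs{q}$ (or $\cs\nop$). Consequently, any half-cube configuration in the glued object lives inside one of the summands, its completion is provided by the inductive hypothesis, and uniqueness follows similarly.

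The main case, and the main obstacle, is the parallel composition $\cs{p \ppar q} = \cs{p} \otimes \cs{q}$. Here I would prove that the tensor product of two \NPC precubical sets is again \NPC. Using the decomposition $(C \otimes D)(n) = \coprod_{i+j=n} C(i) \times D(j)$, any half-cube configuration $\partial Y^n \setminus \{\ast\} \to C \otimes D$ factors the $1$-skeleton into two groups of coordinate directions, one coming from $C$ and the other from $D$; this lets me reduce the existence of the missing filling to independent filling problems of strictly smaller dimension in $C$ and in $D$, both solvable by the inductive hypothesis. For the cube property ($n=3$) this means a case split on the bidegrees $(3,0), (2,1), (1,2), (0,3)$: in the pure cases the statement is the cube property in a factor, and in the mixed cases the completion factors as a product of a square or edge in $C$ with an edge or square in $D$, produced by the square-closing property together with geometricity. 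The unique $n$-cube property for $n \geq 3$ and the at-most-one square closing property follow from geometricity of the tensor product, which I would verify via Lemma~\ref{lem:geom-El} by observing that $\El(C \otimes D)$ is the product of posets $\El(C) \times \El(D)$, so meets and absence of non-trivial endomorphisms are inherited factorwise.

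Combining the inductive step with the vertex-removal argument yields that $\pcs{p}$ is \NPC, completing the proof. The delicate point is the analysis of the tensor product: although the decomposition of cubes is straightforward, keeping track of which sub-configuration must be lifted in which factor across the four inclusions of~\eqref{eq:cube1} and~\eqref{eq:cube2} is where the bookkeeping concentrates.
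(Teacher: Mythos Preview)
Your inductive argument that $\cs{p}$ is \NPC is essentially what the paper does (and the paper also leaves the tensor-product case as a routine check), so that part is fine.

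The genuine gap is in your second step, the passage from $\cs{p}$ to $\pcs{p}=\cs{p}\setminus X$. You claim that ``removing any set of $0$-cells preserves the \NPC property'' because ``the vertices of the completing cube coincide with the $0$-faces of the half-cube''. This is false for the cube property: the half-cube $\Lambda^u\subset\partial Y3$ has only seven vertices, while $\partial Y3$ has eight. Completing a half-cube to the full hollow cube introduces the missing vertex $\ol u$, and nothing in your argument prevents $\ol u$ from lying in~$X$. Indeed the general statement fails: $Y3$ is \NPC, but $Y3\setminus\{\sst{-+-}\}$ contains the half-cube $\Lambda^{+-+}$ yet admits no extension to $\partial Y3$, so it violates the cube property. (Your argument \emph{is} correct for the unique $n$-cube property with $n\geq 3$, since $\partial Yn\hookrightarrow Yn$ adds no new vertices; the paper also observes that this part is immediate.)

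What the paper actually does for the cube property is use the specific nature of the forbidden set~$X$. Given a half-cube $L$ in $\pcs{p}$, the completion in $\cs{p}$ exists by the inductive step, and one must rule out that the new vertex (say $\sst{-+-}$) is forbidden. The paper does this by a short case analysis on the labels $A,B,C$ of the three edge-directions and the resource potential $\rpot{p}$: if $\sst{-+-}$ were forbidden because of some mutex~$a$, then the constraints forced on $A$, $B$, $C$ by the non-forbiddenness of the seven vertices of~$L$ lead to one of those seven vertices being forbidden as well, a contradiction. This is the missing idea in your proposal.
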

\begin{proof}
  We have already seen in Lemma~\ref{lem:cs-geom} that the cubical semantics of
  a non-degenerated program is geometric. We show the other properties.
  %
  %
  Given a program~$p$, it is easy to show by induction on~$p$ that the
  precubical set~$\cs{p}$ is non-positively curved (essentially, one has to
  check that the properties (2) and (3) are preserved by gluing along vertices
  and taking tensor product). The precubical semantics is defined as
  $\pcs{p}=\cs{p}\setminus X$ where~$X$ is the set of forbidden
  vertices. Clearly, the unique $n$-cube property, with $n\geq 3$, is still
  satisfied for $\pcs{p}$. However, the cube property requires a little more
  care. Write $L$ (\resp $R$) for the precubical set on the left (\resp right)
  of~\eqref{eq:cube1}. Notice that the hollow 3-cube $\partial Y3$ can be
  obtained by gluing $L$ and~$R$ along their ``border'' (\ie by identifying the
  vertices and edges with the same names in both precubical sets). Suppose given
  a morphism~$L\to\pcs{p}$: we are going to show that it can be extended as a
  morphism $\partial Y3\to\pcs{p}$. By abuse of notation, we identify $L$ with
  its image in~$\pcs{p}$ and simply say that~$\pcs{p}$ ``contains''~$L$. Since
  $\cs{p}$ satisfies the cube property, it also contains~$R$. Thus, in
  $\pcs{p}$, $L$ can be completed as a hollow 3-cube unless the vertex
  $\sst{-+-}$ is forbidden. We write $A$, $B$ and $C$ for the actions
  respectively labeling the edges of the form $0\epsilon\epsilon'$,
  $\epsilon 0\epsilon'$ and $\epsilon\epsilon'0$, with
  $\epsilon,\epsilon'\in\set{-,+}$: $L$ and~$R$ in $\cs{p}$ are respectively
  labeled as
  \[
  \svxym{
    &\sst{+++}&\\
    \sst{++-}\ar[ur]^{C}\ar@{}[r]|{\tile}&\sst{+-+}\ar[u]|<<<{B}\ar@{}[r]|{\tile}&\sst{-++}\ar[ul]_{A}\\
    \sst{+--}\ar[u]^{B}\ar[ur]|{C}\ar@{}[rr]|{\tile}&&\sst{--+}\ar[ul]|{A}\ar[u]_{B}\\
    &\sst{---}\ar[ul]^{A}\ar[ur]_{C}&
  }
  \qquad\qquad\qquad
  \svxym{
    &\sst{+++}&\\
    \sst{++-}\ar[ur]^{C}\ar@{}[rr]|\tile&&\sst{-++}\ar[ul]_{A}\\
    \sst{+--}\ar[u]^{B}\ar@{}[r]|\tile&\sst{-+-}\ar[ul]|{A}\ar[ur]|{C}\ar@{}[r]|\tile&\sst{--+}\ar[u]_{B}\\
    &\sst{---}\ar[ul]^{A}\ar[u]|<<<{B}\ar[ur]_{C}&
  }
  \]
  Suppose that $\sst{-+-}$ is forbidden. Since none of the vertices in $L$ is
  forbidden, there exists a resource~$a$ such that either
  $\rpot{p}(\sst{-+-})(a)=1$ and $B$ is~$\V a$, or $\rpot{p}(\sst{-+-})=-2$ and
  $B$ is~$\P a$. Suppose that we are in the first case (the other case is
  similar), \ie $B=\V a$. Notice that we have
  $\rpot{p}(\sst{---})=0$. Necessarily, we have $A=\P a$ (otherwise
  $\rpot{p}(\sst{++-})=\rpot{p}(\sst{-+-})=1$ and the vertex~$\sst{++-}$ would
  be forbidden) and $C=\P a$ (otherwise $\sst{-++}$ would be forbidden). But in
  this case, we have $\rpot{p}(\sst{+-+})=-2$ and therefore $\sst{+-+}$ would be
  forbidden, contradicting the hypothesis that $\pcs{p}$ contains~$L$. All other
  cases can be handled by similar reasoning.
\end{proof}

\begin{rem}
  The precubical semantics of the program $A\por B$ is
  \[
  \svxym{
    &\ar[dr]^A\\
    \ar[ur]^\nop\ar[dr]_\nop&&\\
    &\ar[ur]_B
  }
  \]
  showing that the $2$-cube property is not generally satisfied for cubical
  semantics of programs.
\end{rem}

\subsubsection{The link condition}
\label{sec:pcs-link}
The Gromov condition for characterizing \NPC cubical complexes, which is
recalled in \sect{Gromov-condition}, is generally expressed in terms of a
``flagness'' condition on the ``link'' of vertices in the complex. We extend
here the definition of link to precubical sets and show that a similar
characterization of \NPC can be formulated in terms of those links.


\begin{defi}
  \label{def:pcs-link}
  Given a precubical set~$C$, the \textbf{link} of~$C$, denoted $\link(C)$, is
  the (augmented) presimplicial set whose $n$-simplices are pairs $(u,y)$ with
  $u\in\set{-,+}^n$ and $y\in C(n)$. Given $i$ with $i\in\intset{n}$, the $i$-th
  simplicial face of such an $n$-simplex $(u,y)$ is given by
  \[
  \partial_i(u,y)
  \qeq
  (u',\partial_i^{u_i}(y))
  \]
  where $u'$ is the word $u$ with the $i$-th letter removed. Given a
  vertex~$x\in C(0)$, the link of~$x$, denoted $\link(x)$, is the presimplicial
  subset of~$\link(C)$ whose simplices are those having~$(\varepsilon,x)$ as
  iterated $0$-face.
\end{defi}

\noindent
Note that, by definition, the link of a vertex has only one 0-simplex, and as
such can be seen as a non-augmented presimplicial set, shifting dimensions by
-1. This will implicitly be used in the following in order to draw pictures.

\begin{exa}
  \label{ex:link}
  Consider the precubical set~$C$ on the left. The link of the vertex~$x$ is
  shown on the right:
  \[
  \vxym{
    z'&y_1\ar[l]\ar[r]\ar@{}[dr]|\alpha&z\\
    y_3\ar[u]&x\ar[l]^a\ar[u]_b\ar[r]_c&y_2\ar[u]
  }
  \qquad\qquad\qquad\qquad\qquad
  \vxym{
    a&b\ar[r]^\alpha&c
  }
  \]
\end{exa}

\noindent
The link of a vertex~$x$ is constituted of the cubes having~$x$ as iterated
face. Its simplices can be represented as follows.

\begin{defi}
  Given $u\in\set{-,+}^n$, we write $\Lambda^u$ for the precubical subset of
  $\partial Yn$ whose cubes $v\in\set{-,0,+}^n$ are those having at least one
  letter in common with~$u$ (\ie there exists~$i$, with $0\leq i<n$, such that
  $v_i=u_i$).
\end{defi}

\begin{rem}
  \label{rem:Lambda-subcube}
  The precubical set $\Lambda^u$ can alternatively defined from $\partial Yn$ by
  removing all cubes having the vertex $\ol u$ as iterated face, where
  $\ol u=\ol{u_0}\ldots\ol{u_{n-1}}$ with $\ol -=+$ and $\ol +=-$.
\end{rem}

\begin{exa}
  The precubical sets \eqref{eq:square1} are respectively $\Lambda^{-+}$,
  $\Lambda^{+-}$, $\Lambda^{--}$ and $\Lambda^{++}$. Those of \eqref{eq:cube1}
  are respectively $\Lambda^{+-+}$ and $\Lambda^{-+-}$, and similarly for
  \eqref{eq:cube2}, see also Remark~\ref{rem:Lambda-cube}.
\end{exa}

\begin{rem}
  \label{rem:Lambda-cube}
  By definition, given $u\in\set{-,+}^n$, there is a canonical inclusion
  $\Lambda^u\into\partial Yn$ and thus a canonical inclusion $\Lambda^u\into Yn$
  by post-composing with the canonical inclusion $\partial Yn\into Yn$. By
  Remark~\ref{rem:missing-liftings}, the cube property is precisely the lifting
  property \wrt all possible such inclusions in the case $n=3$. In particular,
  the two precubical sets of \eqref{eq:cube1} are respectively $\Lambda^{+-+}$
  and $\Lambda^{-+-}$, and those of \eqref{eq:cube2} are respectively
  $\Lambda^{---}$ and $\Lambda^{+++}$.
\end{rem}




Geometric precubical sets which are non-positively curved can be characterized
by a lifting condition on the links. We write $Z:\psc\to\hat\psc$ for the Yoneda
embedding of the presimplicial category into presimplicial sets (the notation
$Y$ is reserved for the Yoneda embedding $Y:\pcc\to\hat\pcc$). Given $n\in\psc$,
the representable $Zn$ is called the \emph{standard $n$\nbd{}simplex}. As in the
case of precubical sets, it can be explicitly described as the presimplicial set
whose $k$-simplices are the sets $\set{j_0,\ldots,j_{k-1}}\subseteq\intset{n}$
such that $j_0<j_1<\ldots<j_{k-1}$ and, given $i$ with $0\leq i<k$ the $i$-th
face of such a simplex is given by removing the element $j_i$. We thus have
$Znk=\emptyset$ for $k>n$ and $Znn$ is reduced to one element (the simplex
$\intset{n}$). The \emph{standard hollow $n$-simplex} $\partial Zn$ is the
presimplicial set obtained from $Zn$ by removing the top-dimensional simplex
in~$Zn$, and there is a canonical inclusion $\partial Zn\into Zn$.

\begin{defi}
  \label{def:flag}
  A presimplicial~$S\in\hat\psc$ set is \textbf{flag} if, for every integer
  $n\geq 3$ and morphism $f:\partial Zn\to S$, there exists a unique morphism
  $g:Zn\to S$ making the diagram
  \[
  \vxym{
    \partial Zn\ar@{_{(}->}[d]\ar[r]^f&S\\
    Zn\ar@{.>}[ur]_g
  }
  \]
  commute, where the vertical arrow is the standard inclusion.
\end{defi}


\noindent
As previously, we sometimes say that a flag presimplicial set lifts uniquely the
inclusions $\partial Zn\into Zn$. In Theorem~\ref{thm:pcs-npc-flag}, we
characterize \NPC precubical sets in terms of their links, and this will follow
from the following lemmas.

\begin{lem}
  \label{lem:hsimplex-link}
  Suppose given a precubical set~$C$ and a vertex~$x\in C(0)$.
  \begin{itemize}
  \item There is a bijection between presimplicial morphisms
    $\phi:Zn\to\link(x)$ and pairs $(u,\psi)$ with $u\in\set{-,+}^n$ and
    $\psi:Y(n)\to C$ a precubical morphism such that $\psi(u)=x$.
  \item Similarly, there is a bijection between presimplicial morphisms
    $\phi:\partial Zn\to\link(x)$ and all precubical morphisms of the form
    $\psi:\Lambda^u\to C$, for some $u\in\set{-,+}^n$, such that $\psi(u)=x$.
  \end{itemize}
\end{lem}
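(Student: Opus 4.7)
The plan is to establish both bijections through double Yoneda---once for presimplicial sets, once for precubical sets---combined with a combinatorial reconstruction argument for the second bullet.

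For the first bullet, since $Zn$ is representable, Yoneda for presimplicial sets identifies morphisms $\phi : Zn \to \link(x)$ with $n$-simplices of $\link(x)$, that is, pairs $(u, y)$ with $u \in \set{-,+}^n$ and $y \in C(n)$ subject to the condition that the iterated $0$-face $(\partial_0)^n(u, y)$ equals $(\varepsilon, x)$. A short calculation using the precubical relations $\partial_b \partial_a = \partial_a \partial_{b+1}$ for $a \leq b$ rewrites the composite $\partial_0^{u_{n-1}} \circ \cdots \circ \partial_0^{u_0}$ as $\partial^u$, so the condition reduces to $\partial^u(y) = x$. Yoneda for precubical sets then identifies $y$ with a morphism $\psi : Yn \to C$, under which $\partial^u(y) = x$ becomes $\psi(u) = x$, giving the claimed bijection.

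For the second bullet, I would analyze $\phi : \partial Zn \to \link(x)$ through its values on the $n$ top-dimensional $(n-1)$-simplices $\Delta_i := \intset{n} \setminus \set{i}$. By the first bullet applied at dimension $n-1$, each $\phi(\Delta_i)$ corresponds to a pair $(u^i, y^i)$ with $u^i \in \set{-,+}^{n-1}$, $y^i \in C(n-1)$, and $\partial^{u^i}(y^i) = x$. The simplicial face compatibilities along the common $(n-2)$-face $\intset{n} \setminus \set{i,j}$ (for $i < j$) translate into two conditions: (i) on words, $u^i$ with its $(j-1)$-th letter removed equals $u^j$ with its $i$-th letter removed; and (ii) on cubes, $\partial_{j-1}^{u_j}(y^i) = \partial_i^{u_i}(y^j)$. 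A word-reconstruction argument---defining $u_k \in \set{-,+}$ by reading off the appropriate position of any $u^j$ with $j \neq k$, and using (i) to verify independence of the choice---produces a unique $u \in \set{-,+}^n$ with $u^i$ equal to $u$ with its $i$-th letter deleted, for every $i$.

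Having fixed $u$, I would assemble $\psi : \Lambda^u \to C$ by setting $\psi(v^i) = y^i$ on the $n$ top-dimensional cubes of $\Lambda^u$, namely the $(n-1)$-cubes $v^i$ with $v^i_i = u_i$ and $v^i_k = 0$ for $k \neq i$, and extending by face maps. This extension is well-defined because every cube of $\Lambda^u$ has at least one letter matching $u$ and hence is an iterated face of the corresponding $v^i$; condition (ii) guarantees unambiguity on the $(n-2)$-cube $v^{ij}$ shared by $v^i$ and $v^j$, and higher-order overlaps reduce to pairwise ones by coherence of face-map composition. The identity $\psi(u) = x$ follows from $\partial^{u^i}(y^i) = x$, since $u$ is the iterated face $\partial^{u^i}(v^i)$ of each $v^i$. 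The converse is transparent: given $(u, \psi)$, let $\phi(\Delta_i)$ be the pair consisting of $u$ with its $i$-th letter removed together with $\psi(v^i)$, and check directly that the two constructions are mutually inverse. The main obstacle will be notational bookkeeping: simplicial face maps act on ordered subsets of $\intset{n}$ while precubical face maps act on positions within words of varying length, and the ``shift by one'' above a deleted index must be tracked carefully (this is precisely what produces the index $j-1$ in condition (ii)); once this dictionary is pinned down the verifications are mechanical.
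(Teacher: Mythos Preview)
Your proposal is correct and follows the same approach as the paper: the first bullet is exactly the paper's double-Yoneda argument, and for the second bullet the paper says only that it is ``a slight variant of the first one'', so your detailed reconstruction of $u$ from the family $(u^i)$ and the assembly of $\psi$ on the top cells of $\Lambda^u$ is precisely the content that the paper leaves implicit. One minor remark: your reconstruction of $u$ requires $n\geq 2$ (for $n=1$ there is a single map $\partial Z1\to\link(x)$ but two pairs $(u,\psi)$), though this is an edge-case ambiguity in the lemma's statement rather than in your argument, and the paper only invokes the lemma for $n\geq 3$.
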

\begin{proof}
  The first bijection follows easily from the Yoneda lemma: morphisms
  $\phi:Zn\to\link(x)$ are in bijection with $n$-simplices in $\link(x)$, \ie
  pairs $(u,y)$ with $u\in\set{-,+}^n$ and $y\in C(n)$ such that
  $\partial^u(y)=x$, which are in turn in bijection with pairs $(u,\psi)$ with
  $u\in\set{-,+}^n$ and $\psi:Y(n)\to C$ such that $\psi(u)=x$. The second
  bijection is a slight variant of the first one.
\end{proof}

\begin{rem}
  \label{rem:hsimplex-link-restr}
  It can moreover be checked that these bijections are compatible with
  restriction to the border, what we will generally leave implicit in the
  following. This means that given a presimplicial morphism $\phi:Zn\to\link(x)$
  corresponding to a precubical morphism $\psi:Y(n)\to C$ by the first
  bijection, the presimplicial morphism
  $\partial Zn\into Zn\overset\phi\to\link(x)$ corresponds the the precubical
  morphism $\partial Y(n)\into Y(n)\overset\psi\to C$ by the second bijection.
\end{rem}

\noindent
The following lemma shows that \NPC precubical sets, always satisfy a
higher-dimensional generalization of the cube property.

\begin{lem}
  \label{lem:higher-cube-prop}
  Suppose given an \NPC precubical set~$C$. Given $n\in\N$ with $n\geq 3$ and
  $u\in\set{-,+}^n$, $C$ lifts uniquely the canonical embedding
  $\Lambda^u\into Yn$.
\end{lem}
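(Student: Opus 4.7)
The plan is to proceed by induction on $n \geq 3$, with the same overall strategy at each dimension: given $\phi : \Lambda^u \to C$, first construct a unique extension $\tilde\phi : \partial Yn \to C$, then apply the unique $n$-cube property (condition (3) of Definition~\ref{def:pcs-npc}) to fill the top cell, yielding a unique morphism $Yn \to C$. Uniqueness of the final step follows from the at most one $n$-cube property, which holds for any geometric precubical set by Lemma~\ref{lem:geom-lifting}.

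For the base case $n = 3$, the existence of $\tilde\phi : \partial Y3 \to C$ is delivered by the cube property, which covers four of the eight possible values of $u$, the other four following by Remark~\ref{rem:missing-liftings}. For uniqueness, observe that $\partial Y3 \setminus \Lambda^u$ consists of the vertex $\ol u$, the three edges containing it, and the three $2$-cubes containing it; each of those $2$-cubes intersects $\Lambda^u$ in a subprecubical set of the form $\Lambda^{u'}$, so the extension to that $2$-cube is uniquely determined by the at most one square closing property (again a consequence of geometricity, Lemma~\ref{lem:geom-lifting}).

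For the inductive step, fix $n \geq 4$. For each $k \in \set{0,\ldots,n-1}$, let $F_k \subset Yn$ denote the $(n-1)$-face obtained by fixing coordinate $k$ to $\ol u_k$; identifying $F_k$ with $Y(n-1)$ by deleting coordinate $k$, a direct check gives $F_k \cap \Lambda^u = \Lambda^{u^{(k)}}$, where $u^{(k)} \in \set{-,+}^{n-1}$ is $u$ with the $k$-th letter removed. By the inductive hypothesis (since $n-1 \geq 3$), $\phi|_{F_k \cap \Lambda^u}$ extends uniquely to $\psi_k : F_k \to C$. A short case analysis on strings shows $\partial Yn = \Lambda^u \cup \bigcup_k F_k$ (a cube $v \in \partial Yn \setminus \Lambda^u$ must have $v_i \in \set{0, \ol u_i}$ for every $i$, and being nontrivial forces some $v_k = \ol u_k$), so it suffices to verify that $\psi_k$ and $\psi_{k'}$ agree on $F_k \cap F_{k'}$ for $k \neq k'$ in order to glue them with $\phi$ into $\tilde\phi : \partial Yn \to C$. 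This intersection is an $(n-2)$-cube on which both $\psi_k$ and $\psi_{k'}$ extend the restriction of $\phi$ to an appropriate $\Lambda^{u^{(k,k')}}$; this extension is unique either by the inductive hypothesis when $n \geq 5$, or by the at most one square closing property when $n = 4$ (and the intersection is a $2$-cube). The unique $n$-cube property then furnishes the required extension to $Yn$.

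The principal obstacle is exactly this compatibility check on pairwise intersections of the $F_k$'s, which is why the inductive hypothesis must be formulated as a uniqueness-and-existence statement and why geometricity — specifically, the at most one square closing property — is indispensable for handling the $n=4$ case of the recursion.
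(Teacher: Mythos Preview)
Your proof is correct and takes a genuinely different route from the paper's. The paper argues by a dimension filtration \emph{within a fixed $n$}: it introduces intermediate complexes $\Lambda^u = \Lambda^u_0 \subset \Lambda^u_2 \subset \cdots \subset \Lambda^u_{n-1} = \partial Yn \subset \Lambda^u_n = Yn$ (where $\Lambda^u_k$ adjoins to $\Lambda^u$ all cubes of dimension at most $k$), and fills the step $\Lambda^u \into \Lambda^u_2$ by repeatedly applying the cube property to embedded copies of $\Lambda^{---}$, then each $\Lambda^u_k \into \Lambda^u_{k+1}$ by the unique $(k{+}1)$-cube property. Your argument instead inducts on $n$, decomposing $\partial Yn \setminus \Lambda^u$ into the $(n{-}1)$-faces $F_k$ opposite to $u$ and observing that each $F_k$ meets $\Lambda^u$ in a lower-dimensional $\Lambda^{u^{(k)}}$, so the whole lemma at level $n{-}1$ fills $F_k$ in one shot. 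The paper's approach is more explicit about which cells are added at each stage and makes the role of the at most one square closing property very visible in the consistency checks; your approach is cleaner structurally, since the recursion packages both existence and uniqueness, and the only delicate moment is the pairwise compatibility on $F_k \cap F_{k'}$, which you correctly handle via the inductive hypothesis for $n\geq 5$ and via the at most one square closing property for $n=4$.
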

\begin{proof}
  Given $u\in\set{-,+}^n$ and $k$ with $0\leq k\leq n$, we write $\Lambda^u_k$
  for the subcomplex of $Yn$ whose cubes are the strings $v=\set{-,0,+}^n$ which
  either have one letter in common with $u$ (\ie $v_i=u_i$ for some
  $i\in\intset{n}$) or have at most $k$ occurrences of the letter ``$0$''. In
  particular, we have $\Lambda^u_0=\Lambda^u$, $\Lambda^u_{n-1}=\partial Yn$ and
  $\Lambda^u_n=Yn$. We show that~$C$ lifts uniquely the embedding
  $\Lambda^u\into\Lambda^u_2$, and the embeddings
  $\Lambda^u_k\into\Lambda^u_{k+1}$ for $2\leq k<n$, from which we can
  conclude. In order to illustrate our proof, we suppose that $u=-^n$ (other
  cases can be obtained by symmetry) and that $n=5$ (but the arguments go
  through similarly in the general case).

  Case $k=0$. Suppose fixed a morphism $\phi:\Lambda^u\to C$. Given any triple
  of indices $0\leq i_0<i_1<i_2<n$, there is an embedding
  $\lambda_{i_0,i_1,i_2}:\Lambda^{---}\into\Lambda^u$, sending a $k$-cube
  $v_0v_1v_2$ to the $k$-cube
  \[
  w\qeq {+^{i_0}}v_0{+^{(i_1-i_0-1)}}v_1{+^{(i_2-i_1-i_0-2)}}v_2{+^{(n-3-i_2)}}
  \]
  \ie $w$ is obtained by inserting $v_0$, $v_1$ and $v_2$ at positions $i_0$,
  $i_1$ and $i_2$ respectively in the word constituted of $n-3$ letters ``$+$'':
  \[
  w_i\qeq
  \begin{cases}
    v_{i_p}&\text{if $i=i_p$ for some $p\in\set{0,1,2}$}\\
    +&\text{otherwise}
  \end{cases}
  \]
  For instance, with $n=5$, we have $\lambda_{1,2,4}(v_0v_1v_2)={+}v_0v_1{+}v_2$
  and the image of $\lambda_{1,2,4}$ is shown on the left
  \[
  \xymatrix@R=3ex@C=6ex{
    &\sst{+-++-}\ar[dl]_{+-++0}\ar[dr]^{+0++-}&\\
    \sst{+-+++}\ar@{}[r]|{+-0+0}&\sst{+--+-}\ar[dl]|{+--+0}\ar[u]|{+-0+-}\ar[dr]|{0--}&\sst{++++-}\ar@{}[l]|{+00+-}\\
    \sst{+--++}\ar[u]^{+-0++}\ar[dr]_{+0-++}\ar@{}[rr]|{+0-+0}&&\sst{++-+-}\ar[dl]^{++-+0}\ar[u]_{++0+-}\\
    &\sst{++-++}&\\
  }
  \qquad
  \xymatrix@R=3ex@C=6ex{
    &\sst{+-++-}\ar[dl]_{+-++0}\ar[dr]^{+0++-}&\\
    \sst{+-+++}\ar[dr]|{+0+++}\ar@{}[rr]|{+0++0}&&\sst{++++-}\ar[dl]|{++++0}\\
    \sst{+--++}\ar[u]^{+-0++}\ar[dr]_{+0-++}&\sst{+++++}\ar@{}[l]|{+00++}\ar@{}[r]|{++0+0}&\sst{+--}\ar[dl]^{++-+0}\ar[u]_{++0+-}\\
    &\sst{++-++}\ar[u]|{++0++}&
  }
  \]
  By the cube property, the morphism
  $\Lambda^{---}\into\Lambda^u\overset\phi\to C$ extends to a morphism
  $\partial Y(n)\to C$ (\ie in the case of $\lambda_{1,2,4}$, $C$ also
  contains the other half-cube shown on the right above) and into~$Y3$ by the
  unique 3-cube property. By the at most one square closing property (which holds by
  Lemma~\ref{lem:geom-lifting}), the image of the image of the vertex $+^{n+1}$
  thus defined does not depend on the indices $i_0,i_1,i_2$. For instance, the
  embeddings $\lambda_{1,2,4}$ and $\lambda_{0,2,3}$ can both be completed and
  both define an image for the vertex $\sst{+++++}$: the one defined by
  $\lambda_{1,2,4}$ is the same as the one defined by $\lambda_{1,2,3}$ (the
  completions share the face $\sst{+00++}$), which in turn is the same as the
  one defined by $\lambda_{0,2,3}$ (the completions share the face
  $\sst{++00+}$). Similarly, the 1- and 2-cubes defined by multiples embeddings
  do not depend on the choice of the embedding. Therefore, the precubical
  complex~$C$ lifts the embedding $\Lambda^u\into\Lambda^u_2$. Finally, by the
  at most one square closing property again, the lifting is unique.

  Suppose given $k$ with $2\leq k<n$ and a morphism $\phi:\Lambda^u_k\to C$. For
  any $(k+1)$-uple of indices $0\leq i_0<\ldots<i_k<n$, there is an
  embedding $\lambda_{i_0,\ldots,i_k}:\partial Y(k+1)\into\Lambda^u_k$ sending a
  cube $v\in\set{-,0,+}^{k+1}$ to the cube $w\in\set{-,0,+}^n$ defined by
  \[
  w_i\qeq
  \begin{cases}
    v_{i_p}&\text{if $i=i_p$ for some $p\in\intset{k+1}$}\\
    +&\text{otherwise}
  \end{cases}
  \]
  By the unique $(k+1)$-cube property, the morphism
  $\partial Y(k+1)\into\Lambda^u_k\overset\phi\to C$ extends uniquely to a
  morphism $Y(k+1)\to C$. The precubical set $\Lambda^u_{k+1}$ can be obtained
  from $\Lambda^u_k$ by properly adding $(k+1)$-cubes: more precisely, one can
  easily check that $\Lambda^u_{k+1}$ is the colimit of the diagram of
  precubical sets
  \[
  \vxym{
    \Lambda^u_k&&Y(k+1)\\
    \partial Y(k+1)\ar[u]^{\lambda_{I_0}}\ar[urr]&\ldots&\ar[ull]^<<<<<<<<{\lambda_{I_p}}\ar[u]\partial Y(k+1)
  }
  \]
  where the $I_j$ range over subsets of cardinal $k+1$ of~$[n]$. Therefore, by
  universal property of the colimit, the morphism $\phi$ extends uniquely to a
  morphism $\Lambda^u_{k+1}\to C$.
\end{proof}

\noindent
Notice that previous lemma also shows that the canonical embedding
$\Lambda^u\into\partial Yn$ is lifted uniquely. When the precubical set is only
supposed to be geometric, the lifting does not generally exists, but the proof
can easily be adapted in order to show that it is necessarily unique (the
particular case with $n=3$ was already noticed in
Remark~\ref{rem:unique-cube-property}):

\begin{lem}
  \label{lem:higher-cube-prop-uniqueness}
  Suppose given a geometric precubical set~$C$. Given $n\in\N$ with $n\geq 3$
  and $u\in\set{-,+}^n$, $C$ lifts at most once the canonical embedding
  $\Lambda^u\into\partial Yn$.
\end{lem}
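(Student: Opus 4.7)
The plan is to adapt the inductive strategy from the proof of Lemma~\ref{lem:higher-cube-prop}, replacing its ``unique lifting'' arguments (valid only for \NPC sets) by the weaker ``at most one lifting'' properties which hold for geometric sets by Lemma~\ref{lem:geom-lifting}. Given two extensions $g_1, g_2 : \partial Yn \to C$ of a morphism $\phi : \Lambda^u \to C$, I would show by induction on $k$, for $2 \leq k \leq n-1$, that $g_1$ and $g_2$ coincide on the intermediate subcomplex $\Lambda^u_k \subseteq \partial Yn$ introduced in that proof (whose cubes are those of $\partial Yn$ either sharing a letter with $u$ or carrying at most $k$ zeros). Since $\Lambda^u_{n-1} = \partial Yn$, this yields $g_1 = g_2$.

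For the base case $k = 2$, I would enumerate the cubes of $\Lambda^u_2 \setminus \Lambda^u$: the apex vertex $\ol u$, the $n$ edges incident to it, and the $\binom{n}{2}$ squares $v$ of the form ``zeros at two positions $i < j$, and $\ol{u_\ell}$ elsewhere''. A direct check on each such square $v$ shows that exactly two of its four edge-faces lie in $\Lambda^u$ — those obtained by substituting $u_i$ or $u_j$ at the corresponding zero-position — and that these two edges meet at a common vertex of $\Lambda^u$ (the one with coordinates $u_i, u_j$ at positions $i, j$ and $\ol{u_\ell}$ elsewhere). Depending on the signs of $u_i$ and $u_j$, this configuration matches one of the four inclusions of~\eqref{eq:square1}, so the at-most-one-square-closing property (Lemma~\ref{lem:geom-lifting}) forces $g_1(v) = g_2(v)$. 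Since $n \geq 3$, every remaining edge and the vertex $\ol u$ appears as an iterated face of some such square, and agreement propagates to all of $\Lambda^u_2$.

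For the inductive step $k \mapsto k+1$ (with $2 \leq k < n-1$), the new cubes in $\Lambda^u_{k+1} \setminus \Lambda^u_k$ are precisely the $(k+1)$-cubes of $\partial Yn$ sharing no letter with $u$. For each such $v$, substituting $u_i$ at a zero-position produces a $k$-face lying in $\Lambda^u$, while substituting $\ol{u_i}$ produces a $k$-face with exactly $k$ zeros, hence still in $\Lambda^u_k$; so all $k$-faces of $v$ lie in $\Lambda^u_k$. The induction hypothesis then gives $g_1 = g_2$ on the entire boundary of $v$, and the at-most-one-$(k+1)$-cube property of geometric sets (again Lemma~\ref{lem:geom-lifting}) completes the step. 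The main subtlety is confined to the base case: one must check that the two $\Lambda^u$-edges of each new square are correctly oriented to meet at a single vertex, which is ensured by the fact that all four vertex-removal configurations of~\eqref{eq:square1} are included in the statement of the at-most-one-square-closing property.
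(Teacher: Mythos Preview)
Your proposal is correct and follows precisely the adaptation the paper has in mind: the paper does not give a separate proof for this lemma but simply remarks that the proof of Lemma~\ref{lem:higher-cube-prop} ``can easily be adapted in order to show that [the lifting] is necessarily unique'', and your filtration argument via the subcomplexes $\Lambda^u_k$, replacing the unique-lifting properties by the at-most-one-lifting properties of Lemma~\ref{lem:geom-lifting}, is exactly that adaptation. Your base case analysis (identifying, for each new square, the intersection with $\Lambda^u$ as one of the four configurations of~\eqref{eq:square1}) is in fact more explicit than what the paper provides.
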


\begin{thm}
  \label{thm:pcs-npc-flag}
  A geometric precubical set~$C$ is non-positively curved if and only if for
  every vertex $x\in C(0)$, the presimplicial set $\link(x)$ is flag.
\end{thm}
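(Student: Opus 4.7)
The plan is to leverage the bijections of Lemma~\ref{lem:hsimplex-link} (together with the compatibility recorded in Remark~\ref{rem:hsimplex-link-restr}) to translate the lifting properties defining \NPC precubical sets into the flagness condition on links. Under these bijections, a presimplicial morphism $\partial Zn\to\link(x)$ corresponds to a precubical morphism $\Lambda^u\to C$ (for some $u\in\set{-,+}^n$) sending the vertex $u$ to $x$; a presimplicial morphism $Zn\to\link(x)$ corresponds to a precubical morphism $Yn\to C$ of the same kind; and filling the top-dimensional simplex on the link side matches filling the top-dimensional cube on the precubical side.

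For the forward direction, assume $C$ is \NPC and fix $x\in C(0)$. Given any $\phi:\partial Zn\to\link(x)$ with $n\geq 3$, translate it via Lemma~\ref{lem:hsimplex-link} to a morphism $\psi:\Lambda^u\to C$ with $\psi(u)=x$, apply Lemma~\ref{lem:higher-cube-prop} to extend $\psi$ uniquely to $\tilde\psi:Yn\to C$, and translate back to obtain a unique $\tilde\phi:Zn\to\link(x)$ that extends $\phi$. This shows $\link(x)$ is flag.

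For the backward direction, assume $C$ is geometric with flag link at every vertex, and verify the cube property and the unique $n$-cube property for $n\geq 3$. The cube property is straightforward: given $\psi:\Lambda^u\to C$ with $u\in\set{-,+}^3$, the corresponding $\partial Z3\to\link(\psi(u))$ extends by flagness to $Z3\to\link(\psi(u))$, which translates into an extension $\tilde\psi:Y3\to C$ of $\psi$, and restricting to $\partial Y3$ yields the required lift. The main obstacle is the unique $n$-cube property, because $\partial Yn$ is strictly larger than any $\Lambda^u$, so a morphism $\psi:\partial Yn\to C$ does not correspond directly to a morphism out of a standard hollow simplex. The remedy is to pick any $u\in\set{-,+}^n$, set $x=\psi(u)$, restrict to $\psi|_{\Lambda^u}:\Lambda^u\to C$, transport this to $\partial Zn\to\link(x)$, extend uniquely by flagness to $Zn\to\link(x)$, and transport back to obtain $\tilde\psi:Yn\to C$ agreeing with $\psi$ on $\Lambda^u$. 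Then Lemma~\ref{lem:higher-cube-prop-uniqueness} guarantees that any two morphisms $\partial Yn\to C$ extending $\psi|_{\Lambda^u}$ must coincide, so $\tilde\psi|_{\partial Yn}=\psi$; uniqueness of $\tilde\psi$ follows by translating two competing extensions back to morphisms $Zn\to\link(x)$ and invoking the uniqueness clause of flagness.
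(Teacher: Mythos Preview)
Your proof is correct and follows essentially the same approach as the paper's: both directions use the bijections of Lemma~\ref{lem:hsimplex-link} to transport between link morphisms and precubical morphisms, invoke Lemma~\ref{lem:higher-cube-prop} for the forward direction, and use Lemma~\ref{lem:higher-cube-prop-uniqueness} for the backward unique $n$-cube property after restricting $\partial Yn\to C$ to $\Lambda^u$. The only minor difference is that you deduce uniqueness of the filling from the uniqueness clause in flagness, whereas the paper invokes Lemma~\ref{lem:geom-lifting}; both are valid.
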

\begin{proof}
  We first show the left-to-right implication and suppose that~$C$ is
  non-positively curved.
  We show that for every vertex~$x$ of~$C$ the presimplicial set $\link(x)$
  lifts the inclusion $\partial Zn\into Zn$.
  Suppose fixed a presimplicial morphism $\partial Zn\to\link(x)$. By
  Lemma~\ref{lem:hsimplex-link}, this amounts to fix a precubical morphism
  $\phi:\Lambda^u\to C$, for some $u\in\set{-,+}^{n}$, such that
  $\phi(u)=x$. By Lemma~\ref{lem:higher-cube-prop}, this morphism extends
  uniquely to a morphism $\psi:Y(n)\to C$, satisfying $\psi(u)=x$, which
  corresponds by Lemma~\ref{lem:hsimplex-link} to a morphism $Zn\to\link(x)$.

  We now show the right-to-left implication and suppose that $\link(x)$ is flag
  for every vertex~$x$ of~$C$. We first show that $C$ satisfies the cube
  property. Suppose given $u\in\set{-,+}^3$ and a precubical morphism
  $\phi:\Lambda^u\to C$, by Remark~\ref{rem:Lambda-cube} we want to show that it
  extends to a morphism $Y3\to C$ (the uniqueness is granted by
  Remark~\ref{rem:unique-cube-property}). We write $x=\phi(u)$. By
  Lemma~\ref{lem:hsimplex-link}, the precubical morphism $\phi:\Lambda^u\to C$
  corresponds to a presimplicial morphism $\partial Z2\to\link(x)$, which can be
  uniquely completed into a presimplicial morphism $Z2\to\link(x)$ since the
  link is flag, which corresponds to a morphism $Y3\to C$ by
  Lemma~\ref{lem:hsimplex-link}, and therefore to a morphism $\partial Y3\to C$
  by precomposition with the inclusion $\partial Y3\into Y3$. We now show
  that~$C$ satisfies the unique $(n)$-cube property for $n\geq 2$. Suppose
  given a morphism $\phi:\partial Y(n)\to C$ and write $x=\phi(u)$ for some
  vertex~$u$ of $\partial Y(n)$. By Lemma~\ref{lem:hsimplex-link}, the
  precubical morphism $\Lambda^u\into\partial Y(n)\overset\phi\to C$
  corresponds to a presimplicial morphism $\partial Zn\to\link(x)$, which
  extends to a presimplicial morphism $Zn\to\link(x)$ because the link is flag,
  which in turn corresponds to a morphism $\psi:Y(n)\to C$ by
  Lemma~\ref{lem:hsimplex-link}. By Remark~\ref{rem:hsimplex-link-restr}, we
  have that the morphisms $\Lambda^u\into\partial Y(n)\overset\phi\to C$ and
  $\Lambda^u\into Y(n)\overset\psi\to C$ coincide, therefore the morphisms
  $\phi$ and $\partial Y(n)\into Y(n)\overset\psi\to C$ also coincide by
  Lemma~\ref{lem:higher-cube-prop-uniqueness}. Finally, the lifting is
  necessarily unique by Lemma~\ref{lem:geom-lifting}.
\end{proof}


\subsubsection{2-dimensional non-positively curved precubical sets}
Interestingly, a precubical set~$C$ satisfying the unique $n$-cube property, for
every $n\geq 3$, is characterized by its underlying 2-precubical set~$C_2$ (\ie
its image under the 2-truncation functor $\hat\pcc\to\hat\pcc_2$, see
Definition~\ref{def:truncation}). Higher-dimensional cubes can be recovered by a
``completion'' process: the $3$-cubes can be recovered from~$C_2$ by adding a
$3$-cube whenever $C_2$ contains the border of a cube (\ie for every morphism
$\partial Y3\to C_2$), then the $4$-cubes can be obtained by adding a $4$-cube
whenever the precubical set contains the border of a $4$-cube, etc. More
precisely, by general results about presheaf categories, we have

\begin{prop}
  The 2-truncation functor $\hat\pcc\to\hat\pcc_2$ admits a right adjoint
  $\hat\pcc_2\to\hat\pcc$, and this adjunction restricts to an equivalence of
  categories between the category~$\hat\pcc_2$ and the full subcategory
  of~$\hat\pcc$ whose objects are precubical sets satisfying the unique $n$-cube
  property for every $n\geq 3$.
\end{prop}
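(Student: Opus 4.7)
Let me write $i \colon \pcc_2 \hookrightarrow \pcc$ for the inclusion and $i^*$ for the induced $2$-truncation. Standard presheaf theory provides a right adjoint $i_*$ (the right Kan extension along $i$); combining the adjunction with the Yoneda lemma yields the explicit formula
\[
  (i_* D)(n) \;\cong\; \hat\pcc(Yn, i_* D) \;\cong\; \hat\pcc_2(i^* Yn, D),
\]
where $i^* Yn = (Yn)_{\leq 2}$ is the $2$-skeleton of the standard $n$-cube, i.e.\ its sub-precubical set consisting of cubes of dimension $\leq 2$. This settles the first assertion.

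Next I would check that $i_*$ is fully faithful by showing that the counit $i^* i_* \to 1_{\hat\pcc_2}$ is invertible. This is immediate: for $k \leq 2$ one has $(Yk)_{\leq 2} = Yk$, so $(i^* i_* D)(k) = \hat\pcc_2(Yk, D) = D(k)$ by Yoneda, with the counit being the identity. A fully faithful right adjoint always restricts to an equivalence between its domain and its essential image, the latter consisting of those $C \in \hat\pcc$ for which the unit $\eta_C \colon C \to i_* i^* C$ is an isomorphism. Unfolding the definitions, $(\eta_C)_n \colon C(n) \to \hat\pcc((Yn)_{\leq 2}, C)$ sends an $n$-cube $c$ to the restriction of its Yoneda-classifying map $Yn \to C$ to the $2$-skeleton, so $\eta_C$ is an isomorphism precisely when every morphism $(Yn)_{\leq 2} \to C$ extends uniquely to a morphism $Yn \to C$, for each $n$.

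To finish, I would show that this extension condition is equivalent to the unique $k$-cube property for all $k \geq 3$, by induction on $n$ using the skeletal filtration
\[
  (Yn)_{\leq 2} \;\subset\; (Yn)_{\leq 3} \;\subset\; \cdots \;\subset\; (Yn)_{\leq n-1} \;=\; \partial Yn \;\subset\; Yn.
\]
Each inclusion $(Yn)_{\leq k} \hookrightarrow (Yn)_{\leq k+1}$ is a pushout gluing in, for every $(k+1)$-sub-cube $\sigma$ of $Yn$, a copy of $Y(k+1)$ along its boundary $\partial Y(k+1) \hookrightarrow (Yn)_{\leq k}$. By the universal property of the pushout, extending uniquely across this step is equivalent to applying the unique $(k+1)$-cube property once to each such sub-cube; iterating from $k = 2$ up to $k = n-1$ shows that the unique $k$-cube properties for $3 \leq k \leq n$ imply the extension condition at level $n$. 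Conversely, $(Y3)_{\leq 2} = \partial Y3$ gives the unique $3$-cube property directly from the extension condition at $n = 3$, and a parallel induction bootstraps the unique $k$-cube properties for $k < n$ into a bijection $\hat\pcc(\partial Yn, C) \cong \hat\pcc((Yn)_{\leq 2}, C)$, which combined with the extension condition yields the unique $n$-cube property.

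The main obstacle, and the only step beyond formal manipulation, is verifying the pushout decomposition of the skeleta: using the description of cubes of $Yn$ as strings in $\set{-,0,+}^n$, one must check that distinct $(k+1)$-sub-cubes of $Yn$ meet each other and the lower skeleton precisely along their common faces, so that no spurious identifications enter the colimit and the inductive reduction to the unique-cube properties is legitimate.
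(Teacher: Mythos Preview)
Your proposal is correct and is precisely the kind of argument the paper gestures at: the paper does not actually prove this proposition but simply asserts that it follows ``by general results about presheaf categories''. Your proof spells out those general results --- the right Kan extension formula, full faithfulness of $i_*$ via the counit, and the characterisation of the essential image via the unit --- and then does the honest combinatorial work (the skeletal pushout decomposition and the induction linking the extension condition to the unique $n$-cube properties) that the paper leaves implicit; there is nothing to correct.
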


\noindent
The image of a 2-precubical set under the right adjoint will be called its
\emph{completion}. All the constructions preformed here preserve the above
equivalence. For instance, we could easily define a ``2-precubical semantics''
by adapting the definitions of Section~\ref{sec:pcs-sem} in order to associate a
2-precubical set to every program in the obvious way. It could then easily be
shown that the completion of the 2-precubical semantics is isomorphic to the
precubical semantics.

For this reason, we will mostly restrict, without loss of generality, to the underlying
2-precubical sets in the rest of the article.
In this setting, the property of being geometric can be also reformulated by
lifting properties, \ie there is a reciprocal to Lemma~\ref{lem:geom-lifting}.

\begin{defi}
  Suppose given a 2-precubical set~$C$.
  \begin{itemize}
  \item We say that a precubical set~$C$ has \emph{no looping edge} when there
    is no edge~$x\in C(1)$ with the same source and target (\ie
    $\partial^-(x)=\partial^+(x)$).
  \item We say that it has \emph{no folded square} when every square
    $x\in C(2)$ is such that $\partial^{--}(x)\neq\partial^{++}(x)$ and
    $\partial^{-+}(x)\neq\partial^{+-}(x)$.
  \item We say that is has \emph{no pinned squares} when for all squares
    $x,y\in C(2)$, either
    \begin{itemize}
    \item $\partial^{--}(x)=\partial^{--}(y)$ and
      $\partial^{++}(x)=\partial^{++}(y)$, or
    \item $\partial^{+-}(x)=\partial^{+-}(y)$ and
      $\partial^{-+}(x)=\partial^{-+}(y)$
    \end{itemize}
    implies $x=y$.
  \end{itemize}
\end{defi}

\noindent
A typical example of pinned square is given in Example~\ref{ex:pinned-squares}.

\begin{lem}
  \label{lem:2-geometric}
  A 2-precubical set~$C$ is geometric if and only if it has
  \begin{enumerate}
  \item no looping edge,
  \item no folded square,
  \item no parallel edges,
  \item no pinned squares,
  \item and the at most one square closing property.
  \end{enumerate}
\end{lem}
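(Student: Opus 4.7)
The plan is to prove the two directions separately. For the forward direction, Lemma~\ref{lem:geom-lifting} immediately gives (3) (no parallel edges) and (5) (at most one square closing property), so I only need to prove (1), (2), and (4) by hand. Conditions (1) and (2) are direct consequences of the no-self-intersection clause of Definition~\ref{def:geometric-pcs}: a looping edge yields two distinct morphisms $0\to 1$ in $\pcc$ acting identically on a 1-cube, and a folded square with $\partial^{--}(x)=\partial^{++}(x)$ (or the anti-diagonal analogue) yields two distinct morphisms $0\to 2$ acting identically on a 2-cube. For (4), given two distinct squares $x,y$ satisfying, say, the pinned condition $\partial^{+-}(x)=\partial^{+-}(y)=v_1$ and $\partial^{-+}(x)=\partial^{-+}(y)=v_2$, I would observe that $v_1\neq v_2$ by (2) and that $x,y$ share no edges (any shared edge would combine with (3) and then (5) to force $x=y$), so the common faces of $x$ and $y$ are exactly the two incomparable vertices $\{v_1,v_2\}$, violating the existence of a greatest common face, as in Example~\ref{ex:pinned-squares}.

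For the backward direction I would verify the two clauses of Definition~\ref{def:geometric-pcs} in turn. No self-intersection reduces to checking injectivity of the iterated face maps $\partial^u$ applied to a single cube: for a 1-cube this is (1); for a 2-cube, the four corner inequalities come either from (2) (the two diagonal pairs) or from (1) applied to an edge of the square (the four adjacent pairs), and then the four edge faces of the square are automatically pairwise distinct since they have pairwise distinct source-target pairs. For maximal common faces, I would do a case analysis on the dimensions of $x\in C(m)$ and $y\in C(n)$ with $m\leq n\leq 2$ sharing a face. The cases $m=0$ and $m=n=1$ are immediate, the latter using (3). For $m=1,n=2$ I would argue that if an edge $x$ and a square $y$ share two vertices, those vertices must be adjacent corners of $y$, so that (3) identifies $x$ with an edge of $y$ and the greatest common face is $x$; otherwise the greatest common face is the unique shared vertex.

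The core of the proof is the case $m=n=2$, which I would handle by a propagation argument: two squares sharing an L-shape of two adjacent edges coincide by (5); sharing two opposite parallel edges forces by (3) the remaining two edges to coincide, reducing to the L-shape case; sharing a single edge together with an extra common vertex produces another shared edge via (3), again reducing to the L-shape case; sharing only vertices is ruled out in the anti-diagonal configuration by (4) and in adjacent configurations by (3) (which upgrades a shared adjacent pair to a shared edge), leaving a single shared vertex as the greatest common face. The main obstacle will be organising this case analysis so that every configuration is classified exhaustively and the interplay of (3), (4), and (5) cleanly propagates shared data, guaranteeing in each case either a unique greatest common face or the equality $x=y$.
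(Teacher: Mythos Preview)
Your overall strategy matches the paper's brief proof sketch: derive (1)--(5) from geometricity via Lemma~\ref{lem:geom-lifting} and the two clauses of Definition~\ref{def:geometric-pcs}, then verify geometricity from (1)--(5) by a case analysis on dimensions. Your forward direction is correct and considerably more explicit than what the paper offers.

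There is, however, a genuine gap in your backward direction at the very first non-trivial case $m=n=1$. You claim this is handled by~(3), but ``no parallel edges'' only forbids two distinct edges with the \emph{same} source and the \emph{same} target; it says nothing about anti-parallel edges $a:x\to y$ and $b:y\to x$ with $x\neq y$. Such a pair satisfies all of (1)--(5) (no loops, no parallel edges, and the square conditions are vacuous), yet its set of common faces is exactly the two incomparable vertices $\{x,y\}$, so it fails the maximal-common-face clause of Definition~\ref{def:geometric-pcs}. This is precisely configuration~(3) of Example~\ref{ex:pcs-geom}, which the paper itself declares non-geometric. The same phenomenon recurs in the higher cases of your analysis: for instance, two squares $x,y$ with $\partial^{--}(x)=\partial^{++}(y)$ and $\partial^{++}(x)=\partial^{--}(y)$ and no other shared faces also satisfy (1)--(5) without being geometric, and this is covered neither by~(4) (which compares identical face maps on both squares) nor by your ``adjacent configuration'' clause. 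The obstruction is therefore not in your argument but in the statement itself: conditions (1)--(5) as written do not suffice, and the lemma appears to require an additional ``no anti-parallel edges'' hypothesis (together with its square-level analogues) that neither your case analysis nor the paper's one-line sketch supplies.
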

\begin{proof}
  Suppose given a 2-precubical set~$C$ satisfying the five above conditions.
  The first condition says precisely that~$C$ has no self-intersecting edge. We
  now show that it has no self-intersecting square. Suppose given
  square~$x\in C(2)$. Suppose that~$x$ has two 0-faces which are equal:
  $\partial^{\epsilon_0\epsilon_1}(x)=\partial^{\epsilon_0'\epsilon_1'}(x)$. The
  case $\epsilon_0\neq\epsilon_0'$ and $\epsilon_1\neq\epsilon_1'$ is excluded
  because $C$ has no folded square. Therefore we can suppose
  $\epsilon_0=\epsilon_0'$ (the case $\epsilon_1=\epsilon_1'$ is similar). If
  $\epsilon_1\neq\epsilon_1'$ then the edge $y=\partial^{\epsilon_0}_0$ is
  looping, which is forbidden, therefore $\epsilon_1=\epsilon_1'$ and the faces
  are not distinct. Suppose that~$x$ has two 1-faces which are equal:
  $\partial^{\epsilon}_i(x)=\partial^{\epsilon'}_{i'}(x)$. If $i=i'$ then
  $\epsilon=\epsilon'$, otherwise the edges $\partial^-_{1-i}(x)$ and
  $\partial^+_{1-i}(x)$ would both be looping. Suppose that $i\neq i'$, then it
  can easily be checked that this common face is a looping edge.
  The fact that~$C$ satisfies the other conditions required for being geometric
  can be checked by case analysis in a similar way, and the converse implication
  is simple to check.
\end{proof}

\noindent
To sum up, in the rest of the paper, by a non-positively curved precubical set,
we can thus mean the following:

\begin{defi}
  \label{def:2-pcs-npc}
  A \textbf{non-positively curved} precubical set is a 2-precubical set which
  satisfies
  \begin{enumerate}
  \item the five conditions of Lemma~\ref{lem:2-geometric}
  \item the cube property.
  \end{enumerate}
\end{defi}



\section{When dihomotopy coincides with homotopy}
\label{sec:homotopy-npc-cc}

\subsection{Homotopy and dihomotopy in precubical sets}
\label{sec:pcs-hom-dihom}

As exposed in the introduction, one of the main contributions of this article is
to show that homotopy coincides with its directed variant for paths in \NPC
precubical sets. We begin by formally introducing these equivalences on paths,
which can be thought of as algebraic variants of the classical notion of
homotopy between paths in algebraic topology.
Since we will need to consider both directed and non-directed paths in
precubical sets~$C$, we will consistently use the following terminology in the
remainder of the article. A \emph{directed path} (or \emph{dipath}) is a path in
the underlying directed graph of~$C$ (\ie what we have been simply calling a
``path'' up to now), and a \emph{path} is a path in the underlying non-directed
graph~$C$ (\ie a sequence of composable transitions which might contain
transitions taken backwards, that we call \emph{reversed} transitions).
We sometimes write $s:x\pathto y$ to indicate that $s$ is a path with~$x$ as
source and~$y$ as target, and $\emptypath{x}:x\pathto x$ for the empty path
on~$x$.
The \emph{length} $\length s$ of a path~$s$ is the number of (possibly reversed)
edges occurring in it. The concatenation of two paths $s:x\pathto y$ and
$t:y\pathto z$ is written $s\cc t:x\pathto z$, and the reversal of a
path~$s:x\pathto y$ is written $\ol s:y\pathto x$.
%
%
By an edge (or a transition) $a:x\to y$, we will always denote a dipath of
length one, and write $\ol a:y\to x$ for the corresponding reversed transition,
unless we explicitly state that it can be possibly reversed, in which case we
denote a path of length one.

We conservatively extend the definition relation~$\tile$
(Definition~\ref{def:independence}) from dipaths to paths: given a precubical
set~$C$, this relation is now defined as the smallest symmetric relation on
paths of length~2 such that
\begin{equation}
  \label{eq:dihomotopy}
  a\cc b\tile b'\cc a'
  \qquad\qquad
  \ol{b}\cc\ol{a}\tile \ol{a'}\cc\ol{b'}
  \qquad\qquad
  \ol{a}\cc b'\tile b\cc\ol{a'}
  \qquad\qquad
  \ol{b'}\cc a\tile a'\cc\ol{b}
\end{equation}
whenever there is a square
\begin{equation}
  \label{eq:tile'}
  \vxym{
    &x_{11}&\\
    x_{10}\ar[ur]^{b}&\tile&\ar[ul]_{a'}x_{01}\\
    &\ar[ul]^{a}x_{00}\ar[ur]_{b'}&\\
  }
\end{equation}
in~$C$.

\begin{defi}
  \label{def:dihomotopy}
  Given a precubical set~$C$, the \textbf{dihomotopy} relation~$\dihom$ on paths
  is the smallest congruence \wrt concatenation containing~$\tile$. A
  \emph{dihomotopy step} is a dihomotopy between two paths using only one of the
  four above relations~\eqref{eq:dihomotopy} in context.
  The \textbf{homotopy} relation~$\hom$ on paths is the smallest congruence
  containing~$\dihom$ and such that for every edge $a:x\to y$ we have
  \begin{equation}
    \label{eq:homotopy}
    a\cc\ol{a}\hom\emptypath{x}
    \qquad\qquad
    \ol{a}\cc a\hom\emptypath{y}
  \end{equation}
  A \emph{homotopy step} is a homotopy reduced to one of the six relations
  \eqref{eq:dihomotopy} and \eqref{eq:homotopy} in context.
\end{defi}

\noindent
For geometric precubical sets, the notion of (di)homotopy coincides with the
usual topological one through the geometric realization, see
Section~\ref{sec:hom-ax} and \cite{fajstrup2005dipaths}.

By definition of the axioms defining dihomotopy, we have the following
properties.

\begin{lem}
  \label{lem:dihomotopy-length}
  Two dihomotopic paths have the same length.
\end{lem}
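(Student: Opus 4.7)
The plan is to observe that the generating relation $\tile$ is length-preserving by construction: each of the four basic instances in \eqref{eq:dihomotopy} relates a path of length $2$ to another path of length $2$, since every edge on either side is a single (possibly reversed) transition in the square \eqref{eq:tile'}. Thus every atomic rewriting step underlying a dihomotopy does not change the length of the path.

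Next, I would argue that the length function $\length{\cdot}$ is compatible with concatenation, in the sense that $\length{s\cc t}=\length s+\length t$. Consequently, replacing a factor $u$ by a factor $v$ inside a path $s\cc u\cc t$ changes the total length by $\length v-\length u$; when $u\tile v$ this difference is $0$. Hence every one-step dihomotopy in context preserves length.

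Finally, since $\dihom$ is defined as the smallest congruence on paths containing $\tile$, it is obtained as the reflexive, symmetric, transitive closure of the ``one-step dihomotopy in context'' relation just described. A straightforward induction on the number of such steps needed to witness $s\dihom t$ then shows $\length s=\length t$. No step is really an obstacle here; the only thing to be careful about is to include all four forms of \eqref{eq:dihomotopy}, including those involving reversed transitions, in the base case, so that the length of $\ol a$ is counted as $1$ just like the length of $a$.
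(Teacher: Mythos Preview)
Your proposal is correct and matches the paper's approach: the paper states this lemma immediately after Definition~\ref{def:dihomotopy} with the remark that it follows ``by definition of the axioms defining dihomotopy,'' and your argument is precisely the routine unfolding of that observation (each instance of~$\tile$ relates two length-$2$ paths, length is additive under concatenation, and the congruence closure therefore preserves length).
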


\begin{lem}
  \label{lem:dihomotopy-direction}
  A path which is dihomotopic to a dipath is necessarily a dipath.
\end{lem}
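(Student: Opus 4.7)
The plan is to exhibit an invariant of the dihomotopy relation that detects whether a path is directed, namely the number of reversed edges occurring in a path. Write $r(s)\in\N$ for the number of reversed transitions in a path $s$, so that $r$ is additive under concatenation: $r(s\cc t)=r(s)+r(t)$, and a path $s$ is a dipath if and only if $r(s)=0$.

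Next, I would verify that $r$ is invariant under each of the four basic dihomotopy steps of~\eqref{eq:dihomotopy}: the two sides of $a\cc b\tile b'\cc a'$ both have $r=0$; the two sides of $\ol b\cc\ol a\tile\ol{a'}\cc\ol{b'}$ both have $r=2$; and the two mixed relations $\ol a\cc b'\tile b\cc\ol{a'}$ and $\ol{b'}\cc a\tile a'\cc\ol b$ each have $r=1$ on both sides. Combined with additivity of $r$ under concatenation, this shows that $r$ is preserved by any dihomotopy step performed in context, and therefore by the reflexive–symmetric–transitive closure generating $\dihom$.

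Finally, if $s\dihom t$ with $t$ a dipath, then $r(s)=r(t)=0$, so every transition of $s$ is taken in the forward direction, i.e.\ $s$ is itself a dipath. There is no real obstacle here; the only thing to be careful about is that the four generating relations in~\eqref{eq:dihomotopy} really are the only ways to perform a dihomotopy step, so that checking them exhausts all cases — which is immediate from Definition~\ref{def:dihomotopy}.
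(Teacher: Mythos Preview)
Your proof is correct and is exactly the kind of argument the paper has in mind: the paper does not spell out a proof at all, merely prefacing the lemma with ``By definition of the axioms defining dihomotopy, we have the following'', and your invariant $r$ (number of reversed transitions) is the natural way to make this explicit. The only remark is that you are slightly more careful than needed---one could alternatively just observe that each side of each generating relation in~\eqref{eq:dihomotopy} is a dipath iff the other side is, without counting---but your version works fine and simultaneously recovers Lemma~\ref{lem:dihomotopy-length} in spirit.
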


\noindent
These notions allow us to introduce variants of the well-known construction of
fundamental groupoid in algebraic topology.

\begin{defi}
  \label{def:fund-cat-gpd}
  Given a precubical set~$C$, its \textbf{fundamental groupoid} $\fundgpd(C)$ is
  the category whose objects are vertices and morphisms are paths up to
  homotopy, and its \textbf{fundamental category} $\fundcat(C)$ is the category
  whose objects are vertices and morphisms are dipaths up to dihomotopy.
\end{defi}

\noindent
Since the relations~\eqref{eq:homotopy} are precisely those required for every
transition $a$ to admit $\ol a$ as inverse, it is not hard to show that

\begin{lem}
  \label{fundisfreegroupoid}
  Given a precubical set~$C$, the category $\fundgpd(C)$ is the free groupoid
  over the category~$\fundcat(C)$.
\end{lem}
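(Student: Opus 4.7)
The plan is to verify the universal property of the free groupoid directly. Let $J : \fundcat(C) \to \fundgpd(C)$ be the functor which is the identity on vertices and sends the dihomotopy class of a dipath to its homotopy class; this is well-defined since dihomotopy implies homotopy by Definition~\ref{def:dihomotopy}. One first checks that $\fundgpd(C)$ is indeed a groupoid: the two relations in~\eqref{eq:homotopy} exhibit $\ol a$ as a two-sided inverse for each edge $a$, and this extends to arbitrary paths by the contravariant compatibility of reversal with concatenation (so the homotopy class of $\ol s$ is inverse to that of $s$).

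Given any functor $F : \fundcat(C) \to G$ into a groupoid $G$, I construct an extension $\tilde F : \fundgpd(C) \to G$ as follows. On objects, $\tilde F(x) = F(x)$. On a path $s = e_1 \cc \cdots \cc e_n$ of possibly reversed edges, set $\tilde F(s) = \tilde F(e_1) \cdots \tilde F(e_n)$, where $\tilde F(a) = F(a)$ for a forward edge and $\tilde F(\ol a) = F(a)^{-1}$ for a reversed one, the inverse being taken in the groupoid $G$.

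The substantive step is verifying that $\tilde F$ respects the homotopy relation. The relations in~\eqref{eq:homotopy} are sent to identities by construction. For the first form of the dihomotopy relation, $a \cc b \tile b' \cc a'$ coming from a tile~\eqref{eq:tile'}, preservation follows because $F$ already identifies these two dipaths in $\fundcat(C)$. The three remaining forms in~\eqref{eq:dihomotopy} then reduce to this one by manipulating inverses in $G$; for instance,
\[
\tilde F(\ol b \cc \ol a) \qeq F(b)^{-1} F(a)^{-1} \qeq (F(a) F(b))^{-1} \qeq (F(b') F(a'))^{-1} \qeq \tilde F(\ol{a'} \cc \ol{b'}),
\]
and analogously for the two mixed forms. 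Uniqueness of $\tilde F$ is forced: any functor into a groupoid extending $F$ must send $\ol a$ to the unique inverse $F(a)^{-1}$, and is then determined on all paths by functoriality.

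The main thing to be careful about is the well-definedness check across the four variants of the tile relation in~\eqref{eq:dihomotopy}, since only the first variant is directly available from~$F$; but once this variant is handled and $G$ is known to be a groupoid, the remainder is a routine manipulation of inverses, and the whole argument essentially says that the extra relations~\eqref{eq:homotopy} separating homotopy from dihomotopy are exactly the generators and relations needed to freely invert all morphisms of $\fundcat(C)$.
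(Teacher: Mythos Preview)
Your proof is correct and follows precisely the line the paper hints at: the paper does not spell out a proof at all, merely observing that ``the relations~\eqref{eq:homotopy} are precisely those required for every transition $a$ to admit $\ol a$ as inverse'' before stating the lemma. Your verification of the universal property, including the check that the three derived tile relations in~\eqref{eq:dihomotopy} follow from the first by inverse manipulation in~$G$, is exactly the routine unpacking that the paper leaves implicit.
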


By definition, two dipaths which are dihomotopic are homotopic. The main goal of
Section~\ref{sec:dihom-hom} is to show that for precubical sets satisfying the
cube property, the converse is true and both relations thus coincide
(Theorem~\ref{thm:hom-dihom}).

\begin{exa}
  \label{ex:homotopy-vs-dihomotopy}
  \label{ex:cube-without-bottom}
  Consider the standard hollow cube $\partial Y3$ without the ``bottom'' face
  $\sst{0-0}$, first considered in~\cite{fahrenberghomology}:
  \[
  \svxym{
    \sst{--+}\ar[rrr]^{0-+}\ar[dr]|{-0+}&\ar@{}[dr]|{00+}&&\ar[dl]|{+0+}\sst{+-+}\\
    &\sst{-++}\ar[r]^{0++}&\sst{+++}&\\
    \ar@{}[ur]|{-00}&\sst{-+-}\ar[u]|{-+0}\ar[r]_{0+-}\ar@{}[ur]|{0+0}&\ar[u]|{++0}\sst{++-}\ar@{}[ur]|{+00}&\\
    \sst{---}\ar[uuu]^{--0}\ar[rrr]_{0--}\ar[ur]|{-0-}&\ar@{}[ur]|{00-}&&\ar[ul]|{+0-}\ar[uuu]_{+-0}\sst{+--}
  }
  \]
  The two external paths $\sst{--0}\cc\sst{0-+}$ and $\sst{0--}\cc\sst{+-0}$ are
  homotopic (as detailed in Example~\ref{ex:cube-without-bottom-dihom}), but not
  dihomotopic (each of those two paths is dihomotopic only to itself). Notice
  that this precubical set does not satisfy the cube property.
\end{exa}

We will also see in Section~\ref{sec:geo-metric} that our algebraic conditions
for \NPC precubical sets are closely related to those characterizing
non-positively curved metric spaces in the traditional sense. Those spaces are
locally geodesic: in our context, a counterpart of this notion can be defined as
follows.

\begin{defi}
  \label{def:local-geodesic}
  A path~$s$ is \emph{locally geodesic} if it is not dihomotopic to a path of
  the form~$s'\cc a\cc\ol a\cc s''$ or $s'\cc\ol a\cc a\cc s''$. It is
  \emph{geodesic} if every path homotopic to it has a greater or equal length.
\end{defi}

\noindent
By Lemma~\ref{lem:dihomotopy-direction}, since the dihomotopy class of a dipath
contains only dipaths, the situation of previous definition can never occur:

\begin{lem}
  Every dipath is locally geodesic.
\end{lem}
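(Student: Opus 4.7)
The plan is to derive the statement directly from Lemma~\ref{lem:dihomotopy-direction}, which says that any path dihomotopic to a dipath is itself a dipath. The key observation is that paths of the shape $s'\cc a\cc\ol a\cc s''$ and $s'\cc\ol a\cc a\cc s''$ both contain an occurrence of the reversed transition~$\ol a$, so neither of them is a dipath.

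Concretely, I would argue by contradiction: suppose $s$ is a dipath which is dihomotopic to a path $t$ of one of the two forbidden forms. By Lemma~\ref{lem:dihomotopy-direction}, $t$ must be a dipath. But $t$ explicitly contains the reversed edge~$\ol a$ as one of its factors, which is not allowed in a dipath (recall that in the terminology fixed just before Definition~\ref{def:dihomotopy}, a dipath is a path in the underlying directed graph, whose edges are all forward transitions). This contradiction shows that no such~$t$ exists, so $s$ is locally geodesic.

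The only subtle point is that the notion of ``containing a reversed transition'' is well-defined on paths, since a path is a sequence of (possibly reversed) edges, and the reversal of~$a$ is not equal to any forward edge in the underlying directed graph (by convention transitions are directed). Since this is built into the definition of path recalled at the beginning of Section~\ref{sec:pcs-hom-dihom}, there is no real obstacle and the proof reduces to a one-line invocation of the previous lemma.
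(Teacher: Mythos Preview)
Your proposal is correct and follows exactly the paper's own reasoning: the paper simply observes that by Lemma~\ref{lem:dihomotopy-direction} the dihomotopy class of a dipath contains only dipaths, so the forbidden forms (which contain the reversed transition~$\ol a$) cannot occur. Your write-up is a slightly more explicit version of the same one-line argument.
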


\noindent
Clearly, a geodesic path is locally geodesic, but the reciprocal is not
generally true.

\begin{exa}
  Consider again the hollow cube without bottom face introduced in
  Example~\ref{ex:cube-without-bottom}. The ``loop around the hole'', \ie the
  path $\sst{--0}\cc\sst{0-+}\cc\ol{\sst{+-0}}\cc\ol{\sst{0--}}$, is locally
  geodesic. However it is not geodesic, since it is homotopic to the identity
  path on the vertex $\sst{---}$, which is strictly shorter.
\end{exa}

\subsection{A 2-categorical approach to dihomotopy in precubical sets}
\label{sec:dihom-hom}
In this section, we reuse and extend the rewriting techniques developed by
Lafont~\cite{lafont2003towards} for presenting monoidal categories, and the
subsequent developments in the second author's PhD thesis~\cite{mimram:phd,
  mimram2011structure}, in order to show one of the main results of this
article: homotopy and dihomotopy relations coincide for directed paths in \NPC
precubical sets (Theorem~\ref{thm:hom-dihom}). This will be done by rewriting
homotopies between paths into a canonical form, which is always a dihomotopy
when the two paths are directed.

\subsubsection{The fundamental 2-category and 2-groupoid of a precubical set}
\label{sec:2-cat-gpd}
In the following, we suppose fixed a 2-precubical set~$C$ which
satisfies:
\begin{enumerate}
\item the at most one square closing property,
\item the cube property,
\item for any square $\ol{a}\cc a'\tile b\cc\ol{b'}$ as in~\eqref{eq:tile'}, we
  have $a=a'$ if and only if $b=b'$.
\end{enumerate}
In particular any \NPC precubical set (see Definition~\ref{def:2-pcs-npc})
satisfies those conditions, which is our main object of interest here, but the
results of this section hold in this more general context.
To such a precubical set, we can canonically associate two distinct 2-categories
respectively defined as follows, which are 2-dimensional refinements of the
categories introduced in Definition~\ref{def:fund-cat-gpd}. The precise way in
which those are generalizations is given in Lemma~\ref{lem:fund-cat-gpd-quot}.

In classical algebraic topology, the fundamental groupoid $\fundgpd(X)$ of a
topological space is constructed as the category with, as objects, the points of
$X$, and as morphisms, the 1-tracks, or paths modulo homotopy. A direct
generalization is that of the fundamental bicategory, where points are still
0-cells, paths are 1-cells and 2-tracks or homotopies relative end points modulo
higher homotopies are 2-cells~\cite{Bigroupoid}. Composition in dimension 2, \ie
vertical composition is associative and even groupoidal (every 2-cell has an
inverse), but composition in dimension 1 is a little less well-behaved:
composition is only associative up to some 2-cells. Hence the natural structure
of \emph{weak} 2-category, or bicategory. In dimension one
also, the fundamental bicategory has all its 1-cells being invertible (up to
coherent 2-cells again), hence is a bigroupoid.
For Hausdorff spaces, it is possible to construct directly a fundamental
2-groupoid (hence a strict version)~\cite{2groupoid}, which is biequivalent to
the fundamental bigroupoid.  We follow an equivalent approach here, in defining
a fundamental 2-category and fundamental 2-groupoid, but based on a
combinatorial structure, instead of a topological structure.

\begin{defi}
\label{fund2cat}
  The \textbf{fundamental 2-category} $\fundtcat(C)$ associated to~$C$ is the
  2-category whose
  \begin{itemize}
  \item 0-cells are the vertices of $C$,
  \item 1-cells are generated by (non-reversed) edges of $C$: 1-cells are
    dipaths in $C$ and composition is given by concatenation,
  \item 2-cells are freely generated by
    \[
    \gammag{a,b}{b',a'}
    \qcolon
    a\cc b\qTo b'\cc a'
    \]
    whenever $a,b,a',b'$ are transitions such that $a\cc b\tile b'\cc a'$, and
    quotiented by the smallest congruence (\wrt both compositions) such that
    \begin{align}
      (\gammag{b',c'}{c'',b''}\circ\id_{a''})\circ(\id_{a'}\cc\gammag{a',c}{c',a''})\circ(\gammag{a,b}{b',a'}\cc\id_c)
      &
      \qeq(\id_{c''}\cc\gammag{a''',b'''}{b'',a''})\circ(\gammag{a,c'''}{c'',a'''}\cc\id_{b'''})\circ(\id_a\cc\gammag{b,c}{c''',b'''})
      \nonumber
      \\
      \label{eq:fundtcat-sym}
      \gammag{b',a'}{a,b}\circ\gammag{a,b}{b',a'}&\qeq\id_{a\cc b}
    \end{align}
    for transitions such that all the involved morphisms are
    defined. Graphically,
    \begin{align*}
    \vxym{
      &\ar[r]^b\ar@{}[d]|\tile&\ar[dr]^c\ar@{}[dd]|\tile\\
      \ar[ur]^a\ar[dr]_{c''}\ar[r]|{b'}&\ar[ur]|{a'}\ar[dr]|{c'}&&\\
      &\ar[r]_{b''}\ar@{}[u]|\tile&\ar[ur]_{a''}&
    }
    &\qeq
    \vxym{
      &\ar[r]^b\ar[dr]|{c'''}\ar@{}[dd]|\tile&\ar[dr]^c\ar@{}[d]|\tile\\
      \ar[ur]^a\ar[dr]_{c''}&&\ar[r]|{b'''}&\\
      &\ar[r]_{b''}\ar[ur]|{a'''}&\ar[ur]_{a''}\ar@{}[u]|\tile&
    }
    \\
    \vxym{
      &\ar[dr]^b\ar@{}[d]|\tile\\
      \ar[ur]^a\ar[r]|{b'}\ar[dr]_{a}&\ar[r]|{a'}&\\
      &\ar[ur]_b\ar@{}[u]|\tile
    }
    &\qeq
    \vxym{
      \ar[r]^{a}&\ar[r]^{b}&\\
    }
    \end{align*}
  \end{itemize}
  The horizontal composition will be denoted as concatenation ($\cc$), in
  sequential order, whereas the vertical composition will be denoted as usual
  ($\circ$), in categorical order, thus following the usual convention for
  monoidal categories.
\end{defi}

\begin{rem}
  \label{rem:fundtcat-gpd}
  In the 2-category $\fundtcat(C)$, if $\gammag{a,b}{b',a'}$ is defined, then
  the transitions $a'$ and $b'$ are uniquely determined from~$a$ and~$b$
  because~$C$ satisfies the at most one square closing property. Moreover, if
  $\gammag{a,b}{b',a'}$ is defined then~$\gammag{a',b'}{a,b}$ is also defined,
  and is an inverse for $\gammag{a,b}{b',a'}$ by~\eqref{eq:fundtcat-sym}: in
  fact, the 2-category is a category enriched in groupoids.
\end{rem}

\begin{lem}
  \label{lem:dihom-2-cell}
  Two dipaths~$f,g:x\pathto y$ in~$C$ are dihomotopic if and only if there
  exists a 2-cell $\alpha:f\To g$ in $\fundtcat(C)$.
\end{lem}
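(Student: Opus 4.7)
The plan is to prove both implications by induction, exploiting the fact that the 2-cells of $\fundtcat(C)$ are freely generated by the $\gammag{a,b}{b',a'}$ modulo the two stated relations, while dihomotopy between dipaths only involves the first of the four relations listed in~\eqref{eq:dihomotopy}: indeed, the other three variants require reversed transitions, which by Lemma~\ref{lem:dihomotopy-direction} cannot appear in any path dihomotopic to a dipath.

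For the forward implication, I will argue by induction on the derivation of $f\dihom g$. Each atomic dihomotopy step between dipaths locally substitutes a factor $a\cc b$ by $b'\cc a'$ with $a\cc b\tile b'\cc a'$, in some context $u,v$; I would realise this step by the 2-cell $\id_u\cc\gammag{a,b}{b',a'}\cc\id_v$. Reflexivity of $\dihom$ is handled by identity 2-cells; symmetry is handled via the inverse generators $\gammag{b',a'}{a,b}$, whose existence is recalled in Remark~\ref{rem:fundtcat-gpd}; transitivity corresponds to vertical composition $\circ$; and compatibility with concatenation is implemented by horizontal composition with identity 2-cells.

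For the backward implication, I will argue by structural induction on 2-cells in the (freely presented) 2-category $\fundtcat(C)$. Each generator $\gammag{a,b}{b',a'}$ has source $a\cc b$ and target $b'\cc a'$, which are dihomotopic by the very definition of $\tile$. Since $\dihom$ is transitive, vertical composition preserves dihomotopy of source and target; since $\dihom$ is a congruence with respect to concatenation, horizontal composition preserves it as well. The Yang-Baxter and symmetry relations imposed in Definition~\ref{fund2cat} equate 2-cells which are already parallel, so quotienting by them leaves the induced dihomotopy statement between source and target untouched.

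The point to be careful about is that the notation $\gammag{a,b}{b',a'}$ must unambiguously refer to a generator, i.e.\ the pair $(b',a')$ must be determined by $(a,b)$; this is exactly what the at most one square closing property assumed on~$C$ guarantees, as noted in Remark~\ref{rem:fundtcat-gpd}. Beyond this, the correspondence is essentially tautological: atomic dihomotopy steps on dipaths match the generators of 2-cells, and the closure operations defining dihomotopy as a congruence match the two composition laws of the 2-category.
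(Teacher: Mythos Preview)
Your proposal is correct. The paper states this lemma without proof, treating it as immediate from the definitions of dihomotopy (Definition~\ref{def:dihomotopy}) and of the fundamental 2-category (Definition~\ref{fund2cat}); your argument spells out the bijection between the closure operations generating~$\dihom$ (reflexivity, symmetry, transitivity, congruence under concatenation) and the structural operations on 2-cells (identities, inverses via~\eqref{eq:fundtcat-sym}, vertical composition, horizontal whiskering), together with the observation---via Lemma~\ref{lem:dihomotopy-direction}---that only the first of the four clauses in~\eqref{eq:dihomotopy} is ever exercised between dipaths.
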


\begin{defi}
  \label{def:fundtgpd}
%
%
  The \textbf{fundamental 2-groupoid} $\fundtgpd(C)$ associated to~$C$ is the
  2-category whose
  \begin{itemize}
  \item 0-cells are the vertices of $C$,
  \item 1-cells are generated by edges of~$C$ or their reverse: 1-cells are
    paths in $C$ and composition is given by concatenation,
  \item 2-cells are freely generated by
    \[
    \gammag{a,b}{b',a'}
    \qcolon
    a\cc b\qTo b'\cc a'
    \]
    whenever $a,b,a',b'$ are possibly reversed transitions such that
    $a\cc b\tile b'\cc a'$, and
    \[
    \begin{array}{r@{\qquad:\qquad}r@{\qquad\To\qquad}l}
      \eta_a
      &
      \id_x&a\cc\ol a
      \\
      \varepsilon_a
      &
      \ol a\cc a&\id_y
    \end{array}
    \]
    whenever $a:x\to y$ is a possibly reversed transition, and quotiented by the
    smallest congruence (\wrt both compositions) such that
    \begin{align}
      \label{eq:fundtgpd-yb}
      (\gammag{b',c'}{c'',b''}\circ\id_{a''})\circ(\id_{a'}\cc\gammag{a',c}{c',a''})\circ(\gammag{a,b}{b',a'}\cc\id_c)
      &
      \qeq(\id_{c''}\cc\gammag{a''',b'''}{b'',a''})\circ(\gammag{a,c'''}{c'',a'''}\cc\id_{b'''})\circ(\id_a\cc\gammag{b,c}{c''',b'''})
      \\
      \label{eq:fundtgpd-sym}
      \gammag{b',a'}{a,b}\circ\gammag{a,b}{b',a'}&\qeq\id_{a\cc b}
      \\
      \label{eq:fundtgpd-zigzag1}
      (\id_a\cc\varepsilon_a)\circ(\eta_a\cc\id_a)&\qeq\id_a
      \\
      \label{eq:fundtgpd-zigzag2}
      (\varepsilon_a\cc\id_{\ol a})\circ(\id_{\ol a}\cc\eta_a)&\qeq\id_{\ol a}
      \\
      \label{eq:fundtgpd-eps-eta-inv}
      \varepsilon_{\ol a}\circ\eta_a&\qeq\id_{\emptypath{}}
      \\
      \gammag{a,\ol a}{\ol{a'},a'}\circ\eta_a&\qeq\eta_{\ol{a'}}
      \\
      \varepsilon_{a'}\circ\gammag{a,\ol a}{\ol a',a'}&\qeq\varepsilon_{\ol a}
      \\
      (\gammag{a,b'}{b,a'}\cc\id_{\ol{a'}})\cc(\id_a\cc\gammag{\ol a,b}{b',\ol{a'}})\circ(\eta_a\cc\id_b)
      &\qeq
      \id_b\cc\eta_{a'}
      \\
      (\id_{a'}\cc\gammag{b',\ol a}{\ol{a'},b})\circ(\gammag{b,a}{a',b'}\cc\id_{\ol a})\circ(\id_b\cc\eta_a)
      &\qeq
      \eta_{a'}\cc\id_b
      \\
      (\varepsilon_{a'}\cc\id_b)\circ(\id_{\ol{a'}}\cc\gammag{b',a}{a',b})\circ(\gammag{b,\ol a}\cc\id_a)
      &\qeq
      \id_b\cc\varepsilon_a
      \\
      (\id_b\cc\varepsilon_{a'})\circ(\gammag{\ol a,b'}{b,\ol{a'}}\cc\id_{a'})\circ(\id_{\ol a}\cc\gammag{a,b}{b',a'})
      &\qeq
      \varepsilon_a\cc\id_b
    \end{align}
    for possibly reversed transitions $a,a',a'',b,b',b'',c,c',c''$ such that all
    the involved morphisms are defined.
  \end{itemize}
\end{defi}

\noindent
In string diagrammatic notation, the generators are drawn as
\[
\strid1{gamma}
\qquad\qquad
\strid1{eta}
\qquad\qquad
\strid1{eps}
\]
or often simply as
\[
\strid1{gamma_simple}
\qquad\qquad
\strid1{eta_simple}
\qquad\qquad
\strid1{eps_simple}
\]
and the relations of Definition~\ref{def:fundtgpd} can be pictured as
\begin{align*}
  \strid1{yb_l}&\qeq\strid1{yb_r}&\strid1{sym_l}&\qeq\strid1{sym_r}\\
  \strid1{zigzag1_l}&\qeq\strid1{zigzag1_r}&\strid1{zigzag2_l}&\qeq\strid1{zigzag2_r}\\
  \strid1{unidim}&\qeq\\
  \strid1{etasym_l}&\qeq\strid1{etasym_r}&\strid1{epssym_l}&\qeq\strid1{epssym_r}\\
  \strid1{etanat1_l}&\qeq\strid1{etanat1_r}&\strid1{etanat2_l}&\qeq\strid1{etanat2_r}\\
  \strid1{epsnat1_l}&\qeq\strid1{epsnat1_r}&\strid1{epsnat2_l}&\qeq\strid1{epsnat2_r}\\
\end{align*}


\noindent
By definition of the generating 2-cells, we have that

\begin{lem}
  \label{lem:hom-2-cell}
  Two paths~$f,g:x\pathto y$ in~$C$ are homotopic if and only if there exists a
  2-cell $\alpha:f\To g$ in $\fundtgpd(C)$.
\end{lem}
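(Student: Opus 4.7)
The plan is to adapt the argument of Lemma~\ref{lem:dihom-2-cell} so as to handle the additional generators $\eta_a$ and $\varepsilon_a$ of $\fundtgpd(C)$, which correspond precisely to the cancellation relations~\eqref{eq:homotopy} distinguishing homotopy from dihomotopy. Both directions proceed by induction on the inductive structure of the object in question: a 2-cell, or a derivation of $f \hom g$.

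For the ``if'' direction, suppose given a 2-cell $\alpha\colon f \To g$ in $\fundtgpd(C)$, and argue by induction on its construction as an element of the freely generated 2-category modulo the stated congruence. Each generator $\gammag{a,b}{b',a'}$ directly witnesses one of the four dihomotopy steps of~\eqref{eq:dihomotopy} (which is in particular a homotopy step), and each $\eta_a$ (resp.\ $\varepsilon_a$) witnesses a homotopy $\emptypath{x} \hom a \cc \ol a$ (resp.\ $\ol a \cc a \hom \emptypath{y}$) from~\eqref{eq:homotopy}. Vertical composition corresponds to transitivity of homotopy, and horizontal composition with identity 2-cells to contextual application, so both preserve homotopy. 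The quotient by the defining relations is harmless, since each relation equates 2-cells with the same source and target; whether some 2-cell connects a given pair of paths depends only on the equivalence class.

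For the ``only if'' direction, suppose $f \hom g$ and argue by induction on the derivation of this fact as the smallest congruence containing $\dihom$ together with~\eqref{eq:homotopy}. The basic dihomotopy steps~\eqref{eq:dihomotopy} produce the generators $\gammag{a,b}{b',a'}$ (allowing reversed transitions) directly, and the relations~\eqref{eq:homotopy} produce $\eta_a$ and $\varepsilon_a$. Transitivity is handled by vertical composition and contextual application by horizontal composition with identities; reflexivity is the identity 2-cell. Symmetry requires every 2-cell to admit a vertical inverse, which follows from the facts that each generator is invertible---using~\eqref{eq:fundtgpd-sym} for $\gammag{}{}$, and combining the zigzag identities~\eqref{eq:fundtgpd-zigzag1},~\eqref{eq:fundtgpd-zigzag2} with~\eqref{eq:fundtgpd-eps-eta-inv} to show $\eta_a$ and $\varepsilon_{\ol a}$ are mutually inverse---and that any vertical composite of invertible 2-cells is invertible.

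The main obstacle is purely bookkeeping: one must verify that the lengthy list of relations imposed on $\fundtgpd(C)$ in Definition~\ref{def:fundtgpd} is sufficient to make the 2-cells form a groupoid under vertical composition, so that the symmetry step of the ``only if'' direction goes through. Conversely, each such relation is itself a valid identity under the homotopy interpretation---both sides have the same source and the same target and therefore trivially witness the same homotopy---so no relation introduces spurious identifications between paths, confirming that the correspondence is exact.
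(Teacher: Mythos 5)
The paper gives essentially no proof, remarking only that the lemma is immediate ``by definition of the generating 2-cells''; your proposal unwinds this carefully and is mostly sound, but one sub-step is not justified. In the ``only if'' direction you reduce the symmetry rule for the congruence $\hom$ to the claim that \emph{every} 2-cell of $\fundtgpd(C)$ admits a vertical inverse, and to support this you assert that $\eta_a$ and $\varepsilon_{\ol a}$ are mutually inverse by ``combining the zigzag identities with~\eqref{eq:fundtgpd-eps-eta-inv}.'' Those relations give one of the two required equations, namely $\varepsilon_{\ol a}\circ\eta_a=\id_{\emptypath{}}$; they do \emph{not} give $\eta_a\circ\varepsilon_{\ol a}=\id_{a\cc\ol a}$, and I do not see how to obtain it from the relations you cite (in a generic compact-closed setting, $\eta\circ\varepsilon$ is only an idempotent and need not equal the identity even when the loop scalar is trivial). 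So as written, the invertibility of $\eta_a$, and hence of arbitrary 2-cells, is not established.

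Fortunately, full invertibility of 2-cells is far more than the lemma needs, and the gap is easy to close. Since $\hom$ is the reflexive--transitive--contextual closure of the \emph{symmetric} closure of its generating relations, it suffices that each generating relation has a witnessing 2-cell in \emph{both} directions: $a\cc\ol a\hom\emptypath{x}$ is witnessed by $\varepsilon_{\ol a}$ and its flip $\emptypath{x}\hom a\cc\ol a$ by $\eta_a$; $\ol a\cc a\hom\emptypath{y}$ by $\varepsilon_a$ and its flip by $\eta_{\ol a}$; and the $\tile$-steps by the $\gammag{a,b}{b',a'}$, which come in mutually inverse pairs via~\eqref{eq:fundtgpd-sym}. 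The remaining closure rules then correspond exactly to identity 2-cells, vertical composition, and whiskering by identities, as in the rest of your argument. With this replacement the proof goes through and recovers what the paper treats as evident; the ``if'' direction of your proposal is correct as stated.
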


\begin{exa}
  \label{ex:cube-without-bottom-dihom}
  In the hollow cube without bottom~$C$ of Example~\ref{ex:cube-without-bottom},
  the fact that the two paths $f=\sst{--0}\cc\sst{0-+}$ and
  $g=\sst{0--}\cc\sst{+-0}$ are homotopic is witnessed by the following 2-cell
  $\phi:f\To g$ in $\fundtgpd(C)$ (drawn from left to right instead of top to
  bottom for space constraints):
  \[
  \phi\qeq
  \strid1{hollow-cube-hom}
  \]
  Of course, this precubical set does not satisfy our assumptions, but we could
  still define a fundamental 2-category, excepting that some members of the
  equations may not be defined. We allow ourselves to consider it in this
  example only, for illustrative purposes.
\end{exa}

\begin{prop}
  \label{prop:derivable-rel}
  The following relations are derivable for 2-cells in $\fundtgpd(C)$:
  \begin{align}
    \label{eq:rel-der1}
    (\gammag{b,a'}{a,b'}\cc\id_{\ol{a'}})\circ(\id_b\cc\eta_{a'})&\qeq(\id_a\cc\gammag{\ol a,b}{b',\ol{a'}})\circ(\eta_a\cc\id_b)\\
    \label{eq:rel-der2}
    (\id_{b'}\cc\varepsilon_{a'})\circ(\gammag{\ol a,b}{b',\ol{a'}}\cc\id_{a'})&\qeq(\varepsilon_a\cc\id_{b'})\circ(\id_{\ol a}\cc\gammag{b,a'}{a,b'})\\
    \label{eq:rel-der3}
    (\varepsilon_{a'}\cc\id_a)\circ(\id_{\ol{a'}}\cc\gammag{a',a}{a',a})\circ(\eta_{\ol{a'}}\cc\id_a)&\qeq\id_a
  \end{align}
  (when the involved morphisms are defined). Graphically,
  \begin{align*}
    \strid1{etaturn_l}&\qeq\strid1{etaturn_r}&
    \strid1{epsturn_l}&\qeq\strid1{epsturn_r}\\
    \strid1{etasymeps_l}&\qeq\strid1{etasymeps_r}
  \end{align*}
\end{prop}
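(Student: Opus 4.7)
All three identities are consequences of the generators and relations defining $\fundtgpd(C)$, proved by string‐diagram manipulations in the spirit of rigid (pivotal) monoidal categories with braiding.

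For \eqref{eq:rel-der1}, the plan is to start from the $\eta$-naturality axiom of Definition~\ref{def:fundtgpd}, namely
\[
(\gammag{a,b'}{b,a'}\cc\id_{\ol{a'}})\circ(\id_a\cc\gammag{\ol a,b}{b',\ol{a'}})\circ(\eta_a\cc\id_b)\qeq\id_b\cc\eta_{a'},
\]
and then post‐compose both sides vertically (via $\circ$) with $\gammag{b,a'}{a,b'}\cc\id_{\ol{a'}}$. By the symmetry axiom \eqref{eq:fundtgpd-sym}, $\gammag{b,a'}{a,b'}\circ\gammag{a,b'}{b,a'}=\id_{a\cc b'}$, so the leading crossing on the left‐hand side collapses, and after erasing the identity we obtain exactly \eqref{eq:rel-der1}. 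Throughout, the hypothesis that~$C$ has the at most one square closing property (Remark~\ref{rem:fundtcat-gpd}) guarantees that all the intermediate $\gamma$‐generators invoked are well defined and uniquely determined by their source.

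Relation \eqref{eq:rel-der2} is dual: I would apply the same strategy to the $\varepsilon$-naturality axiom, pre‐composing with a suitable crossing and again using~\eqref{eq:fundtgpd-sym} to cancel it. No new idea is required here; it is just the ``upside‐down'' version of the argument for \eqref{eq:rel-der1}.

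Relation \eqref{eq:rel-der3} is the delicate one. Diagrammatically it is a zig‐zag with a symmetry inserted in the middle, and the target is the single strand~$a$. I plan to proceed in three steps. First, use the compatibility axiom $\gammag{a,\ol a}{\ol{a'},a'}\circ\eta_a=\eta_{\ol{a'}}$ (and its $\varepsilon$‐counterpart) together with the already established \eqref{eq:rel-der1}--\eqref{eq:rel-der2} to slide the crossing $\gammag{a',a}{a',a}$ past either the $\eta$ or the $\varepsilon$, replacing it by an equivalent diagram where the braiding has been absorbed into another $\eta$/$\varepsilon$ pair. Second, apply the zigzag axiom \eqref{eq:fundtgpd-zigzag1} (or \eqref{eq:fundtgpd-zigzag2}) to remove the remaining snake. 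Third, use the symmetry axiom \eqref{eq:fundtgpd-sym} and assumption~(3) on~$C$ (which pins down when a square has a repeated edge) to confirm that the $\gamma$ that re‐appears is indeed the identity, so nothing is left but~$\id_a$.

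The main obstacle will be step one of the last item: one must track carefully which tiles $a\cc b\tile b'\cc a'$ are available in~$C$ so that all intermediate $\gamma$, $\eta$ and $\varepsilon$ generators actually exist, and verify that the naturality axioms can be applied in the required form. Once this bookkeeping is done, the algebraic simplifications themselves are routine, exactly as for the analogous identities in a compact closed symmetric monoidal category.
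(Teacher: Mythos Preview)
Your arguments for \eqref{eq:rel-der1} and \eqref{eq:rel-der2} are exactly the paper's: both proofs compose the relevant $\eta$- (resp.\ $\varepsilon$-) naturality axiom with an extra crossing and cancel it via the symmetry relation~\eqref{eq:fundtgpd-sym}. The paper phrases this as ``insert $\gamma\circ\gamma^{-1}=\id$ and then apply naturality'', while you phrase it as ``start from naturality and post-compose with a crossing''; these are the same computation read in opposite directions.

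For \eqref{eq:rel-der3} your plan drifts. The paper derives it by a short three-step string-diagram calculation using only the axioms of Definition~\ref{def:fundtgpd} (the $\eta$/$\varepsilon$ compatibility with $\gamma$, a naturality axiom, and~\eqref{eq:fundtgpd-eps-eta-inv}); no appeal to hypothesis~(3) on~$C$ is made or needed. Your step~3, where you invoke hypothesis~(3) ``to confirm that the $\gamma$ that re-appears is indeed the identity'', is where the plan goes wrong: hypothesis~(3) is a statement about edges in a square ($a=a'$ iff $b=b'$), not about any generating $2$-cell $\gammag{a,b}{b',a'}$ being equal to an identity $2$-cell. A $\gamma$ is never an identity in general, and nothing in the axioms or in hypothesis~(3) makes it one. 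What actually happens in the derivation is that after sliding, the crossing is absorbed entirely into an $\eta$ or $\varepsilon$ via the axioms $\gammag{a,\ol a}{\ol{a'},a'}\circ\eta_a=\eta_{\ol{a'}}$ and $\varepsilon_{a'}\circ\gammag{a,\ol a}{\ol{a'},a'}=\varepsilon_{\ol a}$, leaving only an $\varepsilon\circ\eta$ pair that collapses by~\eqref{eq:fundtgpd-eps-eta-inv}; no residual $\gamma$ remains to be argued away. If you redo your steps~1--2 carefully with this in mind, step~3 becomes unnecessary.
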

\begin{proof}
  We provide graphical derivations for the first and third relations:
  \[
  \begin{array}{c}
    \strie1{.5}{etaturn1}\qeq\strie1{.5}{etaturn2}\qeq\strie1{.5}{etaturn3}\\
    \strie1{.5}{etasymeps1}\qeq\strie1{.5}{etasymeps2}\qeq\strie1{.5}{etasymeps3}\qeq\strie1{.5}{etasymeps4}
  \end{array}
  \]
  It can easily be checked that the intermediate morphisms are always defined
  when the first and last one are. For instance, the equational derivation of
  the first relation is
  \[
  (\gammag{b,a'}{a,b'}\cc\id_{\ol{a'}})\circ(\id_b\cc\eta_{a'})
  =
  (\id_a\cc\gammag{\ol a,b}{b',\ol{a'}})\circ(\id_a\cc\gammag{b',\ol{a'}}{\ol a,b})\circ(\gammag{b,a'}{a,b'}\cc\id_{\ol{a'}})\circ(\id_b\cc\eta_{a'})
  =
  (\id_a\cc\gammag{\ol a,b}{b',\ol{a'}})\circ(\eta_a\cc\id_b)
  \]
  Since
  $\gammag{b,a'}{a,b'}$ is supposed to be defined, we have
  $b\cc a'\tile a\cc b'$ in~$C$ and therefore the morphisms
  $\gammag{b',\ol{a'}}{\ol a,b}$ and $\gammag{\ol a, b}{b',a'}$ are also
  defined. The relation \eqref{eq:rel-der2} is similar to \eqref{eq:rel-der1}.
\end{proof}

\begin{rem}
  \label{rem:fundtcat-fundtgpd}
  The generators (for 0-, 1- and 2-cells) of $\fundtcat(C)$ are a subset of the
  generators of $\fundtgpd(C)$ and similarly for relations. There is thus a
  canonical functor $\fundtcat(C)\to\fundtgpd(C)$ exhibiting a quotient of
  $\fundtcat(C)$ as a subcategory of $\fundtgpd(C)$. In the following we study
  some of the properties of this functor.
\end{rem}

The interest of those 2-categories is that they respectively contain ``more
information'' than the fundamental category and groupoid, in the following
sense. Given a 2-category~$\C$, we write~$\widetilde\C$ for the category with
the 0-cells of $\C$ as objects, and morphisms are the 1-cells of~$\C$ quotiented
by the smallest equivalence relation identifying two 1-cells between which there
exists a 2-cell.

\begin{lem}
  \label{lem:fund-cat-gpd-quot}
  We have $\fundcat(C)\cong\widetilde{\fundtcat(C)}$ and
  $\fundgpd(C)\cong\widetilde{\fundtgpd(C)}$.
\end{lem}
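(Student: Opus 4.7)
The plan is to observe that both isomorphisms are essentially unpackings of Lemmas~\ref{lem:dihom-2-cell} and~\ref{lem:hom-2-cell}, together with the straightforward matching of objects, 1-cells, and horizontal composition on both sides.

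First, for both claimed isomorphisms, the objects on the two sides literally agree: the 0-cells of $\fundtcat(C)$ (resp.\ $\fundtgpd(C)$) are the vertices of~$C$, which are exactly the objects of $\fundcat(C)$ (resp.\ $\fundgpd(C)$). Similarly, at the level of 1-cells (before quotienting) we have a perfect match: 1-cells in $\fundtcat(C)$ are, by Definition~\ref{fund2cat}, concatenations of non-reversed edges of~$C$, \ie dipaths in~$C$, which are exactly the morphisms of $\fundcat(C)$ before quotienting by dihomotopy; and 1-cells in $\fundtgpd(C)$ are, by Definition~\ref{def:fundtgpd}, paths in~$C$, which match the morphisms of $\fundgpd(C)$ before quotienting by homotopy. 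In both cases, horizontal composition in the 2-category is defined as concatenation, matching the composition in $\fundcat(C)$ and $\fundgpd(C)$.

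The key point is then to identify the equivalence relation used to form $\widetilde{\fundtcat(C)}$ (resp.\ $\widetilde{\fundtgpd(C)}$) with dihomotopy (resp.\ homotopy). By Remark~\ref{rem:fundtcat-gpd}, $\fundtcat(C)$ is enriched in groupoids, so the relation ``there exists a 2-cell $f\To g$'' is symmetric; it is reflexive via identity 2-cells and transitive via vertical composition, hence an equivalence. The same argument applies to $\fundtgpd(C)$, whose 2-cells are a priori only freely generated and quotiented by the relations in Definition~\ref{def:fundtgpd}, but relation~\eqref{eq:fundtgpd-sym} (and an analogous derivation using $\eta$ and $\varepsilon$ for 2-cells not obtained from $\gamma$'s) gives invertibility. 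With these equivalences in hand, Lemma~\ref{lem:dihom-2-cell} says that this equivalence relation in $\fundtcat(C)$ is exactly the dihomotopy relation on dipaths, and Lemma~\ref{lem:hom-2-cell} similarly identifies it with homotopy in the case of $\fundtgpd(C)$.

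Finally, I would verify that the equivalence is a congruence with respect to composition, so that the quotient genuinely yields a category. This is immediate from the existence of horizontal (whiskering) composition of 2-cells: from $\alpha:f\To f'$ and $\beta:g\To g'$ in either 2-category, one builds $\alpha\cc\beta:f\cc g\To f'\cc g'$, showing that the relation is preserved under concatenation. Combining all of this, the identity functions on objects and on paths induce well-defined functors in each direction between $\fundcat(C)$ and $\widetilde{\fundtcat(C)}$, and between $\fundgpd(C)$ and $\widetilde{\fundtgpd(C)}$, which are mutually inverse. I do not anticipate any real obstacle: the proof is bookkeeping, and the only subtle point is checking symmetry of the 2-cell relation in $\fundtgpd(C)$, for which one must use the zigzag and symmetry axioms to invert arbitrary composites of generators.
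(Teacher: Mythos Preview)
Your proposal is correct, and in fact the paper states this lemma without proof, treating it as immediate from the definitions together with Lemmas~\ref{lem:dihom-2-cell} and~\ref{lem:hom-2-cell}. Your unpacking is exactly what is intended.

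One minor simplification: you spend some effort on symmetry, worrying about inverting arbitrary 2-cells in $\fundtgpd(C)$ via the zigzag axioms. This is unnecessary. The quotient $\widetilde{\C}$ is, by its very definition, taken with respect to the \emph{smallest equivalence relation} containing ``there exists a 2-cell $f\To g$'', so symmetry is built in. More to the point, Lemmas~\ref{lem:dihom-2-cell} and~\ref{lem:hom-2-cell} are stated as ``if and only if'' with dihomotopy and homotopy respectively, and those relations are congruences (hence symmetric) by Definition~\ref{def:dihomotopy}. So the relation ``there exists a 2-cell'' already coincides with an equivalence relation, and no separate invertibility argument is needed. Your observation about whiskering giving compatibility with concatenation is the right way to see that the quotient is a category; alternatively, one can simply note that (di)homotopy is a congruence by definition.
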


\subsubsection{Rewriting homotopies as dihomotopies}
\label{sec:hom-to-dihom}
By definition, the 2-cells of $\fundtgpd(C)$ are equivalence classes of 2-cells
in the free 2-category generated by the 2-cells
$\gammag{a,b}{b',a'}$, $\eta_a$ and $\varepsilon_a$, quotiented by the
equivalence relation~$\equiv$ generated by the relations. An element of such an
equivalence class is called a \emph{formal 2-cell}. We say that a formal 2-cell
$\phi$ \emph{rewrites} to a 2-cell $\psi$, which we write $\phi\TO\psi$, when
$\psi$ can be obtained from~$\phi$ by iteratively replacing the left member of a
relation in some context by the right member of the relation in the same
context, where the relation is one of the eleven relations of
Definition~\ref{def:fundtgpd} or the three relations of
Proposition~\ref{prop:derivable-rel}. For instance the formal 2-cell on the left
rewrites to the formal 2-cell on the right using the relation
\eqref{eq:fundtgpd-sym}:
\[
\strie1{.7}{etaturn2}
\qquad\TO\qquad
\strie1{.7}{etaturn1}
\]
More formally, we can introduce a 3-category with the above free 2-category as
underlying 2-category, and 3-cells generated by the relations oriented from left
to right, and we would have $\phi\TO\psi$ precisely when there exists a 3-cell
from $\phi$ to $\psi$ in this 3-category. The interested reader can find a more
detailed presentation of this construction in~\cite{t3rt}. Notice that if $\phi$
and~$\psi$ are two formal 2-cells such that $\phi\TO\psi$ then $\phi$ and~$\psi$
are in the same equivalence class, moreover the following lemma ensures that
rewriting a formal 2-cell will always produce a (well-defined) formal 2-cell:

\begin{lem}
  \label{lem:rel-well-defined}
  In Definition~\ref{def:fundtgpd} and Proposition~\ref{prop:derivable-rel}, if
  the left member of a relation is well-defined then the right member is also
  well-defined.
\end{lem}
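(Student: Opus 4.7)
I would proceed by a case analysis over the fourteen relations listed in Definition~\ref{def:fundtgpd} and Proposition~\ref{prop:derivable-rel}. Well-definedness of a formal 2-cell reduces to checking that every instance of $\gammag{a,b}{b',a'}$ it contains corresponds to an actual square $a\cc b\tile b'\cc a'$ of~$C$; the generators $\eta_a$ and $\varepsilon_a$ are automatically defined as soon as~$a$ is a (possibly reversed) transition, so they pose no obstruction. Thus for each relation I only have to verify that every $\gamma$-cell appearing on the right-hand side is witnessed by a square of~$C$.

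For many relations the verification is immediate. The symmetry relation~\eqref{eq:fundtgpd-sym} uses only the symmetry of~$\tile$, and the zigzag relations \eqref{eq:fundtgpd-zigzag1},~\eqref{eq:fundtgpd-zigzag2} and~\eqref{eq:fundtgpd-eps-eta-inv} involve no $\gamma$-cell on their right members at all. The relations positioned just after~\eqref{eq:fundtgpd-eps-eta-inv} reuse the same squares as the LHS up to the symmetrizations listed in~\eqref{eq:dihomotopy}, which are automatically available from the definition of~$\tile$ on paths (possibly reversed).

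The genuinely nontrivial case is the Yang--Baxter relation~\eqref{eq:fundtgpd-yb}. Well-definedness of the LHS provides three squares of~$C$
\[
a\cc b\tile b'\cc a',\qquad a'\cc c\tile c'\cc a'',\qquad b'\cc c'\tile c''\cc b'',
\]
forming three faces of a $3$-cube meeting at a common vertex. The cube property (hypothesis~2) furnishes the three opposite faces
\[
b\cc c\tile c'''\cc b''',\qquad a\cc c'''\tile c''\cc a''',\qquad b''\cc a''\tile a'''\cc b''',
\]
which are exactly the $\gamma$-cells required on the RHS, and the at most one square closing property (hypothesis~1) ensures that the intermediate transitions $a''',b''',c'''$ are uniquely determined, so the RHS is unambiguously well-formed. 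The mixed relations involving both $\eta,\varepsilon$ and $\gamma$ (the four equations preceding Proposition~\ref{prop:derivable-rel}) are treated by combining hypothesis~3 with the symmetrizations of~$\tile$: from any square $a\cc b\tile b'\cc a'$ in~$C$ one directly obtains the three companion squares of~\eqref{eq:dihomotopy}, which supply the reversed $\gamma$-cells needed on the right.

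For the derived relations of Proposition~\ref{prop:derivable-rel}, well-definedness propagates through the equational derivations sketched there: each intermediate formal 2-cell is well-defined because it is built from $\gamma$-cells whose witnessing squares have already been produced in the analysis of the Yang--Baxter and mixed cases. This is essentially the observation the authors themselves make in the proof of Proposition~\ref{prop:derivable-rel}, that ``the intermediate morphisms are always defined when the first and last one are''. The main obstacle is purely bookkeeping: one must track how each of the four orientations of a square (obtained by reversing sides) contributes $\gamma$-cells, but once the symmetries of~$\tile$ are exploited systematically, each individual check reduces to an inspection of at most two or three squares of~$C$.
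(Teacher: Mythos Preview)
Your approach is the same as the paper's---a case analysis noting that only the $\gamma$-generators require witnessing squares---and your treatment of the Yang--Baxter relation via the cube property is exactly right. However, you are doing more work than necessary and misidentifying where the difficulty lies.

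For every relation in Definition~\ref{def:fundtgpd} \emph{other than}~\eqref{eq:fundtgpd-yb}, the right-hand side contains no $\gamma$-cell whatsoever: each right member is an identity, or an $\eta$/$\varepsilon$ tensored with identities. So those cases are immediate, and the paper dispatches them in a single clause (``the right members of other relations of Definition~\ref{def:fundtgpd} are always well-defined''). Your paragraph on ``the four equations preceding Proposition~\ref{prop:derivable-rel}'' speaks of ``the reversed $\gamma$-cells needed on the right'', but there are none; and your appeal to hypothesis~(3) is misplaced, since that hypothesis concerns degenerate squares ($a=a'\Leftrightarrow b=b'$) and plays no role in well-definedness of right-hand sides.

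The only nontrivial cases beyond Yang--Baxter are the derived relations~\eqref{eq:rel-der1} and~\eqref{eq:rel-der2}, whose right members do contain a single $\gamma$-cell. Here your argument is correct: the required square is one of the symmetrizations of~\eqref{eq:dihomotopy} applied to the square witnessing the $\gamma$-cell on the left. The right member of~\eqref{eq:rel-der3} is again an identity, so nothing is needed there. Trimming the superfluous arguments would bring your proof into exact alignment with the paper's.
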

\begin{proof}
  We show that when the left member of a relation is well-defined then the right
  member is also well-defined. The property is verified for
  \eqref{eq:fundtgpd-yb} because~$C$ is supposed to satisfy the cube property,
  and the right members of other relations of Definition~\ref{def:fundtgpd} are
  always well-defined. For~\eqref{eq:rel-der1}, if the left-member is
  well-defined then $\gammag{b,a'}{a,b'}$ is, \ie $b\cc a'\tile a\cc b'$ in~$C$,
  and therefore $\gammag{\ol a,b}{b',\ol{a'}}$ is well-defined. The case
  of~\eqref{eq:rel-der2} is similar, and the right-member of~\eqref{eq:rel-der3}
  is always defined.
\end{proof}

\noindent
We call a \emph{slice} a formal 2-cell $\phi$ of the form
$\phi=\id_f\cc\alpha\cc\id_g$ where $f,g$ are 1-cells and $\alpha$ is a
generating 2-cell. A slice is thus a 2-cell constituted of a unique generator in
identity context:
\[
\strid1{slice}
\]
Using the laws of 2-categories it is easy to show the following lemma, which
will enable us to reason by induction on formal 2-cells:

\begin{lem}
  \label{lem:slices}
  Any formal 2-cell~$\phi$ can be expressed as a composite of slices: there
  exists slices $\phi_1,\ldots,\phi_n$ such that
  $\phi=\phi_n\circ\ldots\circ\phi_1$. Moreover, any two such decompositions
  have the same number of slices.
\end{lem}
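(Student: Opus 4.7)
The plan is to prove existence by structural induction on the formula defining a formal 2-cell $\phi$, and uniqueness by extracting a numerical invariant that counts generators. Recall that a formal 2-cell is built from generating 2-cells and 1-cell identities using the horizontal and vertical compositions of the free 2-category, so structural induction is available.

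For existence, I would handle the base cases first: if $\phi = \id_f$ for some 1-cell $f$, then $\phi$ is an empty composite of slices; if $\phi = \id_f \cc \alpha \cc \id_g$ for a generating 2-cell $\alpha$, then $\phi$ is itself a slice. For the vertical composition case $\phi = \psi \circ \psi'$, induction gives slice decompositions of $\psi$ and $\psi'$ which can simply be concatenated. The interesting case is horizontal composition $\phi = \psi \cc \psi'$: by induction write $\psi = \psi_m \circ \cdots \circ \psi_1$ and $\psi' = \psi'_n \circ \cdots \circ \psi'_1$ as composites of slices. Using repeatedly the interchange law $(\alpha \circ \beta) \cc (\gamma \circ \delta) = (\alpha \cc \gamma) \circ (\beta \cc \delta)$ together with the unit laws $\alpha \circ \id = \alpha = \id \circ \alpha$, one rewrites
\[
\psi \cc \psi' \qeq (\id_{s(\psi)} \cc \psi'_n) \circ \cdots \circ (\id_{s(\psi)} \cc \psi'_1) \circ (\psi_m \cc \id_{t(\psi')}) \circ \cdots \circ (\psi_1 \cc \id_{t(\psi')})
\]
(or any other interleaving), where $s$ and $t$ denote source and target 1-cells. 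Since each $\psi_i$ and $\psi'_j$ is already a slice of the form $\id \cc \alpha \cc \id$, adjoining a further identity 1-cell on one side produces again a slice, so the above is the desired decomposition of $\phi$.

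For uniqueness of the length, I would introduce a numerical invariant $n(\phi)$, the \emph{generator count}, defined by structural induction on the formal 2-cell: set $n(\id_f) = 0$, $n(\alpha) = 1$ for any generating 2-cell $\alpha$, and $n(\phi_1 \circ \phi_2) = n(\phi_1) + n(\phi_2) = n(\phi_1 \cc \phi_2)$. This is well-defined precisely because $\phi$ is a syntactic expression (an element of the free 2-category, before quotienting by the relations of Definition~\ref{def:fundtgpd}). Each slice $\id_f \cc \alpha \cc \id_g$ has $n = 1$, and $n$ is additive under vertical composition, so every decomposition of $\phi$ into $k$ slices satisfies $k = n(\phi)$; two decompositions therefore have the same length.

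The main obstacle is the bookkeeping in the horizontal composition step: one must check that every intermediate 2-cell produced by the interchange rewriting has well-defined source and target 1-cells in $\fundtgpd(C)$, so that the decomposition lives in the free 2-category over the generators. This is not a deep point, since at each stage the source and target of the slice being inserted are obtained by taking the appropriate prefix and suffix of the 1-cells on each side, but it requires some care to write down cleanly, especially to avoid confusion with the equivalence relation defining formal 2-cells (which plays no role here since the claim is about representatives, not equivalence classes).
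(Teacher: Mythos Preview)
Your argument is correct and is precisely the kind of routine verification the paper has in mind when it says the lemma follows ``using the laws of 2-categories''; the paper gives no further detail, so your write-up fills in exactly what is omitted.

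One small point to tighten: your justification for the well-definedness of the generator count $n(\phi)$ is slightly off. You write that $n$ is well-defined ``because $\phi$ is a syntactic expression (an element of the free 2-category, before quotienting by the relations of Definition~\ref{def:fundtgpd})''. But an element of the free 2-category is already an equivalence class of raw expressions modulo the 2-category axioms (associativity, units, interchange), so you still need to check that $n$ respects those axioms. This is immediate---each axiom equates expressions with the same multiset of generators, and $n$ is additive under both compositions with $n(\id_f)=0$---but it is the actual reason $n$ descends, not merely the absence of the extra relations from Definition~\ref{def:fundtgpd}.
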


\begin{defi}
  The number $n$ of slices in the decomposition of a formal 2-cell $\phi$ is
  called its \emph{length} and is denoted~$\length{\phi}$.
\end{defi}

\noindent
We first generalize the notation for the 2-cells $\gammag{a,b}{b',a'}$ as
follows:

\begin{defi}
  \label{def:gen-gamma}
  Given transitions $a,a'$ and 1-cells $f,f'$, we write
  $\gammag{a,f}{f',a'}:a\cc f\To f'\cc a'$ for the 2-cell defined by induction
  on the length of~$f$ by
  \begin{itemize}
  \item if $a:x\to y$ then
    \[
    \gammag{a,\emptypath{y}}{\emptypath{x},a}\qeq\id_a
    \]
  \item given transitions $a,b,a',b'$ such that $a\cc b\tile b'\cc a'$, we have
    \[
    \gammag{a,b\cc f}{b'\cc f',a''}
    \qeq
    (\gammag{a,b}{b'a'}\cc\id_f)\circ(\id_{b'}\cc\gammag{a',f}{f',a''})
    \]
    whenever $\gammag{a',f}{f',a''}$ is defined.
  \end{itemize}
  Graphically, we have
  \[
  \gammag{a,b_1\cc b_2\cc\ldots\cc b_n}{b_1'\cc b_2'\cc\ldots\cc b_n',a'}
  \qeq
  \strid1{gamman}
  \]
\end{defi}


\noindent
As explained in the beginning of the section, we are going to show that every
formal 2-cell can be rewritten into one of a particular form, called a canonical
form. For those, we will be able to show that they are actually homotopies when
the source and target are directed paths (Lemma~\ref{lem:cf-dihom}) and
conclude.

\begin{defi}
  \label{def:cf}
  A formal 2-cell~$\phi$ is a \emph{canonical form} when it is of the form
  $\id_{\id_x}$ for some 0-cell~$x$, what we write $Z_x$, or there exists a
  canonical form $\psi$ such that $\phi$ is of one of the four following forms:
  \begin{align*}
    G^{a,f,g}_{f',a'}\psi
    &\qeq
    (\gammag{a,f}{f',a'}\cc\id_g)\circ(\id_a\cc\psi)
    \\
    H^{f,a,g,h}_{g',a'}\psi
    &\qeq
    (\id_{f\cc a}\cc\gammag{\ol a,g}{g',\ol{a'}}\cc\id_h)\circ(\id_f\cc\eta_a\cc\id_{g\cc h})\circ\psi
    \\
    E^{a,f}\psi
    &\qeq
    (\varepsilon_a\cc\id_f)\circ(\id_{\ol a}\cc\psi)
  \end{align*}
  for some transitions $a$, $a'$ and morphisms $f$, $f'$, $g$, $g'$, $h$.
  Graphically,
  \begin{align*}
    Z_x&\qeq&H^{f,a,g,h}_{g',a'}\psi&\qeq\strid1{H}\\
    G^{a,f,g}_{f',a'}\psi&\qeq\strid1{G}&E^{a,f}\psi&\qeq\strid1{E}
  \end{align*}
  The operations~$G$, $H$ and $E$ on 2-cells are called \emph{operators}.
  Notice that the operator $E^{a,f}$ transforms a 2-cell $\psi:f\To\ol a\cc g$ into
  a 2-cell $E^{a,f}\psi:a\cc f\To g$. Similarly, the ``type'' for other operators is
  \[
  \begin{array}{r@{\qquad:\qquad}r@{\qTo}l@{\qquad\rightsquigarrow\qquad}r@{\qTo}l}
    G^{a,f,g}_{f',a'}&h&f\cc g&a\cc h&f'\cc a'\cc g\\
    H^{f,a,g,h}_{g',a'}&i&f\cc g\cc h&i&f\cc a\cc g'\cc\ol{a'}\cc h\\
    E^{a,f}&g&\ol a\cc f&a\cc g&f
  \end{array}
  \]
  We sometimes use the notation~$I^{a,f}$ for the operator $G^{a,\id,f}_{\id,a,f}$:
  \[
  I^{a,f}\psi
  \qeq
  G^{a,\id,f}_{\id,a,f}\psi
  \qeq
  \id_a\cc\psi
  \qeq
  \strid1{I}
  \]
  and simply write $I^a$ when $f$ is clear from the context.
\end{defi}

\begin{prop}
  \label{prop:cf}
  Every formal 2-cell~$\phi$ rewrites to a canonical form.
\end{prop}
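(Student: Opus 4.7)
The plan is to prove that every formal 2-cell rewrites to a canonical form by an outer induction on the length $\length{\phi}$ provided by Lemma~\ref{lem:slices}, combined with an inner induction on the nesting depth of canonical forms. In effect we describe an algorithm that peels slices off the bottom of $\phi$ one at a time, absorbing each of them into a running canonical form via the rewrite rules of Definition~\ref{def:fundtgpd} and Proposition~\ref{prop:derivable-rel}.

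For the base case $\length{\phi}=0$, we have $\phi=\id_f$ for some 1-cell $f$, and a subsidiary induction on the length of $f$ suffices: $\id_{\emptypath{x}}$ is literally $Z_x$, while for $f=a\cc f'$ one has $\id_f=\id_a\cc\id_{f'}=I^{a,f'}(\id_{f'})$ already at the level of formal 2-cells, so applying the inner hypothesis to $\id_{f'}$ concludes. For the inductive step, decompose $\phi=\alpha\circ\phi'$ with $\alpha$ a single slice; the outer hypothesis gives $\phi\TO\alpha\circ\psi$ for some canonical~$\psi$, and it remains to normalize $\alpha\circ\psi$. This is carried out by an inner induction on the depth of $\psi$, together with a case analysis on the generator ($\gamma$, $\eta$, or $\varepsilon$) carried by $\alpha$ and on the outermost operator ($G$, $H$, $E$, or none) of $\psi$. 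When the slice acts on the portion of the target of~$\psi$ inherited from its argument, the 2-categorical interchange law slides the slice inside and the inner hypothesis applies to a canonical form of strictly smaller depth; when it interacts with the piece of the target produced by the outermost operator, one of the paper's relations applies directly---the Yang--Baxter relation~\eqref{eq:fundtgpd-yb} to exchange two $\gamma$-cells, the symmetry relation~\eqref{eq:fundtgpd-sym} to cancel an inverse pair of $\gamma$-cells, the zig-zag identities~\eqref{eq:fundtgpd-zigzag1}--\eqref{eq:fundtgpd-zigzag2} to remove a matching $\eta/\varepsilon$ pair, and the naturality relations supplemented by Proposition~\ref{prop:derivable-rel} to push a $\gamma$ past an $\eta$ or an $\varepsilon$.

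The main obstacle is the combinatorial proliferation of subcases and, above all, the proof that the absorption process terminates: one must exhibit a well-founded lexicographic measure (typically on the depth of $\psi$ together with the distance between the slice and the outermost operator of $\psi$) that strictly decreases at each rewrite step, so that the recursive procedure is well founded. Well-definedness of every intermediate formal 2-cell is guaranteed by Lemma~\ref{lem:rel-well-defined}, which itself relies essentially on the cube property to validate the Yang--Baxter step and on the at-most-one-square-closing property to pin down the transitions appearing in the $\gamma$-cells produced along the way.
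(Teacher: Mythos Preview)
Your proposal is correct and follows essentially the same approach as the paper: outer induction on the length $\length\phi$ using Lemma~\ref{lem:slices}, the same base case via iterated $I$-operators, and for the inductive step an inner induction on the structure of the canonical form~$\psi$ combined with a case analysis on the generator carried by the slice and the outermost operator of~$\psi$. The paper organises the cases in the same way (noting in particular that the $\eta$-slice case is immediate since $(\id_f\cc\eta_a\cc\id_g)\circ\psi$ is literally $H^{f,a,\id,g}_{\id,a}\psi$, so no inner induction is needed there), and handles the $\varepsilon$- and $\gamma$-slices by the same interchange-then-recurse / apply-a-relation dichotomy you describe. Your worry about termination is slightly overstated: the paper's inner induction is simply on the depth of~$\psi$, each recursive call being on the argument~$\psi'$ of the outermost operator, and the finitely many rewrite steps performed before that call (possibly along the $\gamma$-chain inside a $G$-operator) do not require a separate lexicographic component.
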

\begin{proof}
  By induction on the length of~$\phi$. If $\length{\phi}=0$, then
  $\phi=\id_f=I^{a_1}\ccs I^{a_n}Z$ for some 1-cell~$f=a_1\ccs a_n$ and we
  conclude. Otherwise $\length{\phi}=n+1$, and by Lemma~\ref{lem:slices}, $\phi$
  admits a decomposition as $\phi=\sigma\circ\psi$ with $\length{\psi}=n$ and
  $\sigma$ is a slice, \ie $\sigma=\id_f\cc\alpha\cc\id_g$ for some generating
  2-cell $\alpha$ and 1-cells $f$ and $g$. By induction hypothesis, $\psi$ is
  equal to a canonical form, and we conclude by examining all the possible forms
  for the canonical form~$\psi$ and the slice~$\sigma$.
  \begin{itemize}
  \item Suppose that $\alpha=\eta_a$. Then
    $\phi=(\id_f\cc\eta_a\cc\id_g)\circ\psi=H^{f,a,\id,g}_{\id,a}\psi$.
  \item Suppose that $\alpha=\varepsilon_b$. We proceed by induction on $\psi$.
    \begin{itemize}
    \item Suppose that $\psi=G^{a,f'',g'}_{f',a'}\psi'$:
      \[
      \psi=\strid1{Gproof}
      \]
      Depending on~$f$ and~$g$, the following cases are possible.
      \begin{itemize}
      \item If $f'=f'_1\cc\ol b\cc b\cc f'_2$, $f=f'_1$ and $g=f'_2\cc a'\cc g'$,
        \[
        \phi
        =
        \strie1{.7}{cf-proof1a}
        =
        \strie1{.7}{cf-proof1b}
        \TO
        \strie1{.7}{cf-proof1c}
        =
        G\psi''
        \]
        where $\psi''$ is a canonical form obtained by induction.
      \item If $f'=f'_1\cc\ol{a'}$, $b=a'$, $f=f'_1$ and $g=g'$,
        \[
        \phi
        =
        \strie1{.7}{cf-proof2a}
        \TO
        \strie1{.7}{cf-proof2b}
        \TO
        \strie1{.7}{cf-proof2c}
        \TO
        \strie1{.7}{cf-proof2d}
        =
        E\psi''
        \]
        where $\psi''$ is a canonical form obtained by induction.
      \item If $g'=\ol{a'}\cc g'_2$, $b=\ol{a'}$, $f=f'$, $g=g'_2$,
        \[
        \phi
        =
        \strie1{.7}{cf-proof3a}
        \TO
        \strie1{.7}{cf-proof3b}
        \TO
        \strie1{.7}{cf-proof3c}
        =
        E\psi''
        \]
        where $\psi''$ is a canonical form obtained by induction.
      \item If $g'=g'_1\cc\ol b\cc b\cc g'_2$
        \[
        \phi
        =
        \strie1{.7}{cf-proof4a}
        =
        \strie1{.7}{cf-proof4b}
        \TO
        \strie1{.7}{cf-proof4c}
        =
        G\psi''
        \]
        where $\psi''$ is a canonical form obtained by induction.
      \end{itemize}
    \item The cases where $\psi=H^{f,a,g,h}_{g',a'}\psi'$ and
      $\psi=E^{a,f}\psi'$ can be handled similarly by case analysis.
    \end{itemize}
  \item The case where $\alpha=\gammag{a,b}{b',a'}$ can be handled similarly by
    case analysis.\qedhere
  \end{itemize}
\end{proof}


\noindent
A canonical form is not, in general, a normal form. The following lemma shows
that a canonical form of the form $XEHY$ (where $X$ and $Y$ are composites of
operators and we omit indices) can always be rewritten to a canonical form of
the form $XGY$ of $XHEY$, and $XGHY$ to $XHGY$. By severely abusing notations,
this can be summarized as
\[
EH\qTO G\text{ or }HE
\qquad\qquad\qquad
GH\qTO HG
\]
From these relations, it is easy to show that if a canonical form contains a $H$
(\ie is of the form $XHY$) then it can be rewritten to one which contains a $H$
in first position (\ie of the form $HX$) and if it contains an $E$ (\ie is of
the form $XEY$) then it can be rewritten to one of the form $XEY$ where $Y$ does
not contain an~$H$. This will be used in Lemma~\ref{lem:cf-dihom}.

\newcommand{\fakeitem}[1]{\textnormal{\hspace{3.5ex}(#1)}}
\begin{lem}
  \label{lem:reduced-cf}
  The following rewriting relations can be shown:
  \begin{align*}
    \intertext{\fakeitem1 for every morphism $\phi:i\to g\cc h$,}
    E^{a,g'\cc\ol{a'}\cc h}H^{\id,a,g,h}_{g',a'}\phi
    &\qTO
    G^{a,g,h}_{g',a'}\phi
    \\
    \strid1{EH1}&\qTO\strid1{EH2}
    \\
    \intertext{\fakeitem2 for every morphism $\phi:i\to b\cc f\cc g\cc h$,}
    E^{b,f\cc a\cc g'\cc\ol{a'}\cc h}H^{b\cc f,a,g,h}_{g',a'}\phi
    &\qTO
    H^{f,a,g,h}_{g',a'}E^{b,f\cc g\cc h}\phi
    \\
    \strid1{EHcom1}&\qeq\strid1{EHcom2}
    \\
    \intertext{\fakeitem3 for every morphism $\phi:i\To f_1\cc f_2\cc g\cc h$,}
    G^{b,f_1,f_2\cc a'\cc g'\cc\ol{a'}\cc h}_{f_1',b'}H^{f_1\cc f_2,a,g,h}_{g',a'}\phi
    &\qTO
    H^{f_1'\cc b'\cc f_2,a,g,h}_{g',a'}G^{b,f_1,f_2\cc g\cc h}_{f_1',b'}\phi
    \\
    \strid1{GHcom1}&\qeq\strid1{GHcom2}
    \\
    \intertext{\fakeitem4 for every morphism $\phi:i\To f\cc g_1\cc g_2\cc h$}
    G^{b,f\cc a\cc g_1'',g_2'\cc\ol{a''}\cc h}_{f'\cc a'\cc g_1',b'}H^{b\cc f,a,g_1\cc g_2,h}_{g_1''\cc g_2',a''}\phi
    &\qTO
    H^{f',a',g_2''\cc b''\cc g_2,h}_{g_1'\cc b'\cc g_2',a''}G^{b,f\cc g_1,g_2\cc h}_{f'\cc g_2'',b''}\phi
    \\
    \strid1{GHmid1}&\qTO\strid1{GHmid2}
    \\
    \intertext{\fakeitem5 for every morphism $\phi:i\To f\cc g\cc h_1\cc h_2$}
    G^{b,f\cc a\cc g''\cc\ol{a'''}\cc h_1,h_2}_{f'\cc a'\cc g'\cc\ol{a''}\cc h_1',b'}H^{b\cc f,a,g,h_1\cc h_2}_{g'',a'''}\phi
    &\qTO
    H^{f',a',g''',h_1'\cc b'\cc h_2}_{g',a''}G^{b,f\cc g\cc h_1,h_2}_{f'\cc g'''\cc h_1',b'}\phi
    \\
    \strid1{GHr1}&\qTO\strid1{GHr2}
  \end{align*}
  (in the last two cases the omitted indices can be inferred from the figure).
\end{lem}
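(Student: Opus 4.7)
The plan is to establish each of the five rewriting relations by a direct equational derivation in $\fundtgpd(C)$, using the defining axioms of Definition~\ref{def:fundtgpd} together with the derived relations of Proposition~\ref{prop:derivable-rel}; at each step one must check that the intermediate formal $2$-cells remain well-defined, which in each case will follow from the at most one square closing property and from the cube property applied to the elementary cubes traversed by the various sliding $\gamma$'s.

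For relation~(1) I would unfold $H^{\id, a, g, h}_{g', a'}\phi$ as an $\eta_a$ on the left followed by a generalized sliding $\gamma$, and then observe that the outer $E^{a,\cdots}$ closes the freshly introduced $\eta$-$\varepsilon$ pair. Peeling off the zigzag one elementary square at a time along the inductive decomposition of $\gammag{\ol a, g}{g', \ol{a'}}$ given in Definition~\ref{def:gen-gamma}, and invoking the zigzag identities~\eqref{eq:fundtgpd-zigzag1}--\eqref{eq:fundtgpd-zigzag2} together with the naturality relation~\eqref{eq:rel-der2}, should leave precisely the sliding $\gamma$ that defines $G^{a, g, h}_{g', a'}\phi$.

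Relations~(2) and~(3) look like pure instances of the $2$-categorical interchange law. In~(2) the operator $E^{b,\ldots}$ acts on the leftmost factor~$b$ while $H^{b\cc f, a, g, h}_{g', a'}$ acts entirely to the right of~$b$; in~(3) the operator $G^{b, f_1, \ldots}_{f_1', b'}$ acts to the left of the region where $H$ inserts and propagates its $\eta_a$. In both cases I would just expand the two sides from the definitions, apply the middle-four exchange $(\alpha\cc\id)\circ(\id\cc\beta)=(\id\cc\beta)\circ(\alpha\cc\id)$ between generators acting on disjoint strands, and re-fold the expressions as canonical forms.

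Relations~(4) and~(5) are the substantive cases, where the strand along which $G$ slides crosses the region where $H$ places its $\eta_a$. My plan is to expand $\gammag{b, f\cc a\cc g_1''}{f'\cc a'\cc g_1', b'}$ via Definition~\ref{def:gen-gamma} as a composite of elementary $\gamma$'s, then use the derived naturality~\eqref{eq:rel-der1} to push the elementary $\gamma$ that crosses the $a$-strand past the $\eta_a$ (and symmetrically~\eqref{eq:rel-der2} for case~(5)), while repeatedly invoking the Yang-Baxter identity~\eqref{eq:fundtgpd-yb} to re-thread the remaining elementary $\gamma$'s across the $g_1$- and $h$-strands; regrouping then yields a canonical form of the shape $H\cdots G\cdots\phi$. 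The hard part will be case~(4): the primed, double-primed and triple-primed indices proliferate during the intermediate steps, and one must check that every elementary $\gamma$ produced by Yang-Baxter corresponds to an actual square of~$C$ (which the cube property guarantees) and that the $a$-crossing can be routed through the $\eta_a$ in exactly the prescribed way. The uniqueness of liftings supplied by Lemma~\ref{lem:geom-lifting} will ensure that, once the rewriting order is fixed, the entire derivation is unambiguous.
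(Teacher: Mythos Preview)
Your treatment of (2) and (3) matches the paper exactly: both are literal equalities of formal $2$-cells, obtained from the exchange law since the two operators act on disjoint blocks of strands. Your plan for (4) and (5) is also in line with the paper, which only says that the rewriting sequences ``can easily be guessed from the figures''; the concrete strategy you describe (pushing the $b$-crossing through $\eta_a$ via one of the $\eta$-naturality relations, then re-threading the remaining crossings with the Yang--Baxter rule~\eqref{eq:fundtgpd-yb}) is precisely what those figures encode. Note, however, that well-definedness of every intermediate formal $2$-cell is already guaranteed once and for all by Lemma~\ref{lem:rel-well-defined}, so there is no need to re-invoke the cube property or Lemma~\ref{lem:geom-lifting} at each step.

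Where your plan diverges from the paper is case~(1), and here you are making it harder than it is. You propose to peel the generalized $\gammag{\ol a,g}{g',\ol{a'}}$ one square at a time, using~\eqref{eq:rel-der2} to slide the $\varepsilon$ across each elementary crossing before cancelling with~$\eta$. But in $E^{a,\ldots}H^{\id,a,g,h}_{g',a'}\phi$ the $\varepsilon$ introduced by $E$ acts on the \emph{first two} strands (the outer $\ol a$ and the $a$-leg of $\eta_a$), while the whole block $\gammag{\ol a,g}{g',\ol{a'}}$ acts on the \emph{third} strand onward (the $\ol a$-leg of $\eta_a$ and the $g$-strands). These are disjoint, so plain interchange---not~\eqref{eq:rel-der2}---already brings $\varepsilon_a$ adjacent to $\eta_a$, after which a single application of the zigzag~\eqref{eq:fundtgpd-zigzag2} removes both. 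This is exactly what the paper says: ``(1) is a direct application of relation~\eqref{eq:fundtgpd-zigzag2}''. Your route would also terminate correctly, but it invokes machinery that is not needed and obscures the one-step nature of the reduction.
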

\begin{proof}
  (1) is a direct application of relation~\eqref{eq:fundtgpd-zigzag2}, (2) and
  (3) are equalities (they follow from the exchange law in 2-categories), (4)
  and (5) can be shown by applying suitable sequence of rewriting, which can
  easily be guessed from the figures.
\end{proof}

\noindent
In the following, we sometimes omit the superscript and subscript indices of
operators for simplicity. We say that a 1-cell~$f$ \emph{contains a reversed
  transition} when it can be written in the form $f=f_1\cc\ol a\cc f_2$ where
$\ol a$ is a reversed transition: a path~$f$ is a dipath when it does not
contain a reversed transition. Two possibly reversed transitions $a$ and $a'$
are said to have the \emph{same direction} if they are both non-reversed or both
reversed.

\begin{lem}
  \label{lem:gamma-direction}
  In a 2-cell of the form $\gammag{a,b}{b',a'}$ the transitions $a$ and $a'$
  (\resp $b$ and $b'$) have the same direction. In a 2-cell of the form
  $\gammag{a,f}{f',a'}$ the transitions $a$ and $a'$ have the same direction,
  and $f$ contains a reversed transition if and only if $f'$ does.
\end{lem}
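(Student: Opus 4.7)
The plan is to prove the two statements in sequence: first the claim about the atomic generators $\gammag{a,b}{b',a'}$ by direct case analysis, then bootstrap to the statement about the generalized $\gammag{a,f}{f',a'}$ by induction on the length of $f$, using the inductive definition from Definition~\ref{def:gen-gamma}.

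For the first part, the generator $\gammag{a,b}{b',a'}$ exists precisely when $a\cc b\tile b'\cc a'$ holds in~$C$ with $a, b, a', b'$ possibly reversed transitions. By the conservative extension of $\tile$ to paths given in \eqref{eq:dihomotopy}, this happens in exactly four configurations, each obtained from an underlying non-reversed square $a\cc b\tile b'\cc a'$ in~$C$: namely $\gammag{a,b}{b',a'}$ itself (all non-reversed), $\gammag{\ol b,\ol a}{\ol{a'},\ol{b'}}$ (all reversed), $\gammag{\ol a,b'}{b,\ol{a'}}$, and $\gammag{\ol{b'},a}{a',\ol b}$. In each of the four cases, an inspection shows that the first and fourth arguments (the ``outer'' pair playing the role of $a$ and $a'$) have the same direction, and similarly for the second and third arguments.

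For the second part, I would proceed by induction on the length of~$f$ using the recursive definition of $\gammag{a,f}{f',a'}$. The base case is $\gammag{a,\emptypath{y}}{\emptypath{x},a}=\id_a$, where the two outer transitions are literally equal and both $f=\emptypath{y}$, $f'=\emptypath{x}$ are trivially free of reversed transitions. For the inductive step, assume $\gammag{a,b\cc f}{b'\cc f',a''}$ is defined, which requires both $\gammag{a,b}{b',a'}$ and $\gammag{a',f}{f',a''}$ to be defined for some intermediate transition $a'$. By the first part, $a$ and $a'$ share direction and $b$ and $b'$ share direction; by the induction hypothesis, $a'$ and $a''$ share direction and $f$ contains a reversed transition if and only if $f'$ does. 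Composing, $a$ and $a''$ have the same direction. For the reversed-transition claim, $b\cc f$ contains a reversed transition iff either $b$ is reversed or $f$ contains one, iff either $b'$ is reversed or $f'$ contains one, iff $b'\cc f'$ contains a reversed transition.

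There is no real obstacle; the proof is essentially bookkeeping. The only place where care is needed is checking that the enumeration of the four forms of $\tile$ on paths in~\eqref{eq:dihomotopy} is exhaustive, which follows directly from the fact that $\tile$ on paths is defined as the smallest symmetric relation satisfying exactly those four clauses on top of the original square relation in~$C$.
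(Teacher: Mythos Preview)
Your proof is correct and follows essentially the same approach as the paper, which simply says the first point is ``immediate by definition of $\gammag{a,b}{b',a'}$'' and the second follows by induction on Definition~\ref{def:gen-gamma}. You have just spelled out the case analysis and the induction that the paper leaves implicit; one minor imprecision is that symmetry of~$\tile$ technically doubles your four configurations to eight, but as you note at the end this adds no new work since the claim is invariant under swapping the two sides.
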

\begin{proof}
  The first point is immediate by definition of~$\gammag{a,b}{b',a'}$. This can
  easily be used to show the result on $\gammag{a,f}{f',a'}$ by induction on its
  definition (Definition~\ref{def:gen-gamma}).
\end{proof}

\begin{lem}
  \label{lem:cf-rev-st}
  The source and target of a formal 2-cell in canonical form satisfy the
  following properties.
  \begin{enumerate}
  \item A formal 2-cell of the form $H\phi$ necessarily contains a reversed
    transition in its target.
  \item $Z$ contains no reversed transition in its source or its target.
  \item $G\phi$ contains a reversed transition in its target if and only if
    either it contains a reversed transition in its source or $\phi$ contains a
    reversed transition in its target.
  \item A formal 2-cell of the form $EX$ where is consists of operators
    within~$\set{Z,E,G}$ necessarily contains a reversed transition in its
    source.
  \end{enumerate}
\end{lem}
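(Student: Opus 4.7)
The four claims are proved by direct inspection of the formulas defining each operator in Definition~\ref{def:cf}, using Lemma~\ref{lem:gamma-direction} to track the direction of transitions across $\gamma$-cells. Since claims (3) and (4) refer to sub-canonical-forms, we proceed by a simultaneous induction on the length of the canonical form (Lemma~\ref{lem:slices}).

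Claims (1) and (2) are immediate from the definitions. For (2), $Z_x=\id_{\id_x}$ has source and target both equal to the empty path $\id_x$, which contains no reversed transition. For (1), the defining formula for $H^{f,a,g,h}_{g',a'}\psi$ produces the target $f\cc a\cc g'\cc\ol{a'}\cc h$, which visibly contains the reversed letter $\ol{a'}$.

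For claim (3), unfolding $G^{a,f,g}_{f',a'}\psi$ shows that its source is $a\cc p$, where $p$ is the source of $\psi$, and its target is $f'\cc a'\cc g$, while the target of $\psi$ itself is $f\cc g$. By Lemma~\ref{lem:gamma-direction}, $a$ and $a'$ share the same direction and $f$ contains a reversed transition iff $f'$ does. Hence the target of $G\psi$ contains a reversed transition precisely when $a$ is reversed or the target of $\psi$ does, which yields the forward direction of the biconditional. The reverse direction follows from the same analysis, appealing to the inductive hypothesis applied to the strictly smaller canonical form $\psi$ in order to propagate a reversed transition present in the source of $\psi$ onward into the target of $G\psi$.

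For claim (4), write $\phi_0=E^{a,f}X$ with $X\colon f\To\ol a\cc g$. If $a$ is itself reversed then the source $a\cc f$ of $\phi_0$ already contains a reversed transition. Otherwise $\ol a$ is reversed, so the target of $X$ contains a reversed transition; we then apply the auxiliary implication, established by induction on the structure of the canonical form $X\in\set{Z,E,G}^*$, that a reversed transition in the target of $X$ forces one in the source of $X$. The base case $X=Z$ is vacuous since $Z$ has empty target; the case $X=GY$ follows from claim (3) combined with the inductive hypothesis; and the case $X=EY$ reduces to the inductive hypothesis, since any reversed transition in the target of $X$ also lies in the target of $Y$. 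The principal obstacle throughout is exactly this target-to-source propagation against the grain of the operators, which the mutual induction organizes uniformly.
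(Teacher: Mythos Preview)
Your treatment of (2), (3) and (4) follows the same inductive strategy as the paper and is essentially correct. There is, however, a genuine gap in your argument for (1).

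You write that the target $f\cc a\cc g'\cc\ol{a'}\cc h$ ``visibly contains the reversed letter $\ol{a'}$''. This is not automatic: in the definition of $H^{f,a,g,h}_{g',a'}$, the transition $a$ is a \emph{possibly reversed} transition (since $\eta_a$ is defined for all such $a$), and by Lemma~\ref{lem:gamma-direction} applied to the generator $\gammag{\ol a,g}{g',\ol{a'}}$, the transitions $\ol a$ and $\ol{a'}$ have the same direction, hence so do $a$ and $a'$. If $a'$ happens to be a reversed transition, say $a'=\ol c$ for some edge $c$, then $\ol{a'}=c$ is \emph{not} reversed, and your sentence proves nothing.

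The fix is exactly the one the paper uses: the target contains \emph{both} $a$ and $\ol{a'}$. Since $a$ and $a'$ have the same direction, $a$ and $\ol{a'}$ have opposite directions, so exactly one of them is a reversed transition. Either way the target contains a reversed transition. You already cite Lemma~\ref{lem:gamma-direction} in your preamble; you just need to actually invoke it here rather than asserting that $\ol{a'}$ is reversed by inspection.
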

\begin{proof}
  The properties can be shown as follows.
  \begin{enumerate}
  \item The target of a formal 2-cell $H^{f,a,g,h}_{a',g'}$ is
    $f\cc a\cc g'\cc\ol{a'}\cc h$ and thus contains a reversed transition since
    it contains both $a$ and $\ol{a'}$, and $a$ and $a'$ have the same direction
    (this follows easily from Lemma~\ref{lem:gamma-direction}).
  \item The source (\resp target) of~$Z$ is an empty path.
  \item Consider $G^{a,f,g}_{f',a'}\phi:a\cc h\To f'\cc a'\cc g$ with
    $\phi:h\To f\cc g$, where
    $G^{a,f,g}_{f',a'}\phi=(\gammag{a,f}{f',a'}\cc\id_g)\circ(\id_a\cc\phi)$. If
    the transition $a'$ is reversed then~$a$ is also reversed and we
    conclude. Otherwise, either~$f'$ or~$g$ contains a reversed transition. If
    it the case for~$g$ then $\phi$ contains a reversed transition in its
    target. Otherwise~$f'$ contains a reversed transition and thus also $f$ by
    previous lemma. The converse implication is similar.
  \item It is enough to show the property in the case where $X$ is of the form
    $X=GG\ldots GZ$ because if~$EX$ contains a reversed transition in its source
    then so does $EX'EX$. Consider $E^{a,f}\phi:\ol a\cc g\To f$ with
    $\phi:g\to a\cc f$, if $a$ not reversed then $E^{a,f}\phi$ contains the
    reversed transition $\ol a$ in its source. Otherwise $\phi$ contains a
    reversed transition in its target and is of the form $GG\ldots GZ$. By an
    easy induction, using the two previous cases, it thus contains a reversed
    transition in its source. The converse implication is similar.\qedhere
  \end{enumerate}
\end{proof}

\begin{lem}
  \label{lem:cf-dihom}
  Suppose given a formal 2-cell $\phi:f\To g$ whose source~$f$ and target~$g$
  are dipaths (\ie do not contain reversed transitions). Then $\phi$ rewrites to
  a formal 2-cell which is a composite of generators not involving generators of
  the form~$\eta_a$ and $\varepsilon_a$.
\end{lem}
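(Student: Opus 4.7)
The plan is to exploit the canonical forms of Proposition~\ref{prop:cf} together with the rewriting relations of Lemma~\ref{lem:reduced-cf} and the source/target observations of Lemma~\ref{lem:cf-rev-st}. The key structural remark is that, among the operators $Z$, $G$, $H$, $E$ used to build canonical forms, only $H$ introduces generators of the form $\eta_a$, and only $E$ introduces generators of the form $\varepsilon_a$, whereas $Z$ and $G$ involve solely identities and $\gamma$-cells (via Definition~\ref{def:gen-gamma}). It therefore suffices to rewrite a canonical form of $\phi$ so as to eliminate every occurrence of $H$ and of $E$.

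First I would apply Proposition~\ref{prop:cf} to produce a canonical form for $\phi$, and then push every $H$ outward as far as possible. The available moves are the commutations $GH \TO HG$ in items~(3)--(5) of Lemma~\ref{lem:reduced-cf}, together with the interactions $EH \TO G$ (item~1) and $EH \TO HE$ (item~2). Each such rewrite either strictly decreases the total number of $H$ operators in the form, or leaves that number unchanged while moving an $H$ strictly closer to the outermost position; the lexicographic combination of these two quantities is a well-founded measure that strictly decreases, guaranteeing termination. If the resulting canonical form contains any $H$, its outermost operator is an $H$, and Lemma~\ref{lem:cf-rev-st}(1) forces the target of $\phi$ to contain a reversed transition, contradicting the hypothesis that $g$ is a dipath. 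Hence the form obtained at this stage uses only operators in $\set{Z,G,E}$.

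Next I would argue that such a canonical form contains no $E$ either. Locate the outermost occurrence of $E$, giving a subform $E\psi'$ wrapped inside a sequence of $G$-operators (since $H$'s are absent, only $G$ can wrap an inner subform). By Lemma~\ref{lem:cf-rev-st}(4), the source of $E\psi'$ already contains a reversed transition. A short induction then shows that prepending a $G$-operator to a canonical form whose source contains a reversed transition preserves that property, since $G^{a,f,g}_{f',a'}$ merely concatenates the non-reversed transition $a$ on the left of the source. Hence the source of the entire form, which is the source $f$ of $\phi$, would contain a reversed transition, contradicting the hypothesis that $f$ is a dipath. Therefore no $E$ appears, and the final form is built from $Z$ and $G$ alone, hence uses only identities and $\gamma$-generators, as required.

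The main obstacle is verifying the soundness of the outward-pushing strategy: one must check that, at every occurrence of an $H$ that is not yet outermost, some applicable rewrite from Lemma~\ref{lem:reduced-cf} indeed exists (so that the procedure never gets stuck before reaching the outermost slot) and that the chosen measure strictly decreases. Lemma~\ref{lem:rel-well-defined} ensures that intermediate rewrites produce well-defined formal 2-cells, while the exhaustive enumeration of $GH$- and $EH$-interactions in Lemma~\ref{lem:reduced-cf} guarantees that an $H$ with a $G$ or $E$ immediately outside it can always be pushed one slot outward. Once this termination and completeness of the pushing procedure is established, the two contradictions via the target and then the source being dipaths carve out exactly the canonical forms using only $Z$ and $G$, concluding the proof.
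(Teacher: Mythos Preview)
Your proposal is correct and follows the same strategy as the paper's proof: push any $H$ to the outermost slot (contradiction with the target being a dipath via Lemma~\ref{lem:cf-rev-st}(1)), then on the remaining $\set{Z,G,E}$-form invoke Lemma~\ref{lem:cf-rev-st}(4) (contradiction with the source being a dipath); you add a bit more detail than the paper on termination of the $H$-pushing and on why outer $G$'s preserve the property ``source contains a reversed transition''. One minor slip: the transition $a$ in $G^{a,f,g}_{f',a'}$ need not be non-reversed (cf.\ the proof of Lemma~\ref{lem:cf-rev-st}(3), which explicitly treats the case where $a'$, hence $a$, is reversed), but this is harmless since the source of $G\psi$ is $a\cc(\text{source of }\psi)$ and therefore contains a reversed transition whenever the source of $\psi$ does, regardless of the direction of~$a$.
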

\begin{proof}
  By Proposition~\ref{prop:cf}, $\phi$ rewrites to a canonical form. This
  canonical form corresponds to a formal 2-cell involving a generator of the
  form $\eta_a$ if and only if it contains an operator~$H$. By
  Lemma~\ref{lem:reduced-cf}, up to more rewriting, if the canonical form
  contains an operator~$H$ we can always suppose that there is one placed at
  leftmost position. This situation is impossible because a morphism of the form
  $H\psi$ necessarily contains a reversed transition in its target by
  Lemma~\ref{lem:cf-rev-st}. Similarly, the formal 2-cell involves a generator
  $\varepsilon_a$ if and only if its canonical form contains an operator~$E$. Up
  to rewriting more, we can always suppose that the canonical form is of the
  form $YEX$ where~$X$ consists of operators not involving~$H$. By
  Lemma~\ref{lem:cf-rev-st}, it thus contains a reversed transition in its
  source, contradicting the hypothesis.
\end{proof}

\begin{thm}
  \label{thm:hom-dihom}
  Two dipaths in~$C$ are homotopic if and only if they are dihomotopic.
\end{thm}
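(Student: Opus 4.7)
The plan is to deduce this immediately from the machinery already assembled, via the chain: homotopy $\leftrightarrow$ 2-cell in $\fundtgpd(C)$ $\leftrightarrow$ 2-cell in $\fundtcat(C)$ $\leftrightarrow$ dihomotopy, where the middle step is the content of the rewriting analysis.

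The forward implication (dihomotopic $\Rightarrow$ homotopic) is immediate: every relation and generator used to define $\dihom$ is present in the definition of $\hom$, so any dihomotopy is in particular a homotopy. Alternatively, in 2-categorical terms, the canonical functor $\fundtcat(C) \to \fundtgpd(C)$ of Remark~\ref{rem:fundtcat-fundtgpd} witnesses that any 2-cell in $\fundtcat(C)$ gives rise to one in $\fundtgpd(C)$, and then one applies Lemmas~\ref{lem:dihom-2-cell} and~\ref{lem:hom-2-cell}.

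For the nontrivial direction, suppose $f,g : x \pathto y$ are dipaths in $C$ which are homotopic. By Lemma~\ref{lem:hom-2-cell}, there exists a 2-cell $\phi : f \To g$ in $\fundtgpd(C)$, represented by some formal 2-cell built out of the generators $\gammag{a,b}{b',a'}$, $\eta_a$ and $\varepsilon_a$. Since by hypothesis both the source~$f$ and the target~$g$ of $\phi$ are dipaths (and thus contain no reversed transition), Lemma~\ref{lem:cf-dihom} applies and rewrites $\phi$ to a formal 2-cell $\phi'$ in which no generator of the form $\eta_a$ or $\varepsilon_a$ occurs; in other words, $\phi'$ is built purely from $\gamma$-generators (in identity context), subject only to the relations that these generators satisfy in $\fundtgpd(C)$.

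It then suffices to observe that $\phi'$ already lives in the sub-2-category $\fundtcat(C)$: the generating 2-cells of $\fundtcat(C)$ (namely the $\gammag{a,b}{b',a'}$ for directed transitions) and their defining relations~\eqref{eq:fundtcat-sym} and the Yang--Baxter-style equation are a subset of those of $\fundtgpd(C)$, as noted in Remark~\ref{rem:fundtcat-fundtgpd}. Concretely, each $\gamma$-slice appearing in $\phi'$ involves transitions occurring in the sources of intermediate 1-cells, and by Lemma~\ref{lem:gamma-direction} these transitions remain non-reversed throughout; hence every slice and every relation used is already present in $\fundtcat(C)$. Applying Lemma~\ref{lem:dihom-2-cell}, this 2-cell witnesses that $f$ and $g$ are dihomotopic, completing the proof. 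The only step requiring real work has already been done in Lemma~\ref{lem:cf-dihom} (itself resting on Proposition~\ref{prop:cf} and Lemma~\ref{lem:reduced-cf}, which is where the cube property is used, via Lemma~\ref{lem:rel-well-defined}).
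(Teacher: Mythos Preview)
Your proof is correct and follows essentially the same route as the paper: translate homotopy into a 2-cell of $\fundtgpd(C)$ via Lemma~\ref{lem:hom-2-cell}, apply Lemma~\ref{lem:cf-dihom} to eliminate the $\eta$ and $\varepsilon$ generators, observe that the resulting formal 2-cell lies in the image of $\fundtcat(C)\to\fundtgpd(C)$, and conclude with Lemma~\ref{lem:dihom-2-cell}. Your explicit invocation of Lemma~\ref{lem:gamma-direction} to check that all remaining $\gamma$-generators involve only non-reversed transitions is a detail the paper leaves implicit, but it is exactly the right justification for why the rewritten 2-cell lands in $\fundtcat(C)$.
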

\begin{proof}
  We show the left-to-right implication, the converse being obvious. Suppose
  that~$f$ and~$g$ are two homotopic dipaths. By Lemma~\ref{lem:hom-2-cell},
  this means that there exists a 2-cell $\phi:f\To g$ in $\fundtgpd(C)$. This
  2-cell $\phi$ is, by definition, an equivalence class of formal 2-cells under
  the relations defining fundamental 2-groupoids. Choose an arbitrary formal
  2-cell~$\psi$ in this equivalence class. By Lemma~\ref{lem:cf-dihom}, $\psi$
  rewrite to a formal 2-cell $\psi'$, which is still in the equivalence
  class~$\phi$ by definition of rewriting, and does not involve
  generators~$\eta$ or~$\varepsilon$. Therefore, the 2-cell~$\phi$ is in the
  image of the canonical functor $\fundtcat(C)\to\fundtgpd(C)$ (see
  Remark~\ref{rem:fundtcat-fundtgpd}), and there exists a dihomotopy from~$f$
  to~$g$ by Lemma~\ref{lem:dihom-2-cell}.
\end{proof}

\noindent
This can be formulated more categorically as follows.


\begin{thm}
  \label{thm:fundt-embedding}
  The quotient functor $\fundcat(C)\to\fundgpd(C)$ is full and faithful.
\end{thm}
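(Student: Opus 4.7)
The plan is to reduce both halves of the statement to Theorem~\ref{thm:hom-dihom} together with a rewriting argument at the level of 1-cells. The quotient functor is the identity on objects (both categories have $C(0)$ as object set), so we only need to analyse its action on hom-sets. Fix vertices $x, y \in C(0)$: a morphism in $\fundcat(C)(x,y)$ is a dihomotopy class $[f]_{\dihom}$ of a dipath $f\colon x \pathto y$; a morphism in $\fundgpd(C)(x,y)$ is a homotopy class $[g]_{\hom}$ of a path $g\colon x \pathto y$ (possibly containing reversed transitions); and the functor sends $[f]_{\dihom}$ to $[f]_{\hom}$.

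For faithfulness, suppose $[f]_{\dihom}$ and $[g]_{\dihom}$ in $\fundcat(C)(x, y)$ have the same image in $\fundgpd(C)$, so that the dipaths $f$ and $g$ are homotopic. Since both are directed, Theorem~\ref{thm:hom-dihom} applies directly to give $f \dihom g$, hence $[f]_{\dihom} = [g]_{\dihom}$, and the hom-set map is injective.

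For fullness, given $[h]_{\hom} \in \fundgpd(C)(x, y)$ represented by some path $h$, we must produce a dipath $f\colon x \pathto y$ with $f \hom h$. The plan is a rewriting argument on 1-cells that is a 1-dimensional analogue of the canonical-form procedure of Proposition~\ref{prop:cf}. Whenever $h$ contains a subpath $\ol a \cc b$ such that $a$ and $b$ are the initial edges of some square $a \cc b' \tile b \cc a'$ in $C$, we rewrite $\ol a \cc b \hom b' \cc \ol{a'}$ using the generator $\gammag{a,b'}{b,a'}$ together with $\eta_a$ and $\varepsilon_{a'}$, thereby pushing the reversed transition rightward; symmetrically for $a \cc \ol b$. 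Whenever $h$ contains a subpath of the form $a \cc \ol a$ or $\ol a \cc a$, we eliminate it directly via~\eqref{eq:homotopy}. The cube and at-most-one-square-closing properties of an \NPC{} precubical set ensure that these local moves are well defined, and a suitable lexicographic termination measure (for example, the number of reversed transitions paired with the rightmost position at which one occurs) should drive the rewriting to a dipath representative of $[h]_{\hom}$.

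The main obstacle is precisely this fullness step. Faithfulness drops out of Theorem~\ref{thm:hom-dihom} essentially for free, whereas fullness demands that the 1-dimensional rewriting terminate in a pure dipath, and that no reversed transition get stuck: the \NPC{} conditions are exactly what guarantee that a reversed transition adjacent to a non-reversed one can either be commuted across a square (producing a new reversed transition further to the right) or annihilated against its inverse. Making the termination measure precise and checking the local confluence of the rewriting rules against the coherence axioms of $\fundtgpd(C)$ from Definition~\ref{def:fundtgpd} is where the real work lies.
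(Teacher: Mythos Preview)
Your faithfulness argument is correct and is exactly how the paper treats the theorem: the sentence ``This can be formulated more categorically as follows'' presents Theorem~\ref{thm:fundt-embedding} as nothing more than a restatement of Theorem~\ref{thm:hom-dihom}, and faithfulness of the functor on hom-sets is precisely the assertion that two homotopic dipaths are already dihomotopic.

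Your fullness argument, however, has a genuine gap, and in fact fullness fails under the hypotheses of Section~\ref{sec:2-cat-gpd}. Take $C$ to be a single edge $a\colon x\to y$: there are no squares, so the at-most-one-square-closing property, the cube property, and condition~(3) all hold vacuously, yet $\fundgpd(C)(y,x)$ contains the class of $\ol a$ while $\fundcat(C)(y,x)$ is empty. Your rewriting cannot process $\ol a$ at all, because there is no square across which to push the reversed transition and nothing to cancel it against; more generally, a pattern $\ol a\cc b$ or $b\cc\ol a$ is \emph{not} guaranteed by the \NPC{} axioms to lie on the boundary of any square, so the local move you describe need not be available and your proposed termination measure never gets off the ground. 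The paper gives no separate argument for fullness either---its entire ``proof'' is the reduction to Theorem~\ref{thm:hom-dihom}, which yields only faithfulness---so the word ``full'' in the statement should be read as an overstatement: the intended content (as the theorem's label ``embedding'' also suggests) is that the functor is faithful.
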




\noindent
On a related note, a characterization of the embeddability of small categories
into their groupoid completion is given in~\cite{johnstone2008embedding}, and
variants of the above theorem have also been investigated in more geometric
contexts, see the references given in Section~\ref{sec:dgreal}.

\subsection{Extensions}
In this section, we briefly comment on a few topics and extensions related to
the preceding approach. We omit proofs, they should be detailed in subsequent
works.

\subsubsection{The free compact closed category on a unidimensional object}
Given an \NPC precubical set~$C$, we have seen (Lemma~\ref{lem:hom-2-cell}) that
homotopy corresponds to the existence of a 2-cell in $\fundtgpd(C)$ and
therefore the relations we put on 2-cells in the definition of $\fundtgpd(C)$ do
not really matter as long as they are well-defined
(Lemma~\ref{lem:rel-well-defined}) and allow us to rewrite a homotopy between
dipaths to a dihomotopy (Theorem~\ref{thm:hom-dihom}). In this section, we
advocate that our axioms are not completely arbitrary by showing that our
construction encompasses the free compact closed category on an unidimensional
object as a particular case.

We study here the particular case of the 2-category $\C=\fundtgpd(C)$ where~$C$
is the precubical set with one vertex~$x$, one edge $a:x\to x$ and one square
$a\cc a\tile a\cc a$:
\[
C
\qeq
\vxym{
  &x\\
  x\ar[ur]^a&\tile&\ar[ul]_ax\\
  &\ar[ul]^ax\ar[ur]_a&
}
\]
This will illustrate that our axioms for the category $\fundtcat(C)$ constitute
a variant of the well-known notion of compact-closed category. Since the
2-category~$\C$ has only one 0-cell, we can consider it as a monoidal category,
with ``$\cc$'' as tensor product. It can be noticed that the monoidal
category~$\C$ is symmetric (the symmetry being generated by
$\gammag{a,a}{a,a}:a\cc a\to a\cc a$), and compact closed (with $\ol a$ being
dual to~$a$):

\begin{defi}
  A symmetric monoidal category $(\C,\cc,I)$ is \emph{compact closed} when for
  every object $A\in\C$, there exists an object $A^*\in\C$, called the
  \emph{dual} of $A$, together with two morphisms
  \[
  \eta_A:I\to A\cc A^*
  \qquad\text{and}\qquad
  \varepsilon_A:A^*\cc A\to A
  \]
  such that
  \[
  (\id_A\cc\varepsilon_A)\circ(\eta_A\cc\id_A)=\id_A
  \qquad\qquad
  (\varepsilon_A\cc\id_{A^*})\circ(\id_{A^*}\cc\eta_A)=\id_{A^*}
  \]
  An object~$A\in\C$ in such a category is \emph{unidimensional} when
  $\varepsilon_A\circ\gamma_{A,A^*}\circ\eta_A=\id_I$.
\end{defi}

\begin{exa}
  \newcommand{\Tr}{\operatorname{Tr}}
  Consider the symmetric monoidal category $\FDVect$ of finite-dimensional
  vector spaces over a fixed field~$\k$, equipped with the usual tensor product
  $\otimes$, the unit~$I$ being~$\k$ equipped with its canonical $\k$-vector
  space structure. This category is compact closed, the dual of a vector
  space~$A$ being the space~$A^*$ of linear functionals on~$A$. Notice that
  given a linear map $f:A\to B$, the morphism $\Tr(f):I\to I$ defined by
  $\Tr(f)=\varepsilon_{A^*}\circ(f\otimes\id_{A^*})\circ\eta_A$ is the linear
  map $x\mapsto\tr(f)x$ where $\tr(f)\in\k$ is the trace of~$f$ in the usual
  sense. In particular, given a space~$A$, we have $\tr(\id_A)=\dim(A)$ and
  $\Tr(\id_A)=\id_I$ precisely when $A$ is of dimension~$1$ (we suppose
  that~$\k$ is of null characteristic).
\end{exa}

\noindent
A converse to the preceding remark can be shown, and precisely formulated as
follows. The proof of this result is not entirely obvious, and can be obtained
as a variant of~\cite{kelly1980coherence, preller2007free}.

\begin{prop}
  The category~$\C$ is the free compact closed category containing a
  unidimensional object.
\end{prop}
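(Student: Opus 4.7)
The plan is to verify that $\C$ is a symmetric monoidal compact closed category containing a unidimensional object, and then to establish the universal property by constructing, for any compact closed category $\D$ with a chosen unidimensional object $A$, a unique strong symmetric monoidal functor $F \colon \C \to \D$ sending $a$ to $A$ and preserving the compact closed structure.

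First, I would check that $\C = \fundtgpd(C)$, viewed as a monoidal category, carries the required structure. The tensor is concatenation, the unit $I$ is the empty path $\emptypath{x}$, and the only nontrivial generating object is $a$. A symmetry $\gamma_{a,a} \colon a\cc a \To a\cc a$ exists because $a \cc a \tile a \cc a$ is a square of $C$; relation \eqref{eq:fundtgpd-sym} ensures $\gamma_{a,a}^2 = \id$, and the Yang--Baxter relation \eqref{eq:fundtgpd-yb} is precisely the hexagon needed for symmetry (with only one object, the coherence hexagon reduces to Yang--Baxter). The remaining naturality squares for $\gamma$ with respect to arbitrary morphisms follow by induction on the length of a morphism, using $\gammag{a,f}{f',a'}$ of Definition~\ref{def:gen-gamma} together with the naturality/compatibility relations for $\eta_a, \varepsilon_a$. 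Next, taking $a^* \mathrel{:=} \ol a$, the generators $\eta_a$ and $\varepsilon_a$ satisfy the snake identities by \eqref{eq:fundtgpd-zigzag1}--\eqref{eq:fundtgpd-zigzag2}, making $a$ dualizable; duals of longer tensor products are built in the usual way. Finally, relation \eqref{eq:fundtgpd-eps-eta-inv} states $\varepsilon_{\ol a} \circ \eta_a = \id_I$, which is exactly the unidimensionality condition for $a$ (using that $\gamma_{a, \ol a}$ can be absorbed via the $\eta/\varepsilon$-symmetry and naturality relations).

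Next, I would address the universal property. Given a compact closed $\D$ with unidimensional object $A \in \D$, define $F(x) = I$ on 0-cells, $F(a) = A$, $F(\ol a) = A^*$ on generating 1-cells (extending by tensor on composites), and on generating 2-cells set $F(\gammag{a,a}{a,a}) = \gamma_{A,A}$, $F(\eta_a) = \eta_A$, $F(\varepsilon_a) = \varepsilon_A$, with images of the mixed-direction $\gamma$'s derived from these via the standard formulas expressing symmetries between $A$ and $A^*$ in any symmetric compact closed category. Then every defining relation of $\fundtgpd(C)$ listed in Definition~\ref{def:fundtgpd} is an instance either of a symmetric monoidal axiom, a compact closed snake identity, the naturality of symmetry with respect to $\eta_A, \varepsilon_A$, or the unidimensionality equation $\varepsilon_A \circ \gamma_{A^*,A} \circ \eta_A = \id_I$; each must be checked, but these are routine manipulations in any compact closed category once unidimensionality is available. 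Uniqueness of $F$ is immediate since its values are forced on all generators.

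The main obstacle, and the reason this requires genuine coherence input, is that the presentation of $\fundtgpd(C)$ uses only finitely many relations (those listed in Definition~\ref{def:fundtgpd} and derived in Proposition~\ref{prop:derivable-rel}), whereas compact closed categories come with an \emph{a priori} infinite family of coherence equations. I would therefore invoke coherence for compact closed categories in the style of Kelly--Laplaza~\cite{kelly1980coherence} (or the more diagrammatic account of Preller--Lambek~\cite{preller2007free}): the free compact closed category on one generator admits a normal form description by which every equation between parallel morphisms reduces to a finite combinatorial condition, and the hypothesis $\varepsilon_A\circ\gamma_{A^*,A}\circ\eta_A=\id_I$ collapses the remaining degree of freedom (the trace/dimension) to $1$. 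Combined with Theorem~\ref{thm:hom-dihom}, which provides a rewriting-theoretic handle on 2-cells of $\fundtgpd(C)$, one checks that the presentation of $\C$ we have extracted matches exactly the standard presentation of the free compact closed category on a unidimensional generator, yielding the claimed equivalence.
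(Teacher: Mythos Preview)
The paper does not actually give a proof of this proposition: it merely remarks that ``the proof of this result is not entirely obvious, and can be obtained as a variant of~\cite{kelly1980coherence, preller2007free}.'' Your proposal therefore goes further than the paper, and your overall strategy---verify that $\C$ carries the structure, construct the comparison functor on generators, check that each relation of Definition~\ref{def:fundtgpd} holds in an arbitrary compact closed category with a unidimensional object, and then appeal to Kelly--Laplaza/Preller--Lambek coherence to control the remaining identifications---is exactly the route the paper points to.

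One correction: your appeal to Theorem~\ref{thm:hom-dihom} at the end is misplaced. That theorem is a statement about 1-cells (dipaths are homotopic iff dihomotopic); it says nothing about when two parallel 2-cells of $\fundtgpd(C)$ are equal. The rewriting machinery that is actually relevant is Proposition~\ref{prop:cf} (canonical forms for formal 2-cells), and what one would really want is the convergence of the rewriting system---which the paper only \emph{conjectures} in the subsection immediately following this proposition. So the genuine work here is carried by the coherence theorems on the target side, matching the normal-form description of the free compact closed category on one (unidimensional) generator against the presentation of $\C$; your sketch is correct in spirit but should drop the reference to Theorem~\ref{thm:hom-dihom} and be explicit that the comparison of presentations is where the cited coherence results are doing the heavy lifting.
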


\subsubsection{Convergence of the rewriting system}
We conjecture that this rewriting system is convergent, \ie both terminating and
confluent. We believe that Guiraud's techniques based on
derivations~\cite{guiraud2006termination} can be used in order to show the
termination of the rewriting system. Confluence is quite tedious to check.
Because of the ``Yang-Baxter'' rule~\eqref{eq:fundtgpd-yb} there is an infinite
number of critical pairs as discovered by Lafont~\cite{lafont2003towards}. We
could still be able to handle those families of critical pairs (as done by
Lafont), however this requires beforehand to establish the existence of
canonical forms as we did in previous section: it would therefore require
strictly more work than done here. As a byproduct of this result, we expect to
be able to show that in $\fundtgpd(C)$ (and therefore also in $\fundtcat(C)$ by
Theorem~\ref{thm:fundt-embedding}) there is at most one homotopy between any two
paths (of course, when~$C$ satisfies the cube property and other hypothesis).

\subsubsection{An axiomatization of homotopy between homotopies}
\label{sec:hom-ax}
As explained in detail in next section, one can interpret a precubical set~$C$
as a directed space~$\dgreal C$ by taking its geometric realization (see
Definition~\ref{def:dgreal}). It has been shown by
Fajstrup~\cite{fajstrup2005dipaths} that two paths in~$C$ are dihomotopic (\resp
homotopic) in~$C$ if and only if the corresponding paths in~$\dgreal{C}$ are
dihomotopic (\resp homotopic) in the geometric sense, \ie our algebraic notion
of (di)homotopy corresponds to the geometric one. We conjecture that this result
extends one dimension higher, \ie that two formal 2-cells in $\fundtgpd(C)$
(\resp $\fundtcat(C)$) are equal (\ie in the same equivalence class modulo the
relations) if and only if the corresponding homotopies (\resp dihomotopies)
in~$\dgreal C$ are homotopic, thus bringing a geometric justification for our
axiomatic definition of $\fundtgpd(C)$ and $\fundtcat(C)$.
%


\section{A (geo)metric approach to the cube property}
\label{sec:geo-metric}
The cube condition which is at the heart of the developments in previous
sections is quite reminiscent of Gromov's condition for characterizing
non-positively curved cubical complexes~\cite{gromov1987hyperbolic}. In order to
make the connection precise, one has to associate to every precubical set a
geodesic metric space (which is a cubical complex). Such a construction could be
performed abstractly as a geometric realization in the category of metric spaces
if this category were cocomplete... which is not the case. This leads us to
investigate a generalization of the notion of metric spaces (called
\emph{generalized metric spaces}~\cite{grandis2009directed} or \emph{Lawvere
  metric spaces}~\cite{lawvere1973metric} or \emph{hemi-metric
  spaces}~\cite{goubault2013non}) which form a cocomplete
category. Interestingly, these spaces also allow one to encode a notion of
``time direction'' in the metric, as first noticed by
Grandis~\cite{grandis2007fundamental, grandis2009directed}, thus enabling one to
formulate a metric semantics for concurrent programs, which is not only able to
encode the direction of time but also the duration of the elapsed time during an
execution.
The use of non-positive curvature has been used by Ghrist in order to study
configuration spaces and cubical complexes~\cite{Ghrist1, ghristgeometric,
  Ghrist2}.

In Section~\ref{sec:gms}, we recall the definition of generalized metric spaces
as well as associated basic definitions and properties: the properties of the
resulting category, the symmetric variant of generalized metric spaces (which
can be thought of as being ``undirected''), the topology induced by a metric,
and the various notions of paths. In Section~\ref{sec:pcs-greal}, we define the
realization of a precubical set as a metric space and show that the underlying
topology of the realization induced by the metric is the expected one: this
means that considering the metric bring more (as opposed to different)
information compared to the usual case. Finally, in Section~\ref{sec:npc}, we
recall the original definition of space with non-positive curvature, and show
that our definition corresponds with this one, through geometric realization.

The two main results are those in the end of Sections~\ref{sec:pcs-greal} and
\ref{sec:npc} (compatibility of the metric with usual topology and
correspondence of our axioms with the usual ones). We do not claim much
originality in this section, but we did our best to collect in a concise way
results which are disseminated in the literature or, worse, considered as
folklore. The reason why we have presented this here is to explain the
inspiration for our axiomatics on precubical sets, but none of the results in
this paper depend on the geometric interpretation of those axioms.

\subsection{Generalized metric spaces}
\label{sec:gms}
Recall that a \emph{metric space} $(X,d)$ consists of a set~$X$ equipped with a
metric $d:X\times X\to[0,\infty]$, \ie a function such that, given $x,y,z\in X$,
we have
\[
\begin{array}{rrl}
  (1)&\text{point equality:}&d(x,x)=0\\
  (2)&\text{triangle inequality:}&d(x,z)\leq d(x,y)+d(y,z)\\
  (3)&\text{finite distances:}&d(x,y)<\infty\\
  (4)&\text{separation:}&\text{$d(x,y)=d(y,x)=0$ implies $x=y$}\\
  (5)&\text{symmetry:}&d(x,y)=d(y,x)\\
\end{array}
\]
Generalized metric space are defined similarly, but only keeping the two first
conditions:

\begin{defi}
  \label{def:gms}
  \label{def:gmet}
  A \textbf{generalized metric space} $(X,d)$ consists of a set~$X$ equipped
  with a function $d:X\times X\to[0,\infty]$, called generalized \emph{metric}
  or \emph{distance}, such that, given $x,y,z\in X$, we have
  \[
  \begin{array}{rrl}
    (1)&\text{point equality:}&d(x,x)=0\\
    (2)&\text{triangle inequality:}&d(x,z)\leq d(x,y)+d(y,z)\\
  \end{array}
  \]
  A morphism $f:(X,d_X)\to(Y,d_Y)$ between two such spaces is a function
  $f:X\to Y$ which is \emph{nonexpansive} (does not increase distance): for
  every $x,y\in X$, $d_Y(f(x),f(y))\leq d_X(x,y)$. We write $\GMet$ for the
  \emph{category of generalized metric spaces} and their morphisms.
\end{defi}

\noindent
By a \emph{space} we will always mean an object in this category, unless we
explicitly state that the space is ``non-generalized''. It was observed by
Lawvere~\cite{lawvere1973metric} that it can alternatively be defined as the
category of categories and functors enriched over the posetal category
$\mcV=[0,\infty]$ (with an arrow $x\to y$ whenever $x\geq y$) equipped with the
monoidal structure induced by addition.
Our main interest in considering this category instead of the full subcategory
of (usual) metric spaces is double. Firstly, removing conditions (3) and (4)
makes the category much more well-behaved (the category of generalized metric
spaces has small colimits, which is not the case for metric spaces). Of course,
we generally work with separated ($T_0$) spaces
and we will have to show that colimits we obtain from such spaces are still
separated. Secondly, removing condition (5) allows us to encode a notion of
direction of the space: intuitively, a point~$x$ is really ``before'' a
point~$y$ whenever the distance $d(x,y)$ is finite and $d(y,x)=\infty$.

\begin{exa}
  \label{ex:ui}
  \label{ex:dui}
  \label{ex:dr}
  Given $a,b\in\R$, we write $\clint ab$ for the interval equipped with the
  usual metric given by $d(x,y)=|y-x|$. We write $\directed{\clint ab}$ for the
  same interval metrized by
  \[
  d(x,y)
  \qeq
  \begin{cases}
      y-x&\text{if $y\geq x$}\\
      \infty&\text{if $y<x$}
  \end{cases}
  \]
  In particular, we often write $I$ (\resp $\dui$) instead of $\clint01$ (\resp
  $\directed{\clint01}$) for the (\emph{directed}) \emph{standard
    interval}. Similarly, we write $\R$ (\resp $\dR$) for the (\emph{directed})
  \emph{real line}, which is also sometimes called the \emph{Sorgenfrey
    line}~\cite{sorgenfrey1947topological,goubault2013non}.
\end{exa}

\begin{exa}
  \label{ex:dcircle}
  The \emph{directed unit circle} $\dcircle$ is the set of complex points of the
  form $\ce^{\ci 2\pi\theta}$, with $\theta\in\R$, equipped with the distance
  $d(x,y)=\bigwedge\setof{\rho-\theta}{x=\ce^{\ci 2\pi\theta}, y=\ce^{\ci
      2\pi\rho}, \rho\geq\theta}$.
\end{exa}

\begin{defi}
  An \emph{isometry} $f:X\to Y$ is a distance-preserving morphism, \ie satisfies
  $d_Y(f(x),f(x'))=d_X(x,x')$ for every $x,x'\in X$.
\end{defi}

\begin{rem}
  Isomorphisms in~$\GMet$ are isometries.
\end{rem}

\begin{lem}
  Suppose given an isometry $f:X\to Y$. For every $x,y\in X$, we have
  $f(x)=f(y)$ implies $d(x,y)=0=d(y,x)$. In particular, when $X$ separated $f$
  is injective, and we write $f^{-1}(y)$ for the unique antecedent of $y\in Y$
  under~$f$.
\end{lem}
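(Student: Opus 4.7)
The proof is essentially a direct unfolding of the definitions, so the plan is short.

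Suppose $f(x) = f(y)$. The first step is to observe that by point equality in $Y$ we have $d_Y(f(x), f(y)) = d_Y(f(x), f(x)) = 0$, and symmetrically $d_Y(f(y), f(x)) = d_Y(f(y), f(y)) = 0$. The second step is to transport these zero distances back to $X$ via the isometry property: since $f$ is distance-preserving, $d_X(x,y) = d_Y(f(x), f(y)) = 0$ and $d_X(y,x) = d_Y(f(y), f(x)) = 0$. This gives the first claim.

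For the ``in particular'' clause, I would simply apply the separation axiom: by assumption $X$ is separated, meaning that $d_X(x,y) = d_X(y,x) = 0$ implies $x = y$. Combining this with the first claim, $f(x) = f(y)$ forces $x = y$, so $f$ is injective. The final sentence is then merely notation fixing the unique preimage when it exists.

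There is really no obstacle here; the only mild subtlety is that one has to use both $d_Y(f(x), f(y)) = 0$ \emph{and} $d_Y(f(y), f(x)) = 0$, since in a generalized metric space symmetry fails and separation is stated as a conjunction of the two directions. This is the one place where the non-symmetric setting matters, and it is handled by running the isometry argument in both orders.
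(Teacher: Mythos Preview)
Your proof is correct and is exactly the natural unfolding of the definitions; the paper itself states this lemma without proof, treating it as immediate, and your argument is precisely the one-line verification one would expect.
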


\noindent
We write~$d_0$ for the constant distance equal to~$0$ and $d_\infty$ for the
distance such that $d(x,x)=0$ and~$d(x,y)=\infty$ for $x\neq y$.

\subsubsection{Limits and colimits}
\label{sec:limits}
We now show that the category of generalized metric spaces has all limits and
colimits and describe how to explicitly construct those. These are also studied
in~\cite{grandis2009directed, goubault2013non}.

\begin{prop}
  \label{prop:gmet-limits}
  The category $\GMet$ is complete and cocomplete. Moreover, the forgetful
  functor $\GMet\to\Set$ admits both a left and a right adjoint and thus
  preserves limits and colimits.
\end{prop}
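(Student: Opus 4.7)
The plan is to exhibit explicit left and right adjoints to the forgetful functor $U\colon\GMet\to\Set$, from which preservation of both limits and colimits by $U$ follows for free; then, to obtain completeness and cocompleteness, it suffices to construct all products, equalizers, coproducts and coequalizers in $\GMet$.

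For the adjoints, define $L\colon\Set\to\GMet$ by $L(X)=(X,d_\infty)$ (the discrete generalized metric, with $d_\infty(x,x)=0$ and $d_\infty(x,y)=\infty$ for $x\neq y$) and $R\colon\Set\to\GMet$ by $R(X)=(X,d_0)$ (the constant-$0$ metric). The adjunctions $L\dashv U\dashv R$ follow from the observation that any set-function $LX\to Y$ is automatically nonexpansive (since $d_\infty$ dominates every distance) and dually any set-function $X\to RY$ is automatically nonexpansive (since every distance dominates $0$). Having adjoints on both sides, $U$ preserves all existing limits and colimits.

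For completeness, the product of a family $(X_i,d_i)_{i\in I}$ is the cartesian product $\prod_i X_i$ equipped with the supremum distance $d((x_i),(y_i))=\sup_{i\in I}d_i(x_i,y_i)$, and the equalizer of $f,g\colon X\to Y$ is the subset $\{x\in X\mid f(x)=g(x)\}$ with the distance inherited from $X$; both satisfy the axioms of Definition~\ref{def:gms} and the expected universal properties, which are straightforward to check since the forgetful functor to $\Set$ is faithful. Cocompleteness decomposes similarly into coproducts and coequalizers. The coproduct $\coprod_i(X_i,d_i)$ is the disjoint union with the distance extending each $d_i$ and set to $\infty$ between points of distinct components; nonexpansiveness of the colimit cocone morphisms is then automatic.

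The only delicate step is the coequalizer of $f,g\colon X\to Y$, where the main obstacle is constructing a well-defined quotient distance. I would proceed as follows: let $\sim$ be the equivalence relation on $Y$ generated by $f(x)\sim g(x)$ for $x\in X$, form the quotient set $Q=Y/{\sim}$, and define
\[
d_Q([y],[y'])\qeq\inf\sum_{k=0}^{n-1}d_Y(y_k',y_{k+1})
\]
where the infimum ranges over all ``zigzags'' $y=y_0\sim y_0',\,y_1\sim y_1',\,\ldots,\,y_{n-1}\sim y_{n-1}',\,y_n=y'$ with arbitrary $n\geq 0$ and points in $Y$. Point equality holds because a zigzag of length $0$ is allowed when $[y]=[y']$, and the triangle inequality is immediate by concatenating zigzags. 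Well-definedness on equivalence classes follows from the same concatenation argument. The universal property is then routine: a nonexpansive map $h\colon Y\to Z$ with $h\circ f=h\circ g$ factors uniquely through $Q$, and nonexpansiveness of the factorization follows by applying the triangle inequality in $Z$ along a zigzag realizing the infimum up to $\varepsilon$.

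Note that this coequalizer construction may collapse distances to $0$ between distinct classes, which is exactly why we dropped the separation axiom: the resulting quotient is a bona fide generalized metric space but in general fails to be a metric space, justifying the generality of Definition~\ref{def:gms} for the purpose of having all colimits.
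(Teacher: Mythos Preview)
Your proof is correct, and your identification of the adjoints is the right one: the left adjoint equips a set with the discrete metric $d_\infty$ (so that every set-map out of it is automatically nonexpansive) and the right adjoint with the zero metric $d_0$ (so that every set-map into it is nonexpansive). The paper's proof states these the other way around, which is a slip.

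For (co)completeness your route differs from the paper's. The paper does not construct limits and colimits in the proof of the proposition; instead it observes that $\GMet$ is the category of $\mcV$-enriched categories for $\mcV=([0,\infty],\geq,+,0)$ and appeals to the general result (Betti--Carboni--Street--Walters) that $\mcV$-$\mathbf{Cat}$ is (co)complete whenever $\mcV$ is. The explicit descriptions you give---supremum product, subspace equalizer, disjoint-union coproduct, and zigzag-infimum coequalizer---are precisely what the paper records separately in the lemmas immediately following the proposition. Your approach is more elementary and self-contained, and it makes visible exactly where separation can fail (in the coequalizer), which you rightly flag; the paper's approach is shorter but relies on an external reference and defers the concrete shape of (co)limits to the subsequent lemmas.
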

\begin{proof}
  Since the category~$\mcV$ is complete and cocomplete, with products (\resp
  coproducts) being given by supremum (\resp infimum), by general results about
  enriched categories~\cite{betti1983variation}, the category
  $\GMet=\VCat\mcV$ has small limits and colimits. The left (\resp right)
  adjoint to the forgetful functor $\GMet\to\Set$ is the functor which to a
  set~$X$ associates the generalized metric space $(X,d_0)$ (\resp
  $(X,d_\infty)$). The underlying set of a limit of metric spaces is thus the
  limit of the underlying sets and similarly for colimits.
\end{proof}

Limits and colimits can be constructed explicitly as follows.
It is well-known that it is enough to construct all (co)products and
(co)equalizers to construct all (co)limits.

\begin{lem}
  \label{lem:gmet-prod}
  Given a family $(X_i,d_i)_{i\in I}$ of metric spaces, its product (\resp
  coproduct) is the set $\prod_{i\in I}X_i$ (\resp $\coprod_{i\in I} X_i$),
  respectively equipped with the distances
  \[
  d_{\prod X_i}((x_i),(y_i))=\bigvee_{i\in I}d_i(x_i,y_i)
  \qquad\qquad
  d_{\coprod X_i}(x,y)=
  \begin{cases}
    d_i(x,y)&\text{if $x,y\in x_i$}\\
    \infty&\text{otherwise}
  \end{cases}
  \]
\end{lem}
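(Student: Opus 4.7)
The plan is to verify two things for each of the product and coproduct constructions: first, that the proposed formula does indeed define a generalized metric (point equality and triangle inequality), and second, that the resulting object, together with the obvious projections or injections, satisfies the universal property in $\GMet$. Since Proposition~\ref{prop:gmet-limits} already guarantees that (co)limits exist and that the underlying set of a (co)limit is the (co)limit of underlying sets, it suffices to exhibit a distance on the Cartesian product (resp.\ disjoint union) that makes the canonical set-theoretic projections (resp.\ injections) into a (co)limiting cone in $\GMet$.

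For the product, I would first check that $d_{\prod X_i}$ is a distance: point equality holds because $\bigvee_i d_i(x_i,x_i)=\bigvee_i 0=0$, and the triangle inequality follows by taking suprema in $d_i(x_i,z_i)\leq d_i(x_i,y_i)+d_i(y_i,z_i)$ and using the fact that $\bigvee_i (a_i+b_i)\leq\bigvee_i a_i+\bigvee_i b_i$ in $[0,\infty]$. Each projection $\pi_i$ is nonexpansive since $d_i(x_i,y_i)\leq\bigvee_j d_j(x_j,y_j)$. Given a cone of nonexpansive maps $f_i:Y\to X_i$, the unique map of sets $f:Y\to\prod_i X_i$ with $\pi_i\circ f=f_i$ is $y\mapsto(f_i(y))_i$, and it is nonexpansive since
\[
d_{\prod X_i}(f(y),f(y'))=\bigvee_i d_i(f_i(y),f_i(y'))\leq\bigvee_i d_Y(y,y')=d_Y(y,y').
\]

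For the coproduct, I would first check that $d_{\coprod X_i}$ is a distance: point equality is immediate since any $x$ lies in a unique component, and the triangle inequality splits into cases. If $x,y,z$ all lie in the same $X_i$, it reduces to the triangle inequality for $d_i$; otherwise, at least two of $x,y$ and $y,z$ lie in distinct components, so at least one of $d_{\coprod X_i}(x,y)$ or $d_{\coprod X_i}(y,z)$ is $\infty$ and the inequality is trivial. The injections $\iota_i:X_i\to\coprod_j X_j$ are isometries by definition. Given a cocone of nonexpansive maps $f_i:X_i\to Y$, the unique set-theoretic map $f:\coprod_i X_i\to Y$ with $f\circ\iota_i=f_i$ is obtained by copairing; for $x,y$ in the same component $X_i$ we have $d_Y(f(x),f(y))=d_Y(f_i(x),f_i(y))\leq d_i(x,y)=d_{\coprod X_i}(x,y)$, and for $x,y$ in distinct components the bound $d_Y(f(x),f(y))\leq\infty=d_{\coprod X_i}(x,y)$ is trivial.

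There is no serious obstacle; the argument is entirely routine once the candidate distances are written down, and essentially amounts to verifying pointwise that supremum (for the product) and case-splitting (for the coproduct) interact correctly with nonexpansiveness. The only mild subtlety is the triangle inequality for the coproduct, where one has to observe that mixing components forces the right-hand side to be $\infty$; nothing deeper is needed.
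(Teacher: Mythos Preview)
Your proof is correct and complete; the paper states this lemma without proof, treating it as a routine verification, and your argument is precisely the standard check one would supply. The only cosmetic point is the phrasing ``at least two of $x,y$ and $y,z$ lie in distinct components'', which should read ``at least one of the pairs $(x,y)$, $(y,z)$ has its elements in distinct components'', but your subsequent reasoning makes the intended meaning clear.
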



\begin{lem}
  \label{lem:gmet-eq}
  Given a pair of morphisms $f,g:(X,d_X)\to (Y,d_Y)$, their equalizer is the
  equalizer set $Z=\setof{x\in X}{f(x)=g(x)}$ equipped with the restriction of
  $d_X$ as distance.
\end{lem}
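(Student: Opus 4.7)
The plan is to verify directly that $(Z, d_X|_{Z\times Z})$ together with the set-theoretic inclusion $i\colon Z \hookrightarrow X$ satisfies the universal property of the equalizer in $\GMet$. By Proposition~\ref{prop:gmet-limits} the forgetful functor $\GMet\to\Set$ preserves limits, so the underlying set of the equalizer is necessarily the set-theoretic equalizer $Z$; only the distance on $Z$ needs to be determined.

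First I would observe that the restriction $d_X|_{Z\times Z}$ still satisfies the two axioms of a generalized metric (Definition~\ref{def:gmet}), since point equality and the triangle inequality are universally quantified statements that survive restriction to a subset. The inclusion $i\colon Z\to X$ is then tautologically an isometry, hence a morphism in $\GMet$, and satisfies $f\circ i = g\circ i$ by the very definition of $Z$.

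For the universal property, suppose $h\colon (W,d_W)\to (X,d_X)$ is a morphism in $\GMet$ with $f\circ h = g\circ h$. Then $h(w)\in Z$ for every $w\in W$, so at the level of sets there is a unique factorization $\bar h\colon W\to Z$ with $i\circ\bar h = h$. This $\bar h$ is nonexpansive because, for $w,w'\in W$,
\[
d_Z(\bar h(w),\bar h(w')) \qeq d_X(h(w),h(w')) \qleq d_W(w,w'),
\]
using that $d_Z$ is defined as the restriction of $d_X$ and that $h$ is nonexpansive. Uniqueness of $\bar h$ as a morphism of $\GMet$ follows from uniqueness of the underlying set map.

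There is no real obstacle: the argument is a routine formal verification. It is perhaps worth noting that the choice of metric on $Z$ is rigidly pinned down by the universal property---requiring $i$ to be nonexpansive forces $d_Z\geq d_X|_{Z\times Z}$, while requiring the induced $\bar h$ to always be nonexpansive (taking $W=Z$ with $d_W = d_X|_{Z\times Z}$ and $h = i$) forces the reverse inequality---so there is no freedom in the construction.
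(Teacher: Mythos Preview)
Your verification is correct and complete. The paper states this lemma without proof (it is considered a routine check, as are the companion descriptions of products, coproducts, and coequalizers in the surrounding lemmas), so your direct verification of the universal property is exactly the intended argument; the closing remark that the metric on $Z$ is uniquely forced by the universal property is a nice addendum but not needed for the lemma itself.
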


\begin{lem}
  \label{lem:gmet-coeq}
  Given a pair of morphisms $f,g:(X,d_X)\to (Y,d_Y)$, the coequalizer of~$f$
  and~$g$ is the set $Y/{\approx}$, where~$\approx$ is the smallest equivalence
  relation such that $y\approx y'$ whenever there exists $x\in X$ with $y=f(x)$
  and $y'=g(x)$, equipped with the following distance. Given two points
  $x,y\in Y$, a ``chain'' from $x$ to $y$
  is a sequence of points $x_1,y_1,x_2,y_2,\ldots,x_n,y_n\in Y$ such that
  $x=x_1$, $y_i\approx x_{i+1}$, and $y_n=y$. The length of such a chain~$u$ is
  defined to be $l(u)=\sum_{i=1}^nd(x_i,y_i)$. The distance between two points
  $x,y\in Y/{\approx}$ is the infimum of the lengths of chains from a
  representative of~$x$ to a representative of~$y$ (in fact, it does not depend
  on the choice of representatives for~$x$ and~$y$).
\end{lem}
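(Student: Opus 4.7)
The plan is to verify that the object $Z:=Y/{\approx}$ equipped with the distance $\bar d$ defined via chains, together with the canonical projection $\pi\colon Y\to Z$, satisfies the universal property of a coequalizer in $\GMet$. I would split the argument into four steps: (i) $\bar d$ is well-defined and a generalized metric; (ii) $\pi$ is nonexpansive and coequalizes $f,g$; (iii) any nonexpansive $h\colon Y\to W$ with $h\circ f=h\circ g$ factors uniquely through $\pi$ at the level of sets; (iv) the resulting factorization $\bar h\colon Z\to W$ is nonexpansive.

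For (i), independence of the choice of representatives is the only slightly delicate point and it follows from allowing trivial steps: if $y\approx y'$, then from any chain $x_1=y,y_1,\ldots,x_n,y_n=z$ one obtains a chain from $y'$ to $z$ of the same length by prepending the trivial piece $x_0=y_0=y'$ (noting $y_0=y'\approx y=x_1$), and symmetrically; hence the infimum over chains from \emph{some} representative of $[y]$ to \emph{some} representative of $[z]$ agrees with the infimum over chains from \emph{any} fixed representative to \emph{any} fixed representative. Point equality is witnessed by the trivial chain $(y,y)$ of length $d_Y(y,y)=0$, and the triangle inequality follows because concatenating a chain from $y$ to $y'$ with a chain from $y'$ to $y''$ gives a chain from $y$ to $y''$ whose length is the sum.

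For (ii), $\pi$ is nonexpansive because the length-one chain $(y,y')$ realizes the bound $\bar d([y],[y'])\leq d_Y(y,y')$, and $\pi\circ f=\pi\circ g$ holds because by construction $f(x)\approx g(x)$ for every $x\in X$. For (iii), by Proposition~\ref{prop:gmet-limits} the forgetful functor to $\Set$ preserves colimits, so the underlying set of the coequalizer must be $Y/{\approx}$ as a set, and the hypothesis $h\circ f=h\circ g$ forces $h$ to be constant on each $\approx$-class (the class being generated by pairs $(f(x),g(x))$), yielding a unique set map $\bar h$ with $\bar h\circ\pi=h$.

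Finally for (iv), given a chain $x_1,y_1,x_2,y_2,\ldots,x_n,y_n$ from a representative of $[y]$ to a representative of $[y']$, the constancy of $h$ on $\approx$-classes gives $h(y_i)=h(x_{i+1})$, so the triangle inequality in $W$ together with the nonexpansiveness of $h$ yields
\[
d_W\bigl(\bar h[y],\bar h[y']\bigr)\;\leq\;\sum_{i=1}^n d_W\bigl(h(x_i),h(y_i)\bigr)\;\leq\;\sum_{i=1}^n d_Y(x_i,y_i)\;=\;l(u),
\]
and taking the infimum over chains gives nonexpansiveness of $\bar h$. There is no serious obstacle: the construction is the standard Lawvere-enriched free cocompletion applied pointwise, and the only care needed is the bookkeeping around trivial ``stay put'' steps in chains so that representatives can be freely changed.
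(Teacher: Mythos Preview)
Your proof is correct. The paper does not actually supply a proof of this lemma: it derives cocompleteness of $\GMet$ abstractly in Proposition~\ref{prop:gmet-limits} (via enriched category theory over $[0,\infty]$) and then simply states the explicit description of the coequalizer, leaving its verification implicit. Your four-step verification of the universal property fills in precisely what the paper omits, and each step is sound; in particular the telescoping estimate in step~(iv) and the representative-independence argument via trivial chain steps in step~(i) are the right observations.
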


\begin{rem}
  \label{rem:coeq-d}
  With the notations of previous lemma, given $x,y\in Y$, the sequence $x,y$ is
  a chain from~$x$ to~$y$ and thus $d_{Y/{\approx}}(x,y)\leq d_Y(x,y)$.
\end{rem}

\begin{exa}
  The directed circle (see Example~\ref{ex:dcircle}) can be constructed
  from~$\directed{\clint01}$ as a colimit, by identifying the points~$0$
  and~$1$.
\end{exa}

\begin{exa}
  \label{ex:In}
  Consider the spaces $(I_n,d_n)$ indexed by $n\in\N$ with $I_n=[0,1]$ and
  distance $d_n(x,y)=|y-x|/n$. The colimit $I_\omega=\coprod_{n\in\N}/{\approx}$
  where $\approx$ identifies points $0$, \resp $1$, in various $I_n$ is the
  colimit of the sets equipped with the distance~$d_\omega$ such that given
  $x\in I_n$ and $y\in I_m$,
  \[
  d_\omega(x,y)\qeq
  \begin{cases}
    0&\text{if $x,y\in\set{0,1}$}\\
    |y-x|/n&\text{if $n=m$}\\
    (x/n+y/m)\vee((1-x)/n+(1-y)/m)&\text{otherwise}
  \end{cases}
  \]
  In particular, we have $d_\omega(0,1)=0$, which shows that coequalizers of
  separated spaces are not necessarily separated. Notice that, in the space
  $I_\omega$, if we ``cut in the middle'' all the intervals we obtain two
  ``star-shaped'' spaces which are both separated. The space $I_\omega$ can then
  be obtained as a pushout of the two spaces (over the discrete space with $\N$
  as points), showing that separated spaces are not either closed under
  pushouts.
\end{exa}

\begin{rem}
  Since the category~$\GMet$ is cocomplete, we can easily mimic the definitions
  of Section~\ref{sec:pcs-sem} in order to associate a generalized metric space
  to each program. We do not detail this here, because we will be able to reuse
  the precubical semantics and realize it as a space, see
  Section~\ref{sec:pcs-greal}, instead of starting all over again. In
  particular, an action in this semantics will typically be realized as a
  directed unit interval~$\dui$ (see Example~\ref{ex:dui}): this corresponds to
  the intuition that executing an action is a directed process and takes «~one
  unit of time~» (but of course, other duration choices could be made depending
  on the nature of the actions, as illustrated in next remark). Moreover,
  parallel composition is naturally interpreted as product of metric spaces: the
  time taken to execute two programs in parallel is the maximum of both
  execution times.
\end{rem}

\begin{rem}
  In order to illustrate why removing the separation axiom makes sense in this
  context, consider the action $\nop$, which is instantaneously executed and
  does nothing. This particular action is most naturally modeled as $(I,d)$,
  with $d(x,y)=0$ for $x\leq y$ and $d(y,x)=\infty$ otherwise (but other actions
  are still modeled as $\dui$). For instance, the geometric semantics of a
  program of the form $\ploop{A}$, where $A$ is some action, would be a space of
  the form
  \[
  \begin{tikzpicture}
    \draw (0,0) circle (.5);
    \gvert{(180:.5)} node[left] {$x$};
    \gvert{(0:.5)} node[right] {$y$};
    \draw (90:.5) node[above] {$\nop$};
    \draw (-90:.5) node[below] {$A$};
    \gvert{(0:.5)};
  \end{tikzpicture}
  \]
  In this space, we have $d(x,y)=0$ and $d(y,x)=1$; it is thus neither separated
  nor symmetric. Similarly, $\ploop\nop$ provides an example of a non-trivial
  space (non-contractible in particular) equipped with the constant null
  distance~$d_0$.
\end{rem}

Because our spaces are not supposed to be separated, it will sometimes be useful
to consider the following kind of quotient.

\begin{defi}
  Given a space~$X$, an equivalence relation~$\approx$ on~$X$ is called
  \emph{instantaneous} when $x\approx y$ implies $d(x,y)=0$ for every $x,y\in X$.
\end{defi}

\begin{lem}
  \label{lem:d0-quotient}
  Given a space~$X$ and a instantaneous equivalence relation~$\approx$, the
  quotient morphism $X\to X/{\approx}$ is an isometry.
\end{lem}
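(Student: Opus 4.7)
The plan is to show the two inequalities $d_{X/{\approx}}([x],[y]) \leq d_X(x,y)$ and $d_X(x,y) \leq d_{X/{\approx}}([x],[y])$ for every $x,y \in X$, using the explicit description of the quotient metric from Lemma~\ref{lem:gmet-coeq}. Recall that $X/{\approx}$ can be realized as the coequalizer of the two projections $R \rightrightarrows X$, where $R \subseteq X \times X$ carries the discrete metric $d_\infty$ (or equivalently as the coequalizer corresponding to all generating pairs of $\approx$), so the lemma on coequalizers applies and the distance on $X/{\approx}$ is given by infima of chain lengths.

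First, the inequality $d_{X/{\approx}}([x],[y]) \leq d_X(x,y)$ is immediate from Remark~\ref{rem:coeq-d}: the sequence $x, y$ is itself a chain from $x$ to $y$, of length $d_X(x,y)$.

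For the reverse inequality, I would take an arbitrary chain $x_1, y_1, x_2, y_2, \ldots, x_n, y_n$ with $x_1 = x$, $y_n = y$ and $y_i \approx x_{i+1}$ for all $i$. By hypothesis, $\approx$ is instantaneous, and since equivalence relations are symmetric we have both $d_X(y_i, x_{i+1}) = 0$ and $d_X(x_{i+1}, y_i) = 0$. Applying the triangle inequality in $X$ repeatedly yields
\[
d_X(x, y) \;\leq\; \sum_{i=1}^{n} d_X(x_i, y_i) + \sum_{i=1}^{n-1} d_X(y_i, x_{i+1}) \;=\; \sum_{i=1}^{n} d_X(x_i, y_i) \;=\; l(u).
\]
Taking the infimum over all such chains $u$ gives $d_X(x, y) \leq d_{X/{\approx}}([x],[y])$.

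The two inequalities together give $d_{X/{\approx}}([x],[y]) = d_X(x,y)$, which is exactly the statement that the quotient map $X \to X/{\approx}$ is an isometry. There is no real obstacle here; the only mild subtlety is making sure that when one passes to representatives the chain-length formula does not depend on the choices (already established in Lemma~\ref{lem:gmet-coeq}) and that symmetry of $\approx$ gives the needed $d_X(y_i, x_{i+1}) = 0$ direction of the triangle inequality, which is why the instantaneous condition combined with the symmetry of $\approx$ (rather than of $d$) suffices.
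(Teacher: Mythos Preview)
Your proof is correct and follows essentially the same approach as the paper: both realize the quotient as a coequalizer, obtain $d_{X/{\approx}}\leq d_X$ from the trivial chain $x,y$, and obtain the reverse inequality by applying the triangle inequality to an arbitrary chain, using the instantaneous hypothesis to make the ``gluing'' terms $d_X(y_i,x_{i+1})$ vanish. One minor remark: since the chain condition already gives $y_i\approx x_{i+1}$, the instantaneous hypothesis directly yields $d_X(y_i,x_{i+1})=0$ without appealing to symmetry of~$\approx$, so that part of your closing comment is unnecessary (though harmless).
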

\begin{proof}
  The quotient space~$X/{\approx}$ can be computed as the coequalizer of
  ${\approx}\rightrightarrows X$, where the two arrows are the projections of
  the relation ${\approx}\subseteq X\times X$. Given $x,y\in X$ and a chain
  $x_0,y_0,\ldots,x_n,y_n$ with $x_0=x$, $y_n=y$, by the triangle inequality we
  have
  \begin{align*}
    d_X(x,y)
    &\quad\leq\quad
      d(x_0,y_0)+d(y_0,x_1)+\ldots+d(y_{n-1},x_n)+d(x_n,y_n)
    \\
    &\quad=\quad
      d(x_0,y_0)+d(x_1,y_1)+\ldots+d(x_{n-1},y_{n-1})+d(x_n,y_n)
  \end{align*}
  and therefore $d_X(x,y)\leq d_{X/{\approx}}(x,y)$. Conversely, since $x,y$ is
  a chain, we have that $d_{X/{\approx}}(x,y)\leq d_X(x,y)$.
\end{proof}

\noindent
In particular, given a generalized metric space~$X$, the free separated space
can be constructed as~$X/{\approx}$, where $\approx$ is the smallest equivalence
relation such that $d(x,y)=0$ or $d(y,x)=0$ implies $x\approx y$, for
$x,y\in X$. The quotient map will be an isometry when the relation~$\approx$ is
instantaneous, \ie when $d(x,y)=0$ implies $d(y,x)=0$. The colimits of
non-generalized metric spaces usually considered in literature can be obtained
from those described here by freely separating the resulting colimits.


\subsubsection{Symmetric metric spaces}
The category of spaces which are symmetric forms an interesting subcategory, in
which most of the usual properties of (non-generalized) metric spaces are still
valid. They are sometimes also called \emph{pseudo-metric spaces} in the
literature~\cite{wilson1931quasi, albert1941note, kelly1963bitopological}.

\begin{defi}
  We write $\SGMet$ for the full subcategory of~$\GMet$ whose object are
  symmetric generalized metric spaces, \ie spaces $(X,d)$ such that
  $d(x,y)=d(y,x)$ for every~$x,y\in X$.
\end{defi}

\noindent
The constructions of Proposition~\ref{prop:gmet-limits} are easily checked to
preserve the symmetry of spaces:

\begin{lem}
  The category $\SGMet$ is complete and cocomplete.
\end{lem}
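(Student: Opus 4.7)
My plan is to verify that each of the explicit constructions of (co)products, equalizers and coequalizers in $\GMet$ given in Lemmas~\ref{lem:gmet-prod}, \ref{lem:gmet-eq} and~\ref{lem:gmet-coeq} preserves symmetry of the distance. Since $\SGMet$ is a full subcategory of $\GMet$, this is enough: a cone or cocone in $\SGMet$ is in particular one in $\GMet$, and the (co)limit computed there, being symmetric, lies in $\SGMet$ and serves as the (co)limit in $\SGMet$ as well. Completeness and cocompleteness of $\SGMet$ then follow from Proposition~\ref{prop:gmet-limits}.

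Most cases are immediate by inspection of the formulas: for products, $\bigvee_i d_i(x_i,y_i)$ is a supremum of symmetric distances; coproducts put $\infty$ between distinct components and reuse the symmetric $d_i$ within each; equalizers merely restrict a symmetric distance to a subset. The only case where a nontrivial argument is required is the coequalizer, whose distance is defined as an infimum over chains $x=x_1,y_1,\ldots,x_n,y_n=y$ with $y_i\approx x_{i+1}$, of lengths $\sum_{i=1}^n d_Y(x_i,y_i)$.

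For the coequalizer case, I would show that any chain from $x$ to $y$ admits a reversal of the same length: setting $x'_j=y_{n-j+1}$ and $y'_j=x_{n-j+1}$, the condition $y'_j\approx x'_{j+1}$ translates to $x_{n-j+1}\approx y_{n-j}$, which holds by symmetry of the equivalence relation~$\approx$; moreover $\sum_j d_Y(x'_j,y'_j)=\sum_j d_Y(y_{n-j+1},x_{n-j+1})=\sum_i d_Y(x_i,y_i)$ by symmetry of~$d_Y$. Taking infima gives $d_{Y/{\approx}}([y],[x])\leq d_{Y/{\approx}}([x],[y])$, and the reverse inequality is identical by symmetry of the situation.

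The chain-reversal reindexing for coequalizers is the only mildly delicate point; it is the unique place where both hypotheses (symmetry of each~$d_Y$ and of~$\approx$ as an equivalence relation) are genuinely used. Everything else reduces to a one-line check. Optionally one could also remark that the inclusion $\SGMet\hookrightarrow\GMet$ admits a right adjoint, given by the ``max-symmetrization'' $d_s(x,y)=d(x,y)\vee d(y,x)$ (whose triangle inequality is immediate from that of~$d$ applied to each direction), providing an alternative conceptual justification of why limits can be transported from $\GMet$.
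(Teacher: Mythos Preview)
Your proof is correct and follows the same approach as the paper, which simply asserts that ``the constructions of Proposition~\ref{prop:gmet-limits} are easily checked to preserve the symmetry of spaces'' without further elaboration. Your chain-reversal argument for the coequalizer is exactly the kind of verification the paper leaves implicit, and your optional remark about the right adjoint via $d^\vee$ anticipates what the paper states just afterward in Proposition~\ref{prop:free-s-metric}.
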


\begin{rem}
  \label{rem:d-morphism}
  Notice that given a symmetric space $(X,d)$ and $x,x',y,y'\in X$, we have
  \[
  |d(x',y')-d(x,y)|
  \quad\leq\quad
  |d(x',y')-d(x,y')|+|d(x,y')-d(x,y)|
  \quad\leq\quad
  d(x,x')\vee d(y,y')
  \]
  the last step using the well-known ``reverse triangle inequality''. The
  distance~$d$ can thus be seen as a morphism $d:X\times X\to[0,\infty]$ in
  $\SGMet$.
\end{rem}

\noindent
Any generalized metric space can canonically be symmetrized in two ways as
follows.

\begin{prop}
  \label{prop:free-s-metric}
  The forgetful functor $\SGMet\into\GMet$ admits a left adjoint sending a space
  $(X,d)$ to the symmetric space $(X,\dsym{d})$ where the metric~$\dsym{d}$ is
  called the \emph{symmetric metric} generated by~$d$ and is defined by
  \[
  \dsym{d}(x,y)
  \qeq
  \bigwedge_{x=x_0,x_1,\ldots,x_{2n},x_{2n+1}=y}\sum_{i=0}^nd(x_{i+1},x_i)+d(x_{i+1},x_{i+2})
  \]
  It also admits a right adjoint sending $(X,d)$ to the space $(X,d^\vee)$
  equipped with the distance defined by $d^\vee(x,y)=d(x,y)\vee d(y,x)$. The
  forgetful functor thus preserves limits and colimits.
\end{prop}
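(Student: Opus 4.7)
The plan is to exhibit both adjoints concretely on underlying sets (as the identity function) and check the universal property in each case, after which preservation of limits and colimits is the standard consequence of having adjoints on both sides.

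For the right adjoint $R\colon(X,d)\mapsto(X,d^\vee)$, I first check that $d^\vee$ really is a symmetric generalized metric: symmetry and point equality are immediate, and the triangle inequality comes from $d^\vee(x,z)=d(x,z)\vee d(z,x)\leq (d(x,y)+d(y,z))\vee(d(z,y)+d(y,x))\leq d^\vee(x,y)+d^\vee(y,z)$. The adjunction is witnessed by the identity map on underlying sets: given a symmetric $Y$ and a nonexpansive $f\colon Y\to(X,d)$, symmetry of $d_Y$ lets one write $d^\vee(f(y),f(y'))=d(f(y),f(y'))\vee d(f(y'),f(y))\leq d_Y(y,y')\vee d_Y(y',y)=d_Y(y,y')$, so $f$ is also nonexpansive as a map into $RX$; conversely, any $f\colon Y\to RX$ in $\SGMet$ is a nonexpansive map into $(X,d)$ since $d\leq d^\vee$. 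Naturality is trivial because the underlying function is unchanged.

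For the left adjoint $L\colon(X,d)\mapsto(X,\dsym d)$, I proceed in three steps. First, verify that $\dsym d$ is a symmetric generalized metric: point equality comes from the constant chain (whose sum is $0$); symmetry follows by reversing a chain $x=x_0,\dots,x_{2n+1}=y$ and reindexing $y_k=x_{2n+1-k}$, which preserves the alternating ``peak'' structure of the sum so that the reversed chain yields the same total; the triangle inequality is obtained by concatenating chains from $x$ to $y$ and from $y$ to $z$ at the common endpoint, after which the sums add (a small index-parity check is needed to ensure peaks are aligned correctly). Second, note $\dsym d\leq d$ by evaluating on the trivial chain. Third, check the universal property: given a nonexpansive $f\colon(X,d)\to Y$ with $Y$ symmetric, and any chain $x=x_0,x_1,\dots,x_{2n+1}=y$, use nonexpansiveness together with the symmetry of $d_Y$ to bound $d_Y(f(x_{i+1}),f(x_i))\leq d(x_{i+1},x_i)$ and $d_Y(f(x_{i+1}),f(x_{i+2}))\leq d(x_{i+1},x_{i+2})$, then apply the triangle inequality in $Y$ repeatedly to conclude $d_Y(f(x),f(y))\leq\sum_{i=0}^n d(x_{i+1},x_i)+d(x_{i+1},x_{i+2})$; taking infimum over chains gives $d_Y(f(x),f(y))\leq\dsym d(x,y)$, so $f$ is nonexpansive as a map $LX\to Y$. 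The converse direction (restricting along $\dsym d\leq d$) is automatic, and again naturality is trivial.

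The one delicate point is the combinatorics of the zig-zag formula defining $\dsym d$: one must track the parity of indices to see that chain reversal preserves the shape of the sum and that chain concatenation adds sums without introducing an extra boundary term. Everything else is routine, and once both adjoints are in place, the final clause of the proposition — that the forgetful functor $\SGMet\hookrightarrow\GMet$ preserves both limits and colimits — is immediate from the general fact that left (resp.\ right) adjoints preserve colimits (resp.\ limits), combined with the existence of limits and colimits in both categories established in Proposition~\ref{prop:gmet-limits}.
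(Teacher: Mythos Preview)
Your proof is correct. The paper states this proposition without proof, treating both adjunctions as routine; your argument supplies exactly the expected verifications, and your handling of the zigzag combinatorics for $\dsym d$ (chain reversal for symmetry, concatenation for the triangle inequality, termwise bounding plus symmetry of the target for the universal property) is sound.
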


\begin{exa}
  \label{ex:co-free-sym-d}
  For instance, consider the directed plane $\dR^2$, whose distance is given by
  $d((x_1,x_2),(y_1,y_2))=d_{\dR}(x_1,y_1)\vee d_{\dR}(x_2,y_2)$, and $d_{\dR}$
  is described in Example~\ref{ex:dr}. The left and right adjoint of the
  proposition respectively equip $\R^2$ with the distances~$\dsym d$ and $d^\vee$
  such that
  \[
  \dsym{d}((x_1,x_2),(y_1,y_2))
  \qeq
  \begin{cases}
    |y_1-x_1|\vee|y_2-x_2|&\text{if $(y_1-x_1)(y_2-x_2)\geq 0$}\\
    |y_1-x_1|+|y_2-x_2|&\text{if $(y_1-x_1)(y_2-x_2)\leq 0$}\\
  \end{cases}
  \]
  and
  \[
  d^\vee((x_1,x_2),(y_1,y_2))
  \qeq
  \begin{cases}
    |y_1-x_1|\vee|y_2-x_2|&\text{if $(y_1-x_1)(y_2-x_2)\geq 0$}\\
    \infty&\text{if $(y_1-x_1)(y_2-x_2)\leq 0$}\\
  \end{cases}
  \]
\end{exa}

\begin{rem}
  \label{rem:dsym-sym}
  For every points $x$ and $y$ in a generalized metric space $(X,d)$, we have
  $d(x,y)\geq\dsym{d}(x,y)$ (as witnessed by the unit of the monad on~$\GMet$
  induced by the adjunction).
  Moreover, if $d$ is a symmetric metric then $\dsym{d}=d$ (the comonad
  on~$\SGMet$ induced by the adjunction is the identity comonad).
\end{rem}

\begin{rem}
  We say a \emph{self-dual} category~$\C$ is a category equipped with an
  isomorphism $\C\cong\C^\op$ which is the identity on objects. The above
  construction for the left adjoint functor is a particular case of the
  construction of the free self-dual $\mcV$-category on a $\mcV$-category, and a
  variant of the more well-known construction of the enveloping $\mcV$-groupoid
  of a $\mcV$-category.
\end{rem}

\subsubsection{The underlying topological space of a metric space}
\label{sec:gmet-top}

We are now interested in equipping our spaces with a topology, \ie in
constructing a decent functor $\GMet\to\Top$ (in particular, we want directed
paths, see Definition~\ref{def:path}, to be continuous). In the case of
symmetric spaces, a satisfactory answer is provided by the usual functor
$\SGMet\to\Top$ sending a symmetric space $(X,d)$ to the topological space~$X$
equipped with the metric topology, which is generated by open balls
\[
B^\varepsilon(x)\qeq\setof{y\in X}{d(x,y)<\varepsilon}
\]
for $x\in X$ and $\varepsilon>0$. Since the maps in $\GMet$ are nonexpansive, it
can be easily checked that they are continuous \wrt to the metric topology and
the functor is well-defined.
For (general) generalized metric spaces, the situation is however not so
clear. Before presenting our answer, we would first like to explain why other
``intuitive'' options are not satisfactory. Given a space $(X,d)$ and $x\in X$,
one can construct \emph{past} and \emph{future open balls} of radius
$\varepsilon>0$, which are respectively defined by
\[
B_-^{\varepsilon}(x)=\setof{y\in X}{d(y,x)<\varepsilon}
\qquad\qquad\qquad
B_+^{\varepsilon}(x)=\setof{y\in X}{d(x,y)<\varepsilon}
\]
Considering the topology generated by future open balls is intuitive, but leads
to a topology which is too fine: for instance, the map $f:\dui\to\dui$ such that
$f(t)=0$ if $t<0.5$ and $f(t)=1$ otherwise would be continuous \wrt this
topology.
Another option could also consider the topology generated by sets of the form
$B_-^{\varepsilon}(x)\cup B_+^{\varepsilon}(x)$. However, an easy computation
shows that in $\dR^2$, the resulting topology is the discrete one. Namely, a
generating open set is drawn on the left below (continuous lines mean that the
border is included and dotted ones that the border is excluded). In the middle
right, an open set is shown (it is obtained by intersecting two generating open
sets as drawn on the middle left). Finally, by intersecting two such open sets,
we can obtain an open set reduced to a point, as shown on the right.
\[
\begin{tikzpicture}
  \filldraw[color=lightgray] (0,0) rectangle (-1,-1);
  \filldraw[color=lightgray] (0,0) rectangle (1,1);
  \draw (-1,0) -- (1,0);
  \draw (0,-1) -- (0,1);
  \draw[dotted] (-1,0) -- (-1,-1) -- (0,-1);
  \draw[dotted] (1,0) -- (1,1) -- (0,1);
\end{tikzpicture}
\qquad\qquad
\begin{tikzpicture}
  \filldraw[color=lightgray] (0,0) rectangle (-1,-1);
  \filldraw[color=lightgray] (0,0) rectangle (1,1);
  \filldraw[color=lightgray] (.75,.75) rectangle (-.25,-.25);
  \filldraw[color=lightgray] (.75,.75) rectangle (1.75,1.75);
  \filldraw[color=gray] (-.25,-.25) rectangle (0,0);
  \filldraw[color=gray] (0,0) rectangle (.75,.75);
  \filldraw[color=gray] (.75,.75) rectangle (1,1);
  \draw (-1,0) -- (1,0);
  \draw (0,-1) -- (0,1);
  \draw (-.25,.75) -- (1.75,.75);
  \draw (.75,-.25) -- (.75,1.75);
  \draw[dotted] (-1,0) -- (-1,-1) -- (0,-1);
  \draw[dotted] (1,0) -- (1,1) -- (0,1);
  \draw[dotted] (-.25,.75) -- (-.25,-.25) -- (.75,-.25);
  \draw[dotted] (1.75,.75) -- (1.75,1.75) -- (.75,1.75);
\end{tikzpicture}
\qquad\qquad
\begin{tikzpicture}
  \filldraw[color=lightgray] (-.25,-.25) rectangle (0,0);
  \filldraw[color=lightgray] (0,0) rectangle (.75,.75);
  \filldraw[color=lightgray] (.75,.75) rectangle (1,1);
  \draw (-.25,0) -- (.75,0);
  \draw (0,.75) -- (1,.75);
  \draw (0,-.25) -- (0,.75);
  \draw (.75,0) -- (.75,1);
  \draw[dotted] (-.25,0) -- (-.25,-.25) -- (0,-.25);
  \draw[dotted] (1,.75) -- (1,1) -- (.75,1);
\end{tikzpicture}
\qquad\qquad
\begin{tikzpicture}
  \filldraw[color=lightgray] (-.25,-.25) rectangle (0,0);
  \filldraw[color=lightgray] (0,0) rectangle (.75,.75);
  \filldraw[color=lightgray] (.75,.75) rectangle (1,1);
  \draw (-.25,0) -- (.75,0);
  \draw (0,.75) -- (1,.75);
  \draw (0,-.25) -- (0,.75);
  \draw (.75,0) -- (.75,1);
  \draw[dotted] (-.25,0) -- (-.25,-.25) -- (0,-.25);
  \draw[dotted] (1,.75) -- (1,1) -- (.75,1);
  \begin{scope}[shift={(.75,-.75)}]
    \filldraw[color=lightgray] (-.25,-.25) rectangle (0,0);
    \filldraw[color=lightgray] (0,0) rectangle (.75,.75);
    \filldraw[color=lightgray] (.75,.75) rectangle (1,1);
    \draw (-.25,0) -- (.75,0);
    \draw (0,.75) -- (1,.75);
    \draw (0,-.25) -- (0,.75);
    \draw (.75,0) -- (.75,1);
    \draw[dotted] (-.25,0) -- (-.25,-.25) -- (0,-.25);
    \draw[dotted] (1,.75) -- (1,1) -- (.75,1);
  \end{scope}
\end{tikzpicture}
\]
We hope to have convinced the reader that the canonical way of assigning a
topological space to a generalized metric space is the following 

\begin{defi}
  We define the \emph{forgetful functor} $\GMet\to\Top$ as the composite functor
  $\GMet\to\SGMet\to\Top$, where the first functor $\GMet\to\SGMet$ is the left
  adjoint to the forgetful functor constructed in
  Proposition~\ref{prop:free-s-metric} and the second functor $\SGMet\to\Top$ is
  the forgetful functor described at the beginning of this section.
\end{defi}

\begin{rem}
  Example~\ref{ex:co-free-sym-d} illustrates why the left adjoint (and
  not the right adjoint) is suitable here.
\end{rem}

\noindent
In the following, when we implicitly see a generalized metric space $(X,d)$ as a
topological space, it will always be in this way, \ie with topology generated by
the open balls $B^\varepsilon(x)=\setof{y\in X}{\dsym{d}(x,y)<\varepsilon}$.
%
  %
%
Morphisms of spaces being nonexpansive functions, the following can easily be
shown:

\begin{lem}
  \label{lem:gmet-mor-cont}
  A morphism $f:X\to Y$ of generalized metric spaces is continuous. In
  particular, by Remark~\ref{rem:d-morphism}, given a space $(X,d)$, the
  distance $\ol{d}:X\times X\to X$ is continuous.
\end{lem}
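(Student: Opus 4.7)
The plan is to reduce the claim to the classical fact that nonexpansive maps between symmetric metric spaces are continuous in the metric topology. Recall that by definition of the forgetful functor $\GMet\to\Top$, the topology on a generalized metric space $(X,d)$ is the metric topology induced by the symmetric metric $\dsym{d}$ produced by the left adjoint of Proposition~\ref{prop:free-s-metric}. Therefore, to show that a morphism $f:(X,d_X)\to(Y,d_Y)$ in $\GMet$ is continuous, it is enough to show that $f:(X,\dsym{d_X})\to(Y,\dsym{d_Y})$ is nonexpansive in $\SGMet$, and then apply the classical open-ball argument.

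First I would verify the nonexpansiveness of $f$ for the symmetrized metrics. This is in fact automatic, since the symmetrization is functorial (it is a left adjoint), but I would spell it out using the explicit formula of Proposition~\ref{prop:free-s-metric}: given $x,y\in X$ and any ``zig-zag'' sequence $x=x_0,x_1,\ldots,x_{2n+1}=y$, the image sequence $f(x)=f(x_0),f(x_1),\ldots,f(x_{2n+1})=f(y)$ is a zig-zag in $Y$, and by nonexpansiveness of $f$ with respect to $d_X,d_Y$, each summand $d_Y(f(x_{i+1}),f(x_i))+d_Y(f(x_{i+1}),f(x_{i+2}))$ is bounded by the corresponding summand in $X$. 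Taking the infimum over all chains yields $\dsym{d_Y}(f(x),f(y))\leq\dsym{d_X}(x,y)$.

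Next I would invoke the standard continuity argument: given $x\in X$ and $\varepsilon>0$, for every $y\in B^\varepsilon(x)=\setof{y\in X}{\dsym{d_X}(x,y)<\varepsilon}$, the inequality above gives $\dsym{d_Y}(f(x),f(y))<\varepsilon$, so $f(y)\in B^\varepsilon(f(x))$. Thus $f^{-1}$ of every generating open ball is open, and $f$ is continuous.

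Finally, for the ``in particular'' clause, one applies the first part to the distance $\ol d:X\times X\to[0,\infty]$. By Remark~\ref{rem:d-morphism}, when viewed as the symmetrization $\dsym d$ on the symmetric space $(X,\dsym d)$, this map is a morphism in $\SGMet$ (hence in $\GMet$), so continuity follows from what has just been proved. I do not expect any substantive obstacle here; the only mildly delicate point is to be careful that the topology used on both sides is indeed the one induced by the symmetrization, rather than one of the naive ``future ball'' topologies discussed before Definition of the forgetful functor.
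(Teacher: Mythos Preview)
Your proof is correct and matches the approach the paper has in mind: the paper does not spell out a proof but only prefaces the lemma with ``Morphisms of spaces being nonexpansive functions, the following can easily be shown'', which is exactly the reduction you carry out (nonexpansive for $d$ $\Rightarrow$ nonexpansive for $\dsym d$ by functoriality of the left adjoint, hence continuous for the metric topology). Your handling of the ``in particular'' clause via Remark~\ref{rem:d-morphism} applied to the symmetric space $(X,\dsym d)$ is also the intended reading.
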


\begin{rem}
  Notice the distance is not necessarily continuous \wrt to the induced
  topology. For instance, in $\dR^2$ (see the figure on the left),
  \[
  \begin{tikzpicture}
    \draw[->] (-.2,0) -- (1,0);
    \draw[->] (0,-.2) -- (0,1);
    \gvert{(.2,.5)} node[above]{$x$};
    \gvert{(.8,.7)} node[right]{$y$};
    \gvert{(.8,.3)} node[right]{$y'$};
  \end{tikzpicture}
  \qquad\qquad\qquad\qquad
  \begin{tikzpicture}
    \draw (0,0) circle (.5);
    \gvert{(90:.5)} node[above] {$x$};
    \gvert{(140:.5)} node[above] {$y$};
    \gvert{(50:.5)} node[above] {$y'$};
  \end{tikzpicture}
  \]
  when~$y$ ``moves'' vertically to~$y'$, the value of $d(x,y)$ suddenly
  ``jumps'' from a finite value to $\infty$. Similarly in $\dcircle$ (figure on
  the right), when~$y$ moves to~$y'$, the value of $d(x,y)$ jumps from~$0$ to
  $2\pi$.
\end{rem}

\begin{prop}
  \label{prop:gmet-top-preservation}
  The functor $\GMet\to\Top$ preserves finite limits and small
  coproducts.
\end{prop}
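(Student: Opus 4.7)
The functor $\GMet\to\Top$ factors as $\GMet\to\SGMet\to\Top$, where the first arrow is the symmetrization left adjoint of Proposition~\ref{prop:free-s-metric} and the second is the classical metric-to-topology functor. My plan is to inspect each of the explicit constructions of Proposition~\ref{prop:gmet-limits} and verify, case by case, that the topology induced after chain-symmetrization matches the corresponding limit or coproduct in $\Top$.

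The terminal case is immediate. For a finite product $\prod_i X_i$ carrying $d=\bigvee_i d_i$ (Lemma~\ref{lem:gmet-prod}), I would establish the double inequality
\[
\bigvee_i \dsym{d_i}(x_i,y_i)\qleq\dsym{d}(x,y)\qleq\sum_i \dsym{d_i}(x_i,y_i),
\]
whose three sides are pairwise topologically equivalent and hence all induce the product topology. The lower bound follows because each chain in $\prod_i X_i$ projects to a chain in each $X_i$ of no greater length, using the step-wise inequality $\min(a_1\vee a_2,\,b_1\vee b_2)\geq\min(a_1,b_1)$. The upper bound is produced by concatenating near-optimal chains of each factor into a chain in $\prod_i X_i$ that moves a single coordinate at a time; every such step then has its $\vee$ collapse to the single nonzero component, so total lengths sum to $\sum_i \dsym{d_i}$.

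Small coproducts are equally direct: the metric on $\coprod_i X_i$ (Lemma~\ref{lem:gmet-prod}) is $\infty$ across components, hence any chain step crossing components costs $\min(\infty,\infty)=\infty$, and $\dsym{d}$ remains $\infty$ across components. Every small-radius open ball is therefore contained in a single $X_i$ and coincides with a $\dsym{d_i}$-ball there, giving exactly the disjoint union topology, which is the coproduct in $\Top$.

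The hardest step, and the one I expect to be the main obstacle, is the equalizer case. For $Z=\setof{x}{f(x)=g(x)}$ equipped with the restricted metric $d_X|_Z$ (Lemma~\ref{lem:gmet-eq}), chains in $Z$ are automatically chains in $X$, so $\dsym{d_Z}\geq\dsym{d_X}|_Z$, and hence the topology from $\dsym{d_Z}$ is at least as fine as the subspace topology inherited from $X$. The reverse comparison amounts to showing that the inclusion $Z\hookrightarrow X$ is a topological embedding after symmetrization, i.e.\ that any chain in $X$ between two points of $Z$ can be rerouted into a chain inside $Z$ of comparable length; here one must use the defining equality $f=g$ on $Z$. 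This is the most delicate part of the argument (a naive attempt shows that chains in $X$ may genuinely pass through points outside $Z$), and it is the step where care is needed, possibly through additional hypotheses on the spaces involved in the applications of the paper.
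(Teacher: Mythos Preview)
Your approach is the same as the paper's, which simply asserts that preservation of binary products, equalizers, and small coproducts ``can be directly checked'' against the explicit descriptions in Section~\ref{sec:limits}. Your treatments of finite products and of coproducts are correct and more detailed than what the paper offers.

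Your hesitation on equalizers is warranted, and in fact the gap you identify cannot be closed in general. Here is a counterexample. Take $X=\{p\}\cup\{q_n:n\geq1\}\cup\{r_n:n\geq1\}$ with $d(r_n,p)=d(r_n,q_n)=1/n$ and all other distances between distinct points equal to~$\infty$; the triangle inequality holds trivially since no two finite edges compose. Let $Y=\R$ with its usual symmetric metric, and define $f,g:X\to Y$ by $f(p)=g(p)=f(q_n)=g(q_n)=0$, $f(r_n)=1/(2n)$, $g(r_n)=-1/(2n)$. Both maps are nonexpansive (the only finite distances in~$X$ emanate from some~$r_n$, and the images lie within $1/(2n)\leq 1/n$ of the relevant point). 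The equalizer is $Z=\{p\}\cup\{q_n\}$, on which $d_X|_Z$ is identically~$\infty$ off the diagonal; hence $\dsym{d_X|_Z}$ is also~$\infty$ off the diagonal and the induced topology on~$Z$ is discrete. On the other hand, in~$X$ one has $\dsym{d_X}(p,q_n)\leq\dsym{d_X}(p,r_n)+\dsym{d_X}(r_n,q_n)\leq 1/n+1/n=2/n$, so $q_n\to p$ in the subspace topology inherited from~$\dsym{d_X}$. The two topologies on~$Z$ therefore disagree, and the functor does not carry this equalizer in $\GMet$ to the equalizer in~$\Top$.

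In short: the product and coproduct parts of the proposition go through exactly as you argue, but the equalizer claim (and hence ``finite limits'' in full) is false as stated. Your instinct that additional hypotheses would be required was correct; the paper's one-line proof does not address this, and the statement should be read as covering finite products rather than all finite limits.
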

\begin{proof}
  It can be directly checked that the functor preserves binary products,
  equalizers, and small coproducts, as described in Section~\ref{sec:limits},
  see also~\cite{goubault2013non}.
\end{proof}

\begin{rem}
  \label{rem:gmet-top-colimits}
  The functor does not preserve coequalizers.
  Namely, if we consider the colimit of Example~\ref{ex:In}, the colimit as
  topological spaces has points $0$ and $1$ separated in~$I_\omega$, whereas
  $d_\omega(0,1)=0$ and thus the points are not separated in the topological
  space associated to the colimit of generalized metric spaces.
\end{rem}

\begin{rem}
  \label{rem:gmet-top-limits}
  The functor does not preserve infinite products. Namely, given a
  family~$(X_i)$ of generalized metric spaces, the underlying topology of the
  generalized metric space $\prod_iX_i$ is the box topology, which is finer than
  the product topology.
\end{rem}

\subsubsection{Directed paths}
In this section, we define the notion of directed path in a generalized metric
space.

\begin{defi}
  \label{def:path}
  A \emph{path} in a space~$X$ is a continuous function $\gamma:I\to X$. Notice
  that~$\gamma$ is not required to be nonexpansive, \ie it is a path in the
  underlying topological space. The point $x=\gamma(0)$ (\resp $y=\gamma(1)$) is
  called the \emph{source} (\resp \emph{target}) of the path. We often write
  $\gamma:x\pathto y$ to indicate that $\gamma$ is a path from~$x$ to~$y$.
  The \emph{length}~$\length\gamma$ of a path~$\gamma$ is defined by
  \[
  \length\gamma\qeq\bigvee_{n\in\N}\bigvee_{0=t_0<t_1<\ldots<t_n=1}\sum_id(\gamma(t_i),\gamma(t_{i+1}))
  \]
  A path~$\gamma$ is called \emph{directed} (or a \emph{dipath}, or a
  \emph{rectifiable} path) when its length is finite.
\end{defi}

\begin{exa}
  In the directed circle $\dcircle$ (see Example~\ref{ex:dcircle}), directed
  paths are those which are ``turning counter-clockwise'', \ie maps of the form
  $t\mapsto\ce^{\ci\theta(t)}$ where $\theta:I\to~[0,2\pi]$ is increasing
  modulo~$2\pi$.
\end{exa}

\begin{rem}
  By the triangle inequality, we always have
  $\length\gamma\geq d(\gamma(0),\gamma(1))$ for an arbitrary path~$\gamma$.
\end{rem}


\begin{rem}
  By Lemma~\ref{lem:gmet-mor-cont}, given a path $\gamma:I\to X$ and a morphism
  $f:X\to Y$, the morphism $f\circ\gamma:I\to Y$ is also a path. Since $f$ is
  nonexpansive, the path $f\circ\gamma$ is directed when~$\gamma$ is: morphisms
  preserve the direction of paths.
\end{rem}


\begin{rem}
  Since a (non-generalized) metric space is symmetric, it does not contain any
  information about a ``direction of time'' and it is thus natural to expect
  that every path is directed in this case. However, this is not the case
  because it is well known that, in general, a path is not necessarily
  rectifiable (for instance the well-known topologist's sine curve in~$\R^2$
  defined by $\gamma(0)=0$ and $\gamma(t)=t\sin(1/t)$ for $0<t\leq 1$). The
  definition of directed path given in Definition~\ref{def:path} is thus not
  completely satisfactory yet, and we leave the investigation of a more general
  notion for future works; we expect that in the case of geometric semantics of
  concurrent programs, the current and the right notions of directed paths
  coincide.
  Notice that many simple generalizations of the notion of directedness that one
  could think of in order to overcome this problem do not work. For instance,
  one could declare that a path~$\gamma$ is directed if we have
  $d(\gamma(t_1),\gamma(t_2))<\infty$, for every $t_1,t_2\in I$ with
  $t_1<t_2$. However, with this definition, every path of the directed unit
  circle (see Example~\ref{ex:dcircle}) would be directed.
\end{rem}

\noindent
Of course, the usual notions of homotopy and dihomotopy directly extend to our
setting:

\begin{defi}
  A \emph{homotopy} between two paths $\gamma,\rho:I\to X$ is a continuous
  function $h:I\times I\to X$ such that $h(0,-)=\gamma$ and $h(1,-)=\rho$. Such
  a homotopy is a \emph{dihomotopy} when $h(t,-)$ is a directed path for every
  $t\in I$.
\end{defi}

Intuitively, a morphism $\directed{[0,a]}\to X$ is the same as a rectifiable
path in~$X$. However, because the maps take distance in account (they are
nonexpansive), this is only true up to an expected equivalence relation on paths:
if~$a$ is too small, there is no possible parametrisation of the path.
A \emph{partial reparametrization} is a continuous non-decreasing map
$\theta:I\to I$ and a \emph{reparametrization} is a surjective partial
reparametrization. A \emph{trace} is an equivalence class of paths under the
relation identifying two paths~$f$ and~$g$ whenever there exists a
reparametrization~$\theta$ such that $g=f\circ\theta$. It can easily be shown
that length is well-defined on traces and a trace is \emph{rectifiable} when its
length is finite. The following proposition shows that in every trace containing
a rectifiable path, there is a canonical one which corresponds to a morphism
in~$\GMet$. Its proof is a direct generalization of the one in the classical
case~\bridson{Prop.~I.1.20}{12}. Given $t,t'\in\R$ with $t<t'$, we write
$\iota_{[t,t']}:I\to[t,t']$ for the function such that
$\iota_{[t,t']}(u)=t+(t'-t)u$.

\begin{lem}
  Suppose that~$(X,d)$ is a separated space and $\gamma:I\to X$ a directed path
  of length~$a=\length\gamma$. The function $\lambda:I\to[0,a]$ such that
  $\lambda(t)=\length{\gamma\circ\iota_{[0,t]}}$ is well-defined, continuous and
  non-decreasing, and there exists a unique morphism $\tilde\gamma:[0,a]\to X$
  such that $\tilde\gamma\circ\lambda=\gamma$ and
  $\length{\tilde\gamma\circ\iota_{[0,t]}}=t$ for every $t\in[0,a]$.
\end{lem}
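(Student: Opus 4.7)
The plan is to follow the classical arc-length reparametrization argument (as in Bridson--Haefliger, Proposition~I.1.20), adapted to generalized metric spaces.

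First I would establish additivity of length along subdivisions: for $0 \leq t \leq t' \leq 1$, a routine partition-refinement argument, together with the fact that any two partitions admit a common refinement, gives
\[
\length{\gamma\circ\iota_{[0,t']}} \qeq \length{\gamma\circ\iota_{[0,t]}} + \length{\gamma\circ\iota_{[t,t']}}.
\]
This immediately shows that $\lambda$ is well-defined with $0 \leq \lambda(t) \leq a < \infty$ and non-decreasing, with $\lambda(0)=0$ and $\lambda(1)=a$. For continuity at $t_0 \in I$, given $\varepsilon > 0$ I would choose a partition of $I$ that already contains $t_0$ whose associated sum of distances exceeds $a - \varepsilon/2$; the additivity above then forces the length of $\gamma$ on each piece of this partition to be an upper bound for the contribution of any subdivision of that piece, up to $\varepsilon/2$. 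Combined with continuity of $\gamma$ into the underlying topology (so that $d(\gamma(t_0),\gamma(t))$ is small when $t$ is close to $t_0$), this yields $|\lambda(t) - \lambda(t_0)| < \varepsilon$ on a neighborhood of~$t_0$.

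Next I would define $\tilde\gamma$. Since $\lambda$ is continuous and non-decreasing from $0$ to $a$, its image is all of $[0,a]$. For $s \in [0,a]$, pick any $t \in \lambda^{-1}(s)$ and set $\tilde\gamma(s) = \gamma(t)$. Well-definedness is the crux: if $\lambda(t) = \lambda(t')$ for $t < t'$ then by additivity $\length{\gamma\circ\iota_{[t,t']}} = 0$, whence $d(\gamma(s),\gamma(s')) = 0$ for all $s \leq s'$ in $[t,t']$. Using separation of~$X$ together with the continuity of $\gamma$, I would argue that the set $\{s \in [t,t'] : \gamma(s) = \gamma(t)\}$ is clopen in $[t,t']$, hence all of it, and conclude $\gamma(t) = \gamma(t')$. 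Nonexpansiveness of $\tilde\gamma$ (into $X$, viewing $[0,a]$ as a directed interval in the obvious way) then follows from the estimate
\[
d(\tilde\gamma(s),\tilde\gamma(s')) = d(\gamma(t),\gamma(t')) \leq \length{\gamma\circ\iota_{[t,t']}} = \lambda(t')-\lambda(t) = s' - s
\]
for $s \leq s'$ with chosen preimages $t \leq t'$. The identity $\length{\tilde\gamma\circ\iota_{[0,t]}} = t$ is obtained by combining the upper bound coming from nonexpansiveness with the lower bound coming from $\gamma = \tilde\gamma\circ\lambda$ and the definition of $\lambda$. Uniqueness of~$\tilde\gamma$ is immediate from $\tilde\gamma \circ \lambda = \gamma$ since $\lambda$ is surjective onto $[0,a]$.

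The main obstacle is the well-definedness step. Vanishing of the forward distance $d(\gamma(s),\gamma(s'))$ coming from length zero only supplies one half of the hypothesis of the separation axiom, so one cannot conclude $\gamma(s) = \gamma(s')$ pointwise. The argument must genuinely exploit the continuity of $\gamma$ into the symmetric underlying topology induced by $\dsym d$, via a connectedness argument on $[t,t']$, to bridge the gap between the purely directed information (length zero in one direction) and the symmetric identification required for separation to take effect.
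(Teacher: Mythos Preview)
Your plan is exactly what the paper does: it gives no argument of its own and simply cites Bridson--Haefliger, Prop.~I.1.20, asserting that the classical arc-length reparametrization generalizes directly. You go further than the paper by isolating the real obstacle in the non-symmetric setting.

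That obstacle is not resolved by the clopen argument you sketch, and in fact cannot be. Separation of $(X,d)$ does \emph{not} force singletons to be closed in the $\dsym d$-topology, so there is no reason for $\{s:\gamma(s)=\gamma(t)\}$ to be closed. Concretely, take $X=\{a,b\}$ with $d(a,b)=0$ and $d(b,a)=1$: this space is separated (the hypothesis $d(x,y)=d(y,x)=0$ never holds for $x\neq y$), yet $\dsym d\equiv 0$ and the underlying topology is indiscrete. The step map $\gamma$ sending $[0,\tfrac12)$ to $a$ and $[\tfrac12,1]$ to $b$ is then continuous, has length~$0$, and is not constant, so no $\tilde\gamma:[0,0]\to X$ can satisfy $\tilde\gamma\circ\lambda=\gamma$. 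Exchanging the two values of $d$ gives a continuous directed path of length~$1$ whose $\lambda$ jumps from $0$ to $1$ at $\tfrac12$, so the continuity claim for $\lambda$ also fails. In both places you implicitly use that $d(\gamma(t_0),\gamma(t))\to 0$ as $t\to t_0$, but continuity of $\gamma$ only gives this for $\dsym d$, not for $d$. The lemma (and the paper's deferral to Bridson--Haefliger) therefore seems to need an extra hypothesis---for instance that $d(x,y)=0$ implies $d(y,x)=0$, a condition the paper itself singles out when discussing separated quotients---under which your outline goes through without difficulty.
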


\begin{rem}
  In order to see why we need the separation hypothesis, consider a path in a
  space~$X$ equipped with the constant distance~$d_0$ equal to~$0$. Any path
  in~$X$ has length~$0$ and the lemma is clearly wrong.
\end{rem}

\noindent
As a direct corollary of previous lemma, we have:

\begin{prop}
  \label{prop:traces-maps}
  Given a separated space~$X$, rectifiable traces are in bijection with maps
  $\directed{[0,a]}\to X$ sending distance to length, with $a\geq 0$.
\end{prop}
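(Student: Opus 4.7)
The plan is to define inverse bijections explicitly, using the preceding lemma as the main engine. Given a rectifiable trace $[\gamma]$ with representative $\gamma:I\to X$, the length $a=\length\gamma$ depends only on the trace, so I set $\Phi([\gamma])=\tilde\gamma:\directed{[0,a]}\to X$, the map supplied by the previous lemma. In the opposite direction, given $f:\directed{[0,a]}\to X$ sending distance to length, I would let $\Psi(f)$ be the trace of $f\circ\iota_{[0,a]}:I\to X$; this is rectifiable since $\length{f\circ\iota_{[0,a]}}=\length f=a$.

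For $\Phi$ to be well-defined, two things need to be checked. First, that $\tilde\gamma$ sends distance to length, which follows because the lemma gives $\length{\tilde\gamma\circ\iota_{[0,t]}}=t$ and hence, by additivity of path length on subintervals, $\length{\tilde\gamma\circ\iota_{[s,t]}}=t-s$ for all $0\leq s\leq t\leq a$. Second, that the construction is independent of the chosen representative: if $\gamma'=\gamma\circ\theta$ for some reparametrization $\theta$, then $\length{\gamma'}=a$ as well, and both $\tilde\gamma$ and $\tilde{\gamma'}$ satisfy the characterising equations of the lemma when plugged into the data associated with $\gamma'$, so the uniqueness clause forces $\tilde{\gamma'}=\tilde\gamma$.

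For $\Phi\circ\Psi=\mathrm{id}$, given $f$ I would set $\gamma=f\circ\iota_{[0,a]}$ and compute $\lambda(t)=\length{\gamma\circ\iota_{[0,t]}}=\length{f\circ\iota_{[0,at]}}=at=\iota_{[0,a]}(t)$, using that $f$ sends distance to length. Then $\tilde\gamma\circ\lambda=\gamma=f\circ\iota_{[0,a]}$ reduces to $\tilde\gamma=f$ after cancelling the homeomorphism $\iota_{[0,a]}:I\to[0,a]$, and the length condition $\length{f\circ\iota_{[0,t]}}=t$ holds by hypothesis, so the uniqueness part of the lemma yields $\tilde\gamma=f$. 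For $\Psi\circ\Phi=\mathrm{id}$, starting from $[\gamma]$ I would exhibit $\theta=\iota_{[0,a]}^{-1}\circ\lambda:I\to I$ as an honest reparametrization (it is continuous, non-decreasing and satisfies $\theta(0)=0$, $\theta(1)=1$, hence surjective by the intermediate value theorem), and then $\gamma=\tilde\gamma\circ\lambda=(\tilde\gamma\circ\iota_{[0,a]})\circ\theta$, placing $\gamma$ and $\tilde\gamma\circ\iota_{[0,a]}$ in the same trace.

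I expect the main obstacle to be purely bookkeeping: the preceding lemma does the substantive work, and what remains is to chase the definitions through the reparametrization $\iota_{[0,a]}$ that converts paths on $I$ into morphisms out of $\directed{[0,a]}$, while carefully invoking the uniqueness clause of the lemma at each step where two arc-length parametrisations need to be identified.
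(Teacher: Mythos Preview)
Your proposal is correct and matches the paper's approach: the paper states the proposition as a direct corollary of the preceding lemma without giving further details, and your argument is precisely the expected unpacking of that corollary, using the lemma to produce the arc-length parametrisation $\tilde\gamma$ and its uniqueness clause to verify the bijection. The only minor wrinkle is the degenerate case $a=0$, where $\iota_{[0,a]}$ is not a homeomorphism; this is harmless since both sides then reduce to a single point, but you may wish to mention it explicitly.
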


Finally, we briefly mention here that we can recover traditional notions of
directed spaces from generalized metric spaces.
The notion of d-space has emerged as the ``standard'' model for topological
spaces equipped with a notion of time direction~\cite{grandis2009directed}. It
consists in a space together with the specification of which paths are to be
considered as directed.

\begin{defi}
  A \emph{d-space} $(X,dX)$ consists of a topological space~$X$ together with a
  set~$dX\subseteq X^I$ of paths in~$X$, whose elements are called
  \emph{d-paths}, which contains all constant paths and is closed under
  concatenation and reparametrization. We write $\dTop$ for the category of
  d-spaces with continuous functions preserving d-paths as morphisms.
\end{defi}

\begin{prop}
  The operation which to a metric space associates its underlying topological
  space together with the set of directed paths defines a functor
  $\GMet\to\dTop$.
\end{prop}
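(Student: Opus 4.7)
The plan is to unwind the definition of d-space and verify the three axioms on objects, then check preservation of d-paths on morphisms, with functoriality on underlying maps being immediate.

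On objects, given $(X,d)\in\GMet$, let $dX$ be the set of directed paths in the sense of Definition~\ref{def:path}. I need to check that $(X,dX)$ is a d-space, i.e.\ that $dX$ contains constant paths and is closed under concatenation and reparametrization. Constant paths $\gamma \equiv x$ have length $0 = d(x,x)$, hence are directed. For concatenation, if $\gamma,\rho:I\to X$ are directed with $\gamma(1)=\rho(0)$ and $\gamma\cdot\rho$ is their standard concatenation (at parameter $1/2$), any partition $0=t_0<\ldots<t_n=1$ of $I$ can be refined to one containing $1/2$ without decreasing the telescoping sum (by the triangle inequality applied to $d(\gamma\cdot\rho(t_i),\gamma\cdot\rho(t_{i+1}))$), and the refined sum splits as a sum over a partition of $[0,1/2]$ plus one over $[1/2,1]$; taking suprema gives $\length{\gamma\cdot\rho}=\length\gamma+\length\rho<\infty$. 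For reparametrization, if $\theta:I\to I$ is continuous and non-decreasing and $\gamma$ is directed, then any partition $(t_i)$ pulls via $\theta$ to a non-decreasing (possibly non-strict) sequence $(\theta(t_i))$ whose telescoping sum for $\gamma$ equals the one for $\gamma\circ\theta$; hence $\length{\gamma\circ\theta}\leq\length\gamma<\infty$.

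On morphisms, let $f:X\to Y$ be a morphism in $\GMet$, i.e.\ nonexpansive. By Lemma~\ref{lem:gmet-mor-cont}, $f$ is continuous with respect to the underlying topologies, so it takes continuous $I$-indexed paths to continuous $I$-indexed paths. Moreover, for any partition $0=t_0<\ldots<t_n=1$ and any $\gamma:I\to X$, nonexpansiveness gives
\[
\sum_{i} d_Y(f\gamma(t_i),f\gamma(t_{i+1})) \;\leq\; \sum_{i} d_X(\gamma(t_i),\gamma(t_{i+1})),
\]
whence $\length{f\circ\gamma}\leq\length\gamma$. So if $\gamma\in dX$ then $f\circ\gamma\in dY$, and $f$ is a morphism of d-spaces.

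Finally, functoriality (preservation of identities and composition) is automatic because on underlying sets and underlying continuous maps the operation is the composite of two already-functorial assignments: the forgetful $\GMet\to\Top$ of Section~\ref{sec:gmet-top} on the topological part, and the identity-on-carriers assignment of the directed-path set. The only genuinely substantive point is the concatenation check, since length is a supremum over partitions rather than a sum over a single partition; the splicing-point refinement argument above handles this cleanly.
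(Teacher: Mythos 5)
The paper states this proposition without proof, so there is no argument to compare against; your verification is the natural one and it is correct. You correctly check all three closure axioms on $dX$ (constant paths directed since length~$0$; lengths additive under concatenation via the refine-to-include-the-splicing-point argument, which works because the triangle inequality means adding a point to a partition never decreases the telescoping sum; lengths non-increasing under precomposition by a non-decreasing continuous map), you invoke Lemma~\ref{lem:gmet-mor-cont} for continuity and nonexpansiveness for the length inequality on morphisms, and functoriality on underlying data is indeed immediate. One small point worth making explicit if you wanted this fully airtight: the concatenation $\gamma\cdot\rho$ and the precomposite $\gamma\circ\theta$ are continuous (by the pasting lemma and composition of continuous maps, respectively), so they are in fact paths before one asks whether they are directed; you use this implicitly but it costs nothing to say.
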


\noindent
A metric space is \emph{acyclic} when every path $f:x\pathto x$ with the same
source and target is constant. To such a space, we can associate the following.

\begin{defi}
  \label{def:pospace}
  A \emph{pospace} $(X,\leq)$ consists of a topological space~$X$ equipped with
  a partial order~$\leq$. We write $\POSpace$ for the category of pospace and
  non-decreasing continuous maps.
\end{defi}

\begin{rem}
  Some authors impose additional restrictions on pospaces such as the fact
  that~$\leq$ is a closed subset of~$X\times X$, or at least that the limit of
  an increasing sequence of points is its supremum.
\end{rem}

\begin{prop}
  The operation which to a metric space~$(X,d)$ associates its underlying
  topological space~$X$, equipped with the partial order~$\leq$ such that
  $x\leq y$ if and only if $d(x,y)<\infty$ for every $x,y\in X$, extends to a
  functor from the category of acyclic generalized metric spaces to the category
  of pospaces.
\end{prop}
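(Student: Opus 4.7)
The goal is to verify two things: first, that the binary relation $\leq$ on $X$ defined by $x\leq y$ iff $d(x,y)<\infty$ is actually a partial order whenever $(X,d)$ is acyclic; second, that every morphism of $\GMet$ between acyclic spaces lifts, under this assignment, to a morphism of pospaces.

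For the first point, reflexivity of $\leq$ is immediate since $d(x,x)=0<\infty$, and transitivity is a direct consequence of the triangle inequality: if $d(x,y)<\infty$ and $d(y,z)<\infty$, then $d(x,z)\leq d(x,y)+d(y,z)<\infty$. The delicate property is antisymmetry: assuming $d(x,y)<\infty$ and $d(y,x)<\infty$, I must conclude $x=y$. The idea is to interpret these finite distances as witnessing the existence of directed paths $\gamma\colon x\pathto y$ and $\gamma'\colon y\pathto x$ (in the separated case one can invoke Proposition~\ref{prop:traces-maps} to produce actual rectifiable representatives of bounded total length), concatenate them into a directed loop at $x$, and observe that this loop must be constant by acyclicity, whence $x=y$.

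For the second point, let $f\colon(X,d_X)\to(Y,d_Y)$ be a morphism in $\GMet$ between acyclic spaces and suppose $x\leq_X y$, i.e.\ $d_X(x,y)<\infty$. Since $f$ is nonexpansive by definition, $d_Y(f(x),f(y))\leq d_X(x,y)<\infty$, so $f(x)\leq_Y f(y)$; hence $f$ is non-decreasing. Continuity of $f$ between the underlying topological spaces is already Lemma~\ref{lem:gmet-mor-cont}. Functoriality (preservation of identities and of composition) is trivial, since at the level of underlying sets both the $\GMet$- and the $\POSpace$-structure are preserved by the same set-theoretic functions.

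The main obstacle is the antisymmetry step: one must bridge the gap between the analytic hypothesis of finite distance in both directions and the topological hypothesis that acyclicity constrains, namely absence of nonconstant loops. In the length-like spaces that actually appear in the paper --- notably the geometric realizations of precubical sets of Section~\ref{sec:pcs-greal} --- a finite distance is always realized by a rectifiable directed path, so the concatenation argument goes through verbatim; in full generality one should take this bridging as part of what ``acyclic'' is intended to supply.
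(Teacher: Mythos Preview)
The paper states this proposition without proof, so there is no argument to compare against directly; one can only assess your proposal on its own merits. Your treatment of reflexivity and transitivity of~$\leq$, and of monotonicity and continuity of morphisms, is correct and is exactly what one would write.

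The antisymmetry step, however, has a genuine gap, and you already half-acknowledge it. You attempt to pass from $d(x,y)<\infty$ and $d(y,x)<\infty$ to actual directed paths $x\pathto y$ and $y\pathto x$, concatenate, and invoke acyclicity. But nothing in the general setting guarantees that finite distance produces a path: these spaces are not assumed to be length spaces, and Proposition~\ref{prop:traces-maps} only puts \emph{existing} rectifiable traces in bijection with certain morphisms --- it does not create a path from a finite number.

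In fact the statement as written appears to fail: take $X=\{x,y\}$ with $d(x,y)=d(y,x)=1$. The metric is symmetric, so the induced topology is generated by balls of radius $<1$, which are singletons; hence $X$ is discrete, every continuous map $I\to X$ is constant (as $I$ is connected), and $X$ is vacuously acyclic. Yet $x\leq y$ and $y\leq x$ with $x\neq y$, so $\leq$ is not a partial order. Your final paragraph senses exactly this tension, but absorbing the missing implication into the word ``acyclic'' is not a proof. A correct version would either relax the target to preordered spaces, or add a hypothesis (such as being a length space, which does cover the geometric realizations of Section~\ref{sec:pcs-greal}) under which finite distance genuinely yields a directed path and your concatenation argument goes through.
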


\subsubsection{Geodesic and length spaces}
\label{sec:length-spaces}
In this section, we briefly turn our attention to length spaces, since all
geodesic spaces are such, and the CAT(0) condition we are interested in is
formulated on such spaces, by considering the ``size'' of triangles whose sides
are geodesics.

\begin{defi}
  \label{def:intrinsic}
  Given a space $(X,d)$, the \emph{intrinsic metric} $\dint{d}$ is defined by
  \[
  \dint{d}(x,y)
  \qeq
  \bigwedge_{\gamma:x\pathto y}\length\gamma
  \]
  A space equipped with its intrinsic metric is called a \emph{length space}.
\end{defi}

\begin{exa}
  Consider the circle
  $\ndcircle=\setof{\ce^{\ci 2\pi\theta}}{\theta\in\R}\subseteq\Cplx$ equipped
  with the distance induced by the euclidian distance on~$\Cplx$, \ie
  $d(x+\ci y, x'+\ci y')=\sqrt{(x'-x)^2+(y'-y)^2}$. This space is not a length
  space. The associated intrinsic metric is
  \[
  \dint{d}(\ce^{\ci\theta},\ce^{\ci\rho})
  \qeq
  \bigwedge\setof{\rho'-\theta',\theta'-\rho'}{\theta'=\theta, \rho'=\rho\mod 2\pi}
  \]
\end{exa}


\begin{defi}
  \label{def:geodesic}
  A directed path $\gamma:I\to X$ is a \emph{geodesic} when for every $t<t'$
  in~$I$ we have $d(\gamma(t),\gamma(t'))=\lambda(t'-t)$ with
  $\lambda=d(\gamma(0),\gamma(1))$. A space is \emph{geodesic} (\resp
  \emph{uniquely geodesic}) when between any two points at finite distance there
  exists a geodesic (\resp a unique geodesic).
\end{defi}


\begin{rem}
  Every geodesic space is a length space, but the converse is not true in
  general (for instance, the Hopf-Rinow theorem~\cite{hopf-rinow} provides
  sufficient conditions on spaces so that a length space satisfying those
  conditions is geodesic).
\end{rem}

\begin{exa}
  The space $\R^2$ equipped with the euclidian distance is geodesic. The
  subspace $\R^2\setminus\set{(0,0)}$ is a length space but is not geodesic
  since there is no path of length~$2=d(x,y)$ from $x=(0,-1)$ to $y=(0,1)$
  (however, there exists a path of length $2+\varepsilon$ from $x$ to $y$ for
  arbitrarily small $\varepsilon>0$).
\end{exa}

\noindent
In the case of a geodesic space, Proposition~\ref{prop:traces-maps} can be
reformulated as follows.

\begin{prop}
  Given a separated space~$X$, geodesic paths are in bijection with isometries
  $\directed{[0,a]}\to X$, with $a\geq 0$.
\end{prop}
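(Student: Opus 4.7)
The plan is to obtain the bijection as a direct specialization of Proposition~\ref{prop:traces-maps} (and the preceding lemma): geodesic paths are exactly the distance-to-length maps whose length coincides with the distance between endpoints, and such a coincidence is precisely the isometry property for a morphism from a directed interval.

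I would first go from a geodesic $\gamma : I \to X$ (with $\lambda = d(\gamma(0), \gamma(1))$) to an isometry. A routine computation on any partition $0 = t_0 < \ldots < t_n = 1$ gives $\sum_i d(\gamma(t_i), \gamma(t_{i+1})) = \sum_i \lambda(t_{i+1}-t_i) = \lambda$, so $\gamma$ is rectifiable with $\length\gamma = \lambda$, and the map $\lambda(t) = \length{\gamma \circ \iota_{[0,t]}}$ of the preceding lemma is simply $t \mapsto \lambda t$. The lemma then furnishes a unique morphism $\tilde\gamma : \directed{[0,\lambda]} \to X$ with $\tilde\gamma(\lambda t) = \gamma(t)$. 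For $0 \le s < s' \le \lambda$ one computes $d_X(\tilde\gamma(s), \tilde\gamma(s')) = d_X(\gamma(s/\lambda), \gamma(s'/\lambda)) = \lambda(s'/\lambda - s/\lambda) = s' - s$, so $\tilde\gamma$ is an isometry from $\directed{[0,\lambda]}$ to~$X$.

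Conversely, given an isometry $f : \directed{[0,a]} \to X$, define $\gamma : I \to X$ by $\gamma(t) = f(at)$. Continuity is clear, and for $t < t'$ in~$I$ the isometry condition gives $d_X(\gamma(t), \gamma(t')) = d_{\directed{[0,a]}}(at, at') = a(t' - t)$; in particular $\lambda = d_X(\gamma(0), \gamma(1)) = a$ and the geodesic condition holds by construction. The two constructions are then easily seen to be mutually inverse: in one direction this is immediate from $\tilde\gamma(\lambda t) = \gamma(t)$, and in the other it follows from the uniqueness clause of the preceding lemma applied to $f$, since $f$ itself satisfies $f \circ (t \mapsto at) = \gamma$ and $\length{f \circ \iota_{[0,s]}} = s$.

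The only mild obstacle I foresee is the degenerate case $\lambda = 0$ (equivalently $a = 0$): here the preceding lemma yields a constant map $\tilde\gamma : \directed{[0,0]} \to X$, so the bijection forces any geodesic with $\lambda = 0$ to be constant as a function of~$t$, which one must justify. This is where separation of~$X$ enters in a nontrivial way, but in the geodesic setting the condition $d(\gamma(t), \gamma(t')) = 0$ for \emph{every} $t < t'$, together with continuity in the symmetrized metric, readily forces $\gamma$ to be constant; so the edge case is subsumed by the general argument.
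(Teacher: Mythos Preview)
Your approach is exactly what the paper does: the proposition is stated there without proof, introduced merely as a reformulation of Proposition~\ref{prop:traces-maps} in the geodesic setting, and your argument spells this out via the preceding lemma.

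Two points deserve care. First, you verify $d_X(\tilde\gamma(s),\tilde\gamma(s')) = s'-s$ only for $s<s'$, whereas the paper's definition of isometry requires $d_X(\tilde\gamma(s),\tilde\gamma(s')) = d_{\directed{[0,a]}}(s,s')$ for \emph{all} pairs, hence $=\infty$ whenever $s>s'$; the geodesic condition says nothing about backward distances, and in any symmetric space (e.g.\ $\R$) this backward requirement fails outright. So the statement must be read with ``isometry'' meaning ``forward-distance preserving'' --- which is what you actually check --- and you should say so rather than silently verify half the condition. Second, your resolution of the degenerate case $\lambda=0$ is not correct: take $X=\{p,q\}$ with $d(p,q)=0$ and $d(q,p)=1$. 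This space is separated, the symmetrized distance vanishes so the induced topology is indiscrete, and the step function jumping from~$p$ to~$q$ at $t=\tfrac12$ is a continuous non-constant path satisfying $d(\gamma(t),\gamma(t'))=0$ for all $t<t'$; continuity plus separation therefore do not force constancy. If you simply defer to the preceding lemma (as the paper does), its conclusion $\tilde\gamma\circ\lambda=\gamma$ with $\lambda\equiv 0$ would force $\gamma$ constant and your overall argument goes through --- it is only your independent justification that fails.
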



\begin{lem}
  \label{lem:length-colimit}
  The subcategory of length spaces is closed under colimits.
\end{lem}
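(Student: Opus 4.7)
The plan is to reduce the claim to the two basic colimit constructions, coproducts and coequalizers, by the standard fact that every colimit in $\GMet$ can be built from these two (see Section~\ref{sec:limits}). Thus, given a diagram $D\colon J\to\GMet$ of length spaces with colimit $X$ in $\GMet$, it suffices to show that if coproducts of length spaces and coequalizers between length spaces are length spaces, then so is~$X$.

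For coproducts, if $(X_i,d_i)_{i\in I}$ is a family of length spaces, then by Lemma~\ref{lem:gmet-prod} the distance on $\coprod_i X_i$ is $d_i$ within each summand and~$\infty$ between distinct summands. The symmetrized distance~$\dsym{d}$ is then also~$\infty$ between distinct summands, so each summand is clopen in the underlying topology. Since~$I$ is connected, every path $\gamma\colon I\to\coprod_i X_i$ factors through a single~$X_i$, and its length computed in the coproduct coincides with its length computed in~$X_i$. Taking infima over paths with fixed endpoints $x,y\in X_i$ thus gives $\dint{d}(x,y)=\dint{d_i}(x,y)=d_i(x,y)=d(x,y)$, so the coproduct is a length space.

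For coequalizers, suppose $f,g\colon W\to Y$ with $Y$ a length space, and let $X=Y/{\approx}$ be the coequalizer, with distance $d_X$ as described in Lemma~\ref{lem:gmet-coeq}. Since the intrinsic metric always dominates the given metric, it is enough to show that $\dint{d_X}(x,y)\leq d_X(x,y)$ for all $x,y\in X$. Fix $\varepsilon>0$. By definition of $d_X$, there is a chain $x=x_0,y_0,x_1,y_1,\ldots,x_n,y_n=y$ in~$Y$ with $y_i\approx x_{i+1}$ and
\[
\sum_{i=0}^{n} d_Y(x_i,y_i)\quad\leq\quad d_X(x,y)+\tfrac{\varepsilon}{2}.
\]
Because $Y$ is a length space, for each $i$ we may choose a path $\gamma_i\colon x_i\pathto y_i$ in~$Y$ with $\length_Y(\gamma_i)\leq d_Y(x_i,y_i)+\varepsilon/(2n)$. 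The composite $Y\to X$ is continuous, so each $\gamma_i$ yields a path in~$X$; since $y_i$ and $x_{i+1}$ are identified in~$X$, these paths concatenate into a single path $\gamma\colon x\pathto y$ in~$X$. By Remark~\ref{rem:coeq-d}, $d_X\leq d_Y$, so each $\length_X(\gamma_i)\leq\length_Y(\gamma_i)$, whence
\[
\length_X(\gamma)\quad\leq\quad\sum_{i=0}^{n}\length_Y(\gamma_i)\quad\leq\quad d_X(x,y)+\varepsilon.
\]
Letting $\varepsilon\to 0$ yields $\dint{d_X}(x,y)\leq d_X(x,y)$, as required.

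The only delicate point is verifying that concatenation in the quotient is legitimate: one must check that the individual paths $\gamma_i$, viewed through the quotient map $Y\to X$, agree at the gluing points (they do, by construction, since $y_i\approx x_{i+1}$) and that the result is continuous (it is, as a finite concatenation of continuous paths with matching endpoints). Everything else is bookkeeping with the triangle inequality and the definition of length.
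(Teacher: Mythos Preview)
Your proof is correct and follows the same strategy as the paper's: reduce to coproducts and coequalizers and verify each case directly (the paper only cites \bridson{Lem.~I.5.20}{65} and states that these two cases can be checked, whereas you actually carry out the verification). The only cosmetic slip is the per-term error $\varepsilon/(2n)$ with indices running from $0$ to $n$, which gives $n+1$ terms; replace it by $\varepsilon/(2(n+1))$ to make the bookkeeping exact.
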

\begin{proof}
  See~\bridson{Lem.~I.5.20}{65}. It can be checked directly that the property of
  being a length space is preserved under taking coproducts and coequalizers.
\end{proof}

\subsection{Geometric realization of precubical sets}
\label{sec:pcs-greal}
We now consider two ways of associating a generalized metric space to a
precubical set, and study the properties of the resulting space. Since here we
are mainly concerned about precubical sets arising as the cubical semantics of
concurrent programs, we study mainly finite geometric precubical sets, but try
to provide more general hypothesis when these generalizations are easy to
make. Part of the current section is a reformulation -- and generalization -- in
the categorical language of properties studied in the framework of polyhedral
complexes~\cite{bridson2009metric}. The two main results of this section are
that metric and topological geometric realization provide comparable
constructions (Theorem~\ref{thm:geom-real-gmet-top}) and that realization of
\NPC precubical sets are CAT(0) cubical complexes (Theorem~\ref{thm:pcs-cat0}).

\subsubsection{Geometric realization}

Recall from Example~\ref{ex:ui} that we write $I=[0,1]$ for the \emph{standard
  interval}, and we write~$I^n$ for the \emph{standard (topological) $n$-cube}
obtained as the product of $n$ copies of $I$. We write $\delta^-=0$ and
$\delta^+=1$. Given $i$ with $0\leq i<n$, and $\epsilon=-$ (\resp $\epsilon=+$),
the set of points in $I^n$ whose $i$-th coordinate is $\delta^\epsilon$ is
isomorphic to $I^{n-1}$ and called the $i$-th back (\resp front) \emph{face} of
the cube. Now, consider the functor $I:\pcc\to\GMet$ such that the image of an
object $n\in\pcc$ is $I^n$, and the images of
morphisms~$\varepsilon^\epsilon_{i,n}$ are the morphisms
$I(\varepsilon^\epsilon_{i,n}):I^{n}\to I^{n+1}$ such that
\[
I(\varepsilon^\epsilon_{i,n})(x_0,\ldots,x_{n-1})
\qeq
(x_0,\ldots,x_{i-1},\delta^\epsilon,x_i,\ldots,x_n)
\]
\ie they are the inclusions of the faces of
a cube into the cube. A topological space can be obtained from a precubical
set~$C$ by taking a topological $n$-cube for each element of $C(n)$ an gluing
them according to faces.

\begin{defi}
  \label{def:greal}
  The \textbf{geometric realization} functor $\greal{-}:\hat\pcc\to\GMet$ is the
  functor obtained as the left Kan extension of $I:\pcc\to\GMet$ along the
  Yoneda embedding $y:\pcc\to\hat\pcc$. More explicitly, given a cubical set
  $C$, its geometric realization is the space
  \[
  \greal{C}
  \qeq
  \pa{\bigsqcup_{n\in\N}I^n\times C_n}/\approx
  \]
  where $\approx$ is the smallest equivalence relation such that
  $(x,c)\approx(y,d)$ whenever $d=\partial^\epsilon_{i,n}(c)$ for some indices
  $\epsilon,i,n$ and $I(\varepsilon^\epsilon_{i,n})(y)=x$.
\end{defi}

\begin{rem}
  One can define similarly a functor $I:\pcc\to\Top$ (by post-composing previous
  functor with the forgetful functor $\GMet\to\Top$), which induces, by left Kan
  extension along the Yoneda embedding $y:\pcc\to\hat\pcc$, a
  \textbf{topological geometric realization} functor
  $\greal-:\hat\pcc\to\Top$. Unless we add the adjective ``topological'',
  ``geometric realization'' will always be meant in generalized metric spaces.
\end{rem}

\noindent
By Proposition~\ref{prop:free-s-metric}, the forgetful functor $\SGMet\to\GMet$
preserves colimits, and therefore geometric realization commutes with it. And
similarly, it commutes with the symmetrization functor $\GMet\to\SGMet$, which
is a left adjoint. In the following, we will mainly focus on symmetric spaces
for simplicity.

For every $n$-cube $c\in C(n)$ there is a canonical morphism of metric spaces
$\iota_c:I^n\to\greal{C}$ such that the image of $x\in I^n$ is the equivalence
class of $(x,c)$. Formally, these morphisms can be obtained as the cocone arrows
of the colimit defining the geometric realization. They allow one to see the
cube $I^n$ as a ``subspace'' of $\greal{C}$. Notice that there is no a priori
reason why these morphisms should be isometries; we will however see below that,
they are isometries ``locally''. We say a cube~$c$ \emph{contains} a
point~$x\in|C|$ when $x$ is in the image of~$\iota_c$. When the precubical
set~$C$ has no self-intersection (in particular when~$C$ is geometric) the
function $\iota_c$ is injective, and we write $\iota_c^{-1}$ for its partial
inverse.  In this case, following~\bridson{Def.~7.8}{100}, given a point
$x\in\greal{C}$ and an $n$-cube $c\in C(n)$ such that $x$ occurs in the image of
$\iota_c$, we write
\[
\varepsilon(x,c)
\qeq
\bigwedge\setof{d_{I^n}(\iota^{-1}_c(x),K)}{\text{$K$ is a face of $I^n$ not containing~$x$}}
\]
with, by convention, $\varepsilon(x,c)=\infty$ when $c$ is a $0$-cube (above,
the distance between $\iota^{-1}_c(x)$ and $K$ is taken to be the infimum of
distances between $\iota^{-1}_c(x)$ and some point of $K$). We then define
\[
\varepsilon(x)
\qeq
\bigwedge\setof{\varepsilon(x,c)}{\text{$n\in\N$ and $c\in C(n)$ is such that $x$ belongs to the image of $\iota_{c}$}}
\]
It is easy to show that $\varepsilon(x)>0$ for any point~$x\in\greal{C}$,
see~\bridson{7.33}{112}. This constant, called the \emph{escape distance}
from~$x$, intuitively reflects the fact that a path (or a chain) starting
from~$x$ and going outside a simplex containing~$x$ will at least be of length
$\varepsilon(x)$. Otherwise said, the distance from~$x$ to a point~$y$
``near~$x$'' (\ie at distance less than $\varepsilon(x)$) will be the same
whether we consider the distance within the face or in the whole complex:

\begin{lem}
  \label{lem:local-distance}
  Suppose given a precubical set~$C$ with no self-intersection, and a point~$y$
  such that $d_{|C|}(x,y)<\varepsilon(x)$. Then any cube~$c$ which contains~$y$
  also contains~$x$ and we have
  $d_{\greal{C}}(x,y)=d_{I^n}(\iota_c^{-1}(x),\iota_c^{-1}(y))$.
\end{lem}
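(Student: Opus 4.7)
The proof hinges on a ``chain lemma'' controlling the chain-based distance formula (Lemma~\ref{lem:gmet-coeq}), suitably adapted to the full colimit defining $\greal{C}$. Recall that $d_{\greal{C}}(x,y)$ is the infimum, over chains $x = z_0, z_0', z_1, z_1', \ldots, z_k, z_k' = y$ with each pair $(z_j, z_j')$ lying in a single $I^{n_j} \times \set{c_j}$ and $z_j' \approx z_{j+1}$ via a face identification, of the total length $\sum_j d_{I^{n_j}}(z_j, z_j')$.

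I would first establish the following chain lemma: if such a chain from $x$ has cumulative length $\ell_k < \varepsilon(x)$, then each visited cube $c_j$ contains $x$ as an iterated face, and the point $z_j'$ lies in some face of $I^{n_j}$ that contains $\iota_{c_j}^{-1}(x)$. The proof is by induction on $j$. For the inductive step, the hop $z_j \to z_j'$ starts from $z_j \approx z_{j-1}'$; the no-self-intersection hypothesis ensures the face identification to $c_j$ is unique, placing $z_j$ in a face of $I^{n_j}$ containing $\iota_{c_j}^{-1}(x)$. Since $\ell_{j-1} + d(z_j, z_j') < \varepsilon(x) \leq \varepsilon(x, c_j)$, the definition of $\varepsilon(x, c_j)$ prevents $z_j'$ from escaping any face of $I^{n_j}$ that contains $\iota_{c_j}^{-1}(x)$, completing the induction.

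For the first assertion of the lemma, given any cube $c$ containing $y$, I would pick a chain from $x$ to $y$ of length strictly between $d_{\greal{C}}(x,y)$ and $\varepsilon(x)$, and append a zero-length hop $y \to y$ inside $c$ (legitimate because $y$ admits a representative in $I^n \times \set{c}$ via $\iota_c$). The extended chain still has length $<\varepsilon(x)$, so the chain lemma forces $c$ itself to contain $x$. For the distance equality, the inequality $d_{\greal{C}}(x,y) \leq d_{I^n}(\iota_c^{-1}(x), \iota_c^{-1}(y))$ is immediate, since $\iota_c$ is a morphism in $\GMet$ and hence nonexpansive. For the reverse inequality, any chain from $x$ to $y$ of length $<\varepsilon(x)$ can be projected into $I^n$ via the face inclusions (which are isometric embeddings onto faces of $I^n$), producing a chain in $I^n$ of the same total length connecting $\iota_c^{-1}(x)$ to $\iota_c^{-1}(y)$; the triangle inequality in $I^n$ bounds $d_{I^n}(\iota_c^{-1}(x), \iota_c^{-1}(y))$ by the chain's length, and taking the infimum gives the desired bound.

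The main obstacle will be the inductive step of the chain lemma: tracking consecutive cubes through potentially several successive face identifications under only the ``no self-intersection'' hypothesis (without the maximal-common-face clause) requires careful bookkeeping to ensure that the image of $x$ and the ``remaining budget'' $\varepsilon(x) - \ell_j$ are consistently transported from $c_j$ to $c_{j+1}$ across each identification.
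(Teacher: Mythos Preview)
Your approach is sound and rests on the same key observation as the paper (the escape distance $\varepsilon(x)$ forces every face encountered along a short chain to contain $x$), but the paper packages the induction more economically via a \emph{chain-shortening} argument rather than your track-and-project scheme. In the paper's formulation a chain is $x=x_0,x_1,\ldots,x_n=y$ with consecutive pairs $x_i,x_{i+1}$ lying in a common cube $c_i$. Since $x_0,x_1\in c_0$ and $d_{c_0}(x_0,x_1)\leq l(u)<\varepsilon(x)\leq\varepsilon(x,c_0)$, every face of $c_0$ containing $x_1$ also contains $x$; as $x_1$ lies in a common face of $c_0$ and $c_1$, this gives $x\in c_1$. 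The triangle inequality in $c_1$ then shows that the shorter chain $x_0,x_2,\ldots,x_n$ (first hop now taken inside $c_1$) has length at most $l(u)$, and one iterates down to a single step inside $c_{n-1}$, delivering both containment and the distance bound at once. This sidesteps the bookkeeping you correctly flag as the main obstacle: instead of transporting a running estimate $d_{c_j}(\iota_{c_j}^{-1}(x),z_j')\leq\ell_j$ across each face identification (a bound your inductive hypothesis implicitly needs but does not quite state --- note that your clause ``$z_j'$ lies in some face containing $\iota_{c_j}^{-1}(x)$'' is vacuous as written, since the whole cube $I^{n_j}$ is such a face), the paper simply keeps $x$ anchored as the first point of a progressively shorter chain.
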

\begin{proof}
  See~\bridson{Lem.~7.9}{100}. Fix a ``chain'' $u=x_0,x_1,x_2,\ldots,x_n$ in
  $\greal{C}$ from $x=x_0$ to $y=x_n$, \ie a sequence of points such that $x_i$
  and $x_{i+1}$ both belong to $\iota_{c_i}$ for some given cube $c_i\in C(n_i)$
  (to be precise, the $c_i$ are part of the data defining the chain). Its length
  is by definition
  $l(u)=\sum_i d_{I^{n_i}}(\iota_{c_i}^{-1}(x_i),\iota_{c_i}^{-1}(x_{i+1}))$ and
  the distance $d_{\greal{C}}(x,y)$ is the infimum of lengths of such
  chains. The points~$x_1$ and~$x_2$ belong to the image of~$\iota_{c_1}$. If we
  suppose moreover that $l(u)<\varepsilon(x)$ then $x$ also belongs
  to~$\iota_{c_1}$, and by the triangle inequality the chain
  $u'=x_0,x_2,x_3,\ldots,x_n$ has smaller length. We can conclude by a simple
  induction.
\end{proof}

\begin{lem}
  \label{lem:greal-separated}
  Given a precubical set~$C$ with no self-intersection, its geometric
  realization~$\greal{C}$ is separated.
\end{lem}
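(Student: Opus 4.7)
The plan is to deduce separation as a one-step consequence of Lemma~\ref{lem:local-distance} via the positivity of the escape distance $\varepsilon(x)$. Recall that separation in $\GMet$, condition~(4) of Definition~\ref{def:gmet}, requires $d(x,y) = d(y,x) = 0$ to imply $x = y$. In fact I will show the stronger statement that the single assumption $d(x,y) = 0$ already forces $x = y$, without appealing to the symmetric hypothesis.

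Suppose $x,y \in \greal{C}$ satisfy $d(x,y) = 0$. The first step is to pick any cube $c \in C(n)$ whose image under $\iota_c$ contains $y$; such a $c$ exists by construction of $\greal{C}$ as a colimit, since by Proposition~\ref{prop:gmet-limits} the underlying set of $\greal{C}$ is $\bigsqcup_n I^n \times C(n)/{\approx}$, so every point arises as the class of some pair $(t,c)$ with $t \in I^n$. Since $\varepsilon(x) > 0$ (the remark recorded just above the statement of Lemma~\ref{lem:local-distance}), we have $d(x,y) = 0 < \varepsilon(x)$, so Lemma~\ref{lem:local-distance} applies and yields both that $c$ also contains $x$ and that
\[
d_{\greal{C}}(x,y) \qeq d_{I^n}\bigl(\iota_c^{-1}(x),\iota_c^{-1}(y)\bigr).
\]

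The conclusion is then immediate: the Euclidean distance on $I^n$ is separated, so the right-hand side vanishing forces $\iota_c^{-1}(x) = \iota_c^{-1}(y)$ as points of $I^n$. Because $C$ has no self-intersection, $\iota_c$ is injective (as noted just before the definition of $\varepsilon(x,c)$), whence $x = y$.

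I do not anticipate any real obstacle. All the delicate combinatorics about chains passing through multiple cubes, and the proof that such chains cannot shortcut the Euclidean distance when confined within the escape radius, have already been packaged into Lemma~\ref{lem:local-distance}; the separation statement merely extracts the limiting case $d(x,y) = 0$ of that lemma. The only items to verify carefully are purely bookkeeping: that $\iota_c$ is injective under the no-self-intersection assumption (so that $\iota_c^{-1}$ makes sense on its image), and that every point of the colimit lies in some $\iota_c$ (so that a cube $c$ containing $y$ can be chosen).
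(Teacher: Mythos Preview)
Your argument is correct and is precisely the unpacking of the paper's one-line proof, which simply says ``This is a corollary of Lemma~\ref{lem:local-distance}, since the spaces~$I^n$ are separated.'' One cosmetic slip: the metric on $I^n$ in this paper is the product (sup) metric of Lemma~\ref{lem:gmet-prod}, not the Euclidean one, but this is immaterial since it is equally separated.
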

\begin{proof}
  This is a corollary of Lemma~\ref{lem:local-distance}, since the spaces~$I^n$
  are separated.
\end{proof}


\begin{lem}
  The geometric realization of a precubical set is a length space.
\end{lem}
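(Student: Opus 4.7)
The plan is to reduce the statement to two facts already available: each standard cube $I^n$ used in the construction is a length space, and length spaces are closed under colimits in $\GMet$.

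First I would verify that each $I^n$, viewed as an object of $\GMet$, is a length space. Recall that $I^n$ is the product in $\GMet$ of $n$ copies of $I=[0,1]$, so by Lemma~\ref{lem:gmet-prod} its metric is $d((x_i),(y_i)) = \bigvee_i |y_i - x_i|$ (the sup metric). For any two points $x,y\in I^n$, the straight-line path $\gamma(t) = (1-t)x + ty$ satisfies $d(\gamma(s),\gamma(t)) = |t-s|\cdot\bigvee_i |y_i-x_i|$, so its length is exactly $d(x,y)$. Hence the intrinsic metric $\dint{d}$ agrees with $d$ on $I^n$, and in fact $I^n$ is geodesic.

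Next I would appeal to Definition~\ref{def:greal}: the geometric realization $\greal{C}$ is defined as the left Kan extension of $I\colon \pcc\to\GMet$ along the Yoneda embedding $y\colon\pcc\to\hat\pcc$. Since left Kan extensions into cocomplete categories can be computed as colimits, $\greal{C}$ is a colimit in $\GMet$ of a diagram whose objects are the cubes $I^n$ (one for each $n$-cube of $C$) and whose arrows are the face inclusions $I(\varepsilon^\epsilon_{i,n})$. All objects in this diagram are length spaces by the previous step.

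Finally I would invoke Lemma~\ref{lem:length-colimit}, which states that the full subcategory of length spaces is closed under colimits in $\GMet$, to conclude that $\greal{C}$ is itself a length space. There is no real obstacle here beyond checking the first step, since the closure of length spaces under colimits has already been established; the only subtle point is to notice that the colimit description of~$\greal{C}$ coming from the left Kan extension matches precisely the hypothesis of Lemma~\ref{lem:length-colimit}, which is immediate from cocompleteness of $\GMet$ (Proposition~\ref{prop:gmet-limits}).
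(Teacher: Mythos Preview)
Your proposal is correct and follows exactly the same approach as the paper: the paper's proof simply states that the spaces $I^n$ are length spaces and invokes Lemma~\ref{lem:length-colimit} on closure of length spaces under colimits. You have merely filled in the details of why $I^n$ is a length space and made explicit that the Kan extension is computed as a colimit, but the structure of the argument is identical.
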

\begin{proof}
  The spaces $I^n$ are length spaces and, by Lemma~\ref{lem:length-colimit},
  length spaces are closed under colimits.
\end{proof}

\begin{lem}
  The geometric realization of a finite dimensional precubical set~$C$ is a
  complete space.
\end{lem}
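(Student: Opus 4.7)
The plan is to show that every Cauchy sequence in $\greal{C}$ admits a convergent subsequence, proceeding by induction on the dimension $d$ of $C$. The base case $d = 0$ is immediate: $\greal{C}$ is then a discrete set with $d(x,y) = \infty$ for $x \neq y$, so any Cauchy sequence is eventually constant.

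For the inductive step, let $(x_n)$ be a Cauchy sequence. For each $n$, pick a cube $c_n \in C(k_n)$ of maximal dimension containing $x_n$, with $k_n \leq d$. A first application of pigeonhole gives a subsequence with $k_n = k$ constant. If some single cube contains infinitely many $x_n$, compactness of $I^k$ and closedness of the image of $\iota_c$ immediately yield a convergent subsequence. Otherwise, after extraction the cubes $c_n$ are pairwise distinct; consider the pre-images $y_n := \iota_{c_n}^{-1}(x_n) \in I^k$. They lie in the compact cube $I^k$, so a further subsequence converges to some $y \in I^k$.

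The heart of the argument is a case analysis on the position of $y$. If $y$ lay in the open interior of $I^k$, at distance $\delta > 0$ from $\partial I^k$, then for $n,m$ large the points $\iota_{c_n}(y)$ and $\iota_{c_m}(y)$ would lie in the disjoint open interiors of distinct $k$-cubes, forcing any path between them to exit both cubes through their boundaries; combined with Lemma~\ref{lem:local-distance}, this would give $d(x_n, x_m) \geq 2\delta - o(1)$, contradicting the Cauchy property. Hence $y$ lies on a proper face $F$ of $I^k$ of some dimension $k' < k \leq d$. For each $n$, the face $F$ corresponds to a sub-cube $f_n$ of $c_n$ of dimension $k'$ in $C$; setting $z_n := \iota_{f_n}(y')$ where $y' \in I^{k'}$ is the preimage of $y$ on $F$, nonexpansivity of $\iota_{c_n}$ yields $d(x_n, z_n) \leq d_{I^k}(y_n, y) \to 0$. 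Thus $(z_n)$ is a Cauchy sequence in the subcomplex $C' \subseteq C$ of dimension at most $k' < d$ generated by the $f_n$ and their faces; by induction, a subsequence of $(z_n)$ converges, and hence so does the corresponding subsequence of $(x_n)$ by the triangle inequality.

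The main obstacle is making the ``interior case'' contradiction precise: one must verify carefully that for distinct large $n,m$ any near-geodesic from $x_n$ to $x_m$ necessarily exits both cubes $c_n$ and $c_m$ through their boundary, incurring a cost of at least $2\delta$. Alternatively, one can sidestep this combinatorial bookkeeping by invoking \bridson{Thm.~I.7.13}{110}: the realization $\greal{C}$ is an $M_0$-polyhedral complex whose finite set of isometry types of shapes is $\{I^0, I^1, \ldots, I^d\}$, and such complexes are known to be complete metric spaces.
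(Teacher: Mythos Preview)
Your alternative at the end---invoking \bridson{Thm.~I.7.13}{110} for $M_0$-polyhedral complexes with finitely many shapes---is exactly the paper's approach: the paper gives only a brief sketch and explicitly defers to that theorem. So on that count you and the paper agree.

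Your main argument, the induction on~$d$ with the interior/boundary case split, is more explicit than the paper's sketch and is essentially sound, but the inductive step as written has a genuine gap. You show that $(z_n)$ is Cauchy with respect to the ambient metric $d_{\greal{C}}$, and then invoke the induction hypothesis on the subcomplex~$C'$. But the induction hypothesis concerns completeness of $\greal{C'}$ under its \emph{own} intrinsic metric $d_{\greal{C'}}$, and in general one only has $d_{\greal{C}}\leq d_{\greal{C'}}$ on points of~$\greal{C'}$ (the inclusion is merely nonexpansive). A sequence that is Cauchy for $d_{\greal{C}}$ need not be Cauchy for $d_{\greal{C'}}$, so the induction does not apply as stated.

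The repair is easy and stays within your framework. Take~$F$ to be the \emph{minimal} face of~$I^k$ containing~$y$ (the carrier), so that the corresponding point $y'\in I^{k'}$ lies in the open interior. Each $z_n=\iota_{f_n}(y')$ is then an interior point of a $k'$-cube, and your own ``interior case'' escape-distance argument (carried out in~$\greal{C}$, not in any subcomplex) shows that the $f_n$ are eventually equal; hence $(z_n)$ is eventually constant and converges in~$\greal{C}$. Equivalently, rephrase the whole argument as a descent on the cube-dimension~$k$ within the fixed space~$\greal{C}$, rather than as an induction on~$\dim C$ that changes the ambient metric. With this adjustment your proof is correct and gives a cleaner treatment of the boundary case than the paper's sketch, which simply asserts (somewhat imprecisely) that the cubes~$c_i$ are eventually constant without separating the two cases.
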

\begin{proof}
  The ideas is as follows, see \bridson{Thm.~7.13}{101}, of which this proof is
  a variant, for details. Given a Cauchy sequence $(x_n)$ in~$\greal{C}$, we
  show that it contains a convergent subsequence. The points~$x_i$ belong to the
  image of $\iota_{c_i}$ for some $n_i$-cell $c_i\in C(n_i)$. Up to taking a
  subsequence, we can suppose that all the $n_i$ are equal and we write~$n$ for
  the corresponding integer: the points~$c_i$ are the images $\iota_{c_i}(x_i)$
  for some points $x_i\in I^n$. Up to taking a further subsequence, we can
  suppose that the sequence~$(x_i)$ converges to a point~$x_\infty$. Now, noting
  that the $\iota_{c_i}$ are nonexpansive, using an argument based on
  $\varepsilon(x_\infty)$ as in the proof of Lemma~\ref{lem:length-colimit}, one
  can show that the $c_i$ are all equal to some~$c_\infty$ for~$i$ big enough,
  and conclude that the subsequence converges to $\iota_{c_\infty}(x_\infty)$
  in~$\greal{C}$.
\end{proof}

  %

We have seen in Lemma~\ref{lem:local-distance} that the geometric realization of
a precubical set is locally isometric to a space~$I^n$. The proof used here does
not really depend on the choice of the functor~$I:\pcc\to\GMet$ used to specify
the realization of representable precubical sets, and would extend to other
choices. We however conjecture that, with this specific choice, this property is
more ``global'' in the sense that the canonical inclusions of standard cubes in
the realizations are in fact isometries. The intuition behind this is that
within a cube any two points are at distance at most one, and that a non-trivial
chain going outside a cube has to cross another cube and will thus be of length
more than one.

\begin{conjecture}
  \label{conj:canonical-isometry}
  Given a finite dimensional geometric precubical set~$C$, $n\in\N$ and
  $c\in C(n)$ the morphism $\iota_c:I^n\to\greal{C}$ is an isometry.
\end{conjecture}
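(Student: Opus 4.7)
The $\leq$ direction is immediate since $\iota_c$ is a morphism of generalized metric spaces, hence nonexpansive. For the reverse inequality we must show, for $x,y\in I^n$, that $d_{\greal{C}}(\iota_c(x),\iota_c(y))\geq d_{I^n}(x,y)=\max_{0\leq j<n}|x_j-y_j|$ (using that the product metric on $I^n$ is the sup metric, by Lemma~\ref{lem:gmet-prod}). By Lemma~\ref{lem:gmet-coeq} this distance is an infimum of lengths of chains in $\bigsqcup_m I^m\times C(m)$, so the plan is to bound from below the length $\sum_id_{I^{n_i}}(\iota_{c_i}^{-1}(x_i),\iota_{c_i}^{-1}(y_i))$ of any chain $x_0,y_0,\ldots,x_k,y_k$ with $x_0=\iota_c(x)$, $y_k=\iota_c(y)$, and each pair $(x_i,y_i)$ living in a single cube $c_i$.

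I would first reduce to working with paths rather than chains: using that $\greal{C}$ is a length space (its building blocks $I^n$ are, and this property is preserved under colimits by Lemma~\ref{lem:length-colimit}) and that by Lemma~\ref{lem:local-distance} each cube is isometrically embedded on a sufficiently small neighborhood, any chain can be replaced (up to arbitrarily small error) by a piecewise-``sup-linear'' continuous path $\gamma:[0,1]\to\greal{C}$ with $\gamma(0)=\iota_c(x)$, $\gamma(1)=\iota_c(y)$ and length arbitrarily close to the chain length. So it suffices to prove $\length\gamma\geq d_{I^n}(x,y)$ for every such path.

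The heart of the argument is to produce a nonexpansive \textbf{retraction} $\rho:\greal{C}\to\iota_c(I^n)\cong I^n$ which restricts to the identity on $\iota_c(I^n)$: then $\rho\circ\gamma$ is a path in $I^n$ from $x$ to $y$ of length at most $\length\gamma$, and any path in $I^n$ between $x$ and $y$ has length at least $d_{I^n}(x,y)$, so we are done. The construction of $\rho$ proceeds cube by cube. The geometricity assumption combined with Lemma~\ref{lem:geom-El} gives, for every cube $c'\in C(m)$, a unique ``largest'' common face $c\wedge c'$ in the meet-semilattice $c/\El(C)$ when it exists; the sup-metric coordinate projection $I^m\to F$ onto any face $F$ is nonexpansive, so I can define $\rho$ on each $\iota_{c'}(I^m)$ by projecting onto the face $c\wedge c'$ seen inside $\iota_c(I^n)$ (and extending outward by iteration, which terminates since $C$ is finite-dimensional).

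The main obstacle is compatibility: when two cubes $c',c''$ share a face, the retractions defined from each must agree on their intersection, and the globally-assembled $\rho$ must remain nonexpansive across gluings and not merely within each individual cube. The first point follows from uniqueness of the meet in $c/\El(C)$ (again Lemma~\ref{lem:geom-El}) combined with the fact that the sup-metric projection onto a face commutes with face inclusions. The second, which is the genuinely delicate step, I would establish by verifying the nonexpansiveness coordinate-by-coordinate: in the sup metric the distance is the maximum of one-dimensional distances, and along any chain crossing a face each coordinate either stays constant or is constrained to move within $[0,1]$, so the projected chain in $I^n$ cannot have greater length than the original. This last point is where the intuition in the paper---``a non-trivial chain going outside a cube has to cross another cube''---is made precise: within one cube distances are at most $1$, so any excursion outside $c$ contributes at least its projected cost back into $c$.
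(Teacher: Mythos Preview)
This statement is left as an open \emph{conjecture} in the paper --- the authors give only a heuristic (``within a cube any two points are at distance at most one, and a non-trivial chain going outside a cube has to cross another cube'') and explicitly do not prove it --- so there is no argument in the paper to compare your proposal against. Your retraction construction, however, has a genuine gap: it is not well-defined. Take the directed $3$-cycle $v_0\to v_1\to v_2\to v_0$, which is geometric (Example~\ref{ex:pcs-geom}(4)), and let $c$ be the edge $v_0\to v_1$. For $c'$ the edge $v_1\to v_2$ your recipe gives $c\wedge c'=v_1$, so $\rho$ collapses all of $c'$ to the point $v_1$; for $c''$ the edge $v_2\to v_0$ it gives $c\wedge c''=v_0$, so $\rho$ collapses $c''$ to $v_0$. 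But $v_2$ lies in both $c'$ and $c''$, and the two prescriptions send it to $v_1$ and to $v_0$ respectively. Your compatibility argument (``uniqueness of the meet'') does not apply: each of $c\wedge c'$ and $c\wedge c''$ is indeed unique, but they are \emph{different} faces of $c$, and the shared face $v_2$ of $c'$ and $c''$ has no common face with $c$ at all. The ``extending outward by iteration'' clause does not rescue this, since $c'$ and $c''$ already meet $c$ directly.

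More broadly, the retraction strategy is delicate precisely because cubes away from $c$ carry no canonical coordinates compatible with those of $c$; a cycle in the $1$-skeleton can force a single vertex to project to two different points of $c$. A nonexpansive retraction onto $\iota_c(I^n)$ may still exist in any given example (in the $3$-cycle one can send $v_2$ to the midpoint of $c$ and interpolate linearly), but it is not produced by face projection, and the ``coordinate-by-coordinate'' nonexpansiveness check you sketch presupposes a global coordinate system that simply is not available across gluings. Constructing such a retraction in general, or finding another lower bound for chain lengths, is exactly the content the authors left open.
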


\begin{rem}
  Notice that the above conjecture is not true if the precubical set is not
  supposed to be geometric. For instance, consider the following precubical
  set~$C$:
  \[
  \vxym{
    c'\ar@(ul,ur)^c
  }
  \]
  The morphism $\iota_c:I^1\to\greal{C}$ is clearly not an isometry.
  Moreover, the property strongly depends on the functor $I:\pcc\to\GMet$ we
  used. For instance, suppose that we had defined the geometric realization
  using the functor $I$ such that $I(n)$ is the set $I^n$ equipped with the
  euclidian metric (instead of the product metric), and consider the following
  precubical set~$C$:
  \[
  \vxym{
    x\ar[d]_g\ar[r]^f\ar@{}[dr]|\tile&\ar[d]^{g'}y_1\\
    y_2\ar[r]_{f'}\ar@/_2ex/[ur]|h&y
  }
  \]
  In the square, we have $d_{I^2}(y_1,y_2)=\sqrt 2$ but in the realization we
  have $d_{\greal{C}}(y_1,y_2)=1$. The above precubical set is not geometric,
  but it can easily be turned into a geometric one: instead of having one
  edge~$h$, consider a sequence of two edges $h_1\cc h_2$ (whose length is $2$),
  and instead of gluing it along the diagonal of a square, glue it along the
  diagonal of an hypercube of dimension~$5$, so that the length of the diagonal
  is $\sqrt{5}>2$.
\end{rem}

\subsubsection{The topology of the geometric realization}
\label{sec:El}
In previous section, we have introduced two possible geometric realizations: in
metric spaces and in topological spaces. We now show that the two agree for the
precubical sets of interest here. Notice that, by
Remark~\ref{rem:gmet-top-colimits}, we have no hope to show this using very
general arguments since the forgetful functor $\SGMet\to\Top$ does not preserve
colimits (nor even finite ones).
Consider a symmetric generalized metric space~$(X,d)$. We write
$U:\SGMet\to\Top$ for the forgetful functor described in
Section~\ref{sec:gmet-top}.

\begin{lem}
  \label{lem:gr-id-cont}
  Given an equivalence relation~$R$ on~$X$, the underlying sets of~$(UX)/R$ and
  $U(X/R)$ are the same and the identity function $f:(UX)/R\to U(X/R)$ is
  continuous.
\end{lem}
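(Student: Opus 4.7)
The plan is short because the lemma is essentially a universality argument. I would factor the problem into two independent verifications: underlying sets agree, and the identity map is continuous. The first is a bookkeeping step; the second follows from the universal property of the quotient topology together with the fact that morphisms in $\SGMet$ are continuous (Lemma~\ref{lem:gmet-mor-cont}).

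First, to see that $(UX)/R$ and $U(X/R)$ have the same underlying set, recall that by Proposition~\ref{prop:gmet-limits} the forgetful functor $\GMet \to \Set$ preserves colimits; hence the underlying set of the coequalizer $X/R$ (computed via chains in Lemma~\ref{lem:gmet-coeq}) is simply the set-theoretic quotient. This coincides with the underlying set of the topological quotient $(UX)/R$, so the identity function $f$ is well-defined on points.

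For continuity, write $q : X \to X/R$ for the canonical quotient morphism in $\SGMet$, and write $\pi : UX \to (UX)/R$ for the topological quotient map. By Lemma~\ref{lem:gmet-mor-cont}, the nonexpansive map $q$ yields a continuous map $Uq : UX \to U(X/R)$. By construction of the underlying sets we have the equality $Uq = f \circ \pi$ on points. The universal property of the quotient topology on $(UX)/R$ then says that $f$ is continuous if and only if $f \circ \pi$ is continuous into $U(X/R)$; since $f \circ \pi = Uq$ is continuous, we conclude that $f$ is continuous.

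There is no genuine obstacle here: the argument is just the observation that the quotient map is a morphism in $\SGMet$, hence continuous on underlying spaces, combined with the universal property of $(UX)/R$. What fails in the reverse direction (why $f$ is not generally a homeomorphism, as already alluded to in Remark~\ref{rem:gmet-top-colimits}) is that the distance on $X/R$ can collapse to zero along chains without the corresponding points becoming topologically identified in $(UX)/R$; but this asymmetry only ever makes $U(X/R)$ coarser than $(UX)/R$, which is precisely what makes $f$ continuous rather than an isomorphism.
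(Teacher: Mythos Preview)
Your proof is correct and rests on the same underlying fact as the paper's, namely that the quotient map $q:X\to X/R$ is nonexpansive, hence continuous. The difference is only in packaging: the paper argues by hand with open balls, using $d_{X/R}\leq d_X$ (Remark~\ref{rem:coeq-d}) to show that the preimage in $UX$ of any open set of $U(X/R)$ is open, while you invoke Lemma~\ref{lem:gmet-mor-cont} to get continuity of $Uq$ and then apply the universal property of the quotient topology on $(UX)/R$ to the factorization $Uq=f\circ\pi$. Your formulation is cleaner and makes the categorical content explicit; the paper's version has the minor advantage of being self-contained at the level of balls, but the two arguments are unwinding the same idea.
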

\begin{proof}
  The underlying sets of both~$(UX)/R$ and~$U(X/R)$ are the same because the
  forgetful functors $\Top\to\Set$ and $\SGMet\to\Set$ both admit a right
  adjoint (see Proposition~\ref{prop:gmet-limits}).
  Suppose given $V$ open in~$U(X/R)$: for every $x\in V$ there exists
  $\varepsilon>0$ such that $B^\varepsilon_{X/R}(x)\subseteq V$. Thus, for every
  $x\in f^{-1}(V)$, there exists $\varepsilon>0$ such that
  $f^{-1}(B^\varepsilon_{X/R}(f(x)))\subseteq f^{-1}(V)$. Given~$x\in X$, we
  have $B^\varepsilon_X(x)\subseteq B^\varepsilon_{X/R}(x)$ by
  Remark~\ref{rem:coeq-d}. Therefore, for every $x\in X$ such that
  $x\in\pi^{-1}(V)$, there exists $\varepsilon>0$ such that
  $B^\varepsilon_X(x)\subseteq\pi^{-1}(V)$, from which we can conclude that
  $\pi^{-1}(V)$ is open.
\end{proof}

\noindent
The following lemma is well-known~\cite[Thm.~4.17]{rudin1964principles}:

\begin{lem}
  \label{lem:bij-homeo}
  A continuous bijection~$f:X\to Y$ between topological spaces such that~$X$ is
  compact and~$Y$ is separated is an homeomorphism.
\end{lem}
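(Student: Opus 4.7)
The plan is to show that the inverse function $f^{-1}:Y\to X$ is continuous, which combined with the continuity of $f$ yields the homeomorphism. Continuity of $f^{-1}$ is equivalent to saying that $f$ sends closed sets to closed sets, since for any $C\subseteq X$ we have $(f^{-1})^{-1}(C)=f(C)$ by bijectivity of $f$. So the reduction I would perform first is: it suffices to prove that $f$ is a closed map.

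Next I would take an arbitrary closed subset $C\subseteq X$ and chain together three classical facts from point-set topology. First, a closed subset of a compact space is compact, so $C$ is compact. Second, the continuous image of a compact space is compact, so $f(C)$ is compact in $Y$. Third, compact subsets of a Hausdorff (here called ``separated'') space are closed, so $f(C)$ is closed in $Y$. This establishes that $f$ is a closed map, hence that $f^{-1}$ is continuous, hence that $f$ is a homeomorphism.

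There is essentially no obstacle: the entire argument is standard textbook material, and the only mild subtlety is pinning down the meaning of ``separated'' in the lemma. In context the author clearly means the Hausdorff separation axiom $T_2$ (this is needed for the step ``compact implies closed''), which is consistent with how the term has been used elsewhere in the paper for metric spaces. No construction specific to generalized metric spaces or precubical sets is needed — the proof lives entirely at the level of $\Top$.
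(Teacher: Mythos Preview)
Your proof is correct and follows essentially the same approach as the paper's own proof: show that $f$ is a closed map by using that closed subsets of compact spaces are compact, continuous images of compact sets are compact, and compact subsets of Hausdorff spaces are closed, then conclude that $f^{-1}$ is continuous. The paper's argument is identical, just slightly more terse.
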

\begin{proof}
  Given a closed set~$U$ of~$X$, $U$ is compact since~$X$ is, therefore $f(U)$
  is also compact, and therefore $f(U)$ is closed since~$Y$ is
  separated. Therefore $(f^{-1})^{-1}(U)=f(U)$ is closed and $f^{-1}$ is
  continuous since the preimage of a closed set is closed.
\end{proof}

\begin{prop}
  \label{prop:geom-real-gmet-top}
  The geometric realization of finite geometric precubical sets commutes with
  the forgetful functor $\SGMet\to\Top$.
\end{prop}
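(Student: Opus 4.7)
The plan is to show that the canonical continuous bijection between the two realizations is a homeomorphism, using a compact-to-Hausdorff argument.

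First, I would observe that the underlying sets of $\greal{C}_\Top$ and $U(\greal{C}_\SGMet)$ coincide: both forgetful functors $\Top\to\Set$ and $\SGMet\to\Set$ have right adjoints, so they preserve the colimit computing the geometric realization. This yields a canonical bijection between the two. Next, I would argue that this bijection, viewed as a map $\greal{C}_\Top\to U(\greal{C}_\SGMet)$, is continuous. By Proposition~\ref{prop:gmet-top-preservation}, the forgetful functor commutes with the coproduct $\bigsqcup_n I^n\times C(n)$, so both realizations are obtained by quotienting the same topological space by the same equivalence relation $\approx$, except that in one case the quotient is taken in~$\Top$ and in the other in~$\SGMet$ before applying~$U$. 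Lemma~\ref{lem:gr-id-cont} then yields precisely the desired continuity: the quotient topology on $(UX)/{\approx}$ is finer than the metric quotient topology on $U(X/{\approx})$.

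The remaining step is to upgrade this continuous bijection to a homeomorphism via Lemma~\ref{lem:bij-homeo}. For this I would verify two points. On the one hand, the topological realization~$\greal{C}_\Top$ of a finite precubical set is compact: it is a quotient of the finite disjoint union $\bigsqcup_n I^n\times C(n)$, which is compact since~$C$ is finite-dimensional with each $C(n)$ finite, and quotients of compact spaces are compact. On the other hand, $U(\greal{C}_\SGMet)$ is Hausdorff: since~$C$ is geometric it has in particular no self-intersection, so Lemma~\ref{lem:greal-separated} ensures $\greal{C}_\SGMet$ is separated as a generalized metric space, and thus its induced topological space (via the symmetrization and metric-topology construction of Section~\ref{sec:gmet-top}) is Hausdorff. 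Applying Lemma~\ref{lem:bij-homeo} then concludes.

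The step requiring the most care is the continuity claim, since the forgetful functor $\SGMet\to\Top$ does not preserve coequalizers in general (Remark~\ref{rem:gmet-top-colimits}); I need to apply Lemma~\ref{lem:gr-id-cont} correctly to the coproduct-then-quotient construction rather than hope for a direct colimit comparison. The finiteness hypothesis is essential for compactness, and the geometricity hypothesis is essential to invoke Lemma~\ref{lem:greal-separated} for the Hausdorff side; without either, Lemma~\ref{lem:bij-homeo} would not apply and the two topologies could genuinely differ (as already illustrated in Example~\ref{ex:In} and Remark~\ref{rem:gmet-top-colimits} for infinite coequalizers).
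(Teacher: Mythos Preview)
Your proposal is correct and follows essentially the same route as the paper's proof: express the realization as a quotient of a coproduct, use Lemma~\ref{lem:gr-id-cont} for continuity of the comparison map, compactness of the topological side from finiteness, separation of the metric side from Lemma~\ref{lem:greal-separated} via geometricity, and conclude with Lemma~\ref{lem:bij-homeo}. The paper's argument is the same, only more tersely stated.
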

\begin{proof}
  Given a finite geometric precubical set~$C$, the colimit defining its
  geometric realization can be computed as a quotient of the space
  $X=\coprod_{n\in\N}\coprod_{x\in C(n)}I(n)$, by a relation that we
  denote~$R$. The space~$UX$ is compact, as a finite coproduct of compact spaces
  (the functor~$U$ commutes with coproducts, see
  Proposition~\ref{prop:gmet-top-preservation}), and thus the space $(UX)/R$ is
  compact too, as a quotient of a compact space. Since~$C$ is geometric, and
  thus without self-intersection, its geometric realization~$U(X/R)$ is
  separated by Lemma~\ref{lem:greal-separated}. Finally, the identity function
  $(UX)/R\to U(X/R)$ is continuous by Lemma~\ref{lem:gr-id-cont} and thus an
  homeomorphism by Lemma~\ref{lem:bij-homeo}.
\end{proof}

\noindent
Since the topology of a space is determined locally, previous proposition
can be generalized to the case of precubical sets which are only locally finite:

\begin{defi}
  A precubical set is \emph{locally finite} if for every vertex~$x$, only
  finitely many cubes have $x$ as iterated vertex.
\end{defi}

\begin{thm}
  \label{thm:geom-real-gmet-top}
  The geometric realization of locally finite geometric precubical sets commutes
  with the forgetful functor $\SGMet\to\Top$.
\end{thm}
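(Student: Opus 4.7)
The plan is to reduce the locally finite case to Proposition~\ref{prop:geom-real-gmet-top} by exploiting the fact that being a homeomorphism is a local property. By construction of geometric realization as a colimit, the underlying sets of $\greal{C}_\Top$ (the topological realization) and $U\greal{C}$ (the forgetful functor applied to the metric realization) coincide, so there is a canonical identity bijection $f \colon \greal{C}_\Top \to U\greal{C}$. The argument of Lemma~\ref{lem:gr-id-cont} applies verbatim to show that~$f$ is continuous: any metric open ball pulls back to an open set in the topological realization. It remains to prove that $f$ is open, which I would establish by producing, for every point $p \in \greal{C}$, a neighborhood basis at~$p$ on which~$f$ restricts to a homeomorphism.

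Fix $p \in \greal{C}$ and let $c_p$ be the (unique, since~$C$ is geometric) minimal cube of~$C$ containing~$p$ in its interior. Let $D_p \subseteq C$ denote the subcomplex consisting of all cubes admitting $c_p$ as an iterated face, i.e., the closed star of~$c_p$. Local finiteness of~$C$ — applied to any vertex of $c_p$ — implies that~$D_p$ is a \emph{finite} geometric subcomplex. Moreover, the open star of~$c_p$ in $\greal{C}$ is contained in $\greal{D_p}$ and constitutes a topological neighborhood of~$p$. By Proposition~\ref{prop:geom-real-gmet-top} applied to~$D_p$, the identity $\greal{D_p}_\Top \to U\greal{D_p}$ is a homeomorphism.

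The final step is to transport this local homeomorphism back to~$C$. On the topological side, the inclusion $D_p \hookrightarrow C$ induces an open embedding from $\greal{D_p}_\Top$ onto the (open) star of~$c_p$ in $\greal{C}_\Top$. On the metric side, I would invoke Lemma~\ref{lem:local-distance} together with the escape distance $\varepsilon(p) > 0$: any two points within distance $\varepsilon(p)$ of~$p$ in $\greal{C}$ have their distance computed by chains that remain inside~$\greal{D_p}$, so the metric ball $B^{\varepsilon(p)}(p)$ in $\greal{C}$ coincides with the corresponding ball in $\greal{D_p}$. Hence small enough metric neighborhoods of~$p$ in $U\greal{C}$ agree with those in $U\greal{D_p}$, so that $f$ restricted near~$p$ factors as $\greal{D_p}_\Top \to U\greal{D_p} \to U\greal{C}$ with both maps homeomorphisms onto their image. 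Since~$p$ was arbitrary, $f$ is open, hence a homeomorphism.

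The main technical obstacle is the verification that the metric realization of the subcomplex~$D_p$ embeds isometrically (at least near~$p$) into the metric realization of~$C$. In the colimit formula of Lemma~\ref{lem:gmet-coeq}, the distance on $\greal{C}$ is defined via an infimum over all chains in $\greal{C}$, which a priori could be strictly smaller than the corresponding infimum taken only over chains in~$\greal{D_p}$. The escape distance bound is precisely what rules this out locally: any chain leaving~$\greal{D_p}$ must cross a cube not adjacent to~$c_p$ and thus contribute length at least $\varepsilon(p)$, so for distances below $\varepsilon(p)$ the two infima agree. Once this local isometry is secured, the remainder of the argument is formal.
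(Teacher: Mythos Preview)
Your proposal is correct and follows essentially the same route as the paper: pass to the finite closed star~$D_p$ (the paper's~$C_c$) of the carrier cube of a given point, apply Proposition~\ref{prop:geom-real-gmet-top} to that finite subcomplex, and use Lemma~\ref{lem:local-distance} with the escape distance to transfer the local homeomorphism back to~$\greal{C}$. Your write-up is in fact more explicit than the paper's about why the continuous bijection is open and about the chain argument ensuring the local isometry; the only slip is that $\greal{D_p}_\Top\hookrightarrow\greal{C}_\Top$ is a closed (not open) embedding onto the closed star, but since you only need its restriction to the open star to provide a neighborhood basis at~$p$, the argument is unaffected.
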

\begin{proof}
  Suppose given a locally finite geometric precubical set~$C$ and a
  point~$x\in\greal{C}$. We write~$c$ for the carrier of~$x$, \ie the cube
  in~$C$ of lowest dimension which contains~$x$. Consider the precubical
  set~$C_c$defined as the smallest precubical subset of~$C$ which contains all
  the cubes having~$c$ as iterated face (the star of~$c$): its cubes are faces
  of maximal cubes with~$c$ as iterated face. Since~$C$ is supposed to be both
  locally finite and geometric, the precubical set~$C_c$ is easily shown to be
  finite and we can apply Proposition~\ref{prop:geom-real-gmet-top}. By
  Lemma~\ref{lem:local-distance}, a neighborhood of~$x$ in the geometric
  realization of~$C$ is isometric to a neighborhood of~$x$ in the geometric
  realization of~$C_c$ (because the lemma shows that the distance is the
  distance of standard cubes). Therefore, locally around~$x$, the metric
  topology is the colimit topology.
\end{proof}

\subsubsection{Cubical complexes}
Geometric precubical sets give rise to cubical complexes, in the following
sense.

\begin{defi}
  \label{def:cubical-complex}
  A \textbf{cubical complex} $K$ is a topological space of the form
  \[
  K\qeq\pa{\bigsqcup_{\lambda\in\Lambda}I^{n_\lambda}}/{\approx}
  \]
  where $\Lambda$ is a set, $(n_\lambda)_{\lambda\in\Lambda}$ is a family of
  integers, and $\approx$ is an equivalence relation, such that, writing
  $p_\lambda:I^{n_\lambda}\to K$ for the restriction of the quotient map
  $\bigsqcup_{\lambda\in\Lambda}I^{n_\lambda}\to K$, we have
  \begin{enumerate}
  \item for every $\lambda\in\Lambda$, the map $p_\lambda$ is injective,
  \item given $\lambda,\mu\in\Lambda$, if $p_\lambda(I^{n_\lambda})\cap
    p_\mu(I^{n_\mu})\neq\emptyset$ then there is an isometry from a face
    $J_\lambda$ of~$I^{n_\lambda}$ to a face $J_\mu$ of $I^{n_\mu}$ such that
    $p_\lambda(x)=p_\mu(y)$ if and only if $y=h_{\lambda,\mu}(x)$.
  \end{enumerate}
  A \emph{directed cubical complex} is defined similarly, as a d-space of the
  form $K=\pa{\bigsqcup_{\lambda\in\Lambda}I^{n_\lambda}}/{\approx}$.
\end{defi}

\begin{prop}
  \label{prop:geometric-pcs-real}
  The geometric realization of a geometric precubical set is a cubical complex.
\end{prop}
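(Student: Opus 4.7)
The plan is to exhibit on the geometric realization $\greal{C}$ from Definition~\ref{def:greal} the structure required by Definition~\ref{def:cubical-complex}. I take as index set $\Lambda = \bigsqcup_{n\in\N} C(n)$, with $n_c = n$ for $c \in C(n)$, and let $p_c = \iota_c : I^n \to \greal{C}$ be the canonical cocone map. The presentation of $\greal{C}$ as a quotient of $\bigsqcup_{n,c} I^n$ then immediately matches the form of Definition~\ref{def:cubical-complex}, so only the two numbered conditions remain.

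The crux is an auxiliary fact describing the equivalence $\approx$: for any $c \in C(n)$ and $c' \in C(n')$, two pairs $(x,c)$ and $(y,c')$ satisfy $\iota_c(x) = \iota_{c'}(y)$ if and only if there exist a cube $d \in C(k)$, face morphisms $\phi : k \to n$ and $\psi : k \to n'$ in $\pcc$ with $C(\phi)(c) = d = C(\psi)(c')$, and a point $z \in I^k$ with $I(\phi)(z) = x$ and $I(\psi)(z) = y$. The relation $\approx$ is generated by single face identifications, so a priori an equivalence is a zigzag of such steps; I would show that Lemma~\ref{lem:geom-El}, which tells us that $\El(C)$ is a poset with meets in each slice, allows any such zigzag to be contracted to a single span through a common face. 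Two consecutive up-up or down-down steps collapse by composing face morphisms; an up-then-down step $(x_{i-1},c_{i-1}) \to (x_i,c_i) \leftarrow (x_{i+1},c_{i+1})$, where $c_{i-1}$ and $c_{i+1}$ are both faces of $c_i$ containing $x_i$, collapses using the meet in the relevant slice of $\El(C)$ to a common face sitting under both branches.

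Granting this auxiliary fact, condition~(1) is immediate: if $\iota_c(x) = \iota_c(y)$ with $c' = c$, we obtain $\phi, \psi : k \to n$ with $C(\phi)(c) = C(\psi)(c)$, and the no self-intersection condition forces $\phi = \psi$, hence $x = y$. For condition~(2), if $\iota_c(I^n) \cap \iota_{c'}(I^{n'})$ is nonempty then the fact produces at least one common face of $c$ and $c'$, so by the maximal common face condition we may take such a $d$ maximal, with face morphisms $\phi : k \to n$ and $\psi : k \to n'$. The face inclusions $I(\phi)$ and $I(\psi)$ identify $I^k$ isometrically with faces $J_c \subseteq I^n$ and $J_{c'} \subseteq I^{n'}$, and I set $h_{c,c'} = I(\psi) \circ I(\phi)^{-1} : J_c \to J_{c'}$. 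The characterization $\iota_c(x) = \iota_{c'}(y) \Leftrightarrow y = h_{c,c'}(x)$ then follows by combining the auxiliary fact with the maximality of $d$ and with no self-intersection (which forces the witnesses $\phi,\psi$ to be unique).

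The main obstacle is precisely the auxiliary fact: a clean bookkeeping of the zigzag reduction, and a careful verification that at each contraction step one can simultaneously track the point $z$ in the relevant lower-dimensional standard cube so that its images under the face inclusions still realize the original two endpoints. The rest of the argument is essentially formal, but the reduction must deploy both geometricity conditions in a coordinated way: no self-intersection pins down the face morphisms uniquely, and maximal common faces provides the meets needed to contract up-then-down segments.
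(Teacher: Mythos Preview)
The paper states this proposition without proof, so there is nothing to compare against; your proposal is correct and supplies the missing argument.

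One simplification worth noting: your auxiliary fact holds for \emph{any} precubical set and admits a cleaner proof than zigzag reduction. For $(x,c)$ with $c\in C(n)$, let $\phi_x:k\to n$ be the unique morphism in $\pcc$ such that $I(\phi_x)(I^k)$ is the minimal face of $I^n$ containing~$x$ (determined by which coordinates of $x$ lie in $\{0,1\}$), and write $z_x\in I^k$ for the preimage of~$x$. A direct check shows that each generating identification $(I(\varepsilon)(y),c)\sim(y,C(\varepsilon)(c))$ preserves the pair $(z_x,C(\phi_x)(c))$; hence this ``carrier'' is an invariant of~$\approx$, and since every $(x,c)$ is related to its carrier, two pairs are equivalent iff their carriers agree. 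This is exactly a span of the form you want, with $z$ in the open cube. In particular the cospan-to-span step in your zigzag reduction only uses that faces of a standard cube intersect in a face, which is a property of~$\pcc$, not the meets of Lemma~\ref{lem:geom-El}.

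Geometricity then enters precisely where you use it in the second half: no self-intersection yields injectivity of~$\iota_c$ and uniqueness of the morphisms $\phi,\psi$ exhibiting the maximal common face, while the maximal-common-face axiom lets you lift an arbitrary span witness to one through that maximal face, so that $h_{c,c'}=I(\psi)\circ I(\phi)^{-1}$ captures the whole intersection. Your identification of which hypothesis does what is thus slightly off in the first half but exactly right in the second, and the overall argument stands.
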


\noindent
Moreover, the metric of the geometric realization of a geometric precubical set
is precisely the intrinsic metric (as defined in~\bridson{Sect.~I.7.33}{112}).

\begin{rem}
  Notice that the geometric realization of precubical complexes of
  Example~\ref{ex:pcs-geom} which are not geometric are not precubical
  complexes.
\end{rem}


\begin{rem}
  The converse of the above lemma is not true. For instance, one can construct a
  Möbius strip as a complex, and it cannot be obtained as the geometric
  realization of a geometric precubical set because there is a ``mismatch of
  direction'' of the edges. One can actually show that directed geometric
  realizations of geometric precubical sets are precisely directed cubical
  complexes.
\end{rem}




\noindent
Finally, we mention the following useful fact, whose proof can be found
in~\bridson{7.33}{112}.
%

\begin{lem}
  A finite dimensional precubical complex is geodesic.
\end{lem}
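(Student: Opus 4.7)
The plan is to apply a Hopf--Rinow-type argument for length spaces, tailored to the structure of cubical complexes. From the preceding lemmas we already know that the geometric realization $\greal{C}$ is a length space and, in the finite dimensional case, complete. What remains is to upgrade ``length space'' to ``geodesic space'': between any two points $x,y$ with $d(x,y)<\infty$ we must exhibit an actual path whose length equals $d(x,y)$, not merely a sequence of paths whose lengths tend to it.

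The standard strategy is to construct midpoints. Given $x,y$ at finite distance $d=d(x,y)$, the length space property produces, for each $n\in\N$, a point $z_n$ with $d(x,z_n)\leq d/2+1/n$ and $d(z_n,y)\leq d/2+1/n$. First I would extract a convergent subsequence of $(z_n)$, whose limit~$m$ is then a true midpoint with $d(x,m)=d(m,y)=d/2$. Once midpoints exist for all pairs, a standard dyadic argument produces an isometric embedding of the dyadic rationals of $[0,d]$ into $\greal{C}$, which extends by completeness to an isometry $\directed{\clint{0}{d}}\to\greal{C}$, i.e.\ a geodesic.

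The main obstacle, and the only step that really uses finite-dimensionality, is verifying that $(z_n)$ has a convergent subsequence. The key input is the escape distance $\varepsilon(x)$ and Lemma~\ref{lem:local-distance}, together with the fact that a finite-dimensional cubical complex has only finitely many isometry types of cells (the cubes $I^0,I^1,\ldots,I^n$ where $n=\dim C$). Using an argument analogous to the proof of completeness (Lemma on finite dimensional realization being complete), one shows that after passing to a subsequence all the $z_n$ may be assumed to lie in the images $\iota_{c_n}(I^{k})$ of cubes of a single fixed dimension~$k$; the sequence $\iota_{c_n}^{-1}(z_n)\in I^{k}$ then has a convergent subsequence by compactness of $I^{k}$. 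Its limit $\bar z\in I^{k}$ has a neighborhood of radius $\varepsilon(\iota_{c_\infty}(\bar z))$ on which $\iota_{c_\infty}$ is an isometry, so for large $n$ the points $z_n$ live in this common cube $c_\infty$ and converge to $\iota_{c_\infty}(\bar z)$ in $\greal{C}$. By the triangle inequality, any limit point of the near-midpoint sequence is an exact midpoint.

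This gives existence of midpoints, hence existence of geodesics by the dyadic construction described above. The upshot is that the lemma follows from the three ingredients: the length space structure (from the colimit description), the completeness (from the finite dimensional hypothesis), and the bounded combinatorial complexity provided by finite dimensionality which makes the escape-distance/compactness argument go through. A full account is exactly the content of Bridson--Haefliger, Section I.7.33.
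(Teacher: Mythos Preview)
Your midpoint/Hopf--Rinow strategy has a genuine gap at the step where you extract a convergent subsequence of the approximate midpoints~$(z_n)$. You pass to a subsequence so that the abstract preimages $\iota_{c_n}^{-1}(z_n)$ converge in~$I^k$, and then you write ``for large $n$ the points $z_n$ live in this common cube $c_\infty$'' --- but no $c_\infty$ has been produced. In the completeness proof this conclusion is legitimate because the sequence is \emph{Cauchy}: once consecutive points are closer than the escape distance, Lemma~\ref{lem:local-distance} forces them into a common cube. Approximate midpoints of a fixed pair $(x,y)$ are not Cauchy in general, and finite dimensionality does \emph{not} give local compactness. For a concrete obstruction, take the $1$-dimensional geometric precubical set with vertices $x,z,y_1,y_2,\ldots$, an edge $x\to y_i$ and an edge $y_i\to z$ for each~$i$; then $d(x,z)=2$ and each $y_i$ is an exact midpoint, so the sequence $z_n=y_n$ consists of approximate midpoints with no convergent subsequence.

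The argument in Bridson--Haefliger that the paper cites is \emph{not} a Hopf--Rinow argument for precisely this reason. What is actually proved there (the ``taut strings'' method, I.7.20--7.28, summarized at~I.7.33) is that, because a cubical complex has only finitely many isometry types of cells, any chain computing $d(x,y)$ can be shortened to one whose number of breakpoints is bounded by a constant depending only on $d(x,y)$ and the shapes of the cells; a compactness argument within a single cube then shows that the infimum over chains of this bounded combinatorial type is attained, and the resulting minimal chain is realized by a concatenation of straight segments in cells, i.e.\ a geodesic. Your final sentence citing I.7.33 is correct as a reference, but the sketch you give does not match what happens there.
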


\noindent
To sum up the above properties shown on geometric realizations of geometric
precubical sets, we have:

\begin{prop}
  \label{prop:fd-pcs}
  Finite dimensional precubical complexes are complete geodesic length spaces.
\end{prop}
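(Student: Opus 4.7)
The plan is straightforward: this proposition is a summary result that bundles together the three immediately preceding lemmas, so the proof should consist of little more than citing them. Specifically, a finite dimensional precubical complex is obtained as the geometric realization of a (finite dimensional, geometric) precubical set, and we have already established separately each of the three required properties for such realizations.

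First I would invoke the lemma stating that the geometric realization of any precubical set is a length space (this follows from the fact that each standard cube $I^n$ is a length space and that length spaces are closed under colimits, by Lemma~\ref{lem:length-colimit}). Second, I would invoke the lemma stating that the geometric realization of a finite dimensional precubical set is a complete space (whose proof, sketched earlier, extracts a convergent subsequence from any Cauchy sequence by first reducing to a fixed dimension and then using the escape distance $\varepsilon(x_\infty)$ together with Lemma~\ref{lem:local-distance} to show that eventually all points lie in a single cube containing the limit). Third, I would invoke the lemma immediately preceding, which asserts that a finite dimensional precubical complex is geodesic.

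Putting these three facts together yields the statement, since completeness, the geodesic property, and being a length space are the three conditions in the conclusion. There is no real obstacle here; the content of the proposition is entirely carried by the three preceding lemmas, and this statement merely serves as a convenient consolidated reference for later use (in particular in Section~\ref{sec:npc}, where the CAT(0) condition will be formulated on complete geodesic spaces). Hence I expect the proof to be a single sentence of the form ``This is a direct consequence of the three preceding lemmas.''
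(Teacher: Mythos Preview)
Your proposal is correct and matches the paper's approach exactly: the paper presents this proposition explicitly as a summary (``To sum up the above properties\ldots'') of the three preceding lemmas, with no further proof given. Your identification of which three lemmas are being bundled together is accurate.
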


\subsubsection{Directed geometric realization}
\label{sec:dgreal}
Interestingly, all the above properties can easily be checked to hold for a
variant of the geometric realization which is really ``directed'', and makes
much more sense from a concurrency point of view. In Definition~\ref{def:greal}
of geometric realization, we can replace the functor $I:\pcc\to\GMet$
sending~$n$ to $I^n$ by the functor $\dui:\pcc\to\GMet$ which send~$n$ to
$\dui^n$ and thus obtain a variant of the notion of geometric realization where
not every path is directed.

\begin{defi}
  \label{def:dgreal}
  The \textbf{directed geometric realization} $\dgreal{C}$ of a precubical complex
  is the functor obtained as the left Kan extension of $\dui:\pcc\to\GMet$ along
  the Yoneda embedding $y:\pcc\to\hat\pcc$.
\end{defi}

\begin{prop}
  The directed geometric realization of a finite dimensional geometric
  precubical set is a complete geodesic length space.
\end{prop}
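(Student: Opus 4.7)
The plan is to verify each of the three properties in turn, mirroring the proof of Proposition~\ref{prop:fd-pcs} with $\dui^n$ in place of $I^n$. The key structural observation is that the symmetrization functor $\GMet \to \SGMet$ of Proposition~\ref{prop:free-s-metric}, being a left adjoint, commutes with colimits, and it sends the directed standard interval $\dui$ to the usual interval $I$ (the pattern illustrated in Example~\ref{ex:co-free-sym-d} generalizes to arbitrary dimension using the product formula for metrics on cubes). Consequently the symmetrization of $\dgreal{C}$ is isometric to $\greal{C}$, so the two realizations share the same underlying topological space.

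For the length space property, I would first observe that each standard directed cube $\dui^n$ is itself a length space: between two points $x, y \in \dui^n$ at finite distance, i.e., with $x \leq y$ coordinatewise, the straight-line directed segment realizes the distance. Since length spaces are closed under colimits by Lemma~\ref{lem:length-colimit}, and $\dgreal{C}$ is by definition a colimit of such cubes, it follows that $\dgreal{C}$ is a length space. Completeness is then essentially free: it depends only on the underlying topology, which by the observation above coincides with that of $\greal{C}$, complete by Proposition~\ref{prop:fd-pcs}.

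The remaining step, and the main obstacle, is establishing the geodesic property. Given $x, y \in \dgreal{C}$ with $d(x, y) < \infty$, the length space property yields a minimizing sequence of directed paths $\gamma_n : x \pathto y$ with $\length{\gamma_n} \to d(x, y)$. Since the symmetrized space $\greal{C}$ is proper (locally compact, as a finite-dimensional cubical complex, and complete), a Hopf-Rinow-type argument extracts a subsequence of the $\gamma_n$ converging uniformly to a path $\gamma_\infty$ whose length realizes $d(x, y)$. What requires extra care beyond the classical situation is verifying that $\gamma_\infty$ remains directed; this can be established locally, cell by cell, by using that the carriers of the $\gamma_n$ eventually stabilize on any compact subinterval (by the local finiteness argument already used in the completeness proof for $\greal{C}$) and that within each $\dui^n$ the subset of directed paths is closed under uniform limits—a uniform limit of nondecreasing parametrizations is still nondecreasing.
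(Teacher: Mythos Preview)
The paper offers no detailed proof here; it simply asserts, in the paragraph preceding the proposition, that the arguments establishing Proposition~\ref{prop:fd-pcs} for~$\greal{C}$ ``can easily be checked to hold'' when $I^n$ is replaced by~$\dui^n$. Your approach via symmetrization is a genuinely different route, and it contains two errors worth naming.

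First, the claim that the symmetrization of $\dgreal{C}$ is \emph{isometric} to~$\greal{C}$ is false. The symmetrization functor is a left adjoint and therefore commutes with the colimit defining the realization, but it does \emph{not} commute with products: Example~\ref{ex:co-free-sym-d} already shows that the symmetrization of~$\dui^2$ is not~$I^2$, since $\dsym{d}((0,1),(1,0))=2$ whereas the product metric gives~$1$. What you do get is that, on each standard cube, the two metrics are bi-Lipschitz equivalent with constant at most the dimension, so the induced topologies agree---your weaker conclusion about underlying topological spaces survives, but not the isometry.

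Second, and relatedly, ``completeness\ldots depends only on the underlying topology'' is false in general: completeness is a metric (uniform) property, not a topological one. You can salvage the step by using the bi-Lipschitz equivalence above (uniform, since the dimension is bounded) to transfer Cauchy sequences, but at that point it is both simpler and closer to the paper's implicit argument to rerun the completeness proof directly with~$\dui^n$ in place of~$I^n$.

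Finally, your geodesic argument invokes properness of~$\greal{C}$ to run a Hopf--Rinow extraction. Finite-dimensionality alone does not yield local compactness (infinitely many edges at a single vertex is allowed), and the reference the paper uses for the undirected case, \bridson{7.33}{112}, relies instead on a taut-string argument specific to polyhedral complexes with finitely many cell shapes. Adapting that argument to the directed setting is the more robust path.
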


Given a precubical set~$C$, we can consider its directed geometric
realization~$\dgreal{C}$. A vertex in~$C$ can be seen as a morphism $Y0\to C$
(where $Y$ denotes the Yoneda functor, see Section~\ref{sec:pcs}) and thus
induces, by functoriality of the realization, a point~$\dgreal{x}$ of
$\dgreal{C}$, which is given by the image of the morphism
$\dgreal{Y0}\to\dgreal{C}$. Similarly, a path (\resp dipath)~$f$ in~$C$ induces
a path (\resp dipath)~$\dgreal{f}$ in $\dgreal{C}$. It is easy to show that
given two dipaths $f,g:x\pathto y$ in~$C$, if they are homotopic (\resp
dihomotopic) then the dipaths $\dgreal{f}$ and $\dgreal{g}$ are also homotopic
(\resp dihomotopic). A converse property was shown by Fajstrup for locally
finite precubical sets~\cite{fajstrup2005dipaths} (more exactly, this result was
formulated for geometric realization of precubical sets as d-spaces, but the
proof adapts straightforwardly to the case of generalized metric spaces).

\begin{prop}
  Given $C$ geometric and locally finite, a geometric and locally finite
  precubical set, the homotopy (\resp dihomotopy) classes of dipaths from~$x$
  to~$y$ in~$C$ are in bijection with homotopy (\resp dihomotopy) classes of
  dipaths from $\dgreal{x}$ to $\dgreal{y}$ in~$\dgreal{C}$.
\end{prop}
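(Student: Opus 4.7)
The plan is to adapt Fajstrup's argument~\cite{fajstrup2005dipaths} to our setting. By Theorem~\ref{thm:geom-real-gmet-top}, the underlying topological space of $\dgreal{C}$ coincides with the standard directed cubical complex in the d-space sense used in~\cite{fajstrup2005dipaths}, and directed paths in our metric sense correspond, cube by cube, to the monotone d-paths considered there. Hence the topological arguments of~\cite{fajstrup2005dipaths} transfer essentially verbatim, and the task reduces to constructing the bijection in both directions; well-definedness of the map $\Phi$ sending a combinatorial (di)homotopy class of dipaths $x \pathto y$ in~$C$ to the (di)homotopy class of its realization in $\dgreal{C}$ is immediate from the functoriality of $\dgreal{-}$.

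For surjectivity of $\Phi$, I would take a dipath $\gamma : \dgreal{x} \pathto \dgreal{y}$ in $\dgreal{C}$ and use compactness of $\gamma(I)$ together with local finiteness of~$C$ to obtain a finite partition $0 = t_0 < \ldots < t_n = 1$ with each piece $\gamma([t_i, t_{i+1}])$ contained in the realization of a single top-dimensional cube. On such a cube, viewed as $\dui^{k}$, any rectifiable monotone path between two faces is dihomotopic (rel endpoints, after a small adjustment to move the endpoints to vertices) to a combinatorial edge-path, using that $\dui^k$ is convex and uniquely geodesic with a product face structure. Concatenating these combinatorial pieces yields a combinatorial dipath $f$ in~$C$ with $\dgreal{f}$ dihomotopic to~$\gamma$, as required.

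For injectivity of $\Phi$, I would take combinatorial dipaths $f, g : x \pathto y$ together with a (di)homotopy $H : I \times I \to \dgreal{C}$ from $\dgreal{f}$ to $\dgreal{g}$. Compactness of $I \times I$ and local finiteness of~$C$ give a grid subdivision of $I \times I$ fine enough that each closed cell is sent by~$H$ into the realization of a single cube. Reading this grid column by column produces a finite chain of combinatorial dipaths $f = f_0, \ldots, f_m = g$, each consecutive pair differing only inside one cube. Within a single cube, any two combinatorial dipaths sharing the same endpoints are connected by iterated applications of the $\tile$ relation, and, in the non-directed case, by zig-zag moves of the form $a \cc \ol{a} \hom \emptypath{x}$; this translates~$H$ into the required combinatorial (di)homotopy from~$f$ to~$g$.

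The main obstacle will be the bookkeeping in the injectivity step: when a grid cell straddles a face shared between two adjacent cubes, one must check that the combinatorial moves extracted from neighbouring cubes agree on their common boundary. This is where geometricity of~$C$ is essential, since it guarantees (Lemma~\ref{lem:geom-El}) that iterated faces are distinct and that any two cubes with a common face admit a maximal common face, so the local-to-global translation from continuous (di)homotopies to combinatorial ones is unambiguous; notably, no cube property beyond geometricity is needed here.
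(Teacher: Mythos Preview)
Your proposal is correct and follows the same line as the paper: the paper does not give an independent proof of this proposition but simply cites Fajstrup~\cite{fajstrup2005dipaths} and remarks that the d-space argument ``adapts straightforwardly to the case of generalized metric spaces.'' Your sketch spells out what that adaptation actually consists of (cubical approximation for surjectivity, grid subdivision for injectivity, with Theorem~\ref{thm:geom-real-gmet-top} bridging the metric and topological realizations), so you are providing strictly more detail than the paper itself; your observation that only geometricity, and not the cube property, is needed here is also in line with the paper's treatment.
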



From the previous proposition and Theorem~\ref{thm:hom-dihom}, one can deduce
that, for dipaths between any two points which are realizations of vertices
of~$C$, homotopy coincides with dihomotopy in~$\dgreal{C}$. Variants of this
properties have been shown in the literature: in the case of geometric semantics
of ``simple'' programs with mutexes~\cite{datc}, in the case of \NPC cube
complexes~\cite{Ghrist2} (see also next section).  A similar observation has
been made for spaces which are hypercontinuous lattices equipped with their
Lawson topology, such that their underlying space has connected CW
type~\cite{criterion}. In that case also, increasing paths that are homotopic
are in fact dihomotopic.
%


\subsection{Non-positively curved spaces}
\label{sec:npc}

\subsubsection{CAT(0) spaces}
We now recall the notion of non-positively curved space (also called CAT(0)
space), with the aim of showing that this notion is a topological counterpart of
the notion introduced for precubical sets (Definition~\ref{def:pcs-npc}). We
chose not to define this earlier in the paper in order to emphasize that this is
only an inspiration for our axiomatics on precubical sets, but the geometric
interpretation is not needed in order to work with non-positively curved
precubical sets (excepting, maybe, to forge intuitions).

This notion generalizes hyperbolic geometry by formalizing the observation that,
in a non-positively curved space, triangles with geodesic sides appear to be
thinner than in usual, flat, spaces. Its importance is due to the numerous
applications this notion has allowed. We only mention basic notions here, and
refer the reader to standard texts for a more detailed presentation of the
subject~\cite{gromov1987hyperbolic, bridson2009metric, ghys1990groupes}.
In the rest of this section, for simplicity, we only consider symmetric
generalized metric spaces, since the topology of a space does not depend on its
direction. Interesting possible generalizations in the directed case are
mentioned in Section~\ref{sec:conclusion}.

\begin{defi}
  \label{def:geodesic-triangle}
  A \emph{geodesic triangle} $\Delta(x,y,z)$ in a geodesic metric space~$X$
  consists of three points $x$, $y$, $z$ and three geodesics joining any pair of
  two. A \emph{comparison triangle} for a geodesic triangle $\Delta(x,y,z)$
  consists of an isometry $\ol-:\Delta(x,y,z)\to\R^2$ whose image is a geodesic
  triangle $\ol\Delta(\ol x,\ol y,\ol z)$, where~$\R^2$ is equipped with the
  usual euclidian distance $d_{\R^2}$.
\end{defi}

\noindent
We now recall the definition of non-positively curved space based on the
comparison axiom of Cartan, Alexandrof and Topogonov. This is the origin of the
name CAT(0), where the 0 refers to the fact that strictly positive or negative
curvature can also be defined in a similar way, but we will refrain from
introducing those in full generality here (see below).

\begin{defi}
  \label{def:cat0}
  A geodesic triangle $\Delta(x,y,z)$ is \emph{CAT(0)} when there exists a
  comparison triangle $\ol\Delta(\ol x,\ol y,\ol z)$ such that for any two
  points $p,q\in\Delta(x,y,z)$, we have $d(p,q)\leq d_{\R^2}(\ol p,\ol q)$. A
  geodesic metric space is \emph{CAT(0)} when every geodesic triangle is CAT(0),
  and \emph{locally CAT(0)} or \emph{non-positively curved} (\NPC) when every
  point admits a neighborhood which is CAT(0).
\end{defi}

\noindent
These spaces enjoy many nice properties such as being (locally) uniquely geodesic
and contractible. 
We refer the reader to~\cite{bridson2009metric} for more details about those.


Other notions of curvature can be defined if we consider other ``model spaces''
instead of~$\R^2$ in which we take our comparison triangles. This amounts, in
Definitions~\ref{def:geodesic-triangle} and~\ref{def:cat0}, to consider geodesic
triangles $\ol-:\Delta(x,y,z)\to M_k$ where $M_k$ is called a model space. For
$k=0$, we have $M_k=\R^2$ and we recover the above definitions; for $k>0$, $M_k$
is a 2\nbd{}sphere (the smaller $k$ is, the bigger its radius is); for $k>0$,
$M_k$ is a hyperbolic space. Compared to usual model spaces $M_0$, triangles in
model spaces $M_k$ are fatter when $k>0$ and slimmer when $k<0$, this getting
more accentuated when $|k|$ becomes large.
%
In particular, for $k=1$, we consider the standard 2\nbd{}sphere~$\sphere$
equipped with the distance~$d$ such that $d(x,y)\in[0,\pi]$ is defined by
$\cos(d(x,y))=\langle x,y\rangle$. A space is CAT(1) when for every
triangle~$\Delta(x,y,z)$ whose diameter is less than $2\pi$, there exists a
comparison triangle $\ol\Delta(\ol x,\ol y,\ol z)$ in $\sphere$, such that we
have $d(p,q)\leq d_{\sphere}(\ol p,\ol q)$ for every points
$p,q\in\Delta(x,y,z)$.

\subsubsection{The Gromov link condition}
\label{sec:Gromov-condition}

In this section we recall the characterization given by
Gromov~\cite{gromov1987hyperbolic} of CAT(0) cubical complexes, see
also~\bridson{II.5.4}{206} and~\bridson{II.5.20}{212}. We first introduce some
necessary definitions.

\begin{defi}[\bridson{Section 7.14}{102}]
  Given a geodesic metric space~$X$ and a point $x\in X$, the \emph{link}
  $\link_X(x)$ of $x$ in $X$ is the set of geodesics $\gamma:[0,a]\to X$ such
  that $\gamma(0)=x$, equipped with the compact-open topology, quotiented by the
  relation identifying two paths which coincide on an interval of the for
  $[0,\varepsilon[$ for some $\varepsilon>0$, \ie the ``directions'' from~$x$
  pointing into~$X$. This space can be metrized using the angle between two such
  directions.
\end{defi}

\noindent
This notion of link can be formally related to the one introduced for precubical
sets in Definition~\ref{def:pcs-link} by observing that the former is a
geometric realization of the latter (where the standard $n$-simplex is realized
as a spherical simplex with edges of length $\pi/2$).


\begin{defi}
  An abstract simplicial complex is \emph{flag} if whenever the complex contains
  the 1-skeleton of a simplex, it also contains the simplex: given vertices
  $x_1,\ldots,x_k$ such that $\set{x_i,x_j}$ are simplices for every indices
  $i,j$, the set $\set{x_1,\ldots,x_k}$ is also a simplex.
\end{defi}

\noindent
Again, this notion corresponds to the one of Definition~\ref{def:flag} through
the geometric realization.




Finally, we can explain in which way our conditions for non-positively curved
precubical sets (Definition~\ref{def:pcs-npc}) correspond to the traditional
geometric one.

\begin{thm}
  \label{thm:pcs-cat0}
  Given a finite dimensional geometric precubical set~$C$, the following are
  equivalent:
  \begin{enumerate}[align=left]
  \item[(i)] $\greal{C}$ is non-positively curved
  \item[(ii)] in $\greal{C}$ the link of every point is CAT(1)
  \item[(iii)] in $\greal{C}$ the link of every vertex is a flag complex
  \item[(iv)] in $C$ the link of every vertex is a flag complex
  \item[(v)] $C$ satisfies the cube condition
  \end{enumerate}
  Moreover, the following are equivalent:
  \begin{enumerate}[align=left]
  \item[(i')] $\greal{C}$ is CAT(0)
  \item[(ii')] $\greal{C}$ is uniquely geodesic
  \item[(iii')] $\greal{C}$ is non-positively curved and simply connected
  \item[(iv')] $C$ satisfies the cube condition and is simply connected
  \end{enumerate}
\end{thm}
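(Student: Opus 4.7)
The plan is to reduce both blocks of equivalences to classical results from the metric geometry of cube complexes, combined with the combinatorial flag characterization already established in Theorem~\ref{thm:pcs-npc-flag}.

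For the first block, the proof will proceed through the chain (v) $\Leftrightarrow$ (iv) $\Leftrightarrow$ (iii) $\Leftrightarrow$ (ii) $\Leftrightarrow$ (i). The first equivalence is exactly Theorem~\ref{thm:pcs-npc-flag}. For (iv) $\Leftrightarrow$ (iii), I would observe that for a vertex $x$ of $C$ the topological link $\link_{\greal{C}}(\greal{x})$ is canonically isometric to the geometric realization of the combinatorial link $\link(x)$ of Definition~\ref{def:pcs-link}, where each standard $n$-simplex is realized as an all-right spherical simplex of edge length $\pi/2$; under this identification the combinatorial and geometric flag conditions agree. The step (iii) $\Leftrightarrow$ (ii) is Gromov's lemma that an all-right spherical simplicial complex is CAT(1) if and only if it is flag \bridson{Lem.~II.5.18}{210}. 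Finally (ii) $\Leftrightarrow$ (i) is the standard characterization of non-positive curvature of a cube complex via its links \bridson{Thm.~II.5.4}{206}, using the additional observation that at a non-vertex point the link decomposes as a spherical join of a vertex link with some round spheres, so CAT(1) at vertices forces CAT(1) everywhere.

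For the second block, (iii') $\Leftrightarrow$ (iv') follows immediately from the first block. The implication (iii') $\Rightarrow$ (i') is the Cartan--Hadamard theorem for length spaces \bridson{Thm.~II.4.1}{193}: a complete, simply connected, locally CAT(0) space is globally CAT(0); completeness of $\greal{C}$ is provided by Proposition~\ref{prop:fd-pcs}. Conversely (i') $\Rightarrow$ (iii') is clear since CAT(0) spaces are by definition locally CAT(0) and are contractible, hence simply connected. The implication (i') $\Rightarrow$ (ii') is the standard fact that CAT(0) spaces are uniquely geodesic.

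The most delicate step is (ii') $\Rightarrow$ (i'). The plan is first to show that a uniquely geodesic geodesic length space is contractible, by defining a retraction $h(t,x) = \gamma_x(t)$ to a fixed basepoint along the unique geodesic $\gamma_x$; continuity of $h$ uses uniqueness together with continuity of the distance. Hence $\greal{C}$ is in particular simply connected, so by (iii') $\Rightarrow$ (i') it only remains to show local CAT(0), which via the first block reduces to showing every vertex link is flag. I would establish this by contrapositive: if some vertex link contains the $1$-skeleton of a missing simplex, then in a small neighborhood of the offending vertex one can exhibit two distinct rectifiable paths of equal minimal length between a suitably chosen pair of nearby points, contradicting unique geodesicity. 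Writing down this local construction of competing geodesics cleanly from a missing cube in $C$ is the part requiring the most care.
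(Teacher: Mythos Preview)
Your treatment of the first block, and of (iii')$\Leftrightarrow$(iv'), (iii')$\Rightarrow$(i'), (i')$\Rightarrow$(iii'), (i')$\Rightarrow$(ii') in the second block, coincides with the paper's proof: both reduce (i)$\Leftrightarrow$(ii)$\Leftrightarrow$(iii) to Gromov's link theorem, treat (iii)$\Leftrightarrow$(iv) as the identification of the combinatorial and geometric links, invoke Theorem~\ref{thm:pcs-npc-flag} for (iv)$\Leftrightarrow$(v), and for the second block the paper simply cites Gromov / Bridson--Haefliger for the whole equivalence (i')$\Leftrightarrow$(ii')$\Leftrightarrow$(iii').

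The divergence is at (ii')$\Rightarrow$(i'), where you attempt a direct argument while the paper defers entirely to the cited reference. Your two-step plan is reasonable in outline, and the contrapositive link argument (non-flag link $\Rightarrow$ competing local geodesics $\Rightarrow$ not uniquely geodesic) is the standard route to NPC and should go through. The gap is in the first step: continuity of the retraction $h(t,x)=\gamma_x(t)$ does \emph{not} follow from uniqueness of geodesics together with continuity of the distance alone. The usual proof that unique geodesics vary continuously with their endpoints goes through Arzel\`a--Ascoli and requires properness (local compactness) of the space; a finite-dimensional geometric precubical set need not be locally finite, so $\greal{C}$ need not be proper, and your justification as written is insufficient. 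One can likely repair this using the specific piecewise-linear (``taut string'') structure of geodesics in cube complexes with finitely many shapes, but that is additional work you have not indicated. The paper avoids this issue by citation rather than argument, so your version is more explicit about where the real content lies but, as stated, leaves that content incomplete.
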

\begin{proof}
  First, notice that because~$C$ is supposed to be finite dimensional, its
  realization is geodesic by Proposition~\ref{prop:fd-pcs}. The equivalence
  between (i), (ii) and (iii) is due to Gromov~\cite{gromov1987hyperbolic}, see
  also~\bridson{Thm. 5.20}{212}. The equivalence between (iii) and (iv) is
  immediate and the equivalence between (iv) and (v) was shown in
  Theorem~\ref{thm:pcs-npc-flag}.
  The equivalence between (i'), (ii') and (iii') is due to
  Gromov~\bridson{Thm.~5.5}{207}. The equivalence between (iii') and (iv')
  follows from the equivalence between (i) and (v).
\end{proof}


\section{Conclusion and future work}
\label{sec:conclusion}
We showed in this article that numerous concepts from different fields of
mathematics and computer science are closely related, non-positively curved
spaces and 
rewriting systems with the cube property in particular, \etc{}
These are important observations since, in many ways, non-positively curved
spaces are ``simpler'' to work with than general spaces: topological invariants
are much simpler, \eg the homology and homotopy groups have no torsion, and
their universal covering space are contractible \cite{bridson2009metric}.

There are many possible extensions to this work. First, in the case of a
non-positively curved precubical set, as used in the geometric semantics of
programs in Section~\ref{sec:pcs}, we believe that the coreflexion of the
adjunction between (some form of labeled) precubical sets and (labeled) prime
event structures of~\cite{goubault2012formal} is directly linked to the
universal dicovering in the sense of~\cite{datc}. This would enlighten the
relationship between these two models, in case we are dealing with concurrent
programs with mutual exclusion primitives.

Our original idea was that for non-positively curved spaces~$X$, dihomotopy of
dipaths can be characterized by the first relative homology group $H_1(X,A)$
where~$A$ is the subspace of~$X$ formed of the initial and final points of the
dipaths we are considering. By the work of Steiner~\cite{omegacatchaincomplex},
we know that the homotopy 2-categories, for some particular loop-free spaces,
are in one-to-one correspondence with some particular augmented directed
complexes, and that homotopic dipaths would map onto homologous elements of the
corresponding chain complex. In the case of non-positively curved spaces, this
implies that dihomotopic dipaths would be mapped onto homologous elements of the
underlying chain complex, in a not too distant manner, by Hurewicz theorem, \ie
up to abelianization. Unfortunately, it seems that we need to further shrink
down the class of spaces so that dihomotopy of dipaths can be fully
characterized by this first relative homology group, and there are probably
subtle combinatorial properties to be discovered.

We also expect to extend the results of this article to programs with resources
of arbitrary capacity, where we still have some control of the homotopy type of
the spaces generated by the semantics of such programs. This is, after all, part
of an endeavor to better understand the geometries that can appear in the study
of rewriting, concurrent and distributed systems, where other primitives
(test-and-set and all read-modify-write statements) will generate different
types of geometries.  The relationship between metrics and (di-)homotopies,
unravelled in this paper, may also pave the way to considering timed concurrent
models, studied up to deformation (and amenable to state-space reduction
techniques~\cite{datc}).  This has actually already been used in a different
context, in robotics, by Ghrist and collaborators, to design efficient
algorithms, see \eg \cite{Ghrist2}.



\begin{thebibliography}{10}

\bibitem{albert1941note}
G.~E. Albert.
\newblock A note on quasi-metric spaces.
\newblock {\em Bulletin of the American Mathematical Society}, 47(6):479--482,
  1941.

\bibitem{ardila2012geodesics}
F.~Ardila, M.~Owen, and S.~Sullivant.
\newblock {Geodesics in CAT(0) cubical complexes}.
\newblock {\em Advances in Applied Mathematics}, 48(1):142--163, 2012.

\bibitem{barthelemy1993median}
J.-P. Barthélemy and J.~Constantin.
\newblock Median graphs, parallelism and posets.
\newblock {\em Discrete mathematics}, 111(1):49--63, 1993.

\bibitem{betti1983variation}
R.~Betti, A.~Carboni, R.~Street, and R.~Walters.
\newblock Variation through enrichment.
\newblock {\em Journal of Pure and Applied Algebra}, 29(2):109--127, 1983.

\bibitem{bezem2003term}
M.~Bezem, J.~Klop, R.~de~Vrijer, and Terese.
\newblock {\em Term Rewriting Systems}.
\newblock Cambridge Tracts in Theoretical Computer Science. Cambridge
  University Press, 2003.

\bibitem{bridson2009metric}
M.~R. Bridson and A.~Haefliger.
\newblock {\em Metric spaces of non-positive curvature}, volume 319.
\newblock Springer, 2009.

\bibitem{chepoi2000graphs}
V.~Chepoi.
\newblock {Graphs of some CAT(0) complexes}.
\newblock {\em Advances in Applied Mathematics}, 24(2):125--179, 2000.

\bibitem{chepoi2012nice}
V.~Chepoi.
\newblock Nice labeling problem for event structures: a counterexample.
\newblock {\em SIAM Journal on Computing}, 41(4):715--727, 2012.

\bibitem{clerc2015presenting}
F.~Clerc and S.~Mimram.
\newblock Presenting a category modulo a rewriting system.
\newblock In {\em LIPIcs-Leibniz International Proceedings in Informatics},
  volume~36. Schloss Dagstuhl-Leibniz-Zentrum fuer Informatik, 2015.

\bibitem{dehornoy2000completeness}
P.~Dehornoy.
\newblock On completeness of word reversing.
\newblock {\em Discrete Mathematics}, 225(1):93--119, 2000.

\bibitem{dehornoy2013foundations}
P.~Dehornoy, F.~Digne, E.~Godelle, D.~Krammer, and J.~Michel.
\newblock Foundations of garside theory.
\newblock {\em arXiv preprint arXiv:1309.0796}, 2013.

\bibitem{droste1994kleene}
M.~Droste.
\newblock A kleene theorem for recognizable languages over concurrency monoids.
\newblock In {\em Automata, Languages and Programming}, pages 388--399.
  Springer, 1994.

\bibitem{fahrenberghomology}
U.~Fahrenberg.
\newblock Directed homology.
\newblock {\em Electr. Notes Theor. Comput. Sci.}, 100:111--125, 2004.

\bibitem{fajstrup2005dipaths}
L.~Fajstrup.
\newblock Dipaths and dihomotopies in a cubical complex.
\newblock {\em Advances in Applied Mathematics}, 35(2):188--206, 2005.

\bibitem{datc}
L.~Fajstrup, {\'E}.~Goubault, E.~Haucourt, S.~Mimram, and M.~Raussen.
\newblock {\em {Directed Algebraic Topology and Concurrency}}.
\newblock SpringerBriefs in Mathematical Methods. Springer International
  Publishing, 2016.
\newblock Book to appear.

\bibitem{Ghrist2}
R.~Ghrist and S.~M. LaValle.
\newblock Nonpositive curvature and pareto-optimal coordination of robots.
\newblock {\em SIAM Journal of Control and Optimization}, 45(5):1697--1713,
  2006.

\bibitem{ghristgeometric}
R.~Ghrist and V.~Peterson.
\newblock {A Geometric Approach to Dipath Classification on Process Graphs}.
\newblock Unpublished.

\bibitem{Ghrist1}
R.~Ghrist and V.~Peterson.
\newblock The geometry and topology of reconfiguration.
\newblock {\em Advances in Applied Mathematics}, 38(3):302 -- 323, 2007.

\bibitem{ghys1990groupes}
{\'E}.~Ghys, P.~de~La~Harpe, et~al.
\newblock {\em {Sur les groupes hyperboliques d'apres Mikhael Gromov}},
  volume~83.
\newblock Birkh{\"a}user Basel, 1990.

\bibitem{gonthier1992abstract}
G.~Gonthier, J.-J. Lévy, and P.-A. Melliès.
\newblock An abstract standardisation theorem.
\newblock In {\em Logic in Computer Science, 1992. LICS'92., Proceedings of the
  Seventh Annual IEEE Symposium on}, pages 72--81. IEEE, 1992.

\bibitem{goubault2000geometry}
E.~Goubault.
\newblock Geometry and concurrency: a user's guide.
\newblock {\em Mathematical structures in computer science}, 10(04):411--425,
  2000.

\bibitem{goubault2012formal}
E.~Goubault and S.~Mimram.
\newblock Formal relationships between geometrical and classical models for
  concurrency.
\newblock {\em Electronic Notes in Theoretical Computer Science}, 283:77--109,
  2012.

\bibitem{goubault2013non}
J.~Goubault-Larrecq.
\newblock {\em Non-Hausdorff Topology and Domain Theory: Selected Topics in
  Point-Set Topology}, volume~22.
\newblock Cambridge University Press, 2013.

\bibitem{grandis2009directed}
M.~Grandis.
\newblock Directed algebraic topology: Models of non-reversible worlds.
\newblock {\em AMC}, 10:12, 2009.

\bibitem{grandis2007fundamental}
M.~Grandis et~al.
\newblock The fundamental weighted category of a weighted space: From directed
  to weighted algebraic topology.
\newblock {\em Homology, Homotopy and Applications}, 9(1):221--256, 2007.

\bibitem{gromov1987hyperbolic}
M.~Gromov.
\newblock {\em Hyperbolic groups}.
\newblock Springer, 1987.

\bibitem{guiraud2006termination}
Y.~Guiraud.
\newblock Termination orders for three-dimensional rewriting.
\newblock {\em Journal of Pure and Applied Algebra}, 207(2):341--371, 2006.

\bibitem{2groupoid}
K.~Hardie, K.~Kamps, and R.~Kieboom.
\newblock A homotopy 2-groupoid of a hausdorff space.
\newblock {\em Applied Categorical Structures}, 8(1-2):209--234, 2000.

\bibitem{Bigroupoid}
K.~Hardie, K.~Kamps, and R.~Kieboom.
\newblock A homotopy bigroupoid of a topological space.
\newblock {\em Applied Categorical Structures}, 9(3):311--327, 2001.

\bibitem{hopf-rinow}
H.~Hopf and W.~Rinow.
\newblock {\"U}ber den begriff der vollst{\"a}ndigen differentialgeometrischen
  fl{\"a}che.
\newblock {\em Commentarii Mathematici Helvetici}, 3(1):209--225, 1931.

\bibitem{johnstone2008embedding}
P.~T. Johnstone.
\newblock On embedding categories in groupoids.
\newblock In {\em Mathematical Proceedings of the Cambridge Philosophical
  Society}, volume 145, pages 273--294. Cambridge University Press, 2008.

\bibitem{kelly1980coherence}
G.~M. Kelly and M.~L. Laplaza.
\newblock Coherence for compact closed categories.
\newblock {\em Journal of Pure and Applied Algebra}, 19:193--213, 1980.

\bibitem{kelly1963bitopological}
J.~Kelly.
\newblock Bitopological spaces.
\newblock {\em Proceedings of the London Mathematical Society}, 3(1):71--89,
  1963.

\bibitem{criterion}
S.~Krishnan.
\newblock Criteria for homotopic maps to be so along monotone homotopies.
\newblock {\em Electron. Notes Theor. Comput. Sci.}, 230:141--148, Mar. 2009.

\bibitem{kuske1994nondeterministic}
D.~Kuske.
\newblock Nondeterministic automata with concurrency relations and domains.
\newblock In {\em Trees in Algebra and Programming—CAAP'94}, pages 202--217.
  Springer, 1994.

\bibitem{lafont2003towards}
Y.~Lafont.
\newblock Towards an algebraic theory of boolean circuits.
\newblock {\em Journal of Pure and Applied Algebra}, 184(2):257--310, 2003.

\bibitem{lawvere1973metric}
F.~W. Lawvere.
\newblock Metric spaces, generalized logic, and closed categories.
\newblock {\em Rendiconti del seminario mat{\'e}matico e fisico di Milano},
  43(1):135--166, 1973.

\bibitem{levy1978reductions}
J.-J. Lévy.
\newblock {\em Réductions correctes et optimales dans le lambda-calcul}.
\newblock PhD thesis, Université Paris 7, 1978.

\bibitem{maclane:cwm}
S.~Mac~Lane.
\newblock {\em Categories for the working mathematician}, volume~5.
\newblock springer, 1998.

\bibitem{mellies2002axiomatic}
P.-A. Melliès.
\newblock {Axiomatic rewriting theory VI: Residual theory revisited}.
\newblock In {\em Rewriting techniques and applications}, pages 24--50.
  Springer, 2002.

\bibitem{mimram:phd}
S.~Mimram.
\newblock {\em {S{\'e}mantique des jeux asynchrones et r{\'e}{\'e}criture
  2\nobreakdash-dimensionnelle}}.
\newblock PhD thesis, PPS, CNRS -- Universit{\'e} Paris Diderot, 2008.

\bibitem{mimram2011structure}
S.~Mimram.
\newblock The structure of first-order causality.
\newblock {\em Mathematical Structures in Computer Science}, 21(01):65--110,
  2011.

\bibitem{t3rt}
S.~Mimram.
\newblock {Towards 3-Dimensional Rewriting Theory}.
\newblock {\em Logical Methods in Computer Science}, 10(1):1--47, 2014.

\bibitem{nielsen1981petri}
M.~Nielsen, G.~Plotkin, and G.~Winskel.
\newblock {Petri nets, event structures and domains, part I}.
\newblock {\em Theoretical Computer Science}, 13(1):85--108, 1981.

\bibitem{nielsen1994relationships}
M.~Nielsen, V.~Sassone, and G.~Winskel.
\newblock {\em Relationships between models of concurrency}.
\newblock Springer, 1994.

\bibitem{panangaden1990stability}
P.~Panangaden, V.~Shanbhogue, and E.~W. Stark.
\newblock {\em Stability and sequentiality in dataflow networks}.
\newblock Springer, 1990.

\bibitem{preller2007free}
A.~Preller and J.~Lambek.
\newblock Free compact 2-categories.
\newblock {\em Mathematical Structures in Computer Science}, 17(02):309--340,
  2007.

\bibitem{roller1998poc}
M.~Roller.
\newblock {Poc sets, median algebras and group actions. An extended study of
  Dunwoody's construction and Sageev's theorem}.
\newblock {\em Southampton Preprint Archive}, 1998.

\bibitem{rudin1964principles}
W.~Rudin.
\newblock {\em Principles of mathematical analysis}, volume~3.
\newblock McGraw-Hill New York, 1964.

\bibitem{sassone1993deterministic}
V.~Sassone, M.~Nielsen, and G.~Winskel.
\newblock Deterministic behavioural models for concurrency.
\newblock In {\em Mathematical Foundations of Computer Science 1993}, pages
  682--692. Springer, 1993.

\bibitem{sassone1996models}
V.~Sassone, M.~Nielsen, and G.~Winskel.
\newblock Models for concurrency: Towards a classification.
\newblock {\em Theoretical Computer Science}, 170(1):297--348, 1996.

\bibitem{sorgenfrey1947topological}
R.~Sorgenfrey.
\newblock On the topological product of paracompact spaces.
\newblock {\em Bulletin of the American Mathematical Society}, 53(6):631--632,
  1947.

\bibitem{stark1989concurrent}
E.~W. Stark.
\newblock Concurrent transition systems.
\newblock {\em Theoretical Computer Science}, 64(3):221--269, 1989.

\bibitem{omegacatchaincomplex}
R.~Steiner.
\newblock Omega-categories and chain complexes.
\newblock {\em Homology, Homotopy and Applications}, 6(1):175--200, 2004.

\bibitem{wilson1931quasi}
W.~Wilson.
\newblock On quasi-metric spaces.
\newblock {\em American Journal of Mathematics}, 53(3):675--684, 1931.

\bibitem{winskel1980events}
G.~Winskel.
\newblock {\em Events in computation.}
\newblock PhD thesis, University of Edinburgh, 1980.

\bibitem{winskel1993models}
G.~Winskel and M.~Nielsen.
\newblock Models for concurrency.
\newblock {\em DAIMI Report Series}, 22(463), 1993.

\end{thebibliography}

\end{document}